
\def\confversion{0}
\def\ifconf{\ifnum\confversion=1}
\def\ifnotconf{\ifnum\confversion=0}

\documentclass[11 pt, fleqn, reqno]{article}

\def\showauthornotes{0}
\def\showkeys{0}
\def\showdraftbox{0}

\usepackage{xspace,xcolor,enumerate}
\usepackage{amsmath,amssymb}
\usepackage{amsthm}
\usepackage[toc,page]{appendix}
\usepackage{thmtools}
\usepackage{thm-restate}
\usepackage{color,graphicx}
\usepackage{boxedminipage}
\usepackage{makecell}
\usepackage{tabularx}
\usepackage{enumitem}
\usepackage[linesnumbered,ruled,vlined]{algorithm2e}
\ifnum\showkeys=1
\usepackage[color]{showkeys}
\fi

\definecolor{darkred}{rgb}{0.5,0,0}
\definecolor{darkgreen}{rgb}{0,0.35,0}
\definecolor{darkblue}{rgb}{0,0,0.55}

\usepackage[pdfstartview=FitH,pdfpagemode=UseNone,colorlinks,linkcolor=darkblue,filecolor=darkred,citecolor=darkgreen,urlcolor=darkred,pagebackref]{hyperref}

\usepackage[capitalise,nameinlink]{cleveref}
\usepackage[T1]{fontenc}
\usepackage{mathtools,dsfont,bbm}

\setlength{\parskip}{0.05 in}
\setlength{\parindent}{4 ex} 

\ifnum\showauthornotes=1
\newcommand{\Authornote}[2]{{\sf\small\color{red}{[#1: #2]}}}
\newcommand{\Authorcomment}[2]{{\sf \small\color{gray}{[#1: #2]}}}
\newcommand{\Authorfnote}[2]{\footnote{\color{red}{#1: #2}}}
\else
\newcommand{\Authornote}[2]{}
\newcommand{\Authorcomment}[2]{}
\newcommand{\Authorfnote}[2]{}
\fi

\ifnum\showdraftbox=1
\newcommand{\draftbox}{\begin{center}
  \fbox{%
    \begin{minipage}{2in}%
      \begin{center}%
        \begin{Large}%
          \textsc{Working Draft}%
        \end{Large}\\
        Please do not distribute%
      \end{center}%
    \end{minipage}%
  }%
\end{center}
\vspace{0.2cm}}
\else
\newcommand{\draftbox}{}
\fi


\newtheorem{theorem}{Theorem}[section]

\newtheorem{definition}[theorem]{Definition}

\newtheorem{lemma}[theorem]{Lemma}
\newtheorem{remark}[theorem]{Remark}
\newtheorem{proposition}[theorem]{Proposition}
\newtheorem{corollary}[theorem]{Corollary}
\newtheorem{claim}[theorem]{Claim}
\newtheorem{fact}[theorem]{Fact}

\theoremstyle{remark}
\newtheorem{algo}[theorem]{Algorithm}



\def\FullBox{\hbox{\vrule width 6pt height 6pt depth 0pt}}

\def\qedsketch{\ifmmode\Box\else{\unskip\nobreak\hfil
\penalty50\hskip1em\null\nobreak\hfil$\Box$
\parfillskip=0pt\finalhyphendemerits=0\endgraf}\fi}


\def\to{\rightarrow}

\def\epsilon{\varepsilon}

\def\d{\delta}
\def\cal{\mathcal}

\def\implies{\Rightarrow}

\renewcommand{\bar}{\overline} 


\newcommand{\ie}{i.e.,\xspace}
\newcommand{\eg}{e.g.,\xspace}
\newcommand{\etal}{et al.\xspace}

\newcommand{\mper}{\,.}
\newcommand{\mcom}{\,,}

\newcommand{\R}{{\mathbb R}}
\newcommand{\E}{{\mathbb E}}
\newcommand{\C}{{\mathbb C}}
\newcommand{\N}{{\mathbb{N}}}

\newcommand{\FF}{{\mathbb F}}

\newcommand{\cA}{\mathcal{A}}
\newcommand{\cB}{\mathcal{B}}

\newcommand{\cH}{\mathcal{H}}

\newcommand{\cE}{\mathcal{E}}
\newcommand{\cN}{\mathcal{N}}
\newcommand{\cG}{\mathcal{G}}
\newcommand{\cU}{\mathcal{U}}


\usepackage{nicefrac}


\newcommand{\abs}[1]{\ensuremath{\left\lvert #1 \right\rvert}}

%
\newcommand{\norm}[1]{\ensuremath{\left\lVert #1 \right\rVert}}







%

\newcommand{\ip}[2] {\ensuremath{\left\langle #1 , #2 \right\rangle}}




%

%

%

\newcommand{\one}{{\mathbf{1}}}



\newcommand{\Esymb}{\mathbb{E}}

\makeatletter

\def\Ex#1{%
    \ProbabilityRender{\Esymb}{#1}%
}

\def\ProbabilityRender#1#2{
  \@ifnextchar\bgroup%
  {\renderwithdist{#1}{#2}}
   {\singlervrender{#1}{#2}}
}
\def\singlervrender#1#2{%
   \ensuremath{\mathchoice
       {{#1}\left[ #2 \right]}
       {{#1}[ #2 ]}
       {{#1}[ #2 ]}
       {{#1}[ #2 ]}
   }
}
\def\renderwithdist#1#2#3{%
   \@ifnextchar\bgroup
   {\superfancyrender{#1}{#2}{#3}}
   {\ensuremath{\mathchoice
      {\underset{#2}{#1}\left[ #3 \right]}
      {{#1}_{#2}[ #3 ]}
      {{#1}_{#2}[ #3 ]}
      {{#1}_{#2}[ #3 ]}
     }
   }
}
\def\superfancyrender#1#2#3#4#5{
   \ensuremath{\mathchoice
      {\underset{#1}{{#1}}\left#4 #3 \right#5}
      {{#1}_{#2}#4 #3 #5}
      {{#1}_{#2}#4 #3 #5}
      {{#1}_{#2}#4 #3 #5}
   }
}
\makeatother

\newcommand{\conv}[1]{\mathrm{conv}\inparen{#1}}


\newfont{\inhead}{eufm10 scaled\magstep1}

\newcommand{\calP}{{\cal P}}

\newcommand{\poly}{{\mathrm{poly}}}

\DeclareMathOperator\supp{supp}


\DeclareMathOperator{\sgn}{\operatorname{sgn}}

\DeclareMathOperator{\sym}{\operatorname{Sym}}

\renewcommand{\bar}[1]{\ensuremath{\overline{#1}}}

\newcommand{\1}[1]{\mathds{1}\left[#1\right]}






\newcommand{\val}{{\sf val}}






\newcommand{\inparen}[1]{\left(#1\right)}             



\newcommand{\ket}[1]{\lvert #1\rangle}
\newcommand{\braket}[2]{\left\langle #1 \,\middle\vert\, #2\right\rangle}
\newcommand{\QMA}{\textup{QMA}}
\newcommand{\BQP}{\textup{BQP}}
\newcommand{\NP}{\textup{NP}}
\newcommand{\NEXP}{\textup{NEXP}}
\newcommand{\PCP}{\textup{PCP}}

\DeclareSymbolFont{extraup}{U}{zavm}{m}{n}
\DeclareMathSymbol{\varheart}{\mathalpha}{extraup}{86}
\DeclareMathSymbol{\vardiamond}{\mathalpha}{extraup}{87}

\def\rand#1{\mathbf{#1}}
\def\rv#1{\rand #1}

\def\one{\mathbf 1}





\DeclarePairedDelimiter\set{\lbrace}{\rbrace}

%
%











\usepackage{tikz}




\newcommand{\gapLarge}{3/4+1/\poly(n)}
\newcommand{\gapSmall}{2/9}
\newcommand{\gapNaive}{7/9+e^{-\poly(n)}} 

\newcommand{\SEP}{\textup{SEP}}

\newcommand{\Symm}[2]{\vee^{#1}(\C^{#2})}

\usepackage{tikz}
\usetikzlibrary{shapes.symbols}
\usepackage{float}
\usepackage{graphicx}
\usepackage{pstyle}




\usepackage[top=1in, bottom=1in, left=1.25in, right=1.25in]{geometry}




\begin{document}

\title{The Power of Unentangled Quantum Proofs with Non-negative Amplitudes}

\author{Fernando Granha Jeronimo\thanks{{\tt IAS \& Simons Institute}. {\tt granha@ias.edu}. This material is based upon work supported by the National Science Foundation under Grant No. CCF-1900460. The work is done while the author was at IAS.} \and Pei Wu\thanks{{\tt Weizmann Institute of Science}. {\tt pwu@ias.edu}. This material is based upon work supported by the National Science Foundation under Grant No. CCF-1900460. The work is done while the author was at IAS.}}

\date{\vspace{-5ex}}

\maketitle
\draftbox
\thispagestyle{empty}

\begin{abstract}
Quantum entanglement is a fundamental property of quantum mechanics
and plays a crucial role in quantum computation and
information.  Despite its importance, the power and limitations of
quantum entanglement are far from being fully understood. Here, we
study entanglement via the lens of computational complexity. This is
done by studying quantum generalizations of the class $\textup{NP}$
with multiple \emph{unentangled} quantum proofs, the so-called
$\textup{QMA}(2)$ and its variants. The complexity of
$\textup{QMA}(2)$ is known to be closely connected to a variety of
problems such as deciding if a state is entangled and several
classical optimization problems. However, determining the complexity
of $\textup{QMA}(2)$ is a longstanding open problem, and only the trivial
complexity bounds $\textup{QMA} \subseteq \textup{QMA}(2) \subseteq \textup{NEXP}$
are known.

In this work, we study the power of \emph{unentangled} quantum proofs
with \emph{non-negative} amplitudes, a class which we denote
$\textup{QMA}^+(2)$.  In this setting, we are able to design proof
verification protocols for (increasingly) hard problems both
using \emph{logarithmic} size quantum proofs and having
a \emph{constant} probability gap in distinguishing yes from no
instances.  In particular, we design \emph{global} protocols for small
set expansion (SSE), unique games (UG), and PCP verification. As a
consequence, we obtain
$\textup{NP} \subseteq \textup{QMA}^+_{\log}(2)$ with a constant gap.
By virtue of the new \emph{constant} gap, we are able to ``scale up'' this
result to $\textup{QMA}^+(2)$, obtaining the full characterization
$\textup{QMA}^+(2)=\textup{NEXP}$ by establishing stronger explicitness properties
of the $\PCP$ for $\textup{NEXP}$.  We believe that our protocols are
interesting examples of proof verification and property testing in
their own right.  Moreover, each of our protocols has a single
isolated property testing task relying on non-negative amplitudes
which if generalized would allow transferring our results to
$\textup{QMA}(2)$.

One key novelty of these protocols is the manipulation of quantum
proofs in a \emph{global} and \emph{coherent} way yielding constant
gaps.  Previous protocols (only available for general amplitudes) are
either \emph{local} having vanishingly small gaps or treating the quantum
proofs as classical probability distributions requiring polynomially many
proofs. In both cases, these known protocols do not imply non-trivial
bounds on $\textup{QMA}(2)$.

Finally, we show that $\textup{QMA}(2)$ is equal to $\textup{QMA}^+(2)$ provided
the gap of the latter is a sufficiently large constant. In particular, as a result of the
above characterization of $\textup{NEXP}$, we obtain that if $\textup{QMA}^+(2)$ admits
strong gap amplification for the completeness and soundness gap,
then $\textup{QMA}(2)=\textup{NEXP}$.
\end{abstract}

\newpage

\ifnotconf
\pagenumbering{roman}
\tableofcontents
\clearpage
\fi

\pagenumbering{arabic}
\setcounter{page}{1}

\section{Introduction}

Quantum entanglement is a fundamental property of quantum mechanics
and it plays a major role in several fields such as quantum
computation, information, cryptography, condensed matter physics,
etc~\cite{HHHH09,NC10,Watrous18,O19}. Roughly speaking, quantum
entanglement is a distinctive form of quantum correlation that is
stronger than classical correlations.  Entanglement can lead to
surprising (and sometimes counter-intuitive) phenomena as presented in
the celebrated EPR paradox \cite{EPR35} and the violation of Bell's
(style) inequalities~\cite{B64,CHSH69}. In a sense, entanglement is
necessary to access the full power of quantum computation since it is
known that quantum computations requiring ``little'' entanglement
can be simulated classically with small overhead
\cite{Vidal03}. Entanglement is also crucial in a variety of protocols
such as quantum key distribution \cite{BB14}, teleportation
\cite{BBCJPW93}, interactive proof systems \cite{JNVWY20}, and so
on. However, despite this central role, the power and limitations of
quantum entanglement are far from being understood. Here, we
study quantum entanglement via the lens of computational
complexity. More precisely, we investigate the role of entanglement in
the context of quantum proof verification.

The notions of provers, proofs, and proof verification play a central
role in our understanding of classical complexity theory
\cite{AroraBarak09}. The quantum setting allows for various and vast
generalizations of these classical notions \cite{VW16}.  For instance,
by allowing the proof to be a quantum state of polynomial size and the
verifier to be an efficient quantum machine, one obtains the class
$\QMA$ which is a natural generalization of the class $\NP$
\cite{Watrous00}. The $\QMA$ proof verification model can be further
generalized to two quantum proofs from two \emph{unentangled}
provers. This generalization gives rise to a class known as $\QMA(2)$
\cite{KMY03} (see~\cref{def:qma_k}). This latter complexity class is
known to be closely connected to a variety of computational problems
such as the fundamental problem of deciding whether a quantum state
(given its classical description) is entangled or not. It is also
connected to a variety of classical optimization problems such as
polynomial and tensor optimization over the sphere as well as some
norm computation problems~\cite{HM13}.

Determining the complexity of $\QMA(2)$ is a major open 
problem in quantum complexity. Contrary to many other quantum proof
systems (\eg $\textup{QIP}$ \cite{JJUW11} and $\textup{MIP}^*$
\cite{JNVWY20}), we still do not know any non-trivial complexity
bounds for $\QMA(2)$.  On one hand, we trivially have $\QMA \subseteq
\QMA(2)$ since a $\QMA(2)$ verifier can simply ignore one of the
proofs. On the other hand, a NEXP verifier can guess exponentially
large classical descriptions of two quantum proofs of polynomially
many qubits and simulate the verification protocol classically in
exponential time. Hence, we also have $\QMA(2) \subseteq
\NEXP$. Despite considerable effort with a variety of powerful
techniques brought to bear on this question, such as semi-definite
programming hierarchies \cite{DPS04,BKS17,HNW17}, quantum de Finetti
theorems~\cite{KM09,BH13,BCY11}, and carefully designed nets
\cite{BH15,SW12}, only the trivial bounds $\QMA \subseteq \QMA(2)
\subseteq \NEXP$ are known.

Even though there are no non-trivial complexity bounds for
$\QMA(2)$, there are results showing surprisingly powerful consequences
of \emph{unentangled} proofs. An early result by Blier and Tapp
\cite{BT09} shows that two \emph{unentangled} proofs of a logarithmic
number of qubits suffice to verify the NP-complete problem of graph
$3$-coloring.  The version of $\QMA(2)$ with logarithmic-size proofs
is known as $\QMA_{\log}(2)$. Since $\QMA_{\log}(1)
\subseteq \BQP$ due to Marriott and Watrous~\cite{MW05}, Blier and Tapp's work provides some evidence that having
two unentangled proofs of logarithmic size is more powerful than
having a single one. This suggests that the lack of quantum
entanglement across the proofs can play an important role in proof
verification.  Furthermore, note that this situation is in sharp
contrast with the classical setting where having two classical proofs
of logarithmic size is no more powerful than having a single one
since two proofs can be combined into a larger one.

The above protocol has a critical drawback, namely, the
verifier only distinguishes yes from no instances with a
polynomially small probability. This distinguishing probability is
known as the \emph{gap} of the protocol.  These weak gaps are 
undesirable for two reasons. First, we cannot obtain
tighter bounds on $\QMA(2)$ from these protocols since scaling up
these results to $\QMA(2)$ leads to exponentially small gaps. Such tiny
gaps fall short to imply $ \QMA(2) = \NEXP$ as the definition of $\QMA(2)$
can tolerate up to only polynomially small gaps. Second, the strength of
the various hardness results that can be deduced from these protocols
depends on how large the gap is. For instance, we do not know if
several of these problems are also hard to approximate within say a
more robust universal constant. A series of subsequent works extended
Blier and Tapp's result in the context of $\QMA_{\log}(2)$ protocols for
NP-complete problems \cite{Beigi10,GNN12,CF13}. However, all these
results suffer from a polynomially small gap.

Another piece of evidence pointing to the additional power of unentangled
proofs appears in the work of Aaronson \etal~\cite{ABDFS08}. They show
that $\widetilde{O}(\sqrt{n})$ quantum proofs of logarithmic
size suffice to decide an NP-complete variant of the SAT problem of
size $n$ with a constant gap. Due to the work of Harrow and Montanaro
\cite{HM13}, it is possible to convert this protocol into a two-proof
protocol where each one has size $\widetilde{O}(\sqrt{n})$ and the gap
remains constant. Unfortunately, this converted protocol does not
imply tighter bounds for $\QMA(2)$ since it only shows $\NP \subseteq
\QMA(2)$.

In this work, we study \emph{unentangled} quantum proofs with
\emph{non-negative} amplitudes. We name the associated complexity
classes introduced here as $\QMA^+(2)$ and $\QMA_{\log}^+(2)$
(see~\cref{def:qma_plus_k}) in analogy to $\QMA(2)$ and
$\QMA_{\log}(2)$, respectively. The main question we consider is the
following:

\begin{center}
  What is the power of \emph{unentangled} proofs with \emph{non-negative} amplitudes?
\end{center}

This non-negative amplitude setting is intended to capture several
structural properties of the general $\QMA(2)$ model while providing
some restriction on the adversarial provers in order to gain a better
understanding of unentangled proof verification. In this non-negative
amplitude setting, we are able to derive much stronger results and
fully characterize $\QMA^+(2)$. In particular, we are able to design
$\QMA_{\log}^+(2)$ protocols with \emph{constant} gaps for
(increasingly) hard(er) problems. Each of these protocols contributes
to our understanding of proof verification and leads to different sets
of techniques, property testing routines, and analyses.

Our first protocol is for the small set expansion (SSE) problem
\cite{RS10,BarakBHKSZ12}. Roughly speaking, the SSE problem asks
whether all small sets of an input graph are very
expanding\footnote{In terms of edge expansion.} or if there is a
small non-expanding set. The SSE problem arises in the context of the unique
games (UG) conjecture. This conjecture plays an important role in the
classical theory of hardness of approximation
\cite{Khot02:unique,KR03,KhotKMO04,Raghavendra08,KhotOD09,Khot10:icm}. One
key reason is that the unique games problem is a (seemingly) more
structured computational problem as opposed to more general and
provably NP-hard constraint satisfaction problems (CSPs) making it
easier to reduce UG to other problems. In this context, the SSE
problem is considered an even more structured problem than UG since
some of its variants can be reduced to UG. This extra structure of SSE
compared to UG can make it even easier to reduce SSE to other
problems. So far the hardness of SSE remains an open problem---it has evaded the best-known algorithmic techniques~\cite{RST10}.

\begin{theorem}[Informal]\label{theo:inform:protocol:sse}
  Small set expansion is in $\QMA^+_{\log}(2)$ with a constant gap.
\end{theorem}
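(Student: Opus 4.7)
The plan is to design a $\QMA^+_{\log}(2)$ protocol for SSE in which the honest provers send, for the alleged small non-expanding set $S\subseteq V(G)$, two unentangled copies of the ``square-root'' uniform superposition $\ket{\psi_S}=\frac{1}{\sqrt{|S|}}\sum_{v\in S}\ket{v}$. This state uses only $O(\log n)$ qubits and has non-negative amplitudes, so it is a valid proof for $\QMA^+_{\log}(2)$. Upon receiving two unentangled proofs, the verifier randomly selects one of three property tests, each designed to give a constant probability gap, and accepts iff the selected test accepts.

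First, a \emph{consistency test} (a SWAP test between the two unentangled proofs) certifies that both copies encode essentially the same state $\ket{\psi}$. Second, an \emph{expansion test} implements phase estimation of the (Hermitian) random-walk operator $P$ of $G$ on one copy of $\ket{\psi}$ and accepts when the measured eigenvalue exceeds $1-\epsilon/2$; for an honest $\ket{\psi_S}$ one has $\langle\psi_S|P|\psi_S\rangle=1-\phi(S)$, so in a YES instance this test succeeds with high probability. Third, a \emph{smallness/uniformity test} crucially uses the non-negativity of the amplitudes to certify that $\ket{\psi}$ is close to the square root of a distribution that is nearly uniform on a set of size at most $\delta n$. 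The idea is to exploit that for $\psi_v\geq 0$ the Cauchy--Schwarz bound $\left(\sum_v \psi_v\right)^2\leq |\mathrm{supp}(\psi)|\cdot\|\psi\|_2^2$ is tight iff $\psi$ is uniform on its support; by measuring quantities such as the inner product with $\ket{+}=\frac{1}{\sqrt{n}}\sum_v\ket{v}$ together with collision-style estimators of $\|\psi\|_4^4$ on $\ket{\psi}\otimes\ket{\psi}$, one can simultaneously witness uniformity and bound the support size by $\delta n$.

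For completeness, the honest proof $\ket{\psi_S}\otimes\ket{\psi_S}$ passes each sub-test with probability bounded below by a universal constant, giving overall acceptance probability at least a constant. For soundness, in a NO instance every set of size $\leq \delta n$ has expansion $\geq 1-\epsilon$. The argument proceeds by structural analysis: if the alleged proof $\ket{\psi}$ (which, by the consistency test, may be assumed to be a tensor square of a single state) passes the smallness/uniformity test with high probability, then a structural lemma asserts that $\ket{\psi}$ is close in trace distance to $\ket{\psi_{S'}}$ for some set $S'$ of size $\leq \delta n$; the NO assumption then forces $\langle\psi|P|\psi\rangle\leq \epsilon + o(1)$ via a perturbation argument using $\|P\|\leq 1$, so the expansion test rejects with constant probability. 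If instead the smallness/uniformity test already fails, the protocol rejects directly.

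The main obstacle I expect is designing the smallness/uniformity test itself so that it has a \emph{constant} probability gap. Standard candidates --- collision probability, Hadamard-basis overlap, raw support-size estimation --- only separate ``uniform on a set of size $\delta n$'' from ``supported on a larger set'' by an additive $1/\mathrm{poly}(n)$, which is precisely the barrier faced by all previous $\QMA_{\log}(2)$ protocols. The key is to leverage non-negative amplitudes in a genuinely \emph{global}, multi-copy way, turning the near-tightness of Cauchy--Schwarz into a perturbation-stable, constant-precision certification of the ``square-root of a uniform-on-small-set'' shape, so that the resulting structural lemma feeds into the expansion test with only $O(1)$ loss.
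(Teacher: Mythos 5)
There are two genuine gaps in your approach, both exactly where you flag the obstacle. First, your smallness/uniformity test is incomplete, and the sketch you give (Cauchy--Schwarz tightness via overlap with $\ket{+}$ plus collision estimators of $\|\psi\|_4^4$) cannot deliver constant precision: the collision probability of a subset state on $\delta n$ vertices is $\Theta(1/(\delta n))$, so no constant number of copies can estimate it to the accuracy needed, and the $\ket{+}$ overlap only controls $\langle\psi,\one\rangle^2$, which by itself does not bound the support. The paper's fix is structurally different: the provers also send a \emph{complement} state $\ket{\phi}$ that is supposed to be uniform on $V\setminus S$; the verifier SWAP-tests $\ket{\psi}$ against $\one/\sqrt{n}$, $\ket{\phi}$ against $\one/\sqrt{n}$, and $\ket{\psi}$ against $\ket{\phi}$. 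For \emph{non-negative} vectors, approximate orthogonality $\langle u,v\rangle\approx 0$ together with $\langle u,\one\rangle^2\approx\delta$ and $\langle v,\one\rangle^2\approx 1-\delta$ forces $u$ to be mass-concentrated on a set of size $\approx\delta n$ (the paper's Lemma~\ref{lem:sparsity}); this is the property-testing primitive that makes the whole thing a constant-gap protocol, and it is precisely the single place where non-negativity is essential. It also requires polynomially-many-copies repetition to get Chernoff concentration, so the protocol works with $O(1)$-many $\log$-size proofs and then invokes Harrow--Montanaro to get back down to $2$ --- with exactly $2$ proofs you cannot run the symmetry/sparsity subroutines with the required confidence.

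Second, your soundness route via a structural lemma that $\ket{\psi}$ is close to a \emph{flat} subset state $\ket{\psi_{S'}}$ is both unattainable and unnecessary. No constant-precision test can certify flatness of the amplitudes on $S'$ to $o(1)$ in $\ell_2$; the sparsity test only certifies that most of the squared mass lies on a set of size $\lesssim\delta n$, with arbitrary (non-negative) amplitudes there. The paper closes this gap with a separate combinatorial/analytic lemma: it proves that $(\eta,\delta)$-SSE implies an \emph{analytic} SSE property, $|\langle A u,u\rangle|\le O(\sqrt\eta\log(1/\eta))\,d$ for every $\ell_2$-normalized $u$ with $|\supp(u)|\le\delta n$, via a sparse-support adaptation of Bilu--Linial's quadratic-form lemma. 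This handles adversarially non-flat sparse vectors directly and is the key technical ingredient you are missing. Your expansion test (phase estimation of the walk operator) is a plausible alternative to the paper's random-permutation-plus-SWAP test --- it needs a block-encoding since $A/d$ is not unitary, and Markov on the eigenvalue distribution only gives a constant (not high-probability) acceptance in the YES case, but these are fixable; the substantive content of the paper lies in the sparsity test with the complement proof and the analytic SSE lemma, neither of which appears in your proposal.
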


Our second protocol is for the unique games problem. The UG problem is
a special kind of CSP wherein the constraints are permutations and
it is enough to distinguish almost fully satisfiable instances from
those that are almost fully unsatisfiable. The fact that the
constraints of a UG instance are bijections which in turn can be
implemented as valid (\ie unitary operators) is explored in our
protocol. Although the hardness of UG remains an open problem, a
weaker version of the UG problem was recently proven to be
NP-hard~\cite{DKKMS18,KMS18,BKS19}. From our UG protocol and this
weaker version of the problem, we obtain $\NP \subseteq
\QMA_{\log}^+(2)$ with a \emph{constant} gap
(see~\cref{cor:np_qma_plus_log} below).

\begin{theorem}[Informal]\label{theo:inform:protocol:ug}
   Unique Games is in $\QMA^+_{\log}(2)$ with a constant gap.
\end{theorem}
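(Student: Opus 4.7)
The plan is to ask each of the two unentangled provers for the $O(\log n + \log k)$-qubit state
\[
\ket{\psi_\sigma} \;=\; \frac{1}{\sqrt{n}} \sum_{v \in V} \ket{v}_A \ket{\sigma(v)}_B,
\]
which has \emph{non-negative} amplitudes and encodes a candidate assignment $\sigma : V \to [k]$. In the yes case, where $\sigma$ satisfies almost all constraints, the honest provers each send one copy of $\ket{\psi_\sigma}$, and the verifier will be designed so that the honest strategy is accepted with probability at least a universal constant bounded strictly above the soundness value.

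The verifier will run a uniformly random subtest among three modular tests. The \emph{uniformity test} applies Hadamards to the $A$-register of a single proof and accepts iff the outcome is the all-zero string; because the proof has non-negative amplitudes, the acceptance probability equals $\big(\sum_v \alpha_v\big)^2 / n$ in the $A$-marginal amplitudes $\alpha_v \geq 0$, so high acceptance forces the vertex marginal to be close to uniform. The \emph{symmetry test} is a standard SWAP test between the two proofs; combined with the uniformity test it forces both proofs to be close to a common state of the form $\frac{1}{\sqrt{n}} \sum_v \ket{v}_A \ket{\phi_v}_B$ with unit-norm, non-negative label vectors $\ket{\phi_v}$. The \emph{constraint test} exploits the unitarity of the permutation constraints: the verifier applies the coherent constraint-checking unitary
\[
U_\Pi \;=\; \sum_{(u,v)\in E} |u\rangle\langle u|_{A_1} \otimes |v\rangle\langle v|_{A_2} \otimes (\pi_{uv}^{-1})_{B_2} \;+\; (\text{identity on non-edge pairs})
\]
to the joint state of the two proofs, and then performs a comparison measurement on the label registers $B_1, B_2$ conditioned on $(A_1, A_2)$ lying on an edge. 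On honest inputs, $U_\Pi$ takes $\ket{\sigma(v)}_{B_2}$ to $\ket{\pi_{uv}^{-1}(\sigma(v))}_{B_2}$, which equals $\ket{\sigma(u)}_{B_1}$ precisely when the UG constraint on $(u,v)$ is satisfied.

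The main obstacle is securing a \emph{constant} gap in the constraint test on sparse graphs. A naive classical measurement of $(A_1, A_2)$ to sample an edge would give acceptance probability only $|E|/n^2$ times the satisfaction rate, which vanishes on bounded-degree instances. The remedy, and the heart of the argument, will consist of two moves: (i) reduce to a UG variant on a suitably dense or lifted instance (for example a regular graph augmented with self-loops, or the square of the original) so that $|E|/n^2$ becomes a universal constant; and (ii) implement the constraint check coherently via $U_\Pi$ rather than by classical edge-sampling, so that the three subtests compose cleanly and the acceptance probability of the whole protocol decomposes into an honest-assignment contribution plus a bounded error. Soundness will then follow from a rounding argument that uses non-negativity in an essential way: the uniformity and symmetry tests, combined with the non-negative amplitude constraint, force both proofs to be close to a state $\frac{1}{\sqrt{n}} \sum_v \ket{v}_A \ket{\phi_v}_B$ with non-negative $\ket{\phi_v}$; sampling $\sigma(v)$ from the distribution $|\braket{a}{\phi_v}|^2$ on $[k]$ yields a random classical assignment whose expected UG value is within a constant factor of the constraint-test acceptance probability, delivering a constant-gap $\QMA^+_{\log}(2)$ protocol for UG.
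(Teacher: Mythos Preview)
You correctly identify the proof encoding $\ket{\psi_\sigma}$, the need for symmetry and marginal-uniformity checks, and---crucially---the $|E|/n^2$ bottleneck in any constraint test that works by sampling an edge from the product of two independent vertex registers. However, your proposed remedies do not close this gap. Remedy~(i), densifying via self-loops or squaring, fails: adding $\Theta(n)$ identity-constrained self-loops per vertex raises the no-case value to $1 - O(d/n)$, destroying the constant gap, while squaring leaves the degree bounded and hence $|E|/n^2 = O(d^2/n)$ still vanishing. Remedy~(ii), ``implement coherently via $U_\Pi$,'' is not a fix on its own: the unitary $U_\Pi$ you define still acts as the identity on non-edge pairs, so any subsequent comparison either accepts those pairs (ruining soundness) or rejects them (ruining completeness). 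The paper's actual mechanism is different in kind: it exploits the decomposition of the $d$-regular adjacency matrix into $d$ perfect matchings $\pi_1,\ldots,\pi_d$, defines a \emph{single-copy} unitary $\Pi_r\colon \ket{i}\ket{v}\mapsto \ket{\pi_r(i)}\ket{f_{(i,\pi_r(i))}(v)}$, applies $\Pi_r$ for a uniformly random $r$ to one copy, and swap-tests against another copy. Every vertex is hit by its matched neighbor with probability~$1$, so there is no edge-sampling penalty at all; the swap-test overlap $\E_r |\braket{\psi}{\Pi_r\psi}|^2$ is then directly controlled by $\val(\fI)$.

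A second gap concerns soundness against ``spread-label'' cheats. Your uniformity test only forces the $A$-marginal to be near-uniform; it does not prevent each $\ket{\phi_v}$ from being, say, the uniform superposition over~$\Sigma$. With swap-based comparison such a state passes the constraint test with probability~$1$ (the uniform state is permutation-invariant), yet your proposed rounding $\sigma(v)\sim |\braket{a}{\phi_v}|^2$ yields value only~$1/|\Sigma|$, so the claimed ``within a constant factor'' relation between test acceptance and rounded value is false for this test. The paper handles this with two additional primitives you do not have: a \emph{sparsity test} (using an auxiliary ``complement'' proof and relying essentially on non-negativity) to force the global state close to a subset state in $\SSS_{1/|\Sigma|}$, and a \emph{validity test} (Fourier-transform the label register and check for $\ket{0}$) to then force proximity to the valid set $\cV$ of states encoding a single label per vertex. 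Only after these two tests does the labeling test have the claimed soundness.
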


A key novelty of our protocols is their \emph{global} and
\emph{coherent} manipulation of quantum proofs leading to
\emph{constant} gaps. The previous protocols for $\QMA_{\log}(2)$ with
a logarithmic proof size are
\emph{local} in the sense that they need to read \emph{local}
information\footnote{Roughly speaking, they treat a quantum proof as
  quantum random access codes that encodes $n$ bits using $\log_2(n)$
  qubits. By Nayak's bound the probability of recovering a queried
  position is polynomially small in $n$.} from the quantum proofs
thereby suffering from vanishingly small gaps. Furthermore,
the previous protocol with a constant gap treats the quantum proofs as
classical probability distributions (\eg relying on the birthday paradox)
and this classical treatment ends up requiring polynomially many
proofs to achieve the constant gap.

Another interesting feature of our protocols is that they already
almost work in the general amplitude case in the sense that each
protocol isolates a single property testing task relying on
non-negative amplitudes. If such a property testing task can be
generalized to general amplitudes, then the corresponding protocol
works in $\QMA_{\log}(2)$ as well.

As discussed earlier, by \cref{theo:inform:protocol:ug} together with the work
on the $2$-to-$2$ conjecture, we obtain that NP is contained in
$\QMA^+_{\log}(2)$ with a \emph{constant} gap.

\begin{corollary}[Informal]\label{cor:np_qma_plus_log}
  $\NP \subseteq \QMA^+_{\log}(2)$ with a constant gap.
\end{corollary}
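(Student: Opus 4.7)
The plan is to combine the $\QMA^+_{\log}(2)$ protocol for Unique Games in \cref{theo:inform:protocol:ug} with a known NP-hardness result for a gap version of Unique Games.

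First, I would invoke the results of \cite{DKKMS18,KMS18,BKS19} to obtain absolute constants $0 < s < c \le 1$ for which distinguishing Unique Games instances of value at least $c$ from instances of value at most $s$ is NP-hard. The $2$-to-$2$ Games Theorem supplies exactly such a constant-gap hardness for UG, after the standard reduction from $2$-to-$2$ games to UG (which preserves the permutation-constraint structure that the protocol of \cref{theo:inform:protocol:ug} will use).

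Next, I would apply \cref{theo:inform:protocol:ug} to this gap UG problem. That theorem provides a $\QMA^+_{\log}(2)$ verifier whose completeness and soundness depend only on the value of the input UG instance: whenever the value is at least $c$ the verifier accepts with some probability $\alpha(c)$, and whenever the value is at most $s$ it accepts with probability at most $\beta(s) < \alpha(c)$. Since $c$ and $s$ are absolute constants, the quantum gap $\alpha(c) - \beta(s)$ is also an absolute constant, and the proof size remains logarithmic in the UG instance.

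Finally, for any $L \in \NP$, I would chain the polynomial-time Karp reduction from $L$ to gap UG with the verifier above. The reduced UG instance has size polynomial in the input length, so the quantum proofs stay logarithmic and the gap stays constant, giving $L \in \QMA^+_{\log}(2)$ with a constant gap and hence the claimed inclusion.

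The main obstacle I anticipate is parameter alignment between the hardness result and the protocol. The $2$-to-$2$ Games Theorem most naturally delivers a UG gap of the shape $(\tfrac12 - \eps, \eps)$, which is bounded away from perfect completeness, so one must verify that the analysis of \cref{theo:inform:protocol:ug} tolerates completeness strictly below $1$ and still produces a constant quantum separation from soundness $\eps$. If the protocol's quantitative analysis requires completeness closer to $1$, the fix is to first sharpen the hardness via a UG gap-amplifying transformation that preserves the permutation structure (and hence the unitarity the protocol exploits), and then feed the amplified instance into the verifier; checking that this composition preserves both the algebraic structure the protocol relies on and the constant quantum gap is the only place the argument might require nontrivial care.
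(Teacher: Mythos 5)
Your high-level plan matches the paper's: combine the UG protocol of \cref{theo:inform:protocol:ug} (which is \cref{thm:ug-in-qma2}, applying to $(1-\delta,\eta)$-GapUG whenever $(1-\delta)^2>\eta$) with the NP-hardness of $(1/2,\eta)$-GapUG from the $2$-to-$2$ Games line of work, and chain the Karp reduction. That part is correct. The parameter concern you flag, however, is not where the real issue lies: the protocol's hypothesis $(1-\delta)^2>\eta$ is directly met by $1-\delta=1/2$ and $\eta<1/4$, so there is no need to ``tolerate completeness closer to $1$'' or to invoke a gap-amplifying transformation on the UG side (parallel repetition wouldn't raise completeness anyway).

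The genuine gap in your argument is the \emph{regularity} requirement, which you do not address. The protocol of \cref{thm:ug-in-qma2} is built on decomposing the constraint graph's edge set into $d$ perfect matchings $\pi_1,\ldots,\pi_d$ (equivalently, writing the adjacency matrix as a sum of $d$ permutation matrices) and turning each matching into a unitary $\Pi_r$ that the verifier applies coherently. This only makes sense when the instance is $d$-regular. The NP-hard GapUG instances coming out of the $2$-to-$2$ reduction are \emph{not} regular, so \cref{thm:ug-in-qma2} does not apply to them as is. The paper closes exactly this hole in \cref{sec:regularization}: it runs a Dinur-style regularization (\cref{thm:regularization}) that replaces each vertex by an expander cloud with equality constraints, producing a $(d+1)$-regular UG instance. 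This changes both the completeness and the soundness — they become $1-\frac{1}{2(d+1)}$ and $1-\frac{1-\eta}{d+1}$ respectively — and the paper then verifies that for $\eta<\frac{1}{4(d+1)}$ the new parameters still satisfy $(1-\delta')^2>\eta'$, so the regularized instance can be fed to the protocol. Without this step your argument is incomplete, and the fix you propose (UG gap amplification) does not substitute for it, since amplification does not produce a regular instance.
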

By virtue of the \emph{constant} gaps of our protocols for
$\QMA_{\log}^+(2)$, we can ``scale up'' our results to give an
exact characterization of $\QMA^+(2)$ building on top of ideas of 
very efficient classical PCP verifiers.

\begin{theorem}\label{thm:qmaplus-nexp-intro}
  $\QMA^+(2) = \NEXP$.
\end{theorem}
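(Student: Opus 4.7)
The containment $\QMA^+(2) \subseteq \NEXP$ is immediate by classical simulation: a $\NEXP$ machine guesses classical descriptions of the two polynomial-qubit non-negative-amplitude proofs and then simulates the verifier's measurement in exponential time. The interesting direction is $\NEXP \subseteq \QMA^+(2)$, and the plan is to \emph{scale up} the $\QMA^+_{\log}(2)$ protocol underlying \cref{cor:np_qma_plus_log}, i.e., the unique games protocol of \cref{theo:inform:protocol:ug}, by plugging in a sufficiently explicit PCP for $\NEXP$.

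Concretely, given a $\NEXP$ language $L$ and an instance $x$ of size $n$, the first step is to reduce $x$ to an exponential-size gap unique games (or 2-to-2 games) instance $\Phi_x$ on $N = 2^{\poly(n)}$ variables whose syntactic form matches the input expected by the UG protocol of \cref{theo:inform:protocol:ug}. The property needed beyond the standard PCP theorem for $\NEXP$ is \emph{polynomial explicitness}: a $\poly(n)$-time procedure that, given a $\poly(n)$-bit index of a constraint of $\Phi_x$, outputs the permutation defining that constraint. The second step is to run the UG protocol on $\Phi_x$. Because the protocol acts on proofs of $O(\log N) = \poly(n)$ qubits and its verifier only inspects polynomially many constraints, each computable on the fly from the succinct description in $\poly(n)$ time, the overall quantum verifier runs in polynomial time. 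The constant completeness--soundness gap of \cref{theo:inform:protocol:ug} is independent of $N$ and hence survives verbatim, placing $L \in \QMA^+(2)$.

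The main technical obstacle is constructing the required succinct, constant-gap UG (or 2-to-2) instance for $\NEXP$. Classical results such as Babai--Fortnow--Levin--Szegedy yield very efficient PCP verifiers for $\NEXP$ whose constraints can be generated in polynomial time, but those constraints do not naturally have the permutation structure used by the UG protocol; conversely, the Dinur--Khot--Kindler--Minzer--Safra line of work yields constant-gap 2-to-2 games hardness, but only for polynomial-size $\NP$ instances. The plan is therefore to compose these two ingredients---a BFLS-style polynomially explicit $\NEXP$ PCP and a succinctly implementable 2-to-2 games reduction---so that the resulting exponential-size instance simultaneously (i) admits a $\poly(n)$-time constraint oracle and (ii) preserves the constant soundness gap. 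Verifying that both features survive composition is the delicate step, and it is precisely the ``stronger explicitness properties of the $\PCP$ for $\NEXP$'' flagged in the abstract.

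Finally, I would check that every property-testing subroutine inside the $\QMA^+_{\log}(2)$ UG protocol remains implementable in $\poly(n)$ time when the UG instance is accessed through the succinct oracle rather than read off in full, and that each subroutine invokes the non-negativity of amplitudes in exactly the same way as in the logarithmic-proof setting. If both are true, the analysis of \cref{theo:inform:protocol:ug} transfers with parameters simply relabelled from $\log N$ to $\poly(n)$, giving $L \in \QMA^+(2)$ with a constant gap and completing the characterization $\QMA^+(2) = \NEXP$.
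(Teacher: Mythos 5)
Your first and last paragraphs are fine, and you have correctly identified that the crux is making the PCP for $\NEXP$ sufficiently explicit. However, your middle step — reducing $\NEXP$ to an exponential-size \emph{unique games} instance via a succinct implementation of the 2-to-2 games reduction and then re-running the protocol of \cref{theo:inform:protocol:ug} — is not what the paper does, and more importantly it hides a genuine gap. The DKKMS/KMS proof of the 2-to-2 theorem is a long chain of reductions through Grassmann graphs, and there is no known ``succinct'' or ``implicit'' version of it that would produce a $\poly(n)$-time constraint oracle for a $2^{\poly(n)}$-size instance while preserving the constant gap. You flag this as ``the delicate step,'' but it is not merely delicate: it is an open problem, not a routine composition, and the paper explicitly avoids it.

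The paper's actual route is structurally different. Instead of forcing the constraints to be bijections so that the UG protocol applies, \cref{sec:nexp} designs a \emph{new} protocol directly for general label-cover / CSP instances arising from a standard (non-unique) PCP for $\NEXP$. The proof is of the form $\sum_j \ket{j}\ket{v_j}$ where $j$ ranges over constraints and $v_j$ encodes the tuple of queried values, and the verifier uses three $\BQP$-implementable quantum channels: $\cA$ (expand a constraint into the superposition of its queried variables and their values), $\cM_k$ (walk along the $k$-th matching of the explicit expander used in regularization to test cross-constraint consistency), and $\cB$ (evaluate the predicate locally). None of these require the constraints to be permutations, so the protocol applies to the generic label cover coming from the PCP. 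The price is that the verifier must, given a variable, enumerate which constraints touch it and vice versa — this is the ``doubly explicit'' property (\cref{thm:explicit-NEXP-PCP}), together with ``strong uniformity,'' and \cref{sec:pcp-nexp} proves that a careful composition of a BFLS-style outer PCP with a Hadamard inner PCP achieves it. So the extra explicitness you correctly guessed was needed is about the adjacency structure of the constraint--variable bipartite graph (in both directions), not about making the 2-to-2 reduction succinct. Your version would place a much heavier burden on a reduction that is not known to support it, whereas the paper's protocol for general CSPs removes the need for unique constraints entirely and has perfect completeness as a bonus.
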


The above characterization is shown by designing a \emph{global}
$\QMA^+(2)$ protocol for NEXP. To design this \emph{global} protocol,
we not only rely on the properties of the known efficient classical
PCP verification for NEXP, but we need additional explicitness and
regularity properties. Regarding the explicitness, we call \emph{doubly
  explicit} the kind of PCP required in our \emph{global}
protocol (in analogy to the terminology of graphs). Roughly speaking,
doubly explicitness means that we can very efficiently not only
determine the variables appearing in any given constraint but also
reverse this mapping by very efficiently determining the
constraints in which a variable appears. Here, we prove that these
properties can be indeed obtained by carefully combining known PCP
constructions.

An intriguing next step is to explore the improved understanding of
the unentangled proof verification from our protocols in the general
amplitude case. Investigating problems like SSE and UG might provide
more structure towards this goal. Characterizing the complexity of
$\QMA(2)$ would be extremely interesting whatever this
characterization turns out to be.

At this moment, a natural question remains is what is the relationship between $\QMA^+(2)$ and $\QMA(2)$. We prove that they can simulate each other to some extent. In particular, a $\QMA(2)$ protocol can always be simulated by a $\QMA^+(2)$ protocol without any loss in the completeness and soundness. This direction is not  surprising at this point, as $\QMA^+(2)=\NEXP.$ On the other direction, we show that a $\QMA(2)$ protocol can also simulate the $\QMA^+(2)$ protocol at the cost of worsening the soundness by a multiplicative factor ofat most 4. An immediate corollary is the following:
\begin{corollary}
If $\NEXP\subseteq \QMA^+(2)$ with a completeness and soundness gap at least $\gapLarge$. Then
\[
    \QMA(2) = \NEXP.
\]
\end{corollary}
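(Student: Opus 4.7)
The plan is to combine the hypothesis with the simulation result described in the paragraph just preceding the corollary: every $\QMA^+(2)$ protocol can be simulated by a $\QMA(2)$ protocol whose completeness is unchanged while its soundness increases by a multiplicative factor of at most $4$. Since $\QMA(2) \subseteq \NEXP$ is trivial (a $\NEXP$ machine can guess classical descriptions of the two polynomial-size quantum proofs and simulate the verifier exactly), it suffices to prove the reverse inclusion $\NEXP \subseteq \QMA(2)$ under the hypothesis.

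Fix any $L \in \NEXP$. By assumption there is a $\QMA^+(2)$ verifier $V^+$ for $L$ with completeness $c$ and soundness $s$ satisfying $c - s \geq 3/4 + \eps$, where $\eps := 1/\poly(n)$. Since $c \leq 1$, this forces $s \leq 1/4 - \eps$. Applying the $\QMA^+(2)$-to-$\QMA(2)$ simulation yields a $\QMA(2)$ verifier with completeness $c$ and soundness at most $4s$, and its gap satisfies
\[
c - 4s \;=\; (c-s) - 3s \;\geq\; \left(\tfrac34 + \eps\right) - 3\left(\tfrac14 - \eps\right) \;=\; 4\eps \;=\; \tfrac{1}{\poly(n)}\mper
\]
A $1/\poly(n)$ completeness--soundness gap is admissible in the definition of $\QMA(2)$ (and can be amplified to constants by standard means), so $L \in \QMA(2)$. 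Combined with the trivial inclusion $\QMA(2) \subseteq \NEXP$, this gives $\QMA(2) = \NEXP$.

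The argument is essentially bookkeeping once the $\QMA^+(2) \to \QMA(2)$ soundness simulation is in hand. The only delicate point, and the reason the threshold $3/4$ appears in the statement, is that a factor-$4$ blow-up of soundness exactly consumes a gap of $3/4$ when $c = 1$; the hypothesis must therefore push the gap strictly above $3/4$ by at least an inverse polynomial in order to leave a non-trivial gap after the simulation. The real content of the corollary is thus the interface with the open problem of \emph{strong} gap amplification for $\QMA^+(2)$: any procedure that can lift the completeness--soundness gap of a $\QMA^+(2)$ protocol past the threshold $3/4$ would immediately collapse $\QMA(2)$ to $\NEXP$.
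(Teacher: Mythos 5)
Your proof is correct and follows the same route the paper takes: invoke \cref{lem:qmaplus-in-qma} to pass from a $\QMA^+(2,c,s)$ protocol to a $\QMA(2,c,4s)$ protocol, then observe that the hypothesized gap of $3/4+1/\poly(n)$ forces $s\le 1/4-1/\poly(n)$ so that $c-4s\ge 1/\poly(n)$, which standard $\QMA(2)$ gap amplification then upgrades to a constant gap. The accompanying remark about why the $3/4$ threshold is exactly the break-even point of the factor-$4$ soundness blow-up is a correct and helpful gloss on the same argument.
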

Therefore, a strong enough gap amplification for $\QMA^+(2)$ would solve the long-standing open problem of characterizing $\QMA(2)$.

\paragraph{Organization.}This document is organized as follows. In~\cref{sec:strategy}, we give
an overview of our global protocols. In~\cref{sec:prelim}, we formally
define $\QMA^+(2)$ and its variants as well as fix some notation and
recall basic facts. In~\cref{sec:prop_test}, we develop some quantum
property testing primitives that will be common to our
protocols. In~\cref{sec:SSE-protocol}, we present our global protocol
for SSE. In~\cref{sec:ug-protocol}, we present our global protocol for
UG and we use it to prove $\NP \subseteq \QMA_{\log}^+(2)$ with
a constant gap. In~\cref{sec:nexp}, we prove the characterization
$\QMA^+(2) = \NEXP$. In the last section, \cref{sec:qma2-vs-qma2+}, we discuss the relationship between the complexity class of $\QMA^+(2)$ and $\QMA(2)$, pointing out a potential direction towards the ``$\QMA(2)=\NEXP$?'' problem.

\section{Preliminaries}\label{sec:prelim}

Let $\N, \R, \C $ stand for the natural, real, and complex numbers. $\N^+$ denotes the positive natural numbers. For any real number $x$,
\[
\sgn(x)= \begin{cases}
    1 & x > 0;\\
    0 & x = 0;\\
    -1 & x<0.
\end{cases}
\]
In this paper, $\log$ stands for the logarithm to base 2.
For $p \in [1,\infty)$, we denote the $\ell_p$-norm of $u \in \C^n$ as
$\norm{u}_p$, \ie $\norm{u}_p = \left(\sum_{i=1}^n
\abs{u_i}^p\right)^{1/p}$. We omit the subscript for the $\ell_2$-norm, \ie $\norm{u} := \norm{2}_2$.
We denote the $\ell_\infty$-norm of $u \in \C^n$ as $\norm{u}_\infty$, \ie
$\norm{u}_\infty = \max_{i \in [n]} \abs{u_i}$.
Let  $\unitC_{n}:=\{u\in\C^{n+1}:\|u\|=1\}$ be the $n$-dimensional sphere and 
$\unitR_{n}:=\{u\in(\posR)^{n+1}:\|u\|=1\}$ be the intersection of the $n$-dimensional sphere and the non-negative orthant. 
The subscript will almost always be omitted in this manuscript
since it can be confusing and the dimension is normally clear from the context.
Adopt the short-hand notation $[n]=\{1,2,\ldots,n\}$. For any universe $U$ and any subset $S\subseteq U$, let $\bar S:=U \setminus S$. Denote the characteristic vector of $S$ by $\one_S$, i.e., $\one_S\in \R ^U$ and 
\[
\one_S(x) = \begin{cases}
    1 & \text{if } x\in S,\\
    0 & \text{otherwise.}
\end{cases}
\]
For a logical condition $C,$ we use the Iverson bracket
\[
\1{C}=\begin{cases}
1 & \text{if \ensuremath{C} holds,}\\
0 & \text{otherwise.}
\end{cases}
\]

Let $\Sigma$ be an arbitrary non-empty alphabet. For any strings $s\in \Sigma^*$, $|s|$ denotes its length. For and $I\subset \N$, we denote the substring of $s$ with index in $I$ by $s|_I$. Thus, $s|_{\{i_1,i_2,\ldots,i_m\}} = s_{i_1}s_{i_2}\cdots s_{i_m}$. For two strings $s, t$, we use $s\prec t$ to mean that $t$ is a prefix of $s$.

We adopt the Dirac notation for vectors representing quantum states, e.g., $\ket\psi, \ket\phi$, etc. In this paper, all the vectors of the form $\ket\psi$ are unit vectors. Given any pure state $\ket\psi$, we adopt the convention that its density operator is denoted by the Greek letter without the ``ket'', e.g. $\psi = \ket\psi\bra\psi$. 
Given any set $H\subseteq \cH$ for some Hilbert space $\cH$, $\conv H$ is the convex hull of $H$. 
One particularly interesting set of states to us is the \emph{separable} states. We adopt the following notation for the set of density operators regarding separable states,
\begin{align*}
  \SEP(d,r) \coloneqq \conv{\psi_1 \otimes \cdots \otimes \psi_r \mid \ket{\psi_1},\ldots,\ket{\psi_r} \in \C^d}.
\end{align*}
A related notion is that of separable measurement, whose formal definition is given below.
\begin{definition}[Separable measurement]
A measurement $M = (M_0, M_1)$ is separable if in the yes case, the corresponding Hermitian matrix $M_1$ can be represented as a conical combination of two operators acting on the first and second parts, i.e.,
for some distribution $\mu$ over the tensor product of PSD matrices $\alpha$ and $\beta$ on the corresponding space,
\[
M_1 = \int  \alpha \otimes \beta ~\mathrm{d}\mu.
\]
\end{definition}
We record the following well-known fact. An interested reader is referred to~\cite{harrow2013church} for a formal proof.
\begin{fact}[Folklore] \label{fact:swap-test-sep}
The swap test is separable.
\end{fact}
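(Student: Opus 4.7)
My plan is to unpack the definition of the swap test and show directly that its accept-operator admits a conical combination of product PSD operators, using the standard integral representation of the symmetric projector.

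First, I would recall that the standard swap test, on two registers each holding a state in $\C^d$, has accept-operator
\[
M_1 \;=\; \tfrac{1}{2}\bigl(I + F\bigr),
\]
where $F$ is the SWAP permutation on $\C^d \otimes \C^d$. This is a routine calculation from the Hadamard–controlled-SWAP–Hadamard circuit: conditioning on outcome $0$ on the ancilla implements exactly the projection onto the $+1$-eigenspace of $F$, i.e.\ onto the symmetric subspace $\mathrm{Sym}^2(\C^d)$. Thus $M_1$ equals the orthogonal projector $\Pi_{\mathrm{sym}}$.

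Next, I would invoke the folklore integral representation of the symmetric projector,
\[
\Pi_{\mathrm{sym}} \;=\; \binom{d+1}{2} \int_{\unitC} \psi \otimes \psi \, \mathrm{d}\psi,
\]
where $\mathrm{d}\psi$ denotes the uniform (Haar) measure on unit vectors $\ket{\psi} \in \C^d$ and, following the paper's convention, $\psi = \ket{\psi}\!\bra{\psi}$. Taking $\alpha = \beta = \psi$ and letting $\mu$ be the positive measure $\binom{d+1}{2}\,\mathrm{d}\psi$ then exhibits $M_1$ as a conical combination of tensor products of PSD operators (in fact, rank-one projectors), matching verbatim the definition of a separable measurement given immediately before the fact.

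The only non-elementary ingredient is the integral identity itself, which is the familiar content referenced by \emph{folklore}. A short justification: the operator $T := \int_{\unitC} \psi\otimes\psi\,\mathrm{d}\psi$ commutes with every $U \otimes U$ for unitary $U$ and its image lies in $\mathrm{Sym}^2(\C^d)$; since $\mathrm{Sym}^2(\C^d)$ is an irreducible $U(d)$-representation, Schur's lemma forces $T$ to be a positive scalar multiple of $\Pi_{\mathrm{sym}}$, and the scalar is fixed by comparing traces, using $\mathrm{tr}(T) = 1$ and $\mathrm{tr}(\Pi_{\mathrm{sym}}) = \binom{d+1}{2}$. I would not reprove this in the text but instead cite~\cite{harrow2013church} as already referenced. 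The only thing to be mildly careful about is matching conventions: the ``yes'' outcome of the swap test is outcome $0$ of the ancilla, so one must verify the sign in $M_1 = \tfrac12(I+F)$ rather than $\tfrac12(I-F)$; with that fixed, nothing else is subtle.
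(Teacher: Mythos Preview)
Your proposal is correct and follows essentially the same approach as the paper: identify the accept-operator of the swap test as the projector onto the symmetric subspace, then invoke the standard integral representation $\Pi_{\mathrm{sym}} \propto \int \psi\otimes\psi\,\mathrm{d}\psi$ to exhibit it as a conical combination of product PSD operators. You supply a bit more detail (the explicit normalization constant and the Schur's-lemma justification), but the argument is the same one the paper gives, with the same citation to~\cite{harrow2013church}.
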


\paragraph{Matrix Analysis} Given any matrix $M\in \C^{n\times n}$, $M^\dagger$ is its  conjugate transpose. Let $\sigma_1 \ge \sigma_2 \ge \ldots \ge \sigma_n$ denote its singular values. Then the trace norm $\|\cdot\|_1$, Frobenius norm $\|\cdot\|_F$ are defined as below
\begin{align*}
    \|M\|_1 = \sum_i \sigma_i, \qquad \|M\|_F = \sqrt{\sum_i \sigma_i^2} .
\end{align*}
The Frobenius norm also equals the square root of sum of squared lengths, i.e., $\|M\|_F=\sqrt{\sum_{i,j} |M(i,j)|^2}$.

For a positive semi-definite (PSD) matrix $M$, $\|M\|_F = \sqrt{\trace M ^2}$.  
For two PSD matrices, there is one (of many) analogous matrix Cauchy-Schwarz inequality.
\begin{equation}\label{eq:matrix-cauchy-schwarz}
\trace(\sigma\rho)\le \|\sigma\|_F \cdot\|\rho\|_F.
\end{equation}
We adopt the notation $\succeq$ to denote the partial order  that $\sigma \succeq \rho$ if $\sigma-\rho$ is  positive semi-definite.

\subsection{Quantum Merlin-Arthur with Multiple Provers}
The class $\QMA(k)$ can be formally defined in more generality as
follows.

\begin{definition}[$\QMA_\ell(k,c,s)$]\label{def:qma_k}
  Let $k \colon \mathbb{N} \to \mathbb{N}$ and $c,s,\ell \colon \mathbb{N} \to \mathbb{R}^{+}$
  be polynomial time computable functions.
  A promise problem $\mathcal{L}_{\textup{yes}},\mathcal{L}_{\textup{no}} \subseteq \set{0,1}^*$ is in $\QMA_\ell(k,c,s)$ if there exists a $\BQP$
  verifier $V$ such that for every $n \in \mathbb{N}$ and every $x \in \set{0,1}^n$, 
  \begin{itemize}[label={}]
    \item \textbf{\textup{Completeness:}} If $x \in \mathcal{L}_{\textup{yes}}$, then there exist unentangled states $\ket{\psi_1}, \ldots, \ket{\psi_{k(n)}}$, each on at most $\ell(n)$ qubits,
                                        s.t.\ $\Pr[V(x, \ket{\psi_1} \otimes \cdots \otimes \ket{\psi_{k(n)}})\textup{ accepts}] \ge c(n)$.
\vspace{\parsep}
    \item \textbf{\textup{Soundness:}} If $x \in \mathcal{L}_{\textup{no}}$, then for every unentangled states $\ket{\psi_1}, \ldots, \ket{\psi_{k(n)}}$, each on at most $\ell(n)$ qubits,
                                        we have \ $\Pr[V(x, \ket{\psi_1} \otimes \cdots \otimes \ket{\psi_{k(n)}})\textup{ accepts}] \le s(n)$.
  \end{itemize}
\end{definition}

Harrow and Montanaro proved that: For any state $\ket \psi \in \C^{d_1}\otimes \C^{d_2}\otimes \cdots\otimes \C^{d_k}$, if 
\[
    \max_{\phi_i \in \C^{d_i}} \langle \psi \mid \phi_1 \phi_2\ldots \phi_k\rangle = 1-\epsilon,
\]
then the \emph{product test} rejects $\ket\psi^{\otimes 2}$ with probability $\Omega(\epsilon)$. Based on this 
product test, Harrow and Montanaro further showed in the $\QMA$ protocols, the number
of provers can always be reduced to $2$.
\begin{theorem}[Harrow and Montanaro~\cite{HM13}]\label{thm:HM}
For any $\ell, k, 0\le s < c \le 1$,
\[
\QMA_\ell(k, c,s) \subseteq \QMA_{k\ell}(2, c',s'),
\]
where $c'=(1+c)/2$ and $s'=1-(1-s)^2/100$.
\end{theorem}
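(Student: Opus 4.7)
The plan is to reduce a $k$-prover protocol to a $2$-prover protocol by having each new prover send what should be the concatenation of all $k$ original witnesses. Exploiting the product test whose soundness is quoted just before the theorem statement, the new verifier enforces both that the proofs have the required product structure and that the underlying product state would be accepted by the original $V$.

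Concretely, I would define a verifier $V'$ that takes two proofs $\ket{\Phi_A}, \ket{\Phi_B}$, each of $k\ell$ qubits viewed as $k$ consecutive blocks of $\ell$ qubits. With probability $1/2$, $V'$ performs the \emph{product test}: a swap test between the $i$-th block of $\ket{\Phi_A}$ and the $i$-th block of $\ket{\Phi_B}$ for each $i \in [k]$, accepting if and only if all $k$ swap tests accept. With probability $1/2$, $V'$ picks one of the two proofs uniformly at random, interprets it as $\ket{\psi_1}\otimes\cdots\otimes\ket{\psi_k}$, and runs the original $V(x,\ket{\psi_1}\otimes\cdots\otimes\ket{\psi_k})$.

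Completeness is almost immediate: when both provers send the honest witness $\ket{\psi_1}\otimes\cdots\otimes\ket{\psi_k}$, the product test accepts with probability $1$ since each block-wise swap test compares two identical states, and $V$ accepts with probability at least $c$. The overall acceptance is at least $\tfrac{1}{2}\cdot 1 + \tfrac{1}{2}\cdot c = (1+c)/2 = c'$.

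Soundness is the main obstacle. Given arbitrary pure proofs $\ket{\Phi_A}, \ket{\Phi_B}$, let $1-\epsilon$ be the maximum overlap of the joint $2k$-partite state with a symmetric fully-product state of the form $\ket{\psi_1}\otimes\cdots\otimes\ket{\psi_k}\otimes\ket{\psi_1}\otimes\cdots\otimes\ket{\psi_k}$. By the product-test theorem recalled just before the statement, test (a) rejects with probability $\Omega(\epsilon)$; meanwhile, when $\epsilon$ is small, the single-proof state handed to $V$ in test (b) is $O(\sqrt{\epsilon})$-close in trace distance to a valid product witness, so test (b) accepts with probability at most $s + O(\sqrt{\epsilon})$. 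Combining the two regimes via the law of total probability and optimizing over $\epsilon$ gives the claimed bound $s' = 1 - (1-s)^2/100$, with the explicit constant $100$ arising from bookkeeping the constants hidden in the product test's $\Omega(\cdot)$ and in the Fuchs-van de Graaf conversion between fidelity and trace distance. The delicate points are (i) handling the potential asymmetry between $\ket{\Phi_A}$ and $\ket{\Phi_B}$, which one can absorb either by a symmetrization argument or by exploiting that test (b) chooses a random side, and (ii) tracking the constants tightly enough through the two-regime balancing to hit the stated form of $s'$.
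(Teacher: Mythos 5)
The paper does not prove this theorem -- it is imported from Harrow--Montanaro -- but the paper does carry out the same technique in full in the proof of Lemma~\ref{lem:qma-gap-amp-naive}, and that is a fair benchmark. Your protocol (flip a fair coin; either run the blockwise product test on the two proofs, or hand a uniformly random one of the two proofs to the original verifier~$V$) and your completeness calculation $\tfrac12 \cdot 1 + \tfrac12 \cdot c = (1+c)/2$ are both exactly right and match the paper's treatment.

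The soundness sketch, however, has a real gap in the step where you invoke the product test. The theorem you are leaning on is stated for $\ket{\psi}^{\otimes 2}$, i.e.\ two \emph{identical} copies, and it is parameterized by $\max_{\phi_i}\abs{\braket{\psi}{\phi_1\cdots\phi_k}}$. You instead apply it to $\ket{\Phi_A}\otimes\ket{\Phi_B}$ with $\Phi_A\ne\Phi_B$, and you parameterize by $1-\epsilon := \max_{\psi_1,\dots,\psi_k}\abs{\braket{\Phi_A\Phi_B}{\psi_1\cdots\psi_k\,\psi_1\cdots\psi_k}}^2$. Neither move is licensed by the stated theorem: the quoted bound simply does not say anything about asymmetric inputs, and your $\epsilon$ is a joint quantity that the theorem never controls. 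Concretely, if $\Phi_A$ and $\Phi_B$ are two \emph{different but exactly product} states, then each is at zero distance from the set of product states, yet your $\epsilon$ can be a large constant; the product test does reject in that situation, but the claim that it rejects with probability $\Omega(\epsilon)$ is an assertion that needs its own argument, not a citation. The missing ingredient -- which the paper records as Lemma~\ref{cor:product-test}, $\pt(\ket\psi,\ket\phi) \le \tfrac12\bigl(\pt(\ket\psi,\ket\psi)+\pt(\ket\phi,\ket\phi)\bigr)$, and uses in the proof of Lemma~\ref{lem:qma-gap-amp-naive} -- is what lets you reduce to the symmetric case: define $\omega_A$ and $\omega_B$ as the maximal product overlaps of $\Phi_A$ and $\Phi_B$ \emph{separately}, bound $\pt(\Phi_A,\Phi_A)$ and $\pt(\Phi_B,\Phi_B)$ by the quoted theorem, and then the entire acceptance probability becomes an average over $i\in\{A,B\}$ of a function of $\omega_i$ alone, which you optimize pointwise. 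You flag ``a symmetrization argument'' as a fix, and that is exactly the right instinct, but you should name and use this lemma rather than assert that your joint-overlap $\epsilon$ plugs into the $\ket\psi^{\otimes 2}$ statement directly; without it, the step ``test (a) rejects with probability $\Omega(\epsilon)$'' is unsupported.
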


The class $\QMA(k)^+$ is formally defined in more generality as
follows.

\begin{definition}[$\QMA_\ell^+(k,c,s)$]\label{def:qma_plus_k}
  Let $k \colon \mathbb{N} \to \mathbb{N}$ and $c,s,\ell \colon \mathbb{N} \to \mathbb{R}^{+}$
  be polynomial time computable functions.
  A promise problem $\mathcal{L}_{\textup{yes}},\mathcal{L}_{\textup{no}} \subseteq \set{0,1}^*$ is in $\QMA_\ell^+(k,c,s)$ if there exists a $\BQP$
  verifier $V$ such that for every $n \in \mathbb{N}$ and every $x \in \set{0,1}^n$,
  \begin{itemize}[label={}]
    \item \textbf{\textup{Completeness:}} If $x \in \mathcal{L}_{\textup{yes}}$, then there exist unentangled states $\ket{\psi_1}, \ldots, \ket{\psi_{k(n)}}$, each on at most $\ell(n)$ qubits and with real non-negative amplitudes,
                                        s.t.\ $\Pr[V(x, \ket{\psi_1} \otimes \cdots \otimes \ket{\psi_{k(n)}})\textup{ accepts}] \ge c(n)$.
\vspace{\parsep}
    \item \textbf{\textup{Soundness:}} If $x \in \mathcal{L}_{\textup{no}}$, then for every unentangled states $\ket{\psi_1}, \ldots, \ket{\psi_{k(n)}}$, each on at most $\ell(n)$ qubits and with real non-negative amplitudes,
                                        we have \ $\Pr[V(x, \ket{\psi_1} \otimes \cdots \otimes \ket{\psi_{k(n)}})\textup{ accepts}] \le s(n)$.
  \end{itemize}
\end{definition}

In our work, we are only interested in 
\begin{align*}
&\QMA^+_{\log}(2) := \bigcup_{c-s = \Omega(1)}\QMA^+_{O(\log n)}(2, c,s),\\
&\QMA^+(2): = \bigcup_{i \in \N, ~c-s = \Omega(1)}  \QMA^+_{O(n^i)}(2,c,s).
\end{align*}
Instead of having only $2$ provers, it is much more convenient to consider $k$ provers for some  large constant $k$. This is without loss of generality, as
\cref{thm:HM} generalizes to $\QMA^+(k)$ as well. Because the product test works for general states, it in particular works
for states with non-negative amplitudes. 
Furthermore, the closest product state to a state with non-negative amplitudes also has non-negative amplitudes. 
\begin{theorem}
For any $\ell, k, 0\le s < c \le 1$,
\[
\QMA^+_\ell(k, c,s) \subseteq \QMA^+_{k\ell}(2, c',s'),
\]
where $c'=(1+c)/2$ and $s'=1-(1-s)^2/100$.
\end{theorem}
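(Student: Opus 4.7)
The plan is to follow the Harrow--Montanaro argument (\cref{thm:HM}) essentially verbatim, observing that both its verifier construction and its analysis survive intact under the restriction to non-negative amplitudes. Concretely, the two $\QMA^+$-provers each send a state on $k\ell$ qubits which, in the honest case, is the concatenation $\ket{\psi_1}\otimes\cdots\otimes\ket{\psi_k}$ of the $k$ subproofs of the original protocol. The verifier then flips a fair coin: with probability $1/2$ it partitions the first proof into $k$ blocks of $\ell$ qubits and runs the original $k$-prover verifier $V$ on them; with probability $1/2$ it runs the product test on the two $k\ell$-qubit proofs.

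For completeness, if $\ket{\psi_1},\ldots,\ket{\psi_k}$ are honest subproofs with non-negative amplitudes, then $\ket{\psi_1}\otimes\cdots\otimes\ket{\psi_k}$ itself has non-negative amplitudes on $k\ell$ qubits, so the combined honest state $(\ket{\psi_1}\otimes\cdots\otimes\ket{\psi_k})^{\otimes 2}$ is a valid $\QMA^+_{k\ell}(2,\cdot,\cdot)$ proof. The first branch accepts with probability at least $c$ and the product test accepts with probability $1$ on two identical states, yielding overall completeness $(1+c)/2$.

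For soundness, suppose two malicious provers send states $\ket{\Phi_1},\ket{\Phi_2}$ on $k\ell$ qubits, each with non-negative amplitudes, and the combined verifier accepts with probability strictly greater than $1-(1-s)^2/100$. As in the proof of \cref{thm:HM}, the product-test branch must then accept with probability strictly greater than $1-(1-s)^2/50$, so the product-test guarantee yields product states $\ket{\phi_{j,1}}\otimes\cdots\otimes\ket{\phi_{j,k}}$ whose overlap with $\ket{\Phi_j}$ is $1-O((1-s)^2)$ for $j=1,2$. The one place non-negativity enters is the claim that these $\phi_{j,i}$ can be chosen with non-negative amplitudes. Writing $\ket{\Phi_j}=\sum_{x_1,\ldots,x_k} a_{x_1,\ldots,x_k}\ket{x_1,\ldots,x_k}$ with $a_{x_1,\ldots,x_k}\ge 0$, the triangle inequality gives, for any $\ket{\phi_{j,i}}$,
\[
\bigl|\langle \Phi_j \mid \phi_{j,1}\cdots\phi_{j,k}\rangle\bigr| \;\le\; \sum_{x_1,\ldots,x_k} a_{x_1,\ldots,x_k}\,|\phi_{j,1}(x_1)|\cdots|\phi_{j,k}(x_k)|,
\]
and the right-hand side is the inner product of $\ket{\Phi_j}$ with the unit-norm product state obtained by replacing each amplitude $\phi_{j,i}(x_i)$ by $|\phi_{j,i}(x_i)|$. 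Hence the maximum overlap of $\ket{\Phi_j}$ with an arbitrary product state equals its maximum overlap with a non-negative product state, so the approximants $\ket{\phi_{j,1}}\otimes\cdots\otimes\ket{\phi_{j,k}}$ may be taken non-negative. Feeding these non-negative product approximants into $V$ then reaches the same contradiction with the original $k$-prover soundness $s$ as in \cref{thm:HM}.

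The main (and essentially only nontrivial) obstacle is the non-negativity step in soundness, but as sketched above it reduces to a one-line triangle-inequality observation once the HM reduction is in place; aside from this the proof is identical to that of \cref{thm:HM}, and no quantitative parameter changes.
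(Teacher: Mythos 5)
Your proposal is correct and matches the paper's reasoning: the paper justifies this theorem by observing that (i) the product test applies to non-negative-amplitude states as a special case, and (ii) the closest product state to a non-negative-amplitude state itself has non-negative amplitudes, which is exactly the triangle-inequality observation you supply and is the only non-trivial addition to the Harrow--Montanaro argument.
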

As a result, as long as the $\QMA^+(k,c,s)$ protocol is such that $c >1- (1-s)^2 /50$, it can be converted back to a $\QMA^+(2)$ protocol with a constant gap. The condition that $c >1- (1-s)^2 /50$ is also not much of an issue, since by a repetition involving more provers, we can amplify any constant $(c, s)$ gap to a $(1-\epsilon, \delta)$ gap
for $\epsilon,\delta$ close to 0.
In the remainder of the paper, we will use constantly many proofs without further referring to this result.

\subsection{Trace Distances}
A standard notion of distance for quantum states is that of the \emph{trace distance}. The trace distance between $\psi$ and $\phi$, denoted $\TD(\psi, \phi)$, is 
\begin{align}
\frac{1}{2}\|\psi-\phi\|_1 = \frac{1}{2}\trace\sqrt{( \psi - \phi)^\dagger (\psi-\phi)}.\label{eq:trace-dist}
\end{align}
We also use the notation $\TD(\ket\psi, \ket\phi)$ if we want to emphasize that $\psi$ and $\phi$ are pure states. The following fact provides an alternative definition for trace distance between pure states.
\begin{fact}\label{fact:trace-innerproduct}
  The trace distance between $\ket{\phi}$ and $\ket{\psi}$ is given by $\TD(\ket{\phi},\ket{\psi}) = \sqrt{1-\abs{\braket{\phi}{\psi}}^2}$.
\end{fact}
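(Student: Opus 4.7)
The plan is to compute the eigenvalues of the traceless Hermitian operator $M := \phi - \psi = \ket{\phi}\bra{\phi} - \ket{\psi}\bra{\psi}$ and then apply the definition $\TD(\ket\phi,\ket\psi) = \tfrac{1}{2}\|M\|_1 = \tfrac{1}{2}\sum_i |\lambda_i(M)|$ from~\eqref{eq:trace-dist}. Since $M$ is a difference of two rank-one projectors, its range lies entirely in $V := \Span\{\ket\phi,\ket\psi\}$, and by self-adjointness every eigenvector of $M$ for a nonzero eigenvalue also lies in $V$. Hence it suffices to diagonalize the restriction $M|_V$, a $2\times 2$ Hermitian matrix.

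Set $\alpha := \braket{\phi}{\psi}$. If $|\alpha|=1$ then $\ket\phi$ and $\ket\psi$ agree up to a global phase, so $M=0$ and the claim is immediate. Otherwise, I would build an orthonormal basis of $V$ by Gram-Schmidt, namely $\ket{e_1} := \ket\phi$ and $\ket{e_2} := (\ket\psi - \alpha \ket\phi)/\sqrt{1-|\alpha|^2}$, which gives the expansion $\ket\psi = \alpha \ket{e_1} + \sqrt{1-|\alpha|^2}\,\ket{e_2}$. Writing both $\phi$ and $\psi$ as outer products in this basis yields the explicit representation $M|_V = \begin{pmatrix} 1-|\alpha|^2 & -\alpha\sqrt{1-|\alpha|^2} \\ -\bar{\alpha}\sqrt{1-|\alpha|^2} & -(1-|\alpha|^2) \end{pmatrix}$.

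The final step is a direct $2\times 2$ computation: $\tr(M|_V)=0$ and $\det(M|_V) = -(1-|\alpha|^2)^2 - |\alpha|^2(1-|\alpha|^2) = -(1-|\alpha|^2)$. Since $M|_V$ is Hermitian with vanishing trace, its two eigenvalues must be $\pm\lambda$ with $\lambda^2 = -\det(M|_V) = 1-|\alpha|^2$. Thus $\|M\|_1 = 2\sqrt{1-|\alpha|^2}$ and the factor of $\tfrac{1}{2}$ in the definition of trace distance delivers $\TD(\ket\phi,\ket\psi) = \sqrt{1-|\braket{\phi}{\psi}|^2}$. There is no real obstacle; the only subtle point is the reduction to the two-dimensional subspace $V$, which rests on $\im(M)\subseteq V$ together with the fact that Hermitian operators are diagonalizable, ensuring the zero eigenspace absorbs everything outside $V$.
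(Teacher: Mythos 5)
Your proof is correct. The paper states this as a known fact without proof, and your argument is the standard derivation: restrict the traceless Hermitian rank-$\le 2$ operator $\phi - \psi$ to $\Span\{\ket\phi,\ket\psi\}$, write it in a Gram--Schmidt basis, and extract the eigenvalues $\pm\sqrt{1-|\braket{\phi}{\psi}|^2}$ from the trace and determinant of the resulting $2\times 2$ matrix.
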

The trace distance remains small under the tensor product.
\begin{fact}  \label{fact:trace-tensor} Let $\ket {\psi_0}, \ket{\phi_0} \in \unitC_n$ and $\ket{\psi_1},  \ket{\phi_1} \in \unitC_m$ for arbitrary $n,m\in\N$. Then
\[
\TD(\ket {\psi_0} \otimes \ket {\psi_1} , \ket{\phi_0} \otimes \ket{\phi_1})^2
\le \TD(\ket{\psi_0}, \ket{\phi_0})^2 + \TD(\ket{\psi_1}, \ket{\phi_1})^2.
\]
\begin{proof}
By the alternative definition of the trace distance,
    \begin{align}
        \TD(\ket {\psi_0} \otimes \ket {\psi_1} , \ket{\phi_0} \otimes \ket{\phi_1})^2
        & = ( 1 - |\langle \psi_0, \phi_0 \rangle|^2|\langle \psi_1, \phi_1 \rangle|^2) \nonumber \\
        &\le (1-|\langle \psi_0, \phi_0 \rangle|^2+1-|\langle \psi_1, \phi_1 \rangle|^2) \nonumber \\
        & = \TD(\ket{\psi_0}, \ket{\phi_0})^2 + \TD(\ket{\psi_1}, \ket{\phi_1})^2,\nonumber
    \end{align}
where the second step can be easily verified as $-a^2b^2 + b^2 \le 1-a^2$ for any $a,b\in[0,1]$.
\end{proof}
\end{fact}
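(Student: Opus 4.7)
The plan is to reduce everything to the pure-state identity $\TD(\ket{\phi},\ket{\psi})^2 = 1 - |\braket{\phi}{\psi}|^2$ provided by \cref{fact:trace-innerproduct}, and then finish with a one-line elementary inequality. So the first thing I would do is use multiplicativity of the inner product under tensor products to write
\[
\braket{\psi_0 \otimes \psi_1}{\phi_0 \otimes \phi_1} \;=\; \braket{\psi_0}{\phi_0}\cdot \braket{\psi_1}{\phi_1}.
\]
Applying \cref{fact:trace-innerproduct} to each side, the desired inequality becomes
\[
1 - |\braket{\psi_0}{\phi_0}|^2\,|\braket{\psi_1}{\phi_1}|^2 \;\le\; \bigl(1-|\braket{\psi_0}{\phi_0}|^2\bigr) + \bigl(1-|\braket{\psi_1}{\phi_1}|^2\bigr).
\]

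The second step is purely elementary. Setting $a = |\braket{\psi_0}{\phi_0}| \in [0,1]$ and $b=|\braket{\psi_1}{\phi_1}|\in[0,1]$, I need $1 - a^2b^2 \le (1-a^2) + (1-b^2)$. The cleanest way to see this is the identity $1 - a^2b^2 = (1-a^2) + a^2(1-b^2)$, after which the bound $a^2 \le 1$ gives $a^2(1-b^2) \le 1 - b^2$ and we are done. Combining the two steps yields the claim.

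There is no real obstacle here: the only place one might worry is whether $|\braket{\psi_0}{\phi_0}|, |\braket{\psi_1}{\phi_1}| \le 1$, which is immediate from Cauchy-Schwarz since all vectors involved are unit vectors. Note also that the statement holds for pure states only, which is precisely the regime where \cref{fact:trace-innerproduct} applies, so no additional care is needed regarding mixed-state subtleties (such as replacing the fidelity bound with the Fuchs-van de Graaf inequalities). Overall I would expect the write-up to fit comfortably in a short displayed calculation.
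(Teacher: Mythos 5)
Your proposal is correct and follows essentially the same approach as the paper: express both sides via $\TD(\ket{\phi},\ket{\psi})^2 = 1-|\braket{\phi}{\psi}|^2$, use multiplicativity of the inner product under tensor products, and reduce to the elementary inequality $1-a^2b^2 \le (1-a^2)+(1-b^2)$ for $a,b\in[0,1]$. Your factorization $1-a^2b^2 = (1-a^2) + a^2(1-b^2)$ is a slightly more explicit way of seeing the last step, which the paper simply declares ``easily verified.''
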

Two states with small trace distance are indistinguishable to quantum protocols.
\begin{fact}\label{fact:trace_norm_acc}
  If a quantum protocol accepts a state $\ket{\phi}$ with probability at most $p$, then it accepts $\ket{\psi}$  with
  probability at most $p + \TD(\ket{\phi},\ket{\psi})$.
\end{fact}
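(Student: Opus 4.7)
The plan is to reduce the statement to the standard variational characterization of trace distance. First, I would model the quantum protocol as a two-outcome POVM $(M_{\mathrm{acc}}, I - M_{\mathrm{acc}})$ with $0 \preceq M_{\mathrm{acc}} \preceq I$, possibly after including the verifier's workspace and ancillas in the Hilbert space on which $M_{\mathrm{acc}}$ acts; this is WLOG since the verifier is a unitary circuit followed by a measurement of the decision qubit, and by Naimark/Stinespring this composite can be summarized by a single effect operator. The probability that the protocol accepts $\ket{\phi}$ is then $\Tr(M_{\mathrm{acc}}\,\phi)$ and similarly for $\ket{\psi}$.

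Second, I would invoke the well-known operational definition of trace distance for density operators,
\begin{equation*}
  \TD(\phi,\psi) \;=\; \max_{0 \preceq M \preceq I}\; \Tr\bigl(M(\psi - \phi)\bigr),
\end{equation*}
which holds for arbitrary Hermitian operators (here $\psi - \phi$) and follows from the Jordan decomposition $\psi - \phi = P_+ - P_-$ with $P_\pm \succeq 0$ together with $\Tr(\psi-\phi) = 0$. Applying this inequality to the effect $M_{\mathrm{acc}}$ gives
\begin{equation*}
  \Tr(M_{\mathrm{acc}}\,\psi) - \Tr(M_{\mathrm{acc}}\,\phi) \;\le\; \TD(\phi,\psi),
\end{equation*}
and since $\TD(\phi,\psi) = \TD(\ket\phi,\ket\psi)$ by the definition in~\eqref{eq:trace-dist} specialized to pure-state density operators (consistent with \cref{fact:trace-innerproduct}), rearranging yields the claimed bound $\Tr(M_{\mathrm{acc}}\,\psi) \le p + \TD(\ket\phi,\ket\psi)$.

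There is essentially no obstacle here: the only subtlety is making sure the POVM formulation faithfully captures what ``a quantum protocol accepts'' means, in particular absorbing any auxiliary registers and the final measurement into a single effect $M_{\mathrm{acc}}$ on the input Hilbert space. Once that reduction is made, the inequality is immediate from the variational characterization of trace distance, and no calculation specific to the structure of $\ket\phi$ or $\ket\psi$ (e.g.\ non-negative amplitudes) is needed.
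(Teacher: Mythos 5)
The paper states this as a bare \emph{Fact} without proof, treating it as a standard result in quantum information. Your argument---model the protocol's acceptance as a single effect $0 \preceq M_{\mathrm{acc}} \preceq I$, then invoke the variational characterization $\TD(\phi,\psi) = \max_{0 \preceq M \preceq I} \Tr\bigl(M(\psi-\phi)\bigr)$---is exactly the canonical proof of this folklore fact, and it is correct. The Jordan-decomposition justification of the variational characterization and the remark that auxiliary registers and the final measurement can be absorbed into one effect operator are both standard and sound, so there is nothing to object to; the paper simply chose not to spell any of this out.
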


We will use the well-known swap test to compare unentangled quantum states.

\begin{fact}[Swap Test]
  Let $\ket{\phi}$ and $\ket{\psi}$ be two quantum states on the same Hilbert space.
  Then the acceptance probability of SwapTest$(\ket{\phi}, \ket{\psi})$ is
  \begin{align*}
    \frac{1}{2} + \frac{\abs{\braket{\phi}{\psi}}^2}{2}\mper
  \end{align*}
\end{fact}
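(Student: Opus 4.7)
The plan is to verify the swap test's acceptance probability by tracing the standard swap-test circuit on the input $\ket{\phi}\otimes\ket{\psi}$. Concretely, the verifier adjoins an ancilla qubit initialized to $\ket{0}$, applies a Hadamard to the ancilla, then a controlled-SWAP of the two registers with the ancilla as control, then a second Hadamard on the ancilla, and finally measures the ancilla in the computational basis, accepting iff the outcome is $\ket{0}$. Since everything prior to the measurement is unitary, the acceptance probability is completely determined by tracking the state through each gate.

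First I would write the state after the initial Hadamard, namely $\tfrac{1}{\sqrt 2}(\ket{0}+\ket{1})\otimes\ket{\phi}\otimes\ket{\psi}$, then apply the controlled-SWAP to obtain $\tfrac{1}{\sqrt 2}\bigl(\ket{0}\ket{\phi}\ket{\psi}+\ket{1}\ket{\psi}\ket{\phi}\bigr)$. Applying the second Hadamard on the ancilla and regrouping by ancilla label yields
\[
  \tfrac{1}{2}\ket{0}\bigl(\ket{\phi}\ket{\psi}+\ket{\psi}\ket{\phi}\bigr)+\tfrac{1}{2}\ket{1}\bigl(\ket{\phi}\ket{\psi}-\ket{\psi}\ket{\phi}\bigr).
\]
The probability of observing $\ket{0}$ is then the squared norm of the $\ket{0}$ branch, which equals $\tfrac14\bigl\|\ket{\phi}\ket{\psi}+\ket{\psi}\ket{\phi}\bigr\|^2$. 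Expanding this inner product and using that $\ket{\phi},\ket{\psi}$ are unit vectors while the two cross-terms contribute $\braket{\phi}{\psi}\braket{\psi}{\phi}=\abs{\braket{\phi}{\psi}}^2$ each, we get $\tfrac14\bigl(2+2\abs{\braket{\phi}{\psi}}^2\bigr)=\tfrac12+\tfrac12\abs{\braket{\phi}{\psi}}^2$, matching the claim.

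The main ``obstacle'' is really only clean bookkeeping of tensor factors and ancilla labels; there is no conceptual difficulty, which is why the fact is routinely stated without proof. The only care needed is that $\ket{\phi}$ and $\ket{\psi}$ must live on the same Hilbert space in order for the SWAP gate to be defined, and that they are unit vectors so the squared amplitude of the $\ket{0}$ branch is indeed a probability. Notably, the derivation uses nothing about whether the amplitudes are real, non-negative, or complex, so the identity applies uniformly in both the $\QMA(2)$ and $\QMA^+(2)$ settings used throughout the paper.
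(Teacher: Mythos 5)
Your derivation is correct and is the standard circuit-tracing argument for the swap test; the paper states this as a folklore Fact without proof, so there is nothing to compare against. The only small point worth noting is that the cross-term computation $\langle\phi|\psi\rangle\langle\psi|\phi\rangle = \abs{\braket{\phi}{\psi}}^2$ is exactly where the absolute value (rather than a raw inner product) enters, which you handled correctly.
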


We can equivalently state the acceptance probability of the swap test
in terms of the trace distance as follows.

\begin{remark}\label{remark:swap_test}
  Any two quantum states $\ket{\phi}$ and $\ket{\psi}$ pass the swap test with probability $ 1-\frac{1}{2}\TD(\ket{\phi},\ket{\psi})^2$.
\end{remark}

We record the following elementary facts. They are special cases of trace distance for states with nonnegative amplitudes.
\begin{claim}\label{claim:correlation-dist}
Let $u,v,z\in\unitR_d$ for any natural number $d$. Let $\epsilon>0$
be some small real constant.
\begin{enumerate}
    \item \label{enu:correlation-closeness}(Closeness preservation) If 
        $\langle u,v\rangle ^{2}\ge 1-\epsilon$.
    Then
    \[
        \left| \langle u,z\rangle^{2}- \langle v,z\rangle^{2}\right|\le 3 \sqrt{\epsilon}.
    \]
    \item \label{enu:correlation-Triangle-inquality}(Triangle inequality) If
        $ \langle u,z\rangle^{2}\ge1-\epsilon,$ and $\langle v,z\rangle^{2}\ge1-\epsilon.$
    Then
    \[
        \langle u,v\rangle^{2}\ge1-2\epsilon.
    \]
\end{enumerate}
\end{claim}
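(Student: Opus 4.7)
The plan is to prove both items by routine norm calculations, the key observation being that non-negativity of the entries of $u,v,z$ forces all three pairwise inner products to be non-negative. This lets me move freely between $\langle u,v\rangle^{2}\ge 1-\epsilon$ and $\langle u,v\rangle\ge\sqrt{1-\epsilon}\ge 1-\epsilon$, and similarly for the other inner products, which is the only place the non-negative orthant assumption enters.

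For item~(\ref{enu:correlation-closeness}), I will factor the difference of squares,
\[
\langle u,z\rangle^{2}-\langle v,z\rangle^{2}=\langle u-v,z\rangle\cdot\langle u+v,z\rangle,
\]
and bound each factor separately. Cauchy--Schwarz with $\|z\|=1$ gives $|\langle u-v,z\rangle|\le\|u-v\|$ and $|\langle u+v,z\rangle|\le\|u+v\|\le 2$. The norm $\|u-v\|$ is controlled via the identity $\|u-v\|^{2}=2(1-\langle u,v\rangle)$ together with the bound $\langle u,v\rangle\ge 1-\epsilon$ coming from the hypothesis and non-negativity, yielding $\|u-v\|\le\sqrt{2\epsilon}$. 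Multiplying and using $2\sqrt{2}<3$ gives the claimed bound $3\sqrt{\epsilon}$.

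For item~(\ref{enu:correlation-Triangle-inquality}), the plan is to decompose $u$ and $v$ into components along $z$ and perpendicular to $z$. Writing $u=\langle u,z\rangle z+u_{\perp}$ and $v=\langle v,z\rangle z+v_{\perp}$ with $u_{\perp},v_{\perp}\perp z$, the hypothesis gives $\|u_{\perp}\|^{2}=1-\langle u,z\rangle^{2}\le\epsilon$ and similarly $\|v_{\perp}\|^{2}\le\epsilon$. Since $\langle u,z\rangle,\langle v,z\rangle\ge 0$, expanding and applying Cauchy--Schwarz to $\langle u_{\perp},v_{\perp}\rangle$ yields
\[
\langle u,v\rangle \;\ge\; \langle u,z\rangle\langle v,z\rangle-\|u_{\perp}\|\|v_{\perp}\| \;\ge\; (1-\epsilon)-\epsilon \;=\; 1-2\epsilon,
\]
and the stated lower bound on $\langle u,v\rangle^{2}$ then follows from squaring this non-negative quantity. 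Morally, this is the triangle inequality for the trace distance specialized to the real non-negative setting in which inner products coincide with their absolute values.

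I do not anticipate any serious obstacle; the entire argument is Cauchy--Schwarz plus a decomposition along $z$, and the only care required is tracking how non-negativity of the entries lets me drop absolute values uniformly on all the inner products that appear.
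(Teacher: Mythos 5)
Your treatment of item~\ref{enu:correlation-closeness} is correct and is essentially the paper's argument: factor $\langle u,z\rangle^2-\langle v,z\rangle^2 = \langle u-v,z\rangle\,\langle u+v,z\rangle$, bound both factors by Cauchy--Schwarz, and control $\|u-v\|^2 = 2(1-\langle u,v\rangle)$ via non-negativity. The paper retains the slightly sharper intermediate bound $\langle u,v\rangle\ge\sqrt{1-\epsilon}$ while you relax to $1-\epsilon$, but both close within $3\sqrt{\epsilon}$.

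For item~\ref{enu:correlation-Triangle-inquality} your final step is incorrect, and in fact the inequality as stated is false. Your decomposition along $z$ correctly yields $\langle u,v\rangle\ge 1-2\epsilon$, but squaring a number in $(0,1)$ makes it smaller, so what you actually get is $\langle u,v\rangle^2\ge(1-2\epsilon)^2 = 1-4\epsilon+4\epsilon^2$, which is strictly smaller than $1-2\epsilon$ for $\epsilon\in(0,1/2)$; the stated conclusion does not follow. The loss is genuine: take $z=(1/\sqrt{2},\,1/\sqrt{2})$, $u=(1/\sqrt{2}+a,\,1/\sqrt{2}-a)/\sqrt{1+2a^2}$, $v=(1/\sqrt{2}-a,\,1/\sqrt{2}+a)/\sqrt{1+2a^2}$ for small $a>0$. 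Then $\langle u,z\rangle^2=\langle v,z\rangle^2=1/(1+2a^2)=1-\epsilon$ with $\epsilon=2a^2/(1+2a^2)$, while $\langle u,v\rangle^2=\bigl((1-2a^2)/(1+2a^2)\bigr)^2=(1-2\epsilon)^2<1-2\epsilon$. So the correct conclusion from your (otherwise clean) derivation is $\langle u,v\rangle^2\ge(1-2\epsilon)^2$, and this is tight. The paper's own proof of this item contains a parallel gap: its second step asserts $\|u-v\|^2\le\|u-z\|^2+\|v-z\|^2$, which is equivalent to $\langle u-z,\,v-z\rangle\ge 0$ and is violated by the same example. Item~\ref{enu:correlation-Triangle-inquality} does not appear to be invoked elsewhere in the paper, so the error is contained; but you should correct the constant to $(1-2\epsilon)^2$ (equivalently $\ge 1-4\epsilon$) rather than try to salvage $1-2\epsilon$.
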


\begin{proof}
The first item is bounded as below
\begin{align*}
|\langle u,z\rangle^{2}-\langle v,z\rangle^{2}| & =|\langle u-v,z\rangle|\cdot|\langle u,z\rangle+\langle v,z\rangle|\\
 & \le2\|u-v\|\\
 & \le 2\sqrt{2-2\langle u,v\rangle}\\
 & \le2\sqrt{2-2\sqrt{1-\epsilon}}\\
 & \le3\sqrt{\epsilon},
\end{align*}
where the last step can be verified by elementary calculus.

Next, we prove the second item as follows
\begin{align*}
\langle u,v\rangle^{2} & =\left(\frac{2-\|u-v\|^{2}}{2}\right)^{2}\\
 & \ge\left(\frac{2-\|u-z\|^{2}-\|v-z\|^{2}}{2}\right)^{2}\\
 & =(\langle u,z\rangle+\langle v,z\rangle-1)^{2}\\
 & \ge(2\sqrt{1-\epsilon}-1)^{2}\\
 & =5-4\epsilon-4\sqrt{1-\epsilon}\\
 & \ge1-2\epsilon,
\end{align*}
where the last step holds because $\sqrt{1-\epsilon}\le1-\epsilon/2$.
\end{proof}

\subsection{Expander Graphs}

Let $G=(V,E)$ be a $d$-regular graph. For non-empty sets $S,T
\subseteq V$, we denote by $E(S,T)$ the following set of edges $E(S,T)
= \{(x,y) \in E \mid x \in S, y \in T\}$.\footnote{The graphs are usually undirected. In this case, $E(S, S)$ actually counts the same edge twice by the definition.}
The edge expansion  of a non-empty $S \subseteq V$, denoted $\Phi_G(S)$, is
defined as
\begin{align*}
  \Phi_G(S) \coloneqq \frac{\abs{E(S,V\setminus S)}}{d\abs{S}}\mcom
\end{align*}
and it is a number in the interval $[0,1]$. For $S \subseteq V$, we
refer to relative size $\abs{S}/\abs{V}$ as the \emph{measure} of $S$.
A closely related notion called Cheeger constant for $G$, is defined as
\[
    \min_{S\subseteq G : |S|\le |G|/2} \frac{|E(S, V\setminus S)|}{|S|}.
\]

\section{Overview of Global Protocols}\label{sec:strategy}

We now give an overview of our \emph{global} protocols for SSE
in~\cref{sec:overview:sse}, for UG in~\cref{sec:overview:ug} and for
NEXP in~\cref{sec:overview:nexp}. As alluded earlier, a key insight of
these protocols is the manipulation of quantum proofs in a
\emph{global} and \emph{coherent} way in order to achieve a
\emph{constant} gap. For the problems considered here, there is always
an underlying graph to the problem whose edge set can be (or almost) decomposed into
perfect matchings. Taking advantage of this collection
of perfect matchings will be one of the aspects in allowing for a
\emph{global} manipulation of the quantum proofs in these protocols. 
It will be more convenient to design
protocols with constantly many unentangled proofs rather than just two.
Recall that due to the result of Harrow and Montanaro~\cite{HM13}, these
protocols can be converted into two-proof protocols with
a constant multiplicative increase in the proof size.

\subsection{Small Set Expansion Protocol}\label{sec:overview:sse}

We provide an overview of the SSE protocol in $\QMA_{\log}^+(2)$ with
a \emph{constant} gap from~\cref{sec:SSE-protocol}. Suppose that we are
given an input $n$-vertex graph $G$ on the vertex set $V$. Our goal is
to decide whether $G$ is a yes or no instance of
$(\eta,\delta)$-SSE. Recall that, in the yes case, there exists a set
$S$ of measure $\delta$, such that $S$ essentially does not expand,
\ie $\Phi_G(S) \le \eta \approx 0$. Nonetheless, in the no case, every
set $S$ of measure at most $\delta$ has near-perfect expansion, \ie
$\Phi_G(S) \ge 1-\eta \approx 1$.

In the design of the protocol, we are allowed two \emph{unentangled}
proofs on $O_{\eta,\delta}(\log(n))$ qubits. It is natural to ask for
one of these proofs to be a state $\ket{\psi}$ ``encoding'' a uniform
superposition of elements of a purported non-expanding set $S$ of the
form
\begin{align*}
  \ket{\psi} ~=~ \frac{1}{\sqrt{S}} \sum_{i \in S} \ket{i}\mper
\end{align*}
We now check the non-expansion of the support set of $\ket{\psi}$ as 
follows.
Suppose we could apply the adjacency matrix $A$ of $G$ directly to the
vector $\ket{\psi}$. While $A$ is not necessarily a valid quantum operation,
it will not be difficult to resolve this issue later.
If we are in the yes case and the support of $\ket{\psi}$ indeed
encodes a non-expanding set, we would have $\supp(A\ket{\psi}) \cap
\supp(\ket{\psi}) \approx \supp(\ket{\psi})$.
However, if we are in the no case, provided the size of
$\supp(\ket{\psi})$ is small (at most a $\delta$ fraction of the
vertices), the small set expansion property of $G$ would imply
$\supp(A\ket{\psi}) \cap \supp(\ket{\psi}) \approx \emptyset$.

How can we check the support conditions above? For this, suppose that
we have not only one copy of $\ket{\psi}$ but rather two equal
\emph{unentangled} copies $\ket{\psi_1} = \ket{\psi_2}$. We apply $A$
to $\ket{\psi_1}$ and then measure the correlation between
$A\ket{\psi_1}$ and $\ket{\psi_2}$.  In the yes case, the two vectors
are almost co-linear, whereas in the no case they are almost 
orthogonal. It is well-known that co-linearity versus orthogonality 
of two \emph{unentangled} quantum states can be tested via the swap test.

We now address the issue that the adjacency matrix $A$ may not be a
unitary matrix, and hence it is not necessarily a valid quantum
operation. Nonetheless, the adjacency matrix of a $d$-regular graph
can always be written as a sum of $d$ permutation matrices
$P_1, \ldots, P_d$, which are special unitary matrices. In terms of the
support guarantees described above, it is possible to show that
applying one of these permutation matrices uniformly at random in the
protocol leads to a similar behavior as applying $A$.

In the yes case, it can be shown that all goes well with the above
strategy. However, in the no case, things become more delicate
starting with the fact that $\ket{\psi}$ is an arbitrary adversarial
state of the form
\begin{align*}
  \ket{\psi} =  \sum_{i \in S} \alpha_i \ket{i}\mcom
\end{align*} 
where we have no control over the amplitudes $\alpha_i$'s magnitudes and
phases.

One important issue is that the support of $\ket{\psi}$ may not be
small (\ie at most a $\delta$ fraction), and the graph $G$ may have
large non-expanding sets even in the no case. We design a sparsity
test to enforce that its support is indeed small. The soundness of
this sparsity test takes advantage of the non-negative amplitudes
assumption to achieve dimension-independent parameters and this is the
only test of the protocol that relies on the non-negative assumption.
This points to a very natural question in quantum property testing:
how efficiently can we test sparsity\footnote{For this task, we can
have multiple unentangled copies of the state to be tested as well
multiple unentangled proofs to help the tester.} with the help of a
prover in the general amplitude case?

In our protocol, the support conditions from above are actually
checked by considering the average magnitude of the overlap between
$P_r\ket{\psi}$ and $\ket{\psi}$. This overlap governs (part of) the
acceptance probability of the protocol which can be bounded as
\begin{align*}
  \Ex{r \in [d]}{\abs{\braket{P_r \psi}{\psi}}} \le \frac{1}{d} \sum_{i,j} A_{i,j} \abs{\alpha_i} \abs{\alpha_j} = \frac{1}{d} \braket{A \abs{\psi}}{\abs{\psi}}\mcom
\end{align*}
where $\ket{\abs{\psi}} = \sum_{i \in S} \abs{\alpha_i} \ket{i}$. With
this bound, phases are no longer relevant. 

A second important and more
delicate issue is that the magnitude of the amplitudes $\alpha_i$'s of
$\ket{\psi}$ may be very far from flat. By definition, the SSE
property of the graph $G$ only states that for every ``flat''
indicator vector $\one_S$, where $S$ is any vertex set of measure at
most $\delta$, we have
\begin{align*}
  \frac{1}{d} \braket{A \frac{\one_S}{\sqrt{\abs{S}}}}{\frac{\one_S}{\sqrt{\abs{S}}}} ~\approx_{\eta,d}~ 0\mper
\end{align*}
Nonetheless, in order to not be fooled by the provers, we need a stronger \emph{analytic} condition
\begin{align*}
  \max_{u \colon \norm{u}_2 =1, \abs{\supp(u)} \le \delta \abs{V}} ~\frac{1}{d} \braket{A u}{u} \approx 0\mcom
\end{align*}
where $u$ ranges over arbitrary unit vectors. For every disjoint set $S, T \subseteq V$ of combined measure at
most $\delta$, the SSE property of $G$ allows us to deduce
\begin{align}\label{eq:strategy_bound}
  \frac{1}{d} \braket{A \frac{\one_S}{\sqrt{\abs{S}}}}{\frac{\one_T}{\sqrt{\abs{T}}}} ~\approx_{\eta,d}~ 0\mper
\end{align}
Ideally, we would like to leverage the bounds we have for flat
indicator vectors of small sets from~(\ref{eq:strategy_bound}) to
conclude that arbitrary unit vectors of small support have a bounded
quadratic form. The seminal
work on $2$-lifts~\cite{BL06} of Bilu and Linial dealt with a similar
question, but without the sparse support conditions.  Surprisingly, they gave sufficient
conditions for this phenomenon.  Here, we prove that the same phenomenon
also happens for the sparse version of the problem. In particular,
this shows that SSE graphs satisfy the more ``robust''
\emph{analytic} SSE property. Using this robust property, we
conclude the soundness of the protocol.

\subsection{Unique Games Protocol}\label{sec:overview:ug}

We provide an overview of the UG protocol in $\QMA_{\log}^+(2)$ with
a \emph{constant} gap from~\cref{sec:ug-protocol}. Suppose that we are
given an input UG instance with alphabet $\Sigma$, namely, an
$n$-vertex $d$-regular graph $G = (V, E)$, where each
directed\footnote{The reverse edge of $e$ is typically associated with
  the constraint $f_e^{-1}$.} edge $e \in E$ is associated with a
permutation constraint $f_e \colon \Sigma \to \Sigma$. We say that
an assignment $\ell \colon V \to \Sigma$ satisfies an edge $e=(i,j)$
if $f_e(\ell(i)) = \ell(j)$. This means that for each assigned
value for $i$ there is a unique value for $j$ and vice versa
satisfying the permutation constraint of edge $e$. The goal is to
distinguish between (yes) there exists an assignment satisfying at
least $1-\eta$ fraction of the constraints, and (no) every assignment
satisfies at most a $\delta$ fraction of constraints.

In the yes case, the protocol expects from the unentangled provers
copies of a quantum state $\ket{\psi}$ encoding an assignment $\ell$
of value at least $1-\eta$ of the form
\begin{align}\label{eq:ug_assignment_state}
  \ket{\psi} = \sum_{i=1}^n \frac{1}{\sqrt{n}} \ket{i}\ket{\ell(i)}\mper
\end{align}
We will again explore the underlying graph structure of the problem to
make the proof verification \emph{global} leading to a constant gap.
Similarly to the SSE protocol, we will also use the fact that the
adjacency matrix $A$ of a $d$-regular graph can be written as a sum of
$d$ permutation matrices $P_1, \ldots, P_d$ and these matrices are 
special cases of unitary operators. Using a permutation matrix $P_r$
and the UG constraints, we will define a unitary operator $\Pi_r$
intended to help us check the constraints along the edges of
$P_r$. Each $\Pi_r$ is defined as follows
\begin{align*}
  \Pi_r \ket{i} \ket{a} \mapsto \left( P_r \ket{i} \right) \ket{f_{(i, P_r i)}(a)}\mcom
\end{align*}
where $i$ ranges in $V$ and $a$ ranges in $\Sigma$. The crucial
observation is that if the constraints along the edges of $P_r$ are
almost fully satisfied by $\ell$, we should have $\ket{\psi} \approx
\Pi_r \ket{\psi}$ whereas if they almost fully unsatisfied by $\ell$,
we should have $\ket{\psi}$ almost orthogonal to $\Pi_r \ket{\psi}$. By
sampling a uniformly random $\Pi_r$ and checking this approximate
co-linearity versus orthogonality property, we obtain a \emph{global}
test to check if an assignment is good.

In the no case, there is no reason the adversarial provers
will provide proofs of the form~(\ref{eq:ug_assignment_state})
encoding a valid assignment. In general, we will have an arbitrary state of
the form
\begin{align*}
  \ket{\psi} = \sum_{i=1}^n \alpha_i \ket{i} \left(\sum_{a \in \Sigma} \beta_{i,a} \ket{a} \right)\mper
\end{align*}
There are two main issues. First, the adversary can omit the
assignment to several vertices by making $\alpha_i \approx 0$.
Second, even if all the vertices are present in the superposition with
amplitudes $\alpha_i = 1/\sqrt{n}$, the prover can assign a
superposition of multiple values to each position as in
\begin{align*}
  \ket{\psi} = \sum_{i=1}^n \frac{1}{\sqrt{n}} \ket{i} \left(\sum_{a \in \Sigma} \beta_{i,a} \ket{a} \right)\mper
\end{align*}
Fortunately, both of these issues can be handled in a global way. In
addressing the second issue, we currently rely on the non-negative
amplitudes assumption. To give a flavor of why non-negative amplitudes
can be helpful, consider the following simplified scenario that $\Sigma
= \set{0,1}$ and
\begin{align*}
  \ket{\psi} = \sum_{i=1}^n \frac{1}{\sqrt{n}} \ket{i} \left(\frac{1}{\sqrt{2}} \ket{0} + \frac{1}{\sqrt{2}}\ket{1} \right)\mper
\end{align*}
Suppose that we measure the second register (containing the values in
$\Sigma$) of two copies of $\ket{\psi}$ obtaining $\ket{0}$ and
$\ket{1}$, and let $\ket{\psi_0}$ and $\ket{\psi_1}$ be the resulting
states on the first register containing the indices of the vertices,
respectively. In the ideal scenario, if each vertex has a single well
defined value in $\ket{\psi}$ (which is not the case in this example),
we should have $\ket{\psi_0} \perp \ket{\psi_1}$. If not (as in this
toy example), the supports of $\ket{\psi_0}$ and $\ket{\psi_1}$ are
not disjoint. With non-negative amplitudes, if there is substantial
``mass'' in the intersection of their supports, then this condition
can be tested using a swap test since $\braket{\psi_0}{\psi_1}$ will
be large (in this toy example it is 1 as $\ket{\psi_0}= \ket{\psi_1} =
\sum_{i=1}^n 1/\sqrt{n} \ket{i}$).

With this UG protocol and the recent proof\footnote{Coming from the
  proof of the $2$-to-$2$ conjecture.} of the NP-hardness of deciding
UG with parameters $\eta = 1/2$ and $\delta > 0$ an arbitrarily small
chosen constant, we can deduce that $\NP \subseteq \QMA_{\log}^+(2)$.

\subsection{PCP Verification Protocol for NEXP}\label{sec:overview:nexp}

We provide an overview of the NEXP protocol in $\QMA^+(2)$ with
\emph{constant} gap from~\cref{sec:nexp}. Recall that scaling up to
$\QMA(2)$ the previous protocols for $\QMA_{\log}(2)$ from literature
leads to exponentially small gaps which are intolerable to $\QMA(2)$.
This motivates our study of \emph{constant} gap protocols for hard
problems in $\QMA_{\log}^+(2)$. Our new constant gap protocols can be
indeed scaled up to $\QMA^+(2)$ and the gap remains constant! Another
issue unresolved in the previous work is that if we scale up the protocol
naively, the running time of the verifier becomes exponential and this
is also intolerable to $\QMA(2)$ (or $\QMA^+(2)$) which requires a
polynomial-time $\BQP$ verifier. Simultaneously achieving a constant gap with a
polynomial-time verifier is quite interesting since this requires
considering very efficient forms of quantum proof verification.

Classically, it is known that NEXP admits polynomial-time proof
verification protocols with a constant gap, \ie very efficient
PCPs. Note that the proof size is exponentially large in the input
size and the verification runs in \emph{polylogarithmic} time in the
size of the proof.  These protocols manipulate exponentially large
objects given in very succinct and explicit forms. We will build on some
of these PCPs results to design our $\QMA^+(2)$ protocol for NEXP, but
our \emph{global} verification of quantum proofs will require even
stronger explicitness and regularity properties of these objects. In
this work, we prove these additional properties by carefully
investigating the composition of known PCP constructions.

A PCP protocol naturally gives rise to a label cover CSP (via a simple
and standard argument). We give a \emph{global} $\QMA^+(2)$ protocol
for label cover arising from the PCP for NEXP with the additional
explicitness and regularity properties alluded above. Recall that a label
cover instance is given by a bipartite graph $G=(L\sqcup R,E)$ with a
left and right vertex partitions $L$ and $R$, left and right alphabets
$\Sigma_L$ and $\Sigma_R$ and constraints $f_e \colon \Sigma_L \to
\Sigma_R$ on the edges $e \in E$. Given assignments to the left and
right partitions $\ell_L \colon L \to \Sigma_L$ and $\ell_R \colon
R \to \Sigma_R$, a constraint on edge $e=(i,j)$ is satisfied if
$f_e(\ell_L(i))=\ell_R(j)$. In this correspondence of PCP and
label cover, the left vertices correspond to the constraints of the PCP
verifier and the right vertices correspond to the symbols of the proof
which are the variables in the PCP constraints.

We now give an abstract simplified description of our protocol to
convey some intuition and general ideas. The precise protocol is
actually more involved and somewhat different (see~\cref{sec:nexp} for
its full description). In the yes case our $\QMA^+(2)$ protocol
expects to receive copies of the state $\ket{\psi_L}$ and from it
obtain copies of a state similar to $\ket{\psi_R}$ both described below
\begin{align}\label{eq:prof_over_nexp}
  \ket{\psi_L} = \sum_{i \in L} \frac{1}{\sqrt{\abs{L}}} \ket{i} \ket{\ell_L(i)} \qquad \textup{and} \qquad \ket{\psi_R} = \sum_{j \in R} \frac{1}{\sqrt{\abs{R}}} \ket{j} \ket{\ell_R(j)}.
\end{align}
Note that the left assignment $\ell_L$ specifies the values of all
variables appearing in each PCP constraint, and $\ell_R$ specifies
the values of variables appearing in the PCP proof. In this case,
checking the constraints (essentially) amounts to testing consistency
of these various assignments to the variables. To accomplish this
goal, we design two operations\footnote{We stress that this is a
simplistic view of the protocol. See~\cref{sec:nexp} for the precise
technical details.}  $\Gamma_L$ and $\Gamma_R$ such
that,\footnote{Assuming $\ket{\psi_L}$ and $\ket{\psi_R}$ are of the
  above form.} if the label cover instance is fully satisfiable (with
$\ell_L$ and $\ell_R$), then $\Gamma_L(\ket{\psi_L}) \approx
\Gamma_R(\ket{\psi_R})$, otherwise $\Gamma_L(\ket{\psi_L})$ will be
approximately orthogonal to $\Gamma_R(\ket{\psi_R})$. In a vague
sense, $\Gamma_L$ tries to extract the value of some variables in the
constraints and $\Gamma_R$ tries to replicate the values of each
variable in a quantum superposition so that $\Gamma_L(\ket{\psi_L})$
and $\Gamma_R(\ket{\psi_R})$ become equal if $\ell_L,\ell_R$ are
fully satisfying assignments and they become close to orthogonal if
the CSP instance is far from satisfiable (regardless of
$\ell_L,\ell_R$). At a high level, there is some
parallel\footnote{As in the SSE and UG protocols, there is also
  distribution on pairs of operator $(\Gamma_L, \Gamma_R)$ here.} with
the SSE and UG protocols. There, we had $\ket{\psi_L} = \ket{\psi_R}$,
$\Gamma_L$ being the identity and $\Gamma_R$ being either $P_r$ (in
SSE) or $\Pi_r$ (in UG).

A crucial point is that to make the operations $\Gamma_L$ and
$\Gamma_R$ efficient, we need to be able to determine (1) the
neighbors of any given vertex in $L$ in polynomial time, and (2) the
neighbors of any given vertex in $R$ in polynomial time. We call an
instance satisfying (1) and (2) \emph{doubly explicit}. While
(1) follows easily from the definition of PCP, to get property (2) we
need to carefully compose known PCP protocols and prove that this
property holds.

Similarly to the UG protocol, we also need to check that the quantum
proofs are close to a valid encoding of an assignment to the
variables. The provers should not (substantially) omit the values of
variables nor provide a superposition of multiple values for the same
variable. Similarly, checking this second condition is the part of the
protocol that currently relies on non-negative amplitudes.

\section{Property Testing Primitives}\label{sec:prop_test}
In this section, we prove some property testing primitives that we will use as the building blocks in designing protocols for general problems.

The first test is the \emph{symmetry} test. In many situations, we ask the prover to provide a supply of constantly many copies of a state. 
To make sure that all copies are approximately the same state, the symmetry test will be invoked. The symmetry test in general can be applied in any quantum protocol. A similar symmetry test has been considered previously in~\cite{ABDFS08}. Here we provide a stronger version.

The second test is the \emph{sparsity} test. Consider the scenario where we ask the prover to provide a state that is supposed to be some
\emph{subset state}. In particular, let $\SSS_{\gamma}\subseteq \C ^n$ be the set of subset state spanning a $\gamma$
fraction of computational basis, i.e.,
\[
\SSS_{\gamma}:=\left\{ \frac{1}{\sqrt{\gamma n}}\sum_{i\in S}|i\rangle:S\subseteq[n],|S|=\gamma n\right\} .
\]
We call $\gamma$ the \emph{sparsity} of the subset state in $\SSS_\gamma$. The sparsity test is used to determine whether a given state is close to $\SSS_\gamma$. Our sparsity test relies on the fact that the amplitudes of the quantum proofs are non-negative.

The third test is the \emph{validity} test. A natural quantum proof for many problems like the 3-SAT or 3COLOR problem is to put the variables/vertices together with their values/colors in superpositions. For example, for 3-SAT on $n$ variables, such that variable $i$ has value $x_i$, a valid proof should look like
\[
    \qphi = \frac{1}{\sqrt n} \sum_{i\in [n]} \ket i \ket {x_i}.
\]
This can be generalized for an arbitrary set of variables $X$ and an arbitrary value domain $\Sigma$ of the variables. Then the valid set would be
\begin{equation*}
    \cV = \left\{ 
        \frac{1}{\sqrt {|X|}} \sum_{i\in X} \ket i \ket {x_i}: \forall i\in X, x_i \in \Sigma
    \right\}.
\end{equation*}
The validity test tells whether a given state is close to a valid state. Our validity test works only in the situation when the given state is close to a state in $\SSS_{|\Sigma|^{-1}}$, which is guaranteed by the sparsity test. Thus, this validity test does not generalize.

\subsection{\texorpdfstring{$\epsilon$}{epsilon}-tilted States}
Before we discuss the tests, let's make the following definition first.
\begin{definition}[$\epsilon$-tilted states]
A family of states $|\psi_1\rangle, |\psi_2\rangle, \ldots, |\psi_k \rangle$ defined on a same space is an $\epsilon$-tilted state
if there is a subset $R\subseteq[k]$ such that $|R|\ge (1-\epsilon)k$ and for any $i,j \in R$,
\[
\TD(|\psi_i\rangle, |\psi_j\rangle) \le \sqrt\epsilon.
\]
Furthermore,  we call $ \ket{\psi_i}$ a \emph{representative state} for any $i\in R$, and the subset $\{ |\psi_i\rangle: i\in R\}$ the
\emph{representative set}.
\end{definition}

Note that a 0-tilted state is simply a set of equal states, and any $\epsilon$-tilted state is also a $\delta$-tilted state for any $\delta>\epsilon$. The name $\epsilon$-tilted state may be confusing. Our
message is that instead of treating this object as a set of states, we should simply treat them as a single state conceptually (for example, think of it as a representative state tilted a little bit). As we will see later in \cref{sec:symmetry-test},
when the symmetry test passes, we are supplied with an $\epsilon$-tilted state with high probability. Having a large number of (almost) equal states is very convenient, therefore we
always take advantage of the symmetry test and work with $\epsilon$-tilted states.
We reserve the capital letters, i.e., $|\Psi\rangle$ or simply $\Psi$,\footnote{In this paper, we never use the density operator, so there should be no confusion.}  to denote an $\epsilon$-tilted
state. The \emph{size} of $\Psi$, denoted $|\Psi|$, is the size of $\Psi$ viewed as a set
of states. 

The tilted states tensorize. In particular, for two sets of states $\Psi=\{\ket{\psi_1}, \ket{\psi_2},\ldots, \ket{\psi_k} \}$ and $\Phi=\{\ket{\phi_1},\ket{\phi_2},\ldots,\ket{\phi_k}\}$ of the same size, let $\Psi \otimes \Phi$ denote the set of states $ \{ \ket{\psi_1, \phi_1}, \ldots, \ket{\psi_k, \phi_k}\}$ (if there is not a default order,  the order can be set arbitrarily). 
\begin{proposition}[Tensorization of tilted states]\label{prop:tensor-tilted-states}
If $\Psi$ is an $\epsilon$-tilted state and $\Phi$ is a $\gamma$-tilted state, and $|\Psi|=|\Phi|=k$. Then $\Psi\otimes\Phi$ is an $(\epsilon+\gamma)$-tilted state.
\end{proposition}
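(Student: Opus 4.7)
The plan is to take the intersection of the two representative subsets and to combine the trace distance bounds via the tensor-product inequality already available as \cref{fact:trace-tensor}.

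First, let $R_\Psi \subseteq [k]$ with $|R_\Psi| \ge (1-\epsilon)k$ be a representative index set for $\Psi$, and let $R_\Phi \subseteq [k]$ with $|R_\Phi| \ge (1-\gamma)k$ be one for $\Phi$. I would declare the representative index set for $\Psi \otimes \Phi$ to be $R \coloneqq R_\Psi \cap R_\Phi$. By inclusion–exclusion,
\[
|R| \ge |R_\Psi| + |R_\Phi| - k \ge (1-\epsilon)k + (1-\gamma)k - k = (1-\epsilon-\gamma)k,
\]
which meets the size requirement in the definition of an $(\epsilon+\gamma)$-tilted state.

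Next, for any $i,j \in R$, I would invoke \cref{fact:trace-tensor} on the pair $(\ket{\psi_i}\otimes\ket{\phi_i}, \ket{\psi_j}\otimes\ket{\phi_j})$ to get
\[
\TD(\ket{\psi_i}\otimes\ket{\phi_i}, \ket{\psi_j}\otimes\ket{\phi_j})^2 \le \TD(\ket{\psi_i}, \ket{\psi_j})^2 + \TD(\ket{\phi_i}, \ket{\phi_j})^2 \le \epsilon + \gamma,
\]
using that $i,j \in R_\Psi$ yields $\TD(\ket{\psi_i}, \ket{\psi_j}) \le \sqrt{\epsilon}$ and $i,j \in R_\Phi$ yields $\TD(\ket{\phi_i}, \ket{\phi_j}) \le \sqrt{\gamma}$. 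Taking square roots gives the required bound $\sqrt{\epsilon+\gamma}$, so the pairs in $R$ satisfy the tilted condition with parameter $\epsilon+\gamma$.

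There is no real obstacle here: the only mild point to check is that the two ingredients—the union-bound-style estimate for intersections and the Pythagorean-type tensor bound on trace distance—line up exactly so that $\epsilon+\gamma$ appears both as the fractional loss in the representative set and as the squared trace-distance budget. Both pieces already do, so the proof is essentially two lines once the right $R$ is chosen.
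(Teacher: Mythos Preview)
Your proof is correct and matches the paper's own argument essentially line for line: intersect the two representative sets to get size at least $(1-\epsilon-\gamma)k$, then apply \cref{fact:trace-tensor} to bound the squared trace distance of the tensor products by $\epsilon+\gamma$.
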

\begin{proof}
Let $S$ and $T$ be the representative set of $\Psi$ and $\Phi$, respectively. Simply note that
\[
    |S \cap T|\ge (1-\epsilon-\gamma)k,
\]
and for any $i, j\in S \cap T$, 
\[
    \TD(\ket{\psi_i}\otimes\ket{\phi_i}, \ket{\psi_j}\otimes\ket{\phi_j})^2 \le \TD(\ket{\psi_i},\ket{\phi_i})^2 + \TD(\ket{\psi_j}, \ket{\phi_j})^2 \le \epsilon + \gamma,
\]
where the first inequality is due to~\cref{fact:trace-tensor}.
\end{proof}

As commented earlier that we should treat an $\epsilon$-tilted state as a single state conceptually. Now we make this comment more formal. When we apply some quantum algorithm $\cA$ to $\Psi$, we mean apply $\cA$ to all the states in $\Psi$. For any $f:\C^n \to \C$, when we evaluate $f$ on $\Psi$, we mean the expected value of $f$ on all states in $\Psi$, i.e.,
\[
    f(\Psi) = \Exp_{\ket{\psi}\in\Psi} [f(\ket{\psi})].
\]
\begin{proposition}\label{prop:tilted-state-approx}
For any quantum algorithm $\cA$, let $\cA(\ket{\psi})$ denote the probability that $\cA$ accepts $\ket{\psi}$. Let $\Psi$ be an $\epsilon$-tilted state, and $\ket{\psi}$ any representative state of $\Psi$. Then
\begin{equation}\label{eq:approx-tilted-state}
    |\cA(\ket{\psi}) - \cA(\Psi)| \le 3\sqrt \epsilon. 
\end{equation}
Furthermore, when apply $\cA$ to $\Psi$, let $\alpha$ be the fraction of accepted executions of $\cA$. Then
\begin{equation}\label{eq:concentration-tilted-state}
    \Pr[ |\alpha - \cA(\Psi)| \ge \sqrt \epsilon] \le \exp(-\epsilon|\Psi|/2),
\end{equation}
and therefore,
\begin{equation}\label{eq:approx-reprensetative-state}
    \Pr[ |\alpha - \cA(\ket{\psi})| \ge 4\sqrt \epsilon] \le \exp(-\epsilon|\Psi|/2).
\end{equation}
\end{proposition}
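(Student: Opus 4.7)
\begin{proofsketch}
The plan is to prove the three bounds in order, with the second and third following once the first is established.

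For \eqref{eq:approx-tilted-state}, I would split the average $\cA(\Psi)=\Exp_{\ket{\phi}\in\Psi}[\cA(\ket{\phi})]$ according to the representative set $R\subseteq\Psi$ from the definition of an $\epsilon$-tilted state. For each $\ket{\phi}\in R$ one has $\TD(\ket{\psi},\ket{\phi})\le\sqrt{\epsilon}$, so \cref{fact:trace_norm_acc} gives $|\cA(\ket{\psi})-\cA(\ket{\phi})|\le\sqrt{\epsilon}$. For the remaining at most $\epsilon$ fraction of $\ket{\phi}\in\Psi\setminus R$, the acceptance probabilities differ by at most $1$. Averaging gives $|\cA(\ket{\psi})-\cA(\Psi)|\le (1-\epsilon)\sqrt{\epsilon}+\epsilon\cdot 1\le\sqrt{\epsilon}+\epsilon\le 2\sqrt{\epsilon}\le 3\sqrt{\epsilon}$ (using $\epsilon\le\sqrt{\epsilon}$ since $\epsilon\in[0,1]$).

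For \eqref{eq:concentration-tilted-state}, I would view the $|\Psi|$ executions of $\cA$ on the states of $\Psi$ as independent Bernoulli trials $X_1,\ldots,X_{|\Psi|}$, where $X_i\in\{0,1\}$ indicates acceptance on $\ket{\psi_i}$ and $\Exp[X_i]=\cA(\ket{\psi_i})$. Then $\alpha=\frac{1}{|\Psi|}\sum_i X_i$ has expectation $\cA(\Psi)$, so Hoeffding's inequality yields $\Pr[|\alpha-\cA(\Psi)|\ge\sqrt{\epsilon}]\le 2\exp(-2\epsilon|\Psi|)\le \exp(-\epsilon|\Psi|/2)$, absorbing the factor $2$ into the exponent (this is valid once $\epsilon|\Psi|$ is bounded away from zero, the only regime where the statement is nonvacuous).

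For \eqref{eq:approx-reprensetative-state}, I would simply combine the previous two bounds via the triangle inequality: on the high-probability event from \eqref{eq:concentration-tilted-state}, one has $|\alpha-\cA(\ket{\psi})|\le|\alpha-\cA(\Psi)|+|\cA(\Psi)-\cA(\ket{\psi})|\le\sqrt{\epsilon}+3\sqrt{\epsilon}=4\sqrt{\epsilon}$, so the tail bound carries over unchanged. The only mild subtlety throughout is handling the non-representative $\epsilon$ fraction of states in the first step; once that is absorbed into the $\sqrt{\epsilon}$ slack, the rest is purely a Hoeffding calculation plus a triangle inequality, so there is no genuine obstacle.
\end{proofsketch}
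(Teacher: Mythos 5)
Your proposal is correct and follows essentially the same approach as the paper: split by the representative set, bound the difference via trace distance (Fact~\ref{fact:trace_norm_acc}), then apply a Chernoff-type concentration bound, and finish by the triangle inequality. The paper's version introduces $\Exp_{i\in S}\cA(\ket{\psi_i})$ as an explicit intermediate quantity before combining, but this is a cosmetic difference; your handling of the factor-of-two from the two-sided Hoeffding tail is a minor but legitimate point the paper glosses over.
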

\begin{proof}
Let $\Psi = \{\ket{\psi_1}, \ket{\psi_2}, \ldots, \ket{\psi_k} \}$ and $S$ be the representative set for $\Psi$. Then
\begin{align*}
    \cA(\Psi) &= \frac{1}{k} \sum_{i=1}^k \cA(\ket{\psi_i})
              = \frac{|S|}{k} \Exp_{i\in S} \cA(\ket{\psi_i}) + \frac{k-|S|}{k} \Exp_{i\not\in S} \cA(\ket{\psi_i}).
\end{align*}
It follows that
\[
    (1-\epsilon)\Exp_{i\in S} \cA(\ket{\psi_i}) \le \cA(\Psi) \le (1-\epsilon)\Exp_{i\in S} \cA(\ket{\psi_i}) + \epsilon.
\]
Therefore,
\begin{equation}\label{eq:avgtypcal-expectation}
    \left| \Exp_{i\in S} \cA(\ket{\psi_i}) - \cA(\Psi) \right| \le \epsilon.
\end{equation}
By Fact~\ref{fact:trace_norm_acc} and the definition of $\epsilon$-tilted state, for any $j\in S$,
\begin{equation}\label{eq:typical-avgtypical}
    \left| \Exp_{i\in S} \cA(\ket{\psi_i}) - \cA(\ket{\psi_j}) \right| \le \sqrt\epsilon.
\end{equation}
Combining (\ref{eq:avgtypcal-expectation}) and (\ref{eq:typical-avgtypical}), we obtain (\ref{eq:approx-tilted-state}). 
The furthermore part follows by Chernoff bound.
\end{proof}
By (\ref{eq:approx-reprensetative-state}), it suffices to understand the typical behavior of the representative state in an $\epsilon$-tilted state.

\subsection{Symmetry Test}\label{sec:symmetry-test}
The symmetry test is described below.

\noindent\fbox{\begin{minipage}[t]{1\columnwidth - 2\fboxsep - 2\fboxrule}%
\uline{Symmetry Test}

\textbf{Input:} $ \Psi = \{ a_{1},a_{2},\ldots,a_{k} \}\subseteq \unitC$ for some even number $k$.
\begin{enumerate}
\item Sample a random matching $\pi$ within $1, 2, \ldots, k$.
\item SwapTest on the pairs based on the matching $\pi$.
\end{enumerate}
\emph{Accept} if all SwapTests accept. 

\end{minipage}}

\begin{theorem}[Symmetry test]\label{thm:symmetry-test}
Suppose $\Psi$ is not an $\epsilon$-tilted state.
Then the symmetry test passes with probability at most $\exp(-\Theta(\epsilon^{2}k)).$
On the contrary, for $0$-tilted state $\Psi$, the symmetry test accepts with probability $1$.
\end{theorem}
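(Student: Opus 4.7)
The proof splits into two parts matching the two assertions. For the easy direction, if $\Psi$ is $0$-tilted then every pair of states is identical (trace distance $0$), so every SwapTest passes with probability $1$ and the overall test accepts with probability $1$. So the interesting direction is soundness, which I would prove by a structural step followed by a concentration step.

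For the structural step, I would extract from the assumption ``$\Psi$ is not $\epsilon$-tilted'' a large matching of highly distinguishable pairs. Let the \emph{bad graph} $G_B$ on vertex set $[k]$ be the graph whose edges are pairs $(i,j)$ with $\TD(\ket{\psi_i},\ket{\psi_j})>\sqrt\epsilon$. By the definition of $\epsilon$-tilted, $G_B$ has no independent set of size $(1-\epsilon)k$, so its minimum vertex cover exceeds $\epsilon k$. Taking any maximal matching $M^\ast\subseteq G_B$, its $2|M^\ast|$ endpoints form a vertex cover of $G_B$, hence $|M^\ast|\ge \epsilon k/2$. I would then upgrade $M^\ast$ to a dense ``bad-ish'' subgraph via the triangle inequality for trace distance: declare a pair bad-ish if its trace distance exceeds $\sqrt\epsilon/2$, and call the resulting graph $G_{B'}$. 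For every $(a,b)\in M^\ast$ and every other vertex $x$, the inequality $\sqrt\epsilon<\TD(\ket{\psi_a},\ket{\psi_b})\le \TD(\ket{\psi_a},\ket{\psi_x})+\TD(\ket{\psi_x},\ket{\psi_b})$ forces at least one of the two right-hand terms to exceed $\sqrt\epsilon/2$, producing a bad-ish edge from $\{a,b\}$ to $x$. Summing over the $\epsilon k/2$ pairs in $M^\ast$ and dividing by $2$ to account for potential double-counting between distinct matching pairs gives $|E(G_{B'})|\ge \Omega(\epsilon k^2)$.

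For the concentration step, I would realize the random matching $M$ drawn by the test as the set of consecutive pairs in a uniformly random permutation $\sigma$ of $[k]$, and set $Y(\sigma)=|M\cap E(G_{B'})|$. Then $\E[Y]=|E(G_{B'})|/(k-1)=\Omega(\epsilon k)$. Moreover, any single transposition of $\sigma$ changes at most two pairs of $M$, and each pair's membership in $E(G_{B'})$ can flip by at most one, so $Y$ has bounded differences of $O(1)$ under transpositions. A Doob-martingale / McDiarmid-type concentration inequality for functions on $S_k$ then gives $\Pr[Y<\E[Y]/2]\le \exp(-\Omega(\epsilon^2 k))$. On the complementary event $Y\ge \Omega(\epsilon k)$, each such pair has $\TD^2>\epsilon/4$ and hence passes its SwapTest with probability at most $1-\epsilon/8$, so the overall acceptance is at most $(1-\epsilon/8)^{\Omega(\epsilon k)}=\exp(-\Omega(\epsilon^2 k))$. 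Combining the two failure modes yields the claimed $\exp(-\Theta(\epsilon^2 k))$ bound.

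The main technical obstacle is the concentration step: a second-moment / Chebyshev analysis of $Y$ yields only a polynomial deviation bound, which would translate into polynomial rather than exponential soundness. Obtaining exponential concentration requires the $O(1)$-Lipschitz property of $Y$ under transpositions of $\sigma$ together with a suitable martingale inequality on the symmetric group. The triangle-inequality amplification from $|M^\ast|=\Omega(\epsilon k)$ to $|E(G_{B'})|=\Omega(\epsilon k^2)$ is equally essential: without it, $\E[Y]$ would be only $\Omega(\epsilon)$ and the test would have soundness merely $1-\Omega(\epsilon^2)$ per run, falling short of the exponential bound in $k$.
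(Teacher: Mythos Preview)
Your argument is correct and arrives at the same $\exp(-\Theta(\epsilon^2 k))$ bound, but the route differs from the paper's in both the structural and the concentration steps.

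\textbf{Structural step.} The paper does not pass through vertex covers or maximal matchings. Instead it defines, for the threshold $\sqrt\epsilon/2$, the neighborhood $\cN(i)=\{j:\TD(a_i,a_j)\le\sqrt\epsilon/2\}$ and the ``bad set'' $\cB=\{i:|\cN(i)|\le k/2\}$. A one-line pigeonhole argument shows that any two $i,j\notin\cB$ have $\cN(i)\cap\cN(j)\ne\emptyset$, hence $\TD(a_i,a_j)\le\sqrt\epsilon$ by the triangle inequality; thus $[k]\setminus\cB$ is itself a candidate representative set, and ``not $\epsilon$-tilted'' forces $|\cB|>\epsilon k$. This is morally the same conclusion you reach---many vertices have degree $\ge k/2$ in your bad-ish graph $G_{B'}$, which immediately gives $|E(G_{B'})|=\Omega(\epsilon k^2)$---but obtained without the vertex-cover/maximal-matching detour.

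\textbf{Concentration step.} The paper builds the random matching sequentially: at each step it picks an unmatched vertex of $\cB$ (if any remain) and pairs it with a uniformly random unmatched partner. Since every $i\in\cB$ is $\sqrt\epsilon/2$-far from at least $k/2$ vertices, each such step produces a far pair with probability bounded away from $0$, and Azuma on the resulting martingale gives $\ell(\pi)\ge\Omega(\epsilon k)$ except with probability $\exp(-\Omega(\epsilon k))$. Your approach---encode the matching as consecutive pairs of a uniform $\sigma\in S_k$, observe $Y$ is $O(1)$-Lipschitz under transpositions, and invoke McDiarmid on $S_k$---is cleaner and more black-box, at the cost of citing a slightly heavier concentration tool. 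Both yield the same exponent. The paper's version has the minor advantage of being entirely self-contained; yours has the advantage of separating the combinatorics (edge count of $G_{B'}$) from the probability (concentration of $Y$) more cleanly.
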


Let $\cN(i):=\{ a_j: \TD( a_{i},a_{j} ) \le \sqrt\epsilon /2\}$ be the set of vectors that are close to $a_i$, 
and $\cB:=\{i:|\cN(i)|\le k/2\}$ the set of vector $a_i$ who is far from at least half of the other vectors. Finally for a random matching define
\[
\ell(\pi)=|\{i:\pi(i)\not\in\cN(i)|,
\]
twice the number of distant pairs in the matching.

\begin{claim}
\label{claim:ell-pi}Suppose $|\cB| = \gamma k$ for any constant $ \gamma \in (0, 1 ]$.
Then
\[
\Pr_\pi \left[\ell(\pi)\ge\frac{\gamma k}{18}\right]\ge1-\exp(-\Theta(\gamma k)).
\]
\end{claim}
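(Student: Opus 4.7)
The plan is to decompose the proof into a first-moment lower bound on $\E_\pi[\ell(\pi)]$ and a concentration argument around that mean. For the first moment, I would use the symmetry of the uniform random matching: for any fixed index $i$, $\pi(i)$ is uniform over $[k] \setminus \{i\}$. By definition of $\cB$, if $i \in \cB$ then $|\cN(i)| \le k/2$, and since $a_i \in \cN(i)$ (the trace distance from $a_i$ to itself is $0$), at least $k/2$ of the $k-1$ possible partners of $i$ lie outside $\cN(i)$. Therefore $\Pr[\pi(i) \notin \cN(i)] \ge (k/2)/(k-1) \ge 1/2$ for each $i \in \cB$. Since $\ell(\pi) \ge |\{i \in \cB : \pi(i) \notin \cN(i)\}|$, linearity of expectation gives
\[
\E_\pi[\ell(\pi)] \;\ge\; \sum_{i \in \cB} \Pr[\pi(i) \notin \cN(i)] \;\ge\; \frac{|\cB|}{2} \;=\; \frac{\gamma k}{2}.
\]

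Next I would concentrate $\ell(\pi)$ around this expectation. Realize the random matching as the pairing induced by a uniform random permutation $\sigma \in S_k$ via $\pi(\sigma(2j-1)) = \sigma(2j)$ for $j = 1, \dots, k/2$. Since trace distance is symmetric, $\cN$ is a symmetric relation, so $\ell(\pi)$ equals exactly twice the number of matched pairs $(u,v)$ with $a_u \notin \cN(v)$. Swapping two positions of $\sigma$ alters at most two matched pairs and therefore changes $\ell(\pi)$ by at most a bounded constant. Applying Azuma--Hoeffding to the Doob martingale that reveals $\sigma(1), \sigma(2), \ldots$ one position at a time (equivalently, the standard bounded-differences inequality for Lipschitz functions of uniformly random permutations) yields
\[
\Pr_\pi\bigl[\ell(\pi) \le \E_\pi[\ell(\pi)] - t\bigr] \;\le\; \exp\!\bigl(-\Omega(t^2/k)\bigr)
\]
for every $t > 0$.

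Choosing $t = \gamma k / 3$, the complementary event implies
\[
\ell(\pi) \;\ge\; \E_\pi[\ell(\pi)] - \frac{\gamma k}{3} \;\ge\; \frac{\gamma k}{2} - \frac{\gamma k}{3} \;=\; \frac{\gamma k}{6} \;\ge\; \frac{\gamma k}{18},
\]
and occurs with probability at least $1 - \exp(-\Omega(\gamma^2 k))$. Because $\gamma \in (0,1]$ is a constant, $\gamma^2 k = \Theta(\gamma k)$, matching the claimed bound.

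I expect the first-moment step to be immediate from symmetry and the definition of $\cB$; the main delicacy is the concentration step, where the dependencies induced by the matching structure must be handled. The cleanest route is the lift to a uniform permutation combined with bounded-differences, as above; one could alternatively run a Doob martingale directly on the sequence of matching choices with the same $O(1)$ Lipschitz constant and obtain the identical bound.
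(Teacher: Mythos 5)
Your proof is correct, and it takes a genuinely different route from the paper's. The paper runs a sequential matching process that pairs the $\cB$-vectors first, defines $X_i$ as the step-$i$ indicator of a far pair, and lower-bounds $\sum_{i\le\lceil m/2\rceil}\E[X_i]$ by the telescoping sum of step-by-step conditional probabilities $\frac{k/2-2(i-1)}{k}$; this produces a $\tfrac14\gamma(1-\gamma)k$ lower bound on the mean, which vanishes as $\gamma\to 1$ and forces the paper to split into the cases $\gamma\le 2/3$ and $\gamma>2/3$. You instead decouple the mean from the concentration: linearity of expectation over the marginal of $\pi(i)$ (uniform over $[k]\setminus\{i\}$) gives the cleaner bound $\E[\ell]\ge \gamma k/2$ uniformly in $\gamma$, eliminating the case split, and then a generic bounded-differences inequality for functions of uniform random permutations handles concentration. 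The tradeoff is the exponent: the paper's martingale has only $\lceil m/2\rceil\approx\gamma k/2$ increments, yielding $\exp(-\Theta(\gamma k))$ even for small $\gamma$, whereas your permutation martingale has $k$ increments and yields $\exp(-\Theta(\gamma^2 k))$, weaker when $\gamma$ is small; the two coincide under the claim's ``constant $\gamma$'' hypothesis, which you correctly invoke. One small point worth being precise about: ``Azuma on the Doob martingale revealing $\sigma$ one position at a time'' and ``bounded differences under transpositions'' are not literally the same statement—one must check (via the standard coupling) that a transposition-Lipschitz constant $c$ yields martingale increments bounded by $O(c)$—but this is the standard McDiarmid/Maurey concentration result for random permutations, so no gap.
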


\begin{proof}
Without loss of generality, let $\cB=\{a_{1},a_{2},\ldots,a_{m}\}$. Assume $m \le 2k/3$, in another word, $\gamma \le 2/3$. Consider the matching procedure: For $i$ from $1$ to $k/2$, find one vector 
in $\cB$ if there is one that hasn't been matched yet, and pair it with a random unmatched
vector; if all vectors in $\cB$ have been matched, pair two random unmatched vectors.
Let $X_{i}$ be the indicator function that at time $i$, the paired
vectors are $\sqrt\epsilon /2$ far away in trace distance. Then, 
\begin{align*}
\sum_{i=1}^{\lceil m/2 \rceil}\Exp[X_{i}] 
 & \ge\frac{1}{2}+\left(\frac{k/2-2}{k}\right)+\cdots+\left(\frac{k/2-2 \lceil m/2\rceil +2}{k}\right)\\
 & =\frac{1}{2k}\left( k - 2\lceil m/2\rceil +2 \right) \lceil m / 2 \rceil \\
 & \ge \frac{1}{4k}(k-m)m   \\
 & \ge \frac{1}{4} \gamma (1-\gamma)k.
\end{align*}
Since 
$
S_{j}=\sum_{i=1}^{j}X_{i}-\Exp[X_{i}]
$
is a martingale, we have
\begin{align*}
    \Pr\left[\sum_{i=1}^{\lceil m/2 \rceil } (X_{i} - \Exp[X_{i}])\le-t\right]\le\exp\left(-\frac{t^{2}}{m+1}\right).
\end{align*}
Set $t=\gamma(1-\gamma)k/12$, our claim holds. When $\gamma> 2/3$, the claim can be verified by comparing it with the case
of $\gamma=2/3$.
\end{proof}

\begin{lemma}\label{lem:bad-points}
Suppose $|\cB|\ge\gamma k$, for any constant
$\gamma\in(0,1]$. Then the probability that the symmetry test passes with probability
at most $\exp(-\Theta(\epsilon\gamma k))$.
\end{lemma}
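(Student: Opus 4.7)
The plan is to combine \cref{claim:ell-pi} with the guarantee of the swap test. First, I would apply \cref{claim:ell-pi} to say that with probability at least $1-\exp(-\Theta(\gamma k))$ the random matching $\pi$ satisfies $\ell(\pi) \ge \gamma k/18$, meaning there are at least $\gamma k/36$ matched pairs $(a_i, a_{\pi(i)})$ whose trace distance exceeds $\sqrt{\epsilon}/2$ (i.e., $\pi(i) \notin \cN(i)$).

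Next, I would condition on such a ``good matching'' and analyze the swap tests. By \cref{remark:swap_test}, any pair with trace distance $> \sqrt{\epsilon}/2$ passes its swap test with probability at most
\[
1 - \tfrac12 \cdot (\sqrt{\epsilon}/2)^2 \;=\; 1 - \epsilon/8.
\]
Since the swap tests on disjoint pairs of the matching are mutually independent (they act on disjoint qubits), the probability that all the swap tests pass is bounded by the probability that just the ``far'' ones all accept. With at least $\gamma k/36$ such far pairs, this is at most
\[
(1 - \epsilon/8)^{\gamma k / 36} \;\le\; \exp\!\bigl(-\epsilon\gamma k / 288\bigr) \;=\; \exp(-\Theta(\epsilon \gamma k)).
\]

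Combining the two bounds by a union bound over the matching outcome,
\[
\Pr[\text{symmetry test passes}] \;\le\; \Pr\!\left[\ell(\pi) < \tfrac{\gamma k}{18}\right] + \Pr\!\left[\text{all swap tests pass} \,\middle|\, \ell(\pi) \ge \tfrac{\gamma k}{18}\right]
\]
\[
\;\le\; \exp(-\Theta(\gamma k)) + \exp(-\Theta(\epsilon \gamma k)) \;=\; \exp(-\Theta(\epsilon \gamma k)),
\]
where the last equality uses $\epsilon \le 1$. There is no substantial obstacle; the only subtlety is keeping track of the factor-of-two counting in the definition of $\ell(\pi)$ (it counts endpoints of distant pairs, not pairs themselves) and confirming that the swap tests across different matched pairs are genuinely independent so the product bound applies unconditionally once the matching is fixed.
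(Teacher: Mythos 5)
Your proof is correct and follows essentially the same route as the paper: apply \cref{claim:ell-pi} to get many far pairs under the random matching, bound each far pair's swap test by $1-\epsilon/8$ using \cref{remark:swap_test}, and combine with a union bound. You are in fact slightly more careful than the paper's write-up, which uses $(1-\epsilon/8)^{\ell(\pi)}$ rather than $(1-\epsilon/8)^{\ell(\pi)/2}$ despite $\ell(\pi)$ counting endpoints rather than pairs; this is harmless for the final $\exp(-\Theta(\epsilon\gamma k))$ bound, and your explicit handling of the factor of two is the cleaner way to state it.
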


\begin{proof}
Fix any permutation $\pi,$ the symmetry test passes with probability
at most $(1-\epsilon/8)^{\ell(\pi)}.$ Therefore using Claim~\ref{claim:ell-pi},
we have
\begin{align*}
 & \Pr[\text{Symmetry test passes}]\\
 & \qquad\le\Pr\left[\ell(\pi)<\frac{\gamma k }{18}\right]+(1-\epsilon/8)^{\gamma k/18}\\
 & \qquad\le\exp(-\Theta(\gamma k))+\exp(-\Theta(\epsilon\gamma k)).\qedhere
\end{align*}
\end{proof}
At the point, Theorem~\ref{thm:symmetry-test} is a straightforward
corollary of the above lemma.
\begin{proof}[Proof of Theorem~\ref{thm:symmetry-test}]
Let $\cG = [k]\setminus \cB$. Note that for any $i,j\in \cG $, 
\[
\cN(i)\cap\cN(j)\not=\varnothing.
\]
Thus $\TD(a_i, a_j) \le \sqrt\epsilon $ by triangle inequality. Thus, $|\cG|\ge(1-\epsilon)k$ implies that $\Psi$ is an $\epsilon$-tilted state. By contraposition, if $\Psi$ is not an $\epsilon$-tilted state, then $|\cB| > \epsilon k$. It follows that, by Lemma~\ref{lem:bad-points}, the symmetry test passes with probability at most $\exp(-\Theta(\epsilon^2 k))$.
\end{proof}

\subsection{Sparsity Test}
Now we move on to the sparsity test, where the non-negative assumption is used crucially. In the sparsity
test, aside from the state that we want to test whether it's close to some subset state, the prover will provide an auxiliary proof to assist the verifier. 

In what follows, we provide two versions of the sparsity tests. In the first version, we want to know if a given state $\qpsi{}$ is close to some subset state without prior knowledge of the sparsity $\gamma$. In the second version, there is a target sparsity $\gamma$, and we want to know if $\qpsi{}$ is close to $\SSS_\gamma$. We describe the first version below.

\noindent\fbox{\begin{minipage}[t]{1\columnwidth - 2\fboxsep - 2\fboxrule}%
\uline{Sparsity test \rom{1} (with precision \mbox{$\epsilon$})}

\textbf{Input:} $\Psi = \{u_1, \ldots, u_{2k}\}\subseteq \unitR,\Phi=\{v_1, \ldots, v_{2k} \}\subseteq \unitR.$

Partition $\Psi$ into $\Psi_0$ and $\Psi_1$ of equal size, and partition $\Phi$ into $\Phi_0$ and $\Phi_1$ of equal size.
\begin{enumerate}
    \item \label{enu:swap-u-1} SwapTest on ($\Psi_0, \one_{[n]}/\sqrt{n}$); 
    \item \label{enu:swap-v-1} SwapTest on ($\Phi_0, \one_{[n]}/\sqrt{n}$);
    \item \label{enu:swap-u-v} SwapTest on ($\Psi_1, \Phi_1$) . 
\end{enumerate}
\emph{Accept} if and only if $\alpha + \beta \in [3/2-\sqrt{\epsilon}, 3/2+\sqrt\epsilon]$ and $\lambda \le 1/2 + \sqrt\epsilon$, where
$\alpha, \beta$ and $\lambda$ are the fractions of accepted SwapTests in~\ref{enu:swap-u-1}, \ref{enu:swap-v-1}, and~\ref{enu:swap-u-v}, respectively.

\textbf{Output: }$\alpha, \beta, \lambda$. 
\end{minipage}}

\vspace{1mm}
\begin{theorem}[Sparsity test]\label{thm:sparsity-test}
Given $\Psi=\{u_{i}\in\unitR_n\}_{i\in[2k]},\Phi=\{v_{i}\in\unitR_n\}_{i\in[2k]}$ two $\epsilon$-tilted
states for $\epsilon<1/2$. 
Let $\alpha, \beta$, and $\lambda$ be the outputs. 

(Completeness) For any $0$-tilted states $\Psi$ and $\Phi$, such that $\Psi\in \SSS_{\delta}$, $\Phi\in \SSS_{1-\delta}$, and $\Psi \perp \Phi$. Then with probability at least $1-\exp(-\Theta(\epsilon k))$ the sparsity test accepts, furthermore, 
\begin{align*}
    &|2\alpha -1 - \delta|\le \sqrt\epsilon, \\
    &|2\beta -1 - (1-\delta)| \le \sqrt\epsilon.
\end{align*}

(Soundness) The sparsity test accepts with probability at most $\exp(-\epsilon k)$, if either of the following fails to hold: 
\begin{enumerate}
\item \label{enu:sparsity-supp-upper}There is $S\subseteq[n]$, such that
for any $\gamma>0,$
\[
|S|\le(2\alpha-1)n+ 9\epsilon^{1/4}n/\gamma,
\]
and for any representative $u\in\Psi$,
\[
\|u|_{S}\|^2\ge1-\gamma-2\sqrt\epsilon.
\]
\item \label{enu:sparsity-supp-estimate}There is $S\subseteq[n]$, such
that
\[
\left||S|-(2\alpha-1)n\right|\le O(\epsilon^{1/12}(2\alpha-1)^{1/3})n,
\]
and for any representative $u\in\Psi$,
\[
 \TD \left(u,\one_{S}/\sqrt{|S|}\right) = O\left(\frac{\epsilon^{1/24}}{(2\alpha-1)^{1/3}}\right).
\]
\end{enumerate}
\end{theorem}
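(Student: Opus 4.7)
The plan is to convert the three observed SwapTest frequencies into quantitative analytic statements about representative vectors $u\in\Psi$ and $v\in\Phi$, and then build the witness set $S$ by a simple threshold on $u$, using the orthogonality with $v$ to certify that $|S|$ is nearly as small as the support-size estimate $(2\alpha-1)n$. For any non-negative unit vector $u$, SwapTest with $\one_{[n]}/\sqrt n$ accepts with probability $\tfrac12+\|u\|_1^2/(2n)$ (because $\langle u,\one_{[n]}/\sqrt n\rangle=\|u\|_1/\sqrt n$ when $u\ge 0$), and the test on $(u,v)$ accepts with probability $\tfrac12+\langle u,v\rangle^2/2$. Applying \cref{prop:tilted-state-approx} separately to the three SwapTest batches shows that, except with probability $\exp(-\Theta(\epsilon k))$, the fractions $\alpha,\beta,\lambda$ are within $O(\sqrt\epsilon)$ of these true values for any representative.

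\paragraph{Completeness.} In the $0$-tilted case $u=\one_S/\sqrt{\delta n}$ and $v=\one_T/\sqrt{(1-\delta)n}$ with $S\cap T=\emptyset$, a direct computation gives true acceptance rates $(1+\delta)/2$, $(2-\delta)/2$ and $1/2$. A Chernoff bound with deviation $\sqrt\epsilon/2$ on each batch simultaneously establishes the accept conditions $\alpha+\beta\in[3/2\pm\sqrt\epsilon]$, $\lambda\le 1/2+\sqrt\epsilon$ and the furthermore bounds $|2\alpha-1-\delta|\le\sqrt\epsilon$, $|2\beta-1-(1-\delta)|\le\sqrt\epsilon$, all with probability $1-\exp(-\Theta(\epsilon k))$.

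\paragraph{Soundness.} Assume acceptance together with the concentration step. Writing $\kappa:=2\alpha-1$, one obtains $\|u\|_1^2 = \kappa n\pm O(\sqrt\epsilon)n$, $\|v\|_1^2=(1-\kappa)n\pm O(\sqrt\epsilon)n$, and $\langle u,v\rangle\le O(\epsilon^{1/4})$. Fix a representative $u$, define the threshold $\tau:=\gamma/\|u\|_1$ and set $S:=\{i\in[n]:u_i\ge\tau\}$. The mass estimate is immediate:
\[
\|u|_{S^c}\|^2 = \sum_{u_i<\tau} u_i^2 \;\le\; \tau\,\|u\|_1 \;=\; \gamma,
\]
so $\|u|_S\|^2\ge 1-\gamma$, and passing to any other representative changes $\langle u,\Pi_S u\rangle$ by at most $2\sqrt\epsilon$ via the trace-distance bound, giving the mass condition in item~\ref{enu:sparsity-supp-upper}. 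For the size bound I would use $v$ as a certificate of $S^c$: since $u_i\ge\tau$ on $S$ and both states have non-negative amplitudes,
\[
\tau\,\|v|_S\|_1 \;\le\; \sum_{i\in S}u_iv_i \;\le\; \langle u,v\rangle \;=\; O(\epsilon^{1/4}),
\]
hence $\|v|_S\|_1\le O(\epsilon^{1/4})\|u\|_1/\gamma$ and $\|v|_{S^c}\|_1\ge \|v\|_1 - O(\epsilon^{1/4})\|u\|_1/\gamma$. By Cauchy-Schwarz, $|S^c|\ge \|v|_{S^c}\|_1^2$; squaring out, using $\|u\|_1,\|v\|_1\le\sqrt n$ and the relation $\kappa+(1-\kappa)=1\pm O(\sqrt\epsilon)$, yields $|S|\le \kappa n + 9\epsilon^{1/4}n/\gamma$ after tracking the explicit constants, proving item~\ref{enu:sparsity-supp-upper}.

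\paragraph{Item~\ref{enu:sparsity-supp-estimate} and main obstacle.} Item~\ref{enu:sparsity-supp-estimate} follows from the same construction by optimizing $\gamma$: since $\TD(u,\one_S/\sqrt{|S|})^2 = 1-\|u|_S\|_1^2/|S|$, and $\|u|_S\|_1\ge \|u\|_1-\sqrt{n\gamma}$ by Cauchy-Schwarz on $\|u|_{S^c}\|_1$, the trace distance is controlled by two competing error terms of order $\sqrt{\gamma/\kappa}$ and $\epsilon^{1/4}/(\kappa\gamma)$; setting $\gamma=\Theta(\epsilon^{1/6}/\kappa^{1/3})$ balances them and gives $\TD(u,\one_S/\sqrt{|S|}) = O(\epsilon^{1/24}/\kappa^{1/3})$ together with $||S|-\kappa n|\le O(\epsilon^{1/12}\kappa^{1/3})n$. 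The hard part is the size bound on $S$: a bare threshold argument using only $u$ yields $|S|\le 1/\tau^2 = \|u\|_1^2/\gamma^2 = \kappa n/\gamma^2$, which is too loose by a factor $1/\gamma$. Recovering this factor requires the auxiliary non-negative state $v$: the orthogonality $\langle u,v\rangle=O(\epsilon^{1/4})$ forces $v$'s $\ell_1$-mass onto $S^c$, and Cauchy-Schwarz then transfers that mass into a near-saturating lower bound on $|S^c|$. This is the sole place where non-negativity of amplitudes is essential, and it is exactly what lets $|S|$ approach the ideal value $\kappa n$ with only the small correction $9\epsilon^{1/4}n/\gamma$.
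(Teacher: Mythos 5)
Your proof is correct and follows the same overall strategy as the paper (apply \cref{prop:tilted-state-approx} to translate the three SwapTest frequencies into analytic bounds on representative vectors, then threshold and use non-negativity plus orthogonality), but the key supporting lemma is implemented differently. The paper's \cref{lem:sparsity} defines \emph{two} threshold sets $U=\{i:u_i\ge\sqrt{\gamma/n}\}$ and $V=\{i:v_i\ge\sqrt{\gamma/n}\}$, lower bounds $|U|,|V|$ by Cauchy--Schwarz against $\one_{[n]}$, upper bounds $|U\cap V|$ using the orthogonality $\sum_{i\in U\cap V}u_iv_i\le\langle u,v\rangle$, and concludes $|U|\le n-|V|+|U\cap V|$. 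You instead threshold only $u$, at the $\ell_1$-normalized level $\tau=\gamma/\|u\|_1$, and bound $|S^c|$ directly: $\tau\|v|_S\|_1\le\langle u,v\rangle$ pushes $v$'s $\ell_1$-mass onto $S^c$, and Cauchy--Schwarz converts $\|v|_{S^c}\|_1$ into a lower bound on $|S^c|$. Both arguments pay the same factor of $1/\gamma$ to exploit orthogonality, and both essentially use the $\ell_1$/$\ell_2$ interplay enabled by non-negativity; yours avoids the second threshold set and the explicit inclusion--exclusion step, which is slightly cleaner. Your balancing $\gamma=\Theta(\epsilon^{1/6}/\kappa^{1/3})$ reproduces the stated bounds (the paper sets $\gamma=\kappa^2\delta$ with $\kappa^6=\epsilon/\delta^4$, which is the same up to the $\epsilon\mapsto O(\sqrt\epsilon)$ translation from SwapTest), and the trace-distance argument for transferring $S$ to other representatives and the completeness computation match the paper's. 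One small thing worth making explicit in a polished write-up: your construction of $S$ also yields a matching lower bound $|S|\ge\kappa n-O(\epsilon^{1/12}\kappa^{1/3})n$ (e.g.\ via $|S|\ge\|u|_S\|_1^2\ge(\|u\|_1-\sqrt{n\gamma})^2$), which is needed for the two-sided estimate in item~\ref{enu:sparsity-supp-estimate}; you implicitly use this but do not state it.
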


We first prove the following lemma useful in the soundness part.
\begin{lemma}
\label{lem:sparsity}Let $u,v\in\unitR_n$ for an arbitrary natural number $n$. Let $\delta\in(0,1)$
be some constant. If for some small constant $\epsilon>0$, the following items are true:
\begin{enumerate}
\item \label{enu:correlation-uv}$\langle u,v\rangle^{2}\le\epsilon,$
\item \label{enu:correlation-u-1}$|\langle u,\one_{[n]}/\sqrt{n}\rangle^{2}-\delta|\le\epsilon,$ 
\item \label{enu:correlation-v-1}$|\langle v,\one_{[n]}/\sqrt{n}\rangle^{2}-(1-\delta)|\le\epsilon$.
\end{enumerate}
Then, for any $0<\gamma<1/2$,
and some $|S|\le(\delta + 2\sqrt{\epsilon}/\gamma)n$,
\begin{equation}
\|u|_{S}\|^2\ge1-\gamma.\label{eq:supp-upper}
\end{equation}
Furthermore, for some $S\subseteq[n]$ with 
\[
(\delta-O(\epsilon))n\le|S|\le(\delta+O(\epsilon^{1/6}\delta^{1/3}))n
\]
we have 
\[
\langle u,\one_{S}/\sqrt{|S|}\rangle\ge1-O\left(\frac{\epsilon^{1/6}}{\delta^{2/3}}\right).
\]
\end{lemma}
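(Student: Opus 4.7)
The plan is a mutual-threshold argument: define $S$ as the set of indices where $u$ dominates $v$ by a suitable multiplicative factor, then control $\|u|_{\bar S}\|^2$ using the near-orthogonality $\langle u,v\rangle \le \sqrt\epsilon$, and control $|S|$ via Cauchy--Schwarz comparing $\sum_i v_i$ to $\|v\|_2 = 1$.

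For the first conclusion (\ref{eq:supp-upper}), I would set $S := \{i : u_i > (\gamma/\sqrt\epsilon)\, v_i\}$. On $\bar S$ we have $u_i \le (\gamma/\sqrt\epsilon)\, v_i$, so
\[
\|u|_{\bar S}\|^2 \;\le\; \frac{\gamma}{\sqrt\epsilon}\sum_{i \in \bar S} u_i v_i \;\le\; \frac{\gamma}{\sqrt\epsilon} \langle u,v\rangle \;\le\; \gamma,
\]
which yields $\|u|_S\|^2 \ge 1-\gamma$. For the size, on $S$ we have $v_i < (\sqrt\epsilon/\gamma)\, u_i$, so
\[
\sum_{i \in S} v_i \;\le\; \frac{\sqrt\epsilon}{\gamma}\sum_{i \in S} u_i \;\le\; \frac{\sqrt\epsilon}{\gamma}\sqrt{\delta n}\,(1+O(\epsilon))
\]
by hypothesis~\ref{enu:correlation-u-1}. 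Combining with $\sum_i v_i = \sqrt{(1-\delta)n}\,(1+O(\epsilon))$ from hypothesis~\ref{enu:correlation-v-1} and the Cauchy--Schwarz bound $\sum_{i \in \bar S} v_i \le \sqrt{|\bar S| \cdot \|v|_{\bar S}\|^2} \le \sqrt{|\bar S|}$, we get
\[
\sqrt{|\bar S|} \;\ge\; \sqrt{(1-\delta)n}\,(1-O(\epsilon)) - \tfrac{\sqrt{\epsilon\delta n}}{\gamma}(1+O(\epsilon)).
\]
Squaring and using $\sqrt{\delta(1-\delta)} \le 1$ then gives $|S| \le (\delta + 2\sqrt\epsilon/\gamma)n$.

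For the ``furthermore'' part, I would choose $\gamma$ to balance two competing error terms. From $\|u|_S\|^2 \ge 1-\gamma$ and Cauchy--Schwarz on $\bar S$, $\sum_{i \in \bar S} u_i \le \sqrt{|\bar S|\gamma} \le \sqrt{n\gamma}$, so $\sum_{i \in S} u_i \ge \sqrt{\delta n}(1-O(\epsilon)) - \sqrt{n\gamma}$. Dividing by $\sqrt{|S|} \le \sqrt{(\delta + 2\sqrt\epsilon/\gamma)n}$ and expanding $1/\sqrt{1+x} \ge 1-x/2$, I obtain
\[
\langle u,\, \one_S/\sqrt{|S|}\rangle \;\ge\; 1 - O(\epsilon) - O\!\left(\sqrt{\gamma/\delta}\right) - O\!\left(\sqrt\epsilon/(\gamma\delta)\right).
\]
Setting $\gamma := (\epsilon/\delta)^{1/3}$ balances the last two terms at $O(\epsilon^{1/6}/\delta^{2/3})$, which is exactly the inner-product bound stated. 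Substituting this $\gamma$ into the size bound from the first part gives $|S| \le (\delta + O(\epsilon^{1/6}\delta^{1/3}))n$.

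The main obstacle I anticipate is the tight lower bound $|S| \ge (\delta-O(\epsilon))n$: the natural inequality $|S| \ge (\sum_{i \in S} u_i)^2/\|u|_S\|^2$ only yields $|S|/n \ge \delta - O(\epsilon^{1/6}/\delta^{2/3})$, which is at the same error scale as the upper bound rather than the sharper $O(\epsilon)$ claimed. To close this gap, I would separately invoke the support-size inequality $|\supp(u)| \ge \delta n(1 - O(\epsilon))$, which follows from Cauchy--Schwarz applied directly to $(\sum_i u_i)^2 \le |\supp(u)| \cdot \|u\|^2$, and then argue that $\supp(u)\setminus \supp(v)$ (which is automatically contained in $S$) captures nearly all of $\supp(u)$ using $\langle u,v\rangle \le \sqrt\epsilon$. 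The precise balancing of exponents in the choice of $\gamma$ also deserves careful verification.
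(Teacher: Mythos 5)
Your main mechanism is sound, but realized differently from the paper's. You use a \emph{relative} threshold $S=\{i: u_i > (\gamma/\sqrt\epsilon)\,v_i\}$, bounding $\|u|_{\bar S}\|^2 \le (\gamma/\sqrt\epsilon)\langle u,v\rangle \le \gamma$ directly from near-orthogonality, and controlling $|S|$ by comparing $\sum_{i\in S} v_i$ (small since $v_i < (\sqrt\epsilon/\gamma)u_i$ there) to $\sum_i v_i \approx \sqrt{(1-\delta)n}$. The paper instead uses \emph{absolute} thresholds $U=\{i: u_i\ge\sqrt{\gamma/n}\}$, $V=\{i: v_i\ge\sqrt{\gamma/n}\}$: mass control on $\bar U$ is immediate (each dropped coordinate contributes $<\gamma/n$), and $|U|$ is controlled via $|U\cap V|\le\sqrt\epsilon n/\gamma$ together with $|V|\ge(1-\delta-\epsilon)n$. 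Both routes deliver the first conclusion and the inner-product estimate in the ``furthermore'' part, and your choice $\gamma=(\epsilon/\delta)^{1/3}$ is in fact identical to the paper's ($\gamma=\kappa^2\delta$ with $\kappa^6=\epsilon/\delta^4$).

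The obstacle you flag --- the lower bound $|S|\ge(\delta-O(\epsilon))n$ --- is real, and the repair you sketch does not close it. Claiming that $\supp(u)\setminus\supp(v)$ ``captures nearly all of $\supp(u)$'' would require $|\supp(u)\cap\supp(v)|=O(\epsilon)n$, and $\langle u,v\rangle\le\sqrt\epsilon$ gives no such size control: $u$ and $v$ can share $\Omega(n)$ coordinates with amplitudes $o(1/\sqrt n)$ while keeping $\sum_i u_iv_i$ tiny. Your own first attempt (via $|S|\ge(\sum_{i\in S}u_i)^2$ and $\sum_{i\in S}u_i\ge\sqrt{(\delta-\epsilon)n}-\sqrt{\gamma n}$) gives, at $\gamma=(\epsilon/\delta)^{1/3}$, the bound $|S|\ge(\delta-O(\epsilon^{1/6}\delta^{1/3}))n$, and I do not see how either of your proposed steps improves the exponent to $O(\epsilon)$.

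For context: the paper's own justification of this step has the same difficulty. The displayed chain meant to show $|U|\ge(\delta-\epsilon)n$ opens with $\delta-\epsilon\le\langle u|_U,\one_{[n]}/\sqrt n\rangle^2$, but since $u$ is nonnegative and $u|_U$ drops coordinates, $\langle u|_U,\one_{[n]}/\sqrt n\rangle\le\langle u,\one_{[n]}/\sqrt n\rangle$, so this inequality does not follow from hypothesis 2. A corrected version (subtract $\langle u|_{\bar U},\one_{[n]}/\sqrt n\rangle\le\sqrt\gamma$) yields only $|U|\ge(\delta-O(\epsilon^{1/6}\delta^{1/3}))n$, the same error scale as your route. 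That weaker symmetric bound $\big||S|-\delta n\big|\le O(\epsilon^{1/6}\delta^{1/3})n$ is what is actually used downstream in the sparsity-test soundness proof, so your argument suffices for the application; but the $(\delta-O(\epsilon))n$ lower bound as literally stated in the lemma does not appear to be proved by either route, and you should not spend effort trying to force it.
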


\begin{proof}
Let 
\[
U=\left\{ i:u_{i}\ge\sqrt{\frac{\gamma}{n}}\right\} ,V=\left\{ i:v_{i}\ge\sqrt{\frac{\gamma}{n}}\right\} ,
\]
for some $\gamma$ to be determined later. $U$ will be the set $S$ in the statement. Note that by our definition
of $U,V$, 
\begin{align}
 & \|u|_{\bar{U}}\|^{2},\|v|_{\bar{V}}\|^{2}\le\gamma, \label{eq:uv-nonUV-mass}\\
 & \|u|_{U}\|^{2},\|v|_{V}\|^{2}\ge1-\gamma.            \label{eq:uv-UV-mass}
\end{align}
We claim that
\begin{align}
    & |U|\ge (\delta-\epsilon )n,\label{eq:supp-U-lower}\\
    & |V|\ge (1-\delta-\epsilon) n,\label{eq:supp-V}\\
    & |U\cap V|\le\frac{\sqrt{\epsilon}}{\gamma}n.\label{eq:intersection-U-V}
\end{align}
We verify (\ref{eq:supp-U-lower}), and (\ref{eq:supp-V}) will follow
the same reasoning. Note
\begin{align*}
\delta-\epsilon & \le\left\langle u|_{U},\frac{\one_{[n]}}{\sqrt{n}}\right\rangle ^{2}\le\|u|_{U}\|^{2}\frac{|U|}{n},
\end{align*}
where the first inequality is given; the second step uses Cauchy-Schwarz.
Rearranging the terms, we get (\ref{eq:supp-U-lower}). Next, we obtain (\ref{eq:intersection-U-V}),
\begin{align*}
\sqrt{\epsilon} & \ge\sum_{i\in U\cap V}u_{i}v_{i}\ge|U\cap V|\frac{\gamma}{n},
\end{align*}
where the first step uses \ref{enu:correlation-uv}, and second step follows the definition of $U$ and $V$. In view of (\ref{eq:intersection-U-V}), we are done by rearranging
the terms. By (\ref{eq:supp-V})-(\ref{eq:intersection-U-V}), we
can conclude
\begin{align}
|U| & \le|U\cup V|-|V|+|U\cap V| \nonumber \\
    & \le n - (1-\delta -\epsilon)n + \frac{\sqrt \epsilon }{\gamma}n \nonumber \\
    & \le \left(\delta + \frac{2\sqrt{\epsilon}}{\gamma}\right)n.
 \label{eq:supp-U-upper-secondary}
\end{align}
This finishes the proof of the first part of the lemma. 
For the furthermore part, 
calculate: 
\begin{align*}
\left\langle u,\frac{\one_{U}}{\sqrt{|U|}}\right\rangle  & =\frac{1}{\sqrt{|U|}}\langle u|_{U,}\one_{[n]}\rangle\\
 & = \frac{1}{\sqrt{|U|}}(\langle u,\one_{[n]}\rangle-\langle u|_{\bar{U}},\one_{[n]}\rangle)\\
 & \ge\sqrt{\frac{n}{|U|}}(\sqrt{\delta-\epsilon}-\sqrt\gamma)\\
 & \ge\sqrt{\frac{\delta-\epsilon}{\delta+2\sqrt{\epsilon}/\gamma}}-\sqrt{\frac{\gamma}{\delta+2\sqrt{\epsilon}/\gamma}}\\
 & \ge\sqrt{1-\frac{2\sqrt{\epsilon}/\gamma+\epsilon}{\delta+2\sqrt{\epsilon}/\gamma}}-{\sqrt\frac{\gamma}{\delta}},
\end{align*}
where the third step uses \ref{enu:correlation-u-1} given in the lemma statement, and (\ref{eq:uv-nonUV-mass}) with Cauchy-Schwarz inequality; the fourth step uses (\ref{eq:supp-U-upper-secondary}).
Set $\kappa^{6}=\epsilon/\delta^{4},\gamma=\kappa^2\delta,$
then
\begin{align*}
\left\langle u,\frac{\one|_{U}}{\sqrt{|U|}}\right\rangle  & \ge1-O(\kappa).\qedhere
\end{align*}
\end{proof}

Equipped with the above lemma, we move on to prove Theorem~\ref{thm:sparsity-test}.
\begin{proof}[Proof of Theorem~\ref{thm:sparsity-test}]
The completeness part is a straightforward application of Chernoff bound.
So we focus on the soundness part. 
Let $R$ and $T$ be the representative set of $\Psi$ and $\Phi$, respectively. 
When $\Psi,\Phi$ are $\epsilon$-tilted states, then $\Psi_0,\Psi_1,\Phi_0, \Phi_1$ are $2\epsilon$-tilted states, and $\Psi_1\otimes \Phi_1$ is a $4\epsilon$-tilted state by \cref{prop:tensor-tilted-states}. By \cref{prop:tilted-state-approx}, we have for any $i\in R$, and $j\in T$,
\begin{align}
    &\Pr \left[ \left|\langle u_{i},\one_{[n]}/\sqrt{n}\rangle^{2}+ 1- 2\alpha \right|> 12\sqrt{\epsilon} \right]\le\exp(-\epsilon k), \label{eq:sparsity-u-1-approx}\\
    &\Pr \left[ \left|\langle v_{j},\one_{[n]}/\sqrt{n}\rangle^{2} + 1 -2\beta \right|> 12\sqrt{\epsilon} \right]\le\exp(-\epsilon k), \label{eq:sparsity-v-1-approx} \\
    &\Pr \left[ \left| \langle u_{i},v_j \rangle^{2}+1-2\lambda \right|> 16\sqrt{\epsilon} \right]\le\exp(-2\epsilon k).\label{eq:sparsity-u-v-approx}
\end{align}

Set $\delta=2\alpha-1$. Note that the test passes only if $|(2\alpha-1)+(2\beta-1)-1|\le 2\sqrt{\epsilon}.$
Together with (\ref{eq:sparsity-u-1-approx}) and (\ref{eq:sparsity-v-1-approx}), it implies that
\begin{align}
 & |\langle u_{i},\one_{[n]}/\sqrt{n}\rangle^{2}-\delta|\le 12\sqrt{\epsilon},\label{eq:sparsity-u}\\
 & |\langle v_{j},\one_{[n]}/\sqrt{n}\rangle^{2}-(1-\delta)|\le 14\sqrt{\epsilon}.\label{eq:sparsity-v}
\end{align}
Therefore, if either (\ref{eq:sparsity-u}) or (\ref{eq:sparsity-v}) fails, the protocol accepts with probability at most $\exp(-\epsilon k)$.

Moreover, the test passes only if $2\lambda - 1 \le 2\sqrt\epsilon$. Thus when the following does not hold the test fails with probability at least $1-\exp(-2\epsilon k)$.
\begin{align}
    \langle u_i, v_j \rangle ^2 \le 18\sqrt\epsilon.\label{eq:sparsity-uv}
\end{align}

Now suppose (\ref{eq:sparsity-u}), (\ref{eq:sparsity-v}) and (\ref{eq:sparsity-uv}) are true for some $i\in R$ and $j\in T$. By \cref{lem:sparsity}, we have: 
\begin{enumerate}
    \item For any $\gamma$, there is subset $S\subseteq [n]$ such that $|S|\le ((2\alpha-1) + 9\epsilon ^{1/4} / \gamma)n$, and $\|u_i|_S\|^2 \ge 1-\gamma$.
    \item There is subset $S\subseteq[n]$ such that
    \[
    \left||S|-(2\alpha-1)n\right|\le O(\epsilon^{1/12}(2\alpha-1)^{1/3})n,
    \]
    and 
    \[
    \langle u_{i},\one_{S}/\sqrt{|S|}\rangle\ge1-O\left(\frac{\epsilon^{1/12}}{(2\alpha-1)^{2/3}}\right).
    \]
\end{enumerate}
Since for any representative state $u\in \Psi$, $\TD(u, u_i) \le \sqrt \epsilon$, the above two items implies
\ref{enu:sparsity-supp-upper} and \ref{enu:sparsity-supp-estimate} in the theorem statements. Therefore,
if either \ref{enu:sparsity-supp-upper} or \ref{enu:sparsity-supp-estimate} in the theorem statements
does not hold, then one of (\ref{eq:sparsity-u}), (\ref{eq:sparsity-v}) and (\ref{eq:sparsity-uv}) is
not true, failing the sparsity test with probability at least $1-\exp(-\epsilon k)$.
\end{proof}

Suppose that we have a target sparsity $\gamma$, a constant number in $(0,1)$. We adapt the previous sparsity test slightly to test whether some given state is close to $\SSS_\gamma$.

\noindent\fbox{\begin{minipage}[t]{1\columnwidth - 2\fboxsep - 2\fboxrule}%
\uline{Sparsity test \rom{2} (with target sparsity $\gamma$ and precision $\epsilon$)}

\textbf{Input:} $\Psi = \{u_1, \ldots, u_{2k}\},\Phi=\{v_1, \ldots, v_{2k}\}$
\begin{enumerate}
    \item Sparsity test I on $(\Psi,\Phi)$ with precision $\epsilon$. 
\end{enumerate}
\emph{Accept} if the sparsity test \rom{1} accepts and its output satisfies: $2\alpha-1 \in [\gamma  -\sqrt\epsilon, \gamma+\sqrt\epsilon]$.
\end{minipage}}

\begin{theorem}[Sparsity test with target sparsity $\gamma$]
\label{thm:sparsity-test-target}
Let $\epsilon >0$ be such that $\epsilon<\gamma^{4/5}$. Suppose that $\Psi$ and $\Phi$ are $\epsilon$-tilted states. 
Then the sparsity test accepts with probability at most $\exp(-\epsilon k)$ if the following fails to hold:
\begin{equation}    \label{eq:sparsity-test-target-gamma}
    \TD(\Psi, \SSS_\gamma) \le O\left(\frac{\epsilon^{1/24}}{\gamma^{1/3}}\right).
\end{equation}
If $\Psi$ is the $0$-tilted states from $\SSS_\gamma$, then there is $\Phi$ such that the sparsity test accepts with probability $1-\exp(-\Theta(\epsilon k))$
\end{theorem}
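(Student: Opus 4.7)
The plan is to treat Sparsity Test~II as a thin wrapper around Sparsity Test~I that additionally pins the observed $2\alpha - 1$ to lie within $\sqrt{\epsilon}$ of the target sparsity $\gamma$. The theorem will therefore follow as a corollary of Theorem~\ref{thm:sparsity-test}, combined with a short set-manipulation step that promotes the near-flat vector produced by Sparsity Test~I into an exact element of $\SSS_\gamma$. The main arithmetic obstacle is bookkeeping: one has to verify that, under the hypothesis $\epsilon < \gamma^{4/5}$, the extra $\sqrt{\epsilon}$ error introduced by the wrapper is absorbed by the error terms already present in Theorem~\ref{thm:sparsity-test}. Explicitly, $\epsilon < \gamma^{4/5}$ implies $\sqrt{\epsilon} = O(\epsilon^{1/12}\gamma^{1/3})$, which is exactly the regime where the set-size deviation produced by Sparsity Test~I dominates the wrapper's additive error.

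For the soundness direction, suppose that Sparsity Test~II accepts with probability larger than $\exp(-\epsilon k)$. Since acceptance of Test~II implies acceptance of Test~I, item~\ref{enu:sparsity-supp-estimate} of Theorem~\ref{thm:sparsity-test} supplies a set $S \subseteq [n]$ with
\[
  \bigl||S| - (2\alpha-1)\, n\bigr| \le O\bigl(\epsilon^{1/12}(2\alpha-1)^{1/3}\bigr)\, n,
\]
and, for every representative $u \in \Psi$, $\TD(u, \one_S/\sqrt{|S|}) = O(\epsilon^{1/24}/(2\alpha-1)^{1/3})$. The acceptance constraint $|2\alpha - 1 - \gamma| \le \sqrt{\epsilon}$ together with $\epsilon < \gamma^{4/5}$ lets us substitute $\gamma$ for $(2\alpha-1)$ in both bounds up to constant factors, yielding $\bigl||S| - \gamma n\bigr| \le O(\epsilon^{1/12}\gamma^{1/3})\, n$ and $\TD(u, \one_S/\sqrt{|S|}) = O(\epsilon^{1/24}/\gamma^{1/3})$.

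To finish soundness, one adjusts $S$ to a nearby $S' \subseteq [n]$ of size exactly $\gamma n$, chosen with $S \subseteq S'$ or $S' \subseteq S$ according to whether $|S| < \gamma n$ or $|S| > \gamma n$. A direct inner-product computation gives $\langle \one_S/\sqrt{|S|},\, \one_{S'}/\sqrt{\gamma n}\rangle = \sqrt{\min(|S|,\gamma n)/\max(|S|,\gamma n)}$, whence $\TD(\one_S/\sqrt{|S|},\, \one_{S'}/\sqrt{\gamma n}) = O(\epsilon^{1/24}/\gamma^{1/3})$ after plugging in the size bound from the previous paragraph. The triangle inequality for the trace distance then yields $\TD(u, \one_{S'}/\sqrt{\gamma n}) = O(\epsilon^{1/24}/\gamma^{1/3})$ for every representative $u$, which is exactly (\ref{eq:sparsity-test-target-gamma}).

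For completeness, assume $\Psi$ is the $0$-tilted state consisting of $2k$ copies of $\one_S/\sqrt{\gamma n}$ for some $|S| = \gamma n$, and pick $\Phi$ to be $2k$ copies of $\one_{[n]\setminus S}/\sqrt{(1-\gamma) n}$. Then $\Psi \in \SSS_\gamma$, $\Phi \in \SSS_{1-\gamma}$, and $\Psi \perp \Phi$, so the completeness clause of Theorem~\ref{thm:sparsity-test} applies: Sparsity Test~I accepts with probability $1 - \exp(-\Theta(\epsilon k))$ and outputs $\alpha$ satisfying $|2\alpha - 1 - \gamma| \le \sqrt{\epsilon}$. The wrapper's additional check is therefore satisfied on the same event, completing the argument.
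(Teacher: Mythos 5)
Your proof is correct and follows essentially the same route as the paper: both derive the soundness direction by invoking item~\ref{enu:sparsity-supp-estimate} of Theorem~\ref{thm:sparsity-test}, use the test's acceptance constraint $|2\alpha-1-\gamma|\le\sqrt\epsilon$ together with $\epsilon<\gamma^{4/5}$ (so $\sqrt\epsilon\le\epsilon^{1/12}\gamma^{1/3}$) to transfer the bounds from $2\alpha-1$ to $\gamma$, and finish via a nested-set trace-distance computation $\TD(\one_S/\sqrt{|S|},\one_T/\sqrt{|T|})=\sqrt{(|T|-|S|)/|T|}$ plus the triangle inequality; the completeness argument is the same instantiation of Theorem~\ref{thm:sparsity-test}'s completeness clause with $\Phi$ being copies of $\one_{\bar S}/\sqrt{(1-\gamma)n}$. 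The one place both you and the paper glide past without comment is that treating $(2\alpha-1)^{\pm1/3}$ and $\gamma^{\pm1/3}$ as interchangeable up to constants actually needs $\sqrt\epsilon=O(\gamma)$, not merely $\epsilon<\gamma^{4/5}$, but since the paper makes the identical implicit assumption this is not a deficiency of your writeup relative to theirs.
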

\begin{proof}
To prove the first part, it suffices to show that assuming Theorem~\ref{thm:sparsity-test}~\ref{enu:sparsity-supp-estimate} holds then (\ref{eq:sparsity-test-target-gamma})  holds. Suppose $2\alpha -1= (1+\epsilon')\gamma$. Then we assume that $|\epsilon'|\le \sqrt\epsilon / \gamma$, since otherwise the sparsity test rejects immediately. Note that $\epsilon'$ is a very tiny number in absolute value. By Theorem~\ref{thm:sparsity-test}~\ref{enu:sparsity-supp-estimate}, there is constant $c,C$ such that
for any representative state $\ket{\psi}\in\Psi$, 
\begin{equation} \label{eq:dist-psi-sss-gamma}
    \TD(\ket{\psi}, \SSS_{\gamma'}) \le 
    \frac{C\epsilon^{1/24}}{((1 + \epsilon')\gamma)^{1/3} },
\end{equation}
where
\[
|\gamma' - (2\alpha-1)|
    \le c\epsilon^{1/12}(1+\epsilon')^{1/3} \gamma^{1/3}.
\]
Therefore,
\begin{align}
    |\gamma - \gamma' | 
    &\le |\gamma - (2\alpha-1)| + |\gamma'-(2\alpha-1)| \nonumber\\
    &\le \epsilon' \gamma + c\epsilon^{1/12}(1+\epsilon')^{1/3} \gamma^{1/3} \nonumber \\
    &\le c'\epsilon^{1/12} \gamma^{1/3}, \label{eq:gamma-diff}
\end{align}
where the last step holds due to that we set $\epsilon < \gamma ^{4/5}$, and $c'$ is some constant.
Note that for any $S\subseteq T\subseteq[n],$  we have,
\begin{align}\label{eq:dist-sss}
\TD\left(\frac{\one_{S}}{\sqrt{|S|}},\frac{\one_{T}}{\sqrt{|T|}}\right) & =\sqrt{1-\left(\frac{|S|}{\sqrt{|S||T|}}\right)^{2}}=\sqrt{\frac{|T|-|S|}{|T|}}.
\end{align}
By (\ref{eq:dist-psi-sss-gamma})-(\ref{eq:dist-sss}) and triangle inequality, for some absolute constant $C'$,
\begin{align}
    \TD(\ket\psi, \SSS_\gamma) 
        &\le \frac{C\epsilon^{1/24}}{(1 + \epsilon')^{1/3}\gamma^{1/3}} + \sqrt{|\gamma-\gamma'|/\gamma}, \nonumber \\
        & \le C'\epsilon^{1/24}\gamma^{-1/3}.
\end{align}

The second part of the theorem is simply the completeness case from \cref{thm:sparsity-test}.
\end{proof}

\subsection{Validity Test}

Consider the variable set $X=\{1,2,\ldots, n\}$, and domain $\Sigma = \{1,2,\ldots, q\}$. Recall that the valid set is the following
\[
    \cV =\left\{ \frac{1}{\sqrt n}\sum_{i\in [n]} \ket i \ket {x_i}:  \; \forall i\in[n], x_i \in \Sigma
    \right\}.
\]
The goal is to test whether a state is close to $\cV$.

\noindent\fbox{\begin{minipage}[t]{1\columnwidth - 2\fboxsep - 2\fboxrule}%
\uline{Validity test (with precision $d$)}

\textbf{Input: }$\Psi=\{|\psi_{1}\rangle,|\psi_{2}\rangle,\ldots,|\psi_{k}\rangle\}\subseteq\unitR.$
\begin{enumerate}
\item Apply discrete Fourier transform to the second register of $\Psi$. 
\item Measure the second register. 
\end{enumerate}
\emph{Accept} if $\alpha \le 1/q + d$, where $\alpha$ is the fraction of $|0\rangle$ observed after measuring.
\end{minipage}}

\begin{theorem}[Validity test]\label{thm:validity-test}
Suppose that $\Psi$ is an $\epsilon$-tilted state for some small
$\epsilon>0$. Further suppose that for any representative state $\ket{\psi}\in\Psi,$
$\TD(\ket{\psi},\SSS_{1/q})\le d$ for $2\epsilon \le d<1/q$. Then the probability that in the validity
test the fraction of measured $|0\rangle$ is less than
$(1+qd)/q$ is at most $\exp(-\Theta(qd^{2}k)),$ if 
\[
\TD(\ket{\psi},\cV)\ge \sqrt{6qd}+d.
\]
If $\Psi$ is a $0$-tilted state from $\cV$, then the validity test accepts with probability at least $1-\exp(-\Theta(q d^2 k))$.
\end{theorem}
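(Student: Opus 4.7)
The plan is to analyze a single representative state and then lift to the $\epsilon$-tilted state via concentration, using the non-negativity of amplitudes in a crucial way. Fix a representative state $\ket{\psi^*}\in\Psi$ and write $\ket{\psi^*}=\sum_{i,j}\alpha_{i,j}\ket{i,j}$ with $\alpha_{i,j}\ge 0$. Applying the inverse QFT on the second register and measuring yields outcome $\ket{0}$ with probability
\begin{equation*}
p^{*} \;=\; \frac{1}{q}\sum_{i=1}^{n}\beta_i^{2},\qquad \beta_i\;:=\;\sum_{j\in\Sigma}\alpha_{i,j}\ge 0.
\end{equation*}
Since $\beta_i^{2}\ge \sum_j \alpha_{i,j}^{2}$ whenever the $\alpha_{i,j}$ are non-negative, we always have $p^{*}\ge 1/q$, with equality when at most one $j$ per $i$ is in the support (i.e.\ when $\ket{\psi^*}\in\cV$). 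This identity is the essential place where non-negativity enters.

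Next I would control $p^{*}$ from below in terms of $\TD(\ket{\psi^*},\cV)$. Because $\TD(\ket{\psi^*},\SSS_{1/q})\le d$, there is a sparse state $\ket{\psi'}=\frac{1}{\sqrt n}\sum_{(i,j)\in S}\ket{i,j}$ with $|S|=n$ and $\TD(\ket{\psi^*},\ket{\psi'})\le d$; since measurement outcomes are $1$-Lipschitz in trace distance, the corresponding probability $p'$ satisfies $|p^{*}-p'|\le d$. For the sparse state, letting $|S_i|$ denote the number of $j$ with $(i,j)\in S$, a direct computation together with Cauchy–Schwarz on the rows $i$ with $|S_i|\ge 2$ gives
\begin{equation*}
p' \;=\; \frac{1}{qn}\sum_{i}|S_i|^{2}\;\ge\;\frac{1+\mu}{q},
\end{equation*}
where $\mu$ is the fraction of rows that are either empty or have multiplicity $\ge 2$. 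On the other hand, selecting one $j$ per row (using the unique $j$ when $|S_i|=1$ and any $j\in S_i$ when $|S_i|\ge 2$) produces a valid state $\ket{\phi}\in\cV$ with overlap $\braket{\psi'}{\phi}\ge 1-\mu/2$, so $\TD(\ket{\psi'},\cV)\le\sqrt{\mu}$. Combining via the triangle inequality and $|p^{*}-p'|\le d$ yields
\begin{equation*}
\TD(\ket{\psi^*},\cV)\;\le\;d+\sqrt{q(p^{*}+d)-1}.
\end{equation*}
Contrapositively, if $\TD(\ket{\psi^*},\cV)\ge\sqrt{6qd}+d$, then $p^{*}\ge 1/q+5d$.

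To finish, I lift this from the representative state to the full $\epsilon$-tilted state $\Psi$. By \cref{prop:tilted-state-approx}, the mean probability over $\Psi$ satisfies $|p_{\Psi}-p^{*}|\le 3\sqrt{\epsilon}\le 3\sqrt{d}\ll d$ (using $2\epsilon\le d$ with appropriately chosen constants; one just absorbs the slack by weakening the $5d$ margin to, say, $4d$). Finally, the measured fraction of $\ket{0}$ is the empirical mean of $k$ independent Bernoulli trials with mean $\ge 1/q+4d$, so a multiplicative Chernoff bound gives
\begin{equation*}
\Pr\!\left[\text{fraction}<\frac{1+qd}{q}\right]\;\le\;\exp(-\Theta(qd^{2}k)),
\end{equation*}
where the factor of $q$ in the exponent arises because the relative deviation is $\Theta(qd)$ around a mean of order $1/q$. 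The completeness claim is immediate: a $0$-tilted state from $\cV$ gives exact probability $1/q$, and Chernoff yields the same concentration on the acceptance side.

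The most delicate step will be the combinatorial bound $p'\ge(1+\mu)/q$ with the matching bound $\TD(\ket{\psi'},\cV)\le\sqrt{\mu}$, because one has to balance the two extremes (few rows with very high multiplicity versus many rows with multiplicity two), and it is this balance that forces the square-root loss $\sqrt{6qd}$ in the soundness threshold. Everything else is concentration bookkeeping plus the non-negativity identity $\beta_i^{2}\ge\sum_j\alpha_{i,j}^{2}$.
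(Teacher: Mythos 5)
Your high-level route matches the paper's: pass from $\ket{\psi^*}$ to its sparse approximation $\ket{\psi'}$, count row multiplicities in the support, relate the probability of observing $\ket{0}$ to the distance from $\cV$, and finish with Chernoff. However there are two concrete errors.

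First, the bound $\braket{\psi'}{\phi}\ge 1-\mu/2$ is the wrong direction. Write $z$ for the fraction of empty rows and $m$ for the fraction of rows with multiplicity $\ge 2$, so $\mu=z+m$. Since $\sum_i|S_i|=n$, each row with multiplicity $\ge 2$ forces at least one empty row, i.e.\ $m\le z$, hence $z\ge\mu/2$. The maximum overlap with any state in $\cV$ is exactly $1-z$ (one matches at most one element of $S$ per non-empty row and nothing per empty row), so in fact $\braket{\psi'}{\phi}\le 1-\mu/2$ — the opposite inequality. The usable lower bound comes from $z\le\mu$, which gives $\braket{\psi'}{\phi}\ge 1-\mu$ and hence $\TD(\ket{\psi'},\cV)\le\sqrt{2\mu}$, not $\sqrt{\mu}$. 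Propagating this through your combination step yields $p^*\ge 1/q+2d$ (the paper's figure) rather than the claimed $1/q+5d$. This is still a $d$ margin over the acceptance threshold $1/q+d$, so the structure survives, but the constant must change.

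Second — and this is the more serious gap — the claim $3\sqrt\epsilon\le 3\sqrt{d}\ll d$ is false: the only hypothesis is $2\epsilon\le d<1/q$, so $\sqrt\epsilon$ can be as large as $\sqrt{d/2}$, which for small $d$ dominates $d$ by orders of magnitude. Invoking \cref{prop:tilted-state-approx} injects a slack of order $\sqrt\epsilon$ that wipes out the $\Theta(d)$ margin entirely; it is not possible to "absorb the slack by weakening $5d$ to $4d$.'' The $\sqrt\epsilon$ in that proposition pays for the mutual variation among representative states, which is not needed here: the hypotheses $\TD(\ket\psi,\SSS_{1/q})\le d$ and $\TD(\ket\psi,\cV)\ge\sqrt{6qd}+d$ hold for every representative, so each representative independently produces $\ket{0}$ with probability at least $1/q+2d$ by your single-state analysis. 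Of the $k$ executions at least $(1-\epsilon)k$ hit representatives, so the expected fraction of $\ket{0}$ outcomes is at least $(1-\epsilon)(1/q+2d)\ge 1/q+5d/4$ (using $\epsilon\le d/2$, $q\ge 2$, $d<1/q$), and a Chernoff bound with deviation $\Theta(d)$ around a mean of order $1/q$ then gives the stated $\exp(-\Theta(qd^2k))$. Once both steps are repaired the argument essentially coincides with the paper's.
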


\begin{proof}
Fix an arbitrary representative state $\ket{\psi},$ let $\ket{\phi}\in \SSS_{1/q}$
be such that
\[
\TD(\ket{\psi},\ket{\phi})=\TD(\ket{\psi},\SSS_{1/q})\le d.
\]
 If $\TD(\ket{\psi},\cV)\ge \sqrt{2qd}+d$, by triangle inequality
\begin{equation}
\TD(\ket\phi,\cV)\ge\TD(\ket{\psi},\cV)-\TD(\ket{\psi},\ket{\phi})\ge\sqrt{6qd}.\label{eq:dist-phi-valid-upper}
\end{equation}
Say $S\subseteq[n]\times[q]$ of size $n$ is such that
\[
    \ket{\phi}=\frac{1}{\sqrt{n}}\sum_{(i,v)\in S}|i\rangle|v\rangle.
\]
For each $i\in[n]$, let $c_{i}:=|\{(i,v)\in[n]\times[q]:(i,v)\in S\}|.$
Let $Z:=\{i:c_{i}=0\}$. 
Then, 
\[
\TD(\ket\phi,\cV)=\sqrt{1-\left(\frac{n-|Z|}{n}\right)^{2}}\;\Longrightarrow\;\frac{|Z|}{n}\ge\frac{1}{2}\TD(\ket\phi,\cV)^{2}.
\]
When measuring the second register of $\ket\phi$
after the discrete Fourier transform, the probability $\tilde{p}$
that we observe $|0\rangle$ can be calculated as below,
\begin{align}
\tilde{p} &=\frac{\sum_{i\in[n]}c_{i}^{2}}{nq} \ge \frac{n-|Z|}{nq} \left( \frac{n}{n-|Z|}  \right)^2 \nonumber \\
               &= \frac{1}{q}\cdot\frac{n}{n-|Z|} \ge \frac{1}{q}\left(1+\frac{|Z|}{n}\right) \nonumber \\
               &\ge \frac{1}{q}(1+\TD(\ket\phi, \cV)^2/2),
               \label{eq:est-prob-observing-0}
\end{align}
where the second step follows by convexity. 
By (\ref{eq:est-prob-observing-0}) and (\ref{eq:dist-phi-valid-upper}),
\[
\tilde{p}\ge \left(1+3qd\right)\frac{1}{q}.
\]
Now let $p$ be the probability that we observe $0$ measuring the second register
of $\ket{\psi}$ after applying Fourier transform, then by \cref{fact:trace_norm_acc},
\[
p\ge (1+3qd)\frac{1}{q} - d \ge (1+2qd)\frac{1}{q}.
\]
The first part of our lemma holds by Chernoff bound.

Now suppose that $\Psi$ is a $0$-tilted state from $\cV$. Let $\qpsi{}$ be the representative state of $\Psi$ and let $\mathop{\Pi} \qpsi{}$ denote the projection of $|\psi\rangle$ onto the subspace 
\[
\C^n\otimes \left(\frac{1}{\sqrt q}\sum_{v\in \Sigma} |v\rangle\right).
\]
Thus $\|\mathop{\Pi}|\psi\rangle\|^2$ is the probability of observing $|0\rangle$, after applying the Fourier transform to and measuring the second register of $|\psi\rangle$.
For any $|\psi\rangle \in \cV$, 
\[
\|\mathop{\Pi} |\psi\rangle\|^2 = \frac{1}{q}.
\]
It thus follows that in the validity test, we observe less than $1/q+d$ fraction of $|0\rangle$ with probability at least $1-\exp(-\Theta(q d^2 k))$.
\end{proof}

\section{\texorpdfstring{$\textup{SSE} \in \QMA^+_{\log}(2)$}
         {SEE in QMA+log(2)}}\label{sec:SSE-protocol}
The small-set expansion problem arises in the context of the unique games conjecture~\cite{RS10, BarakBHKSZ12}. The formal definition is given below.\footnote{This definition is slightly stronger than the original definition in~\cite{RS10}.}
\begin{definition}[$(\eta,\delta)$-SSE graph]
  Let $\eta,\delta \in (0,1)$.
  We say that $G$ is a $(\eta,\delta)$-small-set expander, or simply
  $(\eta,\delta)$-SSE for short, if for every $\emptyset \ne S \subseteq V$ of size $\abs{S}
  \le \delta \abs{V}$ we have $\Phi_G(S) \ge 1 - \eta$.
\end{definition}
\begin{definition}[$(\eta,\delta)$-SSE]
  Let $\eta, \delta \in (0,1)$. An instance of $(\eta,\delta)$-small-set expansion (SSE)
  problem is a graph $G$ on the vertex set $V$ such that
  \begin{description}
    \item[(Yes)] There exists $S \subseteq V$ with measure at most $\delta$ and $\Phi_G(S)\le \eta$;
    \item[(No)] Every  set $S \subseteq V$ of measure at most $\delta$ has expansion $\Phi_G(S)\ge 1-\eta$.
  \end{description}
\end{definition}

We now show that SSE can be verified with constant copies of unentangled proofs of
non-negative amplitudes and a logarithmic number of qubits with
\emph{constant} completeness-soundness gap. More precisely, we prove
the following theorem.

\begin{restatable}{theorem}{SSEMain}\label{theo:sse_qma2_plus}
  The $(\eta,\delta)$-SSE problem is in $\QMA_{O_\delta(\log(n))}^+(k,c,s)$ for some constant $k$, completeness $c \ge 1-\eta$
  and soundness $s \le 5/6 + O(\sqrt\eta\log(1/\eta))$.
\end{restatable}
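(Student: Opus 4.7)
The plan is to implement the global protocol sketched in~\cref{sec:overview:sse}. The verifier asks for a constant number $k$ of copies of a ``solution'' proof $\Psi=\{\ket{\psi_1},\ldots\}$ together with an auxiliary proof $\Phi=\{\ket{\phi_1},\ldots\}$, each on $O(\log n)$ qubits, and performs three blocks: (i) a symmetry test on $\Psi$ (and on $\Phi$) to ensure via~\cref{thm:symmetry-test} that both are $\epsilon$-tilted states; (ii) the sparsity test~\rom{2} with target sparsity $\delta$ and precision $\epsilon$, which by~\cref{thm:sparsity-test-target} certifies that any representative $\ket{\psi}\in\Psi$ is within trace distance $O(\epsilon^{1/24}/\delta^{1/3})$ of $\SSS_\delta$; and (iii) an \emph{expansion test} that samples a uniformly random $r\in[d]$, applies the permutation $P_r$ coming from a decomposition $A=\sum_{r=1}^d P_r$ of the adjacency matrix of $G$ into perfect matchings, and runs a swap test between $P_r\ket{\psi_1}$ and $\ket{\psi_2}$. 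The verifier accepts iff all three blocks accept, with the acceptance threshold of the expansion test tuned to $\tfrac12+\tfrac12(1-\eta)-O(\sqrt{\eta}\log(1/\eta))$.

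For completeness, pick $S\subseteq V$ with $|S|\le \delta|V|$ and $\Phi_G(S)\le \eta$ and let every copy in $\Psi,\Phi$ be $\tfrac{1}{\sqrt{|S|}}\one_S$ and $\tfrac{1}{\sqrt{|V\setminus S|}}\one_{V\setminus S}$ respectively. The symmetry and sparsity tests accept with overwhelming probability, and for the expansion test $\mathbb{E}_r|\braket{P_r\psi}{\psi}|^2 = \tfrac{1}{d}\braket{A\psi}{\psi} = 1-\Phi_G(S)\ge 1-\eta$, so by~\cref{remark:swap_test} the swap test passes with probability $1-\tfrac12\cdot 2\Phi_G(S)\ge 1-\eta$. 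Taking $k$ large enough yields overall completeness at least $1-\eta$ via Chernoff.

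For soundness, assume the symmetry and sparsity tests pass, so by~\cref{prop:tilted-state-approx} we may analyze a single representative $\ket{\psi}=\sum_{i\in[n]}\alpha_i\ket{i}$ which is $O(\epsilon^{1/24}/\delta^{1/3})$-close in trace distance to some $\tfrac{1}{\sqrt{|T|}}\one_T$ with $|T|\le(1+o_\epsilon(1))\delta n$. The key bound is
\begin{align*}
\Ex{r\in[d]}{\abs{\braket{P_r\psi}{\psi}}} \;\le\; \Ex{r\in[d]}{\braket{P_r|\psi|}{|\psi|}} \;=\; \tfrac{1}{d}\braket{A|\psi|}{|\psi|},
\end{align*}
where $\ket{|\psi|}=\sum_i|\alpha_i|\ket{i}$; because $\Psi$ has non-negative amplitudes this step loses nothing, and the subsequent analysis is phase-free. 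The crux is the \emph{analytic SSE} inequality: for every unit vector $u$ with $|\supp(u)|\le \delta|V|$,
\begin{align*}
\tfrac{1}{d}\braket{Au}{u} \;\le\; O(\sqrt\eta\log(1/\eta)).
\end{align*}
I would prove this by a Bilu--Linial-style dyadic decomposition: write $u=\sum_{j\ge 0} u^{(j)}$ where $u^{(j)}$ supports the coordinates with magnitude in $[2^{-(j+1)},2^{-j})$, observe that each $u^{(j)}$ is close to a flat vector supported on a set $S_j$ of size at most $\delta |V|$, bound the cross terms $\tfrac1d\braket{A u^{(j)}}{u^{(j')}}$ via the SSE guarantee applied to the disjoint sets $S_j,S_{j'}$ (identity~\eqref{eq:strategy_bound} in the overview), and sum the $O(\log(1/\eta))$ relevant scales with Cauchy--Schwarz, absorbing the small tail $\|u - \sum_{j\le J}u^{(j)}\|$ into an $\eta$ error. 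This gives a quadratic form bound of $O(\sqrt{\eta}\log(1/\eta))$. Combined with~\cref{claim:correlation-dist} to transfer from $\ket{\psi}$ to its nearby subset state on $T$, we get $\mathbb{E}_r|\braket{P_r\psi}{\psi}|\le O(\sqrt{\eta}\log(1/\eta))+o_\epsilon(1)$, so by~\cref{remark:swap_test} the expansion test accepts with probability at most $\tfrac12+\tfrac12\cdot O(\sqrt{\eta}\log(1/\eta))+o_\epsilon(1)$. Choosing $\epsilon$ a small enough constant and $k$ large yields the soundness bound $s\le 5/6+O(\sqrt\eta\log(1/\eta))$.

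The main obstacle is step (iii) of the soundness: lifting the purely combinatorial SSE property from flat indicator vectors to arbitrary sparse unit vectors with the correct loss. A naive application of SSE only handles flat vectors, and the adversarial $\ket{\psi}$ coming out of the sparsity test is only \emph{approximately} flat on a set of size $\le\delta n$, so the dyadic decomposition must be calibrated so that each level is sparse, the number of non-negligible levels is only $O(\log(1/\eta))$, and the $2^{-j}$ weights combine into the stated $\sqrt\eta\log(1/\eta)$ factor rather than a polynomial loss. This is where the non-negativity of the amplitudes is essential, since both the reduction $|\braket{P_r\psi}{\psi}|\le\braket{P_r|\psi|}{|\psi|}$ and the dyadic decomposition rely on working with magnitudes rather than signed/complex amplitudes.
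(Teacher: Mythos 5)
The skeleton of your plan — ask for an $\epsilon$-tilted sparse proof, apply a random permutation $P_r$ from a matching decomposition, swap-test, and convert the purely combinatorial SSE guarantee into an \emph{analytic} one before concluding — matches the paper's proof of~\cref{theo:sse_qma2_plus}. But the step you yourself flag as the crux, the analytic SSE lemma, has a real gap as you sketch it, and there are two smaller mismatches worth noting.

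The gap is in how you handle the dyadic levels. You propose to truncate the decomposition at $J = O(\log(1/\eta))$ levels and absorb $\|u - \sum_{j \le J}u^{(j)}\|$ as an $\eta$-error. That tail is not small. A sparse unit vector with $|\supp(u)| \le \delta n$ can have all of its support at levels deeper than $J$; the best bound you can give is $\|u_{>J}\|^2 \le \delta n \cdot 2^{-2J}$, which is $\Theta(\delta n \eta^2)$ for your choice of $J$ — this blows up with $n$, not shrinks. To get a small tail you would need $J = \Theta(\log n)$ levels, and then Cauchy--Schwarz over all $J^2$ level pairs costs $\alpha d \log^2 n$, which destroys the bound. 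The Bilu--Linial mechanism (and the paper's~\cref{lemma:quadratic_form_bound}) never truncates. It keeps all levels and splits the cross terms $\langle A\one_{S_i}, \one_{S_j}\rangle$ into three regimes: $i=j$ (handled by a random-bipartition argument,~\cref{claim:bound_on_equal}), nearby levels $i < j \le i + t$ with $t = \log(1/\alpha)$ (here the SSE inequality plus Cauchy--Schwarz pays $\alpha d \log(1/\alpha)\|u'\|^2$), and far levels $j > i + t$ where you drop SSE entirely and use only the trivial row-sum bound $\langle A\one_{S_i},\one_{S_j}\rangle \le d|S_i|$; the point is that the weight $2^{-(i+j)}$ with $j > i + \log(1/\alpha)$ already contributes an $\alpha$ factor, so the far tail sums to $O(\alpha d\|u'\|^2)$ without any truncation. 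That near/far split is exactly what produces the clean $O(\alpha\log(1/\alpha))$ loss; your version does not, and I don't see how to repair it while staying inside a fixed number of levels. Relatedly, saying ``each $u^{(j)}$ is close to a flat vector'' is not quite enough — within a dyadic bin the entries vary by a factor $2$, which is why the paper first rounds $u$ to entries that are exactly $\pm 2^{-i}$ via a randomized rounding step (\cref{claim:diadic_approx}) with a controlled $\|u'\| \le 2\|u\|$ loss.

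Two smaller mismatches. First, you invoke sparsity test~\rom{2} with target sparsity $\delta$, but the honest witness is $\frac{1}{\sqrt{|S|}}\one_S$ with $|S| \le \delta n$, which can be strictly smaller than $\delta n$; sparsity test~\rom{2} rejects a subset state whose sparsity deviates from the target by more than $\sqrt\epsilon$, so completeness can fail. The paper uses sparsity test~\rom{1} with the one-sided threshold ``reject if $2\alpha - 1 > (1+\eta)\delta$,'' which only enforces an upper bound on the support size and exactly matches what the soundness argument needs (together with~\cref{thm:sparsity-test}~\ref{enu:sparsity-supp-upper}, which gives mass concentration on a small set, not proximity to a flat state). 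Second, your protocol ``accepts iff all three blocks accept,'' yet your soundness accounting lands on $5/6 + O(\sqrt\eta\log(1/\eta))$; that number arises in the paper precisely because the verifier picks one of the three tests uniformly at random, contributing $\frac13(1 + 1 + \frac12 + O(\cdot))$ in the worst case. Under a conjunctive acceptance rule the soundness would be governed by the expansion test alone, closer to $\frac12 + O(\cdot)$ — so either the protocol or the bookkeeping needs to change so that the two are consistent.
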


We will prove the theorem by showing that the $\QMA_{\log}(k)$
protocol described in Algorithm~\ref{algo:sse_protocol} is complete and
sound for ($\eta,\delta$)-SSE. More precisely, the theorem follows immediately
from the following lemmas proven
in~\cref{sec:comp_anal,sec:sound_anal}, respectively.

\begin{restatable*}[Completeness]{lemma}{ProtComp}
  The protocol in Algorithm~\ref{algo:sse_protocol} accepts any yes instance with probability at least $1-\eta$.
\end{restatable*}

\begin{restatable*}[Soundness]{lemma}{ProtSound}\label{lemma:soundness}
  The protocol in Algorithm~\ref{algo:sse_protocol} accepts any no instance with probability at most $5/6+O(\sqrt\eta\log(1/\eta))$.
\end{restatable*}

\noindent\fbox{
\begin{myalg}[$(\eta, \delta)$-SSE Protocol]\label{algo:sse_protocol}\ignorespacesafterend
Let $\epsilon=\eta^{8}\delta^{4}/C$, and $k=C\log(1/\eta)/\epsilon^{2}$ for
some large enough constant $C$.

Let $S$ be the vertex set such that $|S|\le \delta n$ and $\Phi_G(S) \le \eta$.

\textbf{Provers:} Send
\begin{enumerate}
    \item $2k$ copies of the superpositions of the non-expanding set $S$, i.e.,
    \[
        |\psi_{1}\rangle,|\psi_{2}\rangle,\ldots,|\psi_{2k}\rangle=\frac{1}{\sqrt{\delta n}}\sum_{i\in S}|i\rangle.
    \]
    \item $2k$ copies of the superpositions of the complement of $S$, i.e.,
    \[
        |\phi_{1}\rangle,|\phi_{2}\rangle,\ldots,|\phi_{2k}\rangle=\frac{1}{\sqrt{(1-\delta)n}}\sum_{i\not\in S}|i\rangle.
    \]
\end{enumerate}

\textbf{Verifier: }Choose uniformly at random one of the following tests.
\begin{enumerate}
\item Symmetry test on $\{|\psi_{i}\rangle\}$ and symmetry test on $\{|\phi_{i}\rangle\}$.
\item Sparsity test I on ($\{|\psi_{i}\rangle\}, \{|\phi_{i}\rangle\}$)
with precision $\epsilon$. If the output $\alpha$ is such that $2\alpha-1> (1 + \eta)\delta$,
\emph{reject.}
\item Expansion test on $|\psi_{i}\rangle$ and $|\psi_j\rangle$ for two distinct random $i,j \in \{1,2,\ldots,2k\}$.
\end{enumerate}
\end{myalg}}

Since $G$ is a $d$ regular graph, its adjacency matrix $A$ can be
written as a sum of $d$ permutation matrices $P_1,\ldots,P_d$.
This representation as a sum of unitary matrices will be important to
view these matrices as valid quantum operations.
To test the lack of expansion of the support of $\ket{\psi_1}$, we
apply to this state a permutation $P_i$, chosen uniformly at random.
Then, we test if the resulting state (mostly) overlaps with
$\ket{\psi_2}$ (which is supposed to encode the same set in its
support).
This test is described in Algorithm~\ref{algo:test_expansion}.

\noindent\fbox{
\begin{myalg}[Expansion Test]\label{algo:test_expansion}\textbf{Input}: {$\ket{\psi_1},\ket{\psi_2} \in \unitR$}
   \begin{enumerate}
      \item Choose $r \in [d]$ uniformly at random;
      \item Compute $P_r \ket{\psi_1}$;
      \item SwapTest$\left(P_r \ket{\psi_1},\ket{\psi_2}\right)$.
   \end{enumerate}
\emph{Accept} if the swap test accepts.
\end{myalg}
}

\subsection{Completeness Analysis}\label{sec:comp_anal}

We now analyze the completeness of the protocol by proving the following lemma.

\ProtComp

\begin{proof}
  Suppose that $G$ is the input graph of a yes instance where $S$ is a
  non-expanding set of measure at most
  $\delta$. We expect $4k$ unentangled quantum proofs of the form
  \begin{align*}
    &\ket{\psi_j} = \frac{1}{\sqrt{\abs{S}}} \sum_{i \in S} \ket{i}, & \forall j\in\{1,2,\ldots,2k\},\\
    &\ket{\phi_j} = \frac{1}{\sqrt{n-\abs{S}}} \sum_{i\not\in S} \ket{i}, & \forall j\in \{1,2,\ldots,2k\}.
  \end{align*}

  The two symmetry tests accept with probability $1$ since they are running on sets of equal states. The sparsity test accepts with probability at least $1-\eta$ by~\cref{thm:sparsity-test}. 
  It only remains to analyze the expansion test.
  Recall that $A$ is the adjacency matrix of the graph instance, the assumption that $\Phi_G(S) \le \eta$ can be expressed as
  \begin{align*}
    \frac{1}{d} \braket{A\psi_1}{\psi_1} ~\ge~ 1-\eta\mper
  \end{align*}  
  Then, using Jensen's inequality we have
  \begin{align}
    \Ex{r \in [d]}{ \abs{\braket{P_r\psi_1}{\psi_1}}^2} &~\ge~ \left(\Ex{r \in [d]}{ \abs{\braket{P_r\psi_1}{\psi_1}}}\right)^2 \nonumber\\ 
                                                      &~=~ \braket{\Ex{r \in [d]}{P_r}\psi_1}{\psi_1}^2 \nonumber\\ 
                                                      &~=~  \braket{\frac{1}{d} A\psi_1}{\psi_1}^2 ~=~ (1-\eta)^2\mper \nonumber
  \end{align}
  In this case, the swap test on $\left(P_r \ket{\psi_1},\ket{\psi_2}\right)$ accepts with probability at least $1/2 + (1-\eta)^2/2 \ge 1 - \eta$.  
  Therefore, the entire protocol accepts with probability at least $1-\eta$ as claimed.    
\end{proof}

\subsection{Soundness Analysis}\label{sec:sound_anal}

We will establish the soundness of the protocol by showing the following lemma.

\ProtSound

First, we record a simple fact about expander graphs.

\begin{fact}\label{fact:expansion-eta-extra}
Suppose that the graph $G$ is $(\eta,\delta)$-SSE. Then $G$ is also $( (c+1)\eta, (1+c\eta)\delta)$-SSE, for any $c\ge0$.
\end{fact}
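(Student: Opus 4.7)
The plan is to reduce from the case $|T| \le (1+c\eta)\delta n$ to the hypothesis case $|S| \le \delta n$ by a simple averaging argument over random subsets.

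First, I would dispose of the easy case: if $|T| \le \delta n$, then the $(\eta,\delta)$-SSE hypothesis directly gives $\Phi_G(T) \ge 1-\eta \ge 1-(c+1)\eta$. So assume $|T| = (1+\alpha)\delta n$ with $0 < \alpha \le c\eta$ (integrality losses are negligible and I would sweep them under the rug).

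Next I would pick a uniformly random subset $S \subseteq T$ with $|S| = \delta n$. Since $S$ is non-empty and has measure at most $\delta$, the SSE hypothesis applies to every realization and yields $|E(S,\bar S)| \ge (1-\eta)d|S| = (1-\eta)d\delta n$. Taking expectations preserves this. On the other hand, $\E[|E(S,\bar S)|]$ can be computed edge-by-edge: an edge with exactly one endpoint in $T$ contributes with probability $|S|/|T|$, and an edge with both endpoints in $T$ contributes with probability $2|S|(|T|-|S|)/(|T|(|T|-1))$. Combining,
\begin{equation*}
\E[|E(S,\bar S)|] \;=\; \frac{|S|}{|T|}\,|E(T,\bar T)| \;+\; \frac{2|S|(|T|-|S|)}{|T|(|T|-1)}\, e(T),
\end{equation*}
where $e(T)$ is the number of internal edges of $T$.

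The third step is to rearrange. Using the handshake identity $2e(T) + |E(T,\bar T)| = d|T|$ to eliminate $e(T)$ from the lower bound $\E[|E(S,\bar S)|] \ge (1-\eta)d\delta n$, multiplying through by $|T|/|S| = (1+\alpha)$, and cancelling the $d|T|\,\alpha/(1+\alpha)$ terms that appear, a short algebraic manipulation yields
\begin{equation*}
|E(T,\bar T)| \;\ge\; d|T|\bigl((1+\alpha)(1-\eta) - \alpha\bigr) \;=\; d|T|\bigl(1-\eta(1+\alpha)\bigr),
\end{equation*}
i.e.\ $\Phi_G(T) \ge 1 - \eta(1+\alpha)$.

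Finally, since $\alpha \le c\eta$ and $\eta \le 1$, we get $\eta(1+\alpha) \le \eta + c\eta^2 \le (c+1)\eta$, completing the proof. There is no real obstacle here; the only mild annoyance is integrality of $\delta n$ and $|S|$, which I would handle by passing to $\lfloor \delta n\rfloor$ and noting that the resulting $O(1/n)$ additive slack is absorbed into the constants (or simply noting the claim is only used in asymptotic regimes where $\delta n \to \infty$).
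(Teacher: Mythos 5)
Your proof is correct, but it takes a more elaborate route than the paper's. The paper simply fixes an arbitrary subset $T \subseteq S$ with $|T| = \delta n$ and bounds the internal edges of $S$ directly: $|E(S,S)| \le |E(T,T)| + O(d|S\setminus T|)$, since every internal edge of $S$ not wholly inside $T$ must touch $S\setminus T$ and each vertex there has degree $d$; the SSE hypothesis gives $|E(T,T)| \le \eta d|T|$, and the trivial term contributes at most $O(c\eta)\,d\delta n$. This is a one-line deterministic argument with no expectation computation, no handshake identity, and no integrality issues (beyond the trivial ones). Your averaging-over-random-subsets argument is an equally valid and, in a sense, more systematic technique — it is the standard way to transfer a property from one set size to another, and it would extend more gracefully if one wanted a sharper constant — but here it buys nothing over the direct bound and incurs the extra bookkeeping you note (the $|T|-1$ vs.\ $|T|$ corrections, the edge-by-edge case analysis). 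Also note that the paper's $\Phi_G$ measures edge expansion, so what needs to be upper-bounded is the \emph{non-expanding} fraction $|E(S,S)|/(d|S|)$; your argument tracks $|E(S,\bar S)|$ instead, which of course is equivalent via the handshake identity, but adds one more rearrangement step to an otherwise elementary fact.
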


\begin{proof}
For any $\delta n<|S|\le(1+c\eta)\delta n$, let $T\subseteq S$ be
such that $|T|=\delta n$. Then
\[
|E(S,S)| \le |E(T,T)|+d|S\setminus T|\le \eta d|T| + d|S|\frac{|S\setminus T|} {|S|}\le (1+c\eta) d|S|.\qedhere
\]
\end{proof}

We will also need the following analytic version of the SSE property.

\begin{definition}[Analytic SSE]
  Let $\eta,\delta \in (0,1)$.
  We say that a graph $G=(V,E)$ with normalized adjacency matrix $A$ is $(\eta,\delta)$-analytic SSE if for every $v \in \R^V$
  of $\ell_2$-norm $1$ and support of measure at most $\delta$ it holds that
  \begin{align*}
    \abs{\ip{Av}{v}} \le \eta\mper
  \end{align*}
\end{definition}

This analytic property is implied by the SSE property as we show in
the following proposition (proved
in~\cref{subsubsec:analytic_sse_proof}).

\begin{restatable}{proposition}{Prop}\label{prop:sse_to_analytic_sse}
  If $G$ is $(\eta,\delta)$-SSE, then $G$ is $(O(\sqrt\eta(\log(1/\eta)+1)),\delta)$-analytic SSE.
\end{restatable}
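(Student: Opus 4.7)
The plan rests on two pillars: reducing to the case $v\ge 0$ and then running a Bilu--Linial-style dyadic decomposition against a discrepancy bound derived from the SSE hypothesis. Since the normalized adjacency $A$ has non-negative entries, $|\ip{Av}{v}|\le \ip{A|v|}{|v|}$, and taking coordinate-wise absolute value preserves both $\|v\|_2$ and $|\supp(v)|$, so it suffices to prove the bound for non-negative unit $v$ with $|\supp(v)|\le \delta n$.

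The key analytic ingredient is the following discrepancy bound: for any $S,T\subseteq \supp(v)$ we have $|S\cup T|\le \delta n$, so the SSE hypothesis applied to $S\cup T$ gives $|E(S\cup T,S\cup T)|/d\le \eta(|S|+|T|)$, hence $|E(S,T)|/d\le 2\eta\max(|S|,|T|)$; combined with the trivial $d$-regularity bound $|E(S,T)|/d\le \min(|S|,|T|)$ via the geometric mean,
\[
\frac{|E(S,T)|}{d}\le \sqrt{2\eta}\,\sqrt{|S|\,|T|}\mper
\]
Decomposing $v$ into dyadic level sets $L_j=\{i:v_i\in [2^{-j-1},2^{-j})\}$ and applying the discrepancy bound on each pair of levels (using $v_i<2^{-j}$ on $L_j$),
\[
\ip{Av}{v}=\sum_{j,k}\sum_{\substack{i\in L_j\\i'\in L_k}}A_{i,i'}v_iv_{i'}\le \sum_{j,k}2^{-j-k}\frac{|E(L_j,L_k)|}{d}\le \sqrt{2\eta}\,\Bigl(\sum_{j\ge 0}2^{-j}\sqrt{|L_j|}\Bigr)^{\!2}.
\]

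It remains to show $\sum_j 2^{-j}\sqrt{|L_j|}=O\bigl(\sqrt{\log(1/\eta)+1}\bigr)$, so that squaring yields the promised $O(\log(1/\eta)+1)$ factor. On $L_j$ we have $v_i\ge 2^{-j-1}$, so $\beta_j:=\sum_{i\in L_j}v_i^2\ge |L_j|/4^{j+1}$, and thus $2^{-j}\sqrt{|L_j|}\le 2\sqrt{\beta_j}$. Since $\sum_j\beta_j\le 1$, Cauchy--Schwarz over the at most $K$ levels that carry non-trivial mass gives $\sum_j 2^{-j}\sqrt{|L_j|}\le 2\sqrt{K}$. One truncates $v$ at threshold $2^{-K}\approx \sqrt{\eta}$ with $K=O(\log(1/\eta))$ and uses the support bound $|L_j|\le \delta n$ to control the tail, whose contribution to $\sum_j 2^{-j}\sqrt{|L_j|}$ is then a geometric series bounded by $O(2^{-K}\sqrt{\delta n})$.

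The main obstacle is a clean handling of this tail: the naive $\ell_2^2$-mass estimate $\sum_{j>K}\beta_j\le \delta n\cdot 4^{-K}$ may exceed $1$ when $\delta n\gg 1/\eta$, so one cannot simply drop the tail and invoke $\|A\|_{\mathrm{op}}\le 1$. The resolution is to absorb the tail within the dyadic discrepancy machinery itself, balancing the bulk contribution $O(\sqrt{K})$ against the tail contribution $O(2^{-K}\sqrt{\delta n})$ by a careful choice of $K$ so that both pieces, after squaring and multiplying by the $\sqrt{2\eta}$ prefactor, come out to $O(\sqrt{\eta}(\log(1/\eta)+1))$. Once this balance is achieved, we conclude $\ip{Av}{v}\le O(\sqrt{\eta}(\log(1/\eta)+1))$, establishing the analytic SSE property.
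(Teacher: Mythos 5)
Your discrepancy step is correct and closely tracks the paper's \cref{claim:sse_between_small_sets}; in fact your derivation via $S\cup T$ is a little cleaner, since you get the bound directly for arbitrary $S,T\subseteq\supp(v)$ instead of proving it for disjoint sets and then patching up the diagonal. The gap is in the dyadic step, specifically in how you handle the tail, and the ``balance'' you propose does not exist.

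After inserting the discrepancy bound for \emph{every} pair of levels you get $\ip{Av}{v}\le\sqrt{2\eta}\,\bigl(\sum_j 2^{-j}\sqrt{|L_j|}\bigr)^2$. Writing $\beta_j=\sum_{i\in L_j}v_i^2\asymp 4^{-j}|L_j|$, the inner sum is $\asymp\sum_j\sqrt{\beta_j}$, and subject only to $\sum_j\beta_j\le 1$ and $|L_j|\le\delta n$ this can genuinely be $\Theta\bigl(\sqrt{\log(\delta n)}\bigr)$: take $m\asymp\log_4(\delta n)$ consecutive levels with equal mass $\beta_j=1/m$ and $|L_j|\asymp 4^{j}/m$. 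Squaring and multiplying by $\sqrt{\eta}$ then gives $\Theta(\sqrt{\eta}\log(\delta n))$, which exceeds the target $O(\sqrt{\eta}\log(1/\eta))$ whenever $\delta n$ is super-polynomial in $1/\eta$---which is exactly the regime in the paper, where $\eta,\delta$ are constants and $n$ is $\poly$ or $2^{\poly}$. Your balancing ansatz needs both $K\le\log(1/\eta)$ (bulk) and $4^{K}\gtrsim\delta n/\log(1/\eta)$ (tail), which is a contradiction in that regime. So as written, the argument does not prove the proposition.

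The repair is the one the paper uses in \cref{lemma:quadratic_form_bound}: do \emph{not} apply the discrepancy bound to far-apart levels. For $k>j+t$ with $t=\Theta(\log(1/\sqrt{\eta}))$, replace $\sqrt{2\eta}\,d\sqrt{|L_j||L_k|}$ with the one-sided trivial bound $|E(L_j,L_k)|\le d|L_j|$. Because this bound no longer involves $|L_k|$, the inner sum $\sum_{k>j+t}2^{-k}=2^{-j-t}\asymp\sqrt{\eta}\,2^{-j}$ is a clean geometric series, and the far-off-diagonal contribution collapses to $O(\sqrt{\eta})\sum_j 4^{-j}|L_j|=O(\sqrt{\eta})\|v\|^2$, with no $\delta n$ anywhere. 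For near-diagonal levels $|k-j|\le t$, use AM--GM, $2^{-j-k}\sqrt{|L_j||L_k|}\le\tfrac12\bigl(4^{-j}|L_j|+4^{-k}|L_k|\bigr)$, giving $O(\sqrt{\eta}\,t)\|v\|^2=O(\sqrt{\eta}\log(1/\eta))\|v\|^2$. In short, keep the whole estimate in $\|v\|_2^2$-form term by term and never square an $\ell^1$-type sum over levels; that is precisely the content of terms (a), (b), (c) in the paper's sparse Bilu--Linial lemma. (Your deterministic level sets versus the paper's randomized rounding to signed powers of two are interchangeable here and not the source of the problem.)
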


Assuming~\cref{prop:sse_to_analytic_sse}, we now proceed to the proof
of~\cref{lemma:soundness}.

\begin{proof}[Proof of~\cref{lemma:soundness}]
Assume that $|\Psi\rangle= \{|\psi_{1}\rangle,\ldots,|\psi_{2k}\rangle\}$ 
and $|\Phi\rangle= \{|\phi_{1}\rangle,\ldots,|\phi_{2k}\rangle \}$
are $\epsilon$-tilted states.
We call this event $\cE_1$. If $\cE_1$ does not
hold, the symmetry test accepts with probability at most $\sqrt\eta$ for
$k=\Omega(\epsilon^{-2}\log(1/\eta)).$ 

Now we further assume that there is $S\subseteq[n],$
such that
\[
|S|\le(1+6\eta)\delta n,
\]
and let $\Pi_S$ be the projection into the subspace corresponding to $S$, then for any representative state $\ket{\psi_i}$,
\[
\|\Pi_S \ket{\psi_{i}}\|^2 \ge 1- 1.1\eta.
\]
This is the second event $\cE_2$. By our choice of parameters and the fact that if $2\alpha -1 >(1+\eta)\delta$
the sparsity test fails immediately,  we can assume that
\begin{align*}
    & (2\alpha - 1) + 9\epsilon^{1/4} / \eta \le (1+6\eta)\delta,   \\
    & 20\sqrt{\epsilon} \le \eta.
\end{align*}
Therefore, by~\cref{thm:sparsity-test}~\ref{enu:sparsity-supp-upper},
the sparsity test accepts with probability at most $\sqrt\eta$ by our choice of parameters if $\cE_2$ does not hold. 

Conditioning on $\cE_1$ and $\cE_2$, we analyze the probability that the expansion
test passes. Let's say the two proofs we get for the expansion test are $\ket{\psi_1}, \ket{\psi_2}$. With
probability at least $(1-2\epsilon)^2\ge 1-4\epsilon$, both are representative, thus satisfying
that their mass projected on
to the coordinates of $S$ is at least $ 1 -\eta.$ We call this event $\cE_3$. Let 
\[
|\pi_1\rangle = \frac{ \Pi_S |\psi_1  \rangle}{\| \Pi_S |\psi_1  \rangle \|},
\; 
|\pi_2\rangle = \frac{ \Pi_S |\psi_2  \rangle}{\| \Pi_S|\psi_2  \rangle \|}.
\]
It follows that
\begin{equation}\label{eq:psi-sparse-mass}
\langle \pi_1 \mid \psi_1 \rangle ^2 = \| \Pi|\psi_1 \rangle\| \ge 1-1.1\eta.
\end{equation}

Let $\delta_0 = (1+6\eta)\delta$. By~\cref{prop:sse_to_analytic_sse}, the \emph{analytic}
$(O(\sqrt\eta(\log(1/\eta)+1)),\delta_0)$-SSE property follows from the
$(\eta,\delta)$-SSE assumption and~\cref{fact:expansion-eta-extra}.
To determine the expected acceptance probability of the swap test, we
first bound the average value of $\abs{\braket{P_r\pi_1}{\pi_1}}$
over the random choice of $r$ obtaining
\begin{align}
  \Ex{r \in [d]}{ \abs{\braket{P_r\pi_1}{\pi_1}}}  
                                                     &= \braket{\Ex{r \in [d]}{P_r}\pi_1}{\pi_1}\nonumber \\
                                                     &= \frac{1}{d} \braket{A\pi_1}{\pi_1} \nonumber \\
                                                     &\le O(\sqrt\eta(\log(1/\eta)+1))\nonumber \mcom
\end{align}
where the first step 
holds because the entries of $\ket{\psi_1}$, $\ket{\psi_2}$ and $P_r$, for every
$r$, are non-negative real numbers. Now, it follows that
\begin{align*}
    \Ex{r\in[d]} {\abs{\braket{P_r \psi_1}{\psi_2}}^2} 
            &\le \Ex{r\in[d]}{|\braket{P_r \psi_1}{\psi_1}|^2} + 3\sqrt{\epsilon} \\
            &\le \Ex{r\in[d]}{|\braket{P_r \pi_1}{\pi_1}|^2} + 3\sqrt{\epsilon} + \sqrt{1.1\eta} \\
            &\le \Ex{r\in[d]}{\braket{P_r \pi_1}{\pi_1}} + 3\sqrt\epsilon+\sqrt{1.1\eta}\\
            &= O(\sqrt{\eta}\log(1/\eta)),
\end{align*}
where the first  step follows \cref{claim:correlation-dist}~\ref{enu:correlation-closeness}; the second step is due to \cref{fact:trace_norm_acc} and the bound $\TD(\ket{\pi_1}, \ket{\psi_1})\le \sqrt{1.1\eta}$ that follows (\ref{eq:psi-sparse-mass}).
Hence, the swap test on $\left(P_r \ket{\psi_1},\ket{\psi_2}\right)$ accepts
with probability at most $1/2 + O(\sqrt\eta\log(1/\eta))$.

To conclude, if $\cE_1$ (or $\cE_2$) does not hold with probability at least $1/3 \times (1-\sqrt\eta)$, the protocol chooses the symmetry test (or sparsity test) and rejects. If both $\cE_1$ and $\cE_2$ hold, then
the protocol chooses the expansion test with probability $1/3$ and rejects with probability at least
$(1/2 - O(\sqrt\eta\log(1/\eta)))$ conditioning on $\cE_3$, which happens with probability at least $1-4\epsilon$. Hence the protocol accepts with probability at most $5/6+ O(\sqrt\eta\log(1/\eta)).$
\end{proof}

\subsection{The Analytic SSE Property}\label{subsubsec:analytic_sse_proof}

In this section, we will establish the \emph{analytic} SSE property
from the usual SSE property.

\Prop*

In a seminal work on $2$-lifts of graphs~\cite{BL06}, Bilu and Linial
found conditions under which bounding the quadratic form $\ip{A u}{u}$
of a matrix $A$ for \emph{arbitrary} vector $u$ follows from bounds on
much simpler ``flat'' indicator vectors $\ip{A \one_S}{\one_T}$. Our
goal is to use a version of their result adapted for vectors of small
support as arising in our application. More precisely, we will need an
inequality of the form $\abs{\ip{A \one_S}{\one_T}} \le \eta
(\abs{S} + \abs{T})$ for every disjoint $S,T \subseteq V$ of support
at most $\delta$. We first show that this inequality is indeed
satisfied by the adjacency matrix of SSE graph.

\begin{lemma}\label{claim:sse_between_small_sets}
  Suppose $G=(V,E)$ is a $d$-regular $(\eta,\delta)$-SSE with adjacency matrix $A$ (not normalized).
  If $S,T \subseteq V$ are disjoint sets with $\abs{S} +\abs{T} \le \delta \abs{V}$, then
  \begin{align*}
    \ip{A \one_S}{\one_T} \le 2\sqrt\eta d \sqrt{\abs{S}\abs{T}}.
  \end{align*}
\end{lemma}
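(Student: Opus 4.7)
The plan is to combine the SSE property applied to the union $S\cup T$ with the trivial degree bound, and then reconcile the two via the inequality $\min(x,y)\le \sqrt{xy}$.

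First, I would apply the small set expansion property to $U=S\cup T$. Since $S$ and $T$ are disjoint with $|S|+|T|\le \delta|V|$, the set $U$ has measure at most $\delta$, and the $(\eta,\delta)$-SSE hypothesis gives $|E(U,V\setminus U)|\ge (1-\eta)d|U|$. Combined with the handshake identity $d|U| = 2m(U) + |E(U,V\setminus U)|$ (where $m(U)$ denotes the number of edges with both endpoints in $U$), this yields $m(U) \le \eta d(|S|+|T|)/2$. Because $S\cap T=\emptyset$, we have $m(U) = m(S)+m(T)+\langle A\one_S,\one_T\rangle$, and dropping the non-negative terms $m(S),m(T)$ gives the first bound
\[
\langle A\one_S,\one_T\rangle \le \tfrac{\eta d}{2}(|S|+|T|).
\]

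Next, I would use the trivial bound coming from $d$-regularity: every vertex has at most $d$ neighbors, so $\langle A\one_S,\one_T\rangle \le d\min(|S|,|T|)$. The key observation is now that neither of these two bounds alone has the right shape; the SSE bound is tight when $|S|\approx|T|$ while the trivial bound is tight when one set is much smaller than the other. Taking the geometric mean of the two bounds balances the regimes:
\[
\langle A\one_S,\one_T\rangle^{2} \le \bigl(d\min(|S|,|T|)\bigr)\cdot\tfrac{\eta d}{2}(|S|+|T|) = \tfrac{\eta d^2}{2}\min(|S|,|T|)\bigl(|S|+|T|\bigr).
\]

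Finally, since $|S|+|T|\le 2\max(|S|,|T|)$, we have $\min(|S|,|T|)(|S|+|T|) \le 2|S||T|$, which plugged into the previous display yields $\langle A\one_S,\one_T\rangle^{2}\le \eta d^{2}|S||T|$, i.e.\ $\langle A\one_S,\one_T\rangle \le \sqrt{\eta}\,d\sqrt{|S||T|} \le 2\sqrt{\eta}\,d\sqrt{|S||T|}$, as claimed. There is no serious obstacle here; the one non-routine step is recognizing that combining the SSE-derived bound with the trivial regularity bound via the inequality $\min(a,b)\le\sqrt{ab}$ is what automatically produces both the $\sqrt{\eta}$ factor and the $\sqrt{|S||T|}$ normalization required for the analytic SSE corollary, rather than trying to invoke SSE once with a cleverly chosen set.
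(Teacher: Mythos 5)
Your proof is correct and gives a slightly stronger bound than the one claimed, namely $\langle A\one_S,\one_T\rangle \le \sqrt{\eta}\,d\sqrt{|S||T|}$.

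The route is genuinely different from the paper's. The paper splits into two cases by size ratio: when $|S|\le\eta|T|$ the trivial degree bound alone suffices; when $\eta|T|<|S|\le|T|$ it runs a proof by contradiction, showing that $\langle A\one_S,\one_T\rangle>2\sqrt{\eta}d\sqrt{|S||T|}$ would force the cut $E(S\sqcup T,\overline{S\sqcup T})$ to be too small, violating SSE. You instead extract a single clean inequality $\langle A\one_S,\one_T\rangle\le\tfrac{\eta d}{2}(|S|+|T|)$ from SSE via the handshake identity, and then \emph{multiply} it against the trivial bound $d\min(|S|,|T|)$ and take square roots. Both proofs invoke SSE at exactly the same point --- the set $S\sqcup T$ --- so the content is the same, but your combination via geometric mean eliminates the case split and the contradiction scaffolding, and incidentally tracks constants better (the paper silently loses a factor of $2$ by bounding $|E(S,T)|\le 2m(S\sqcup T)$ instead of $|E(S,T)|\le m(S\sqcup T)$, which is why it needs the constant $2$ in the statement and you do not). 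The only thing worth making explicit, which you implicitly use when squaring and multiplying the two bounds, is that $\langle A\one_S,\one_T\rangle\ge 0$; this is immediate since $A$, $\one_S$, $\one_T$ all have non-negative entries.
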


\begin{proof}
  Let $S, T$ be as in the assumption of the claim. Without loss of generality, assume that $\abs{S} \le \abs{T}$. If $\abs{S} \le \eta \abs{T}$, then we can use the trivial bound by the fact that $A$ is the adjacency matrix of a $d$-regular graph,
  \[
    \ip{A \one_S}{\one_T} \le d|S| \le \sqrt\eta\sqrt{\abs{S}\abs{T}}.
  \]
  
  Now consider the case $\eta\abs{T} < |S| \le |T|$. Set $S' = S \sqcup T$. Towards a contradiction, suppose that $\ip{A \one_S}{\one_T} > 2 \sqrt{\eta} d \sqrt{\abs{S}\abs{T}}$. In turn, this assumption implies that
  \begin{align}\label{eq:claim_ineq_disj_sets}
     \ip{A \one_S}{\one_T} > 2\sqrt{\eta} d \sqrt{\abs{S}\abs{T}} \ge 2 \eta d \abs{T} \ge \eta d \left(\abs{S} + \abs{T}\right) = \eta d \abs{S'}\mper
   \end{align}
    Using the above  bound on the number of edges between $S$ and $T$ together with the SSE assumption on $G$, we obtain
   \begin{align*}
     (1- \eta) d \abs{S'} &\le \ip{A \one_{S'}}{\one_{\overline{S'}}} \\
                          &\le d\abs{S'} -  \ip{A \one_S}{\one_T}\\
                          & < d\abs{S'} -  \eta d \abs{S'} && \text{(By~(\ref{eq:claim_ineq_disj_sets}))}\\
                          &\le (1- \eta) d \abs{S'}\mcom
   \end{align*}
   contradicting the $(\eta,\delta)$-SSE property.
\end{proof}

We now show a ``sparse support'' analogue of a lemma in Bilu and
Linial~\cite[Lemma~3.3]{BL06} bounding the quadratic form of the
adjacency matrix for arbitrary sparse vectors assuming that the
quadratic form is bounded for ``flat'' sparse indicator vectors.
This sparse analogue follows by checking that their proof suitably
``respects'' the sparse support conditions we need.

\begin{lemma}[Sparse Analogue of~{\cite[Lemma~3.3]{BL06}}]\label{lemma:quadratic_form_bound}
  Let $A \in \R^{V\times V}$ be a real symmetric matrix with non-negative entries, $\ell_1$-norm of each row at most $d$ and diagonal entries zero.
  Let $\delta \in (0,1)$.
  If there exists $\alpha \in (0,1)$ such that for every disjoint sets $S,T \subseteq V$ with $\abs{S\sqcup T} \le \delta \abs{V}$
  we have
  \begin{align}\label{eq:bl_bound_assumption}
    \ip{A \one_{S}}{\one_{T}} \le \alpha d \sqrt{\abs{S} \abs{T}},
  \end{align}
  then for every $u \in \R^V$  with $\abs{\supp(u)} \le \delta \abs{V}$
  we have
  \begin{align}\label{eq:bl_bound_conclusion}
    \ip{A u}{u} \le O(\alpha (\log(1/\alpha)+1)) d \norm{u}^2.
  \end{align}  
\end{lemma}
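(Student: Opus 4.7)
The plan is to adapt the proof of Bilu and Linial~\cite[Lemma~3.3]{BL06} to the sparse-support setting, taking advantage of the fact that the sparse-support hypothesis automatically applies to every pair of sets sitting inside $\operatorname{supp}(u)$.

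Since $A$ has non-negative entries, $\langle Au,u\rangle \le \langle A|u|,|u|\rangle$, so it suffices to treat non-negative $u$. Normalize so $\|u\|_2 = 1$, which implies $\|u\|_\infty \le 1$. Partition $\operatorname{supp}(u)$ into dyadic level sets $V_k = \{i : 2^{-k} < u_i \le 2^{-k+1}\}$ for $k \ge 1$, let $u^{(k)} = u \cdot \mathbf{1}_{V_k}$, and set $a_k = \|u^{(k)}\|_2^2$. Then $\sum_k a_k = 1$, $|V_k| \le 4^k a_k$, and every $V_k \subseteq \operatorname{supp}(u)$ has measure at most $\delta$, so the sparse-support hypothesis applies to every pair $(V_k,V_l)$.

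For off-diagonal pairs $k \ne l$, the sets $V_k$ and $V_l$ are disjoint with $|V_k \sqcup V_l| \le \delta|V|$; by hypothesis,
\[
\langle A u^{(k)}, u^{(l)}\rangle \le 4 \cdot 2^{-k-l}\,\alpha d \sqrt{|V_k||V_l|} \le 4\alpha d\sqrt{a_k a_l}.
\]
For the diagonal $k=l$, a uniformly random bipartition $V_k = V_k^1 \sqcup V_k^2$ satisfies $\mathbb{E}[\langle A \mathbf{1}_{V_k^1},\mathbf{1}_{V_k^2}\rangle] = \tfrac{1}{4} \langle A\mathbf{1}_{V_k},\mathbf{1}_{V_k}\rangle$ (using that $A$ has zero diagonal); fixing a partition attaining the expectation and applying the hypothesis to the two halves (each disjoint and of measure at most $\delta$) yields $\langle A\mathbf{1}_{V_k},\mathbf{1}_{V_k}\rangle \le 2\alpha d|V_k|$, and hence $\langle A u^{(k)}, u^{(k)}\rangle \le 8\alpha d\, a_k$. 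Setting $K = \lceil \log_2(1/\alpha)\rceil$, summing the off-diagonal contributions over $k,l \le K$ and applying Cauchy-Schwarz $\bigl(\sum_{k\le K}\sqrt{a_k}\bigr)^2 \le K \sum_{k\le K} a_k \le K$ controls the head-head contribution by $O(\alpha d\log(1/\alpha))$, and the diagonal sum contributes $\le 8\alpha d$.

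The main obstacle is handling the tail. Writing $u = u^{\mathrm{head}} + u^{\mathrm{tail}}$ with $u^{\mathrm{tail}}$ supported on $\{i : u_i \le 2^{-K}\}$ so that $\|u^{\mathrm{tail}}\|_\infty \le 2\alpha$, the difficulty is controlling $\langle Au^{\mathrm{tail}},u^{\mathrm{tail}}\rangle + 2\langle Au^{\mathrm{head}},u^{\mathrm{tail}}\rangle$ by $O(\alpha d\log(1/\alpha))\|u\|^2$ without incurring an $n$-dependence: the naive bound $\langle A\mathbf{1}_T,\mathbf{1}_T\rangle \le 2\alpha d|T|$ scales with $|T|$ up to $\delta|V|$. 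I would resolve this by applying the same dyadic decomposition to the tail, bounding each level pair through the hypothesis as above, and summing via Cauchy-Schwarz with weights chosen so that the $\ell_\infty$ bound $\|u^{\mathrm{tail}}\|_\infty \le 2\alpha$ absorbs the extra factors and only $O(\log(1/\alpha))$ levels contribute effectively. This bookkeeping step mirrors the heart of the Bilu-Linial argument and is where I expect the bulk of the technical work to lie.
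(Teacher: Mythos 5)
You've captured most of the skeleton: reducing to nonnegative $u$ (which neatly replaces the paper's randomized dyadic-rounding claim), decomposing into level sets, bounding the diagonal terms $\ip{A\one_{V_k}}{\one_{V_k}} \le 2\alpha d\abs{V_k}$ via the averaging/bipartition trick (this is exactly the role of the paper's Claim~\ref{claim:bound_on_equal}), and bounding nearby off-diagonal level pairs from the hypothesis plus Cauchy--Schwarz. All of this is correct and parallels the paper's proof closely.

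The gap is precisely where you flagged it, and the plan you sketch for filling it does not close it. You propose to handle the tail by ``bounding each level pair through the hypothesis'' and absorbing the result with a weighted Cauchy--Schwarz. But the hypothesis gives, after simplification, a bound of order $\alpha d\sqrt{a_k a_l}$ for \emph{every} level pair, and summing this over all pairs yields $\alpha d\,(\sum_k\sqrt{a_k})^2$, which scales with the total number of levels, i.e., with $\log n$, not $\log(1/\alpha)$; the $\ell_\infty$ bound $\|u^{\mathrm{tail}}\|_\infty \le 2\alpha$ does not change this count. The key move the paper makes, and which is absent from your plan, is that for level pairs that are \emph{far apart} in index --- say $l - k > t := \log(1/\alpha)$ --- the SSE hypothesis is \emph{not needed at all}: the trivial $\ell_1$ row-sum bound $\ip{A\one_{V_k}}{\one_{V_l}} \le d\,\min(\abs{V_k},\abs{V_l})$, together with the geometric weight $2^{-k-l} = 2^{-2k}\cdot 2^{-(l-k)}$ and $2^{-(l-k)} < \alpha$, already produces an $O(\alpha d)\norm{u}^2$ contribution with no dependence on the number of levels or on $n$. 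So the split should be by the \emph{difference} $\abs{k-l} \le t$ versus $\abs{k-l} > t$ (not by absolute level $\le K$ versus $>K$), using the hypothesis on the near band and the trivial row bound plus geometric decay on the far band. With that replacement, the rest of your argument goes through and matches the paper's.
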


\begin{proof}
  The assumption of~(\ref{eq:bl_bound_assumption}) on disjoint sets $S$
  and $T$ is strong enough to imply a similar bound with an additional
  factor of $2$ when $S=T$ as follows.
  \begin{claim}\label{claim:bound_on_equal}
    Suppose that $A$ is a symmetric matrix with diagonal entries equal to zero.
    The assumption from~(\ref{eq:bl_bound_assumption}) implies that for every $R \subseteq V$ with $\abs{R} \le \delta \abs{V}$
    \begin{align*}
      \ip{A \one_{R}}{\one_{R}} \le 2 \alpha d \abs{R}\mper
    \end{align*}    
  \end{claim}
  
  \begin{proof}
    Let $r = \abs{R}$. 
    If $r = 1$, we have $\ip{A \one_{R}}{\one_{R}} = 0$ since $A$ has diagonal entries equal to zero.
    Now assume $r \ge 2$. On one hand, we have
    \begin{align*}
      \sum_{\substack{R' \subseteq R \\ \abs{R'} = \lceil r/2\rceil}} \abs{\ip{A \one_{R'}}{\one_{R\setminus R'}}}  
      &\le  \binom{r}{\lceil r/2\rceil}  \alpha d \sqrt{\abs{R'}  \abs{R\setminus R'}} \le \binom{r}{\lceil r/2\rceil}  \alpha d \abs{R} /2.
    \end{align*}
    On the other hand, for distinct $x,y \in R$, the value $A_{x,y}$ appears $\binom{r-2}{\lceil r/2\rceil-1}$ in the LHS above.
    Since $A$ has diagonal entries equal to zero, this gives
    \begin{align*}
      \binom{r-2}{\lceil r/2-1\rceil} \abs{\ip{A \one_{R}}{\one_{R}}} = \sum_{\substack{R' \subseteq R \\ \abs{R'} = \lceil r/2\rceil}} \abs{\ip{A \one_{R'}}{\one_{R\setminus R'}}}.
    \end{align*}
    From the two previous displays and the bound on the following binomial ratio
    \begin{align*}
      \binom{r}{\lceil r/2\rceil}/\binom{r-2}{\lceil r/2\rceil-1} = \frac{r(r-1)}{\lceil r/2 \rceil \lfloor r/2 \rfloor} ~\le~ 4,
    \end{align*}
    we conclude the proof.
  \end{proof}

  Arbitrary vectors can be approximated to have entries that are powers of two
  with the following nice properties.
  \begin{claim}\label{claim:diadic_approx}
    Suppose $A \in \R^{V\times V}$ has diagonal entries equal to zero.
    Let $u \in \R^V$ with $\norm{u}_\infty \le 1/2$.
    Then, there exists $u' \in \set{\pm 1/2^i \mid i \in \mathbb{N}^+}^V$ such
    that
    \begin{enumerate}
      \item $\abs{\ip{A u}{u}} \le \abs{\ip{A u'}{u'}}$,
      \item $\norm{u'} \le 2 \norm{u}$,
      \item $\supp(u') \subseteq \supp(u)$.
    \end{enumerate}
  \end{claim}

  \begin{proof}
    For every $i \in V$,  we define $\eta_i \in [0,1/4]$ such that $u_i = (1- 2\eta_i) \sgn(u_i) 2^{\lceil \log(\abs{u_i}) \rceil}$.
    Note that $(1- 2\eta_i) \in [1/2,1]$.
    We define a random vector $\rv Z' \in \R^V$ by setting $\rv Z'_i = 0$ if $i \not\in \supp(u)$,
    and otherwise by setting
    \begin{align*}
      \rv Z'_i = \begin{cases}
        +\sgn(u_i) 2^{\lceil \log(\abs{u_i}) \rceil} & \text{w.p. } 1-\eta_i,\\
        -\sgn(u_i) 2^{\lceil \log(\abs{u_i}) \rceil} & \text{w.p. } \eta_i.
      \end{cases}
    \end{align*}
    Note that by construction, we have $\Ex{\rv Z'_i} = u_i$.
    Using linearity of expectation and the assumption that $A$ has diagonal  entries equal to zero, we have
    \begin{align*}
      \ip{A u}{u} = \sum_{i\ne j} A_{i,j} u_i u_j = \sum_{i\ne j} A_{i,j} \E[\rv Z_i']\E[ \rv Z_j'] = \sum_{i\ne j} A_{i,j} \E[\rv Z_i' \rv Z_j'] = \Ex{\ip{A \rv Z'}{\rv Z'}}\mper
    \end{align*}
    This implies that there is a choice of $u'$ satisfying
    \begin{align*}
      \abs{\ip{A u}{u}} = \abs{\Ex{\ip{A \rv Z'}{\rv Z'}}} \le \Ex{\abs{\ip{A \rv Z'}{\rv Z'}}} \le \abs{\ip{A u'}{u'}}\mper
    \end{align*}
    To conclude note that a term-by-term inequality gives
    \begin{align*}
      \norm{u'}_2^2 = \sum_i (u_i')^2 \le 4 \sum_i u_i^2 = 4 \norm{u}_2^2\mcom
    \end{align*}
    concluding the proof.
  \end{proof}

  Let $u \in \R^V$ be an arbitrary vector with $\abs{\supp(u)} \le \delta \abs{V}$.
  We want to give an upper bound on $\abs{\ip{A u}{u}}$ as in~(\ref{eq:bl_bound_conclusion}).
  To prove this bound, we can assume $\norm{u}_\infty \le 1/2$ without loss of generality.
  Using~\cref{claim:diadic_approx}, we obtain $u' \in \set{\pm 1/2^i \mid i \in \mathbb{N}^+}^V$.
  Let $S_i \coloneqq \{j \in V \mid |u'_j| = 2^{-i} \}$.
  Set $t = \log(1/\alpha)$.
  Since the entries of $A$ are non-negative, we have
  \begin{align*}
    \abs{\ip{A u'}{u'}} 
                        &\le \sum_{x,y \in V} A_{x,y} \abs{u'_x} \abs{u'_y}\\
                        &= \sum_{i,j \in \mathbb{N}^+} \frac{1}{2^{i+j}} \ip{A \one_{S_i}}{\one_{S_j}} \\  
                        &= \underbrace{\sum_{i} \frac{1}{2^{2i}} \ip{A \one_{S_i}}{\one_{S_i}}}_{(a)} + \underbrace{\sum_{i} \sum_{i < j \le i + t} \frac{1}{2^{i+j}} \ip{A \one_{S_i}}{\one_{S_j}}}_{(b)} \\
                        &{\hspace{6cm}+
                        \underbrace{\sum_{i} \sum_{j  > i + t} \frac{1}{2^{i+j}} \ip{A \one_{S_i}}{\one_{S_j}}}_{(c)}\mper}
  \end{align*}
Using the assumption in~(\ref{eq:bl_bound_assumption}), by~\cref{claim:bound_on_equal} term $(a)$ becomes
  \begin{align*}
    \sum_{i} \frac{1}{2^{2i}} \ip{A \one_{S_i}}{\one_{S_i}} \le 4\alpha d \sum_{i} \frac{1}{2^{2i}}\abs{S_i} = 4\alpha d \norm{u'}_2^2\mper
  \end{align*}  
Note that $S_i \cap S_j = \emptyset$ when $i \ne j$.
Using the assumption in~(\ref{eq:bl_bound_assumption}), term $(b)$ becomes
  \begin{align*}
    \sum_{i} \sum_{i < j \le i + t} \frac{1}{2^{i+j}} \ip{A \one_{S_i}}{\one_{S_j}} 
            &\le \sum_{i} \sum_{i < j \le i + t} \frac{1}{2^{i+j}} \alpha d \sqrt{\abs{S_i} \abs{S_j}}\\
            &\le \sum_{i}  \sum_{i < j \le i + t}  \alpha d \left(\frac{1}{2^{2i}}\abs{S_i} +\frac{1}{2^{2j}}\abs{S_j}\right)\\
            &\le 2\alpha \log(1/\alpha) d \sum_{i} \frac{1}{2^{2i}} \abs{S_i}\\
            &= 2\alpha \log(1/\alpha) d \norm{u'}_2^2,
  \end{align*}
  where the second step applies the Cauchy-Schwartz inequality.

  Note that the $\ell_1$ bound of $d$ on the row and column sums of $A$ trivially implies that $\ip{A \one_{S_i}}{\one_{S_j}} \le d \abs{S_i}$.  
  By the choice of $t$ and this trivial bound, term $(c)$ becomes
  \begin{align*}
    \sum_{i} \sum_{j  > i + t} \frac{1}{2^{i+j}} \ip{A \one_{S_i}}{\one_{S_j}} 
    &\le \sum_{i} \sum_{j  > i + t} \frac{1}{2^{i+j}} d \abs{S_i}\\
    &\le \alpha d \sum_{i} \sum_{j  > i} \frac{1}{2^{i+j}} \abs{S_i} \\
    &\le 2 \alpha d \sum_{i}  \frac{1}{2^{2i}} \abs{S_i} = 2 \alpha d \norm{u'}_2^2\mper
  \end{align*}
Putting the bounds on $(a)$, $(b)$ and $(c)$ together, we obtain
  \begin{align*}
    \abs{\ip{A u}{u}} \le \abs{\ip{A u'}{u'}} \le 6 \alpha (\log(1/\alpha)+1) d \norm{u'}_2^2 \le 12 \alpha (\log(1/\alpha)+1) d \norm{u}_2^2\mcom
  \end{align*}
  concluding the proof.
\end{proof}

As a consequence of the above lemma and
\cref{claim:sse_between_small_sets}, we obtain our main result of this
section, namely, that SSE graphs are analytic SSE as follows.

\Prop*

\begin{proof}[Proof of~\cref{prop:sse_to_analytic_sse}]
  Since $G$ is a $(\eta,\delta)$-SSE, using~\cref{claim:sse_between_small_sets} we have
  for every disjoint sets $S,T \subseteq V(G)$ with $\abs{S \sqcup T} \le \delta \abs{V}$
  \begin{align*}
    \ip{A \one_{S}}{\one_{T}} \le \alpha d \sqrt{\abs{S} \abs{T} }\mcom
  \end{align*}
  where $\alpha = 2\sqrt\eta$. By~\cref{lemma:quadratic_form_bound}, this implies that $G$ is $(O(\sqrt\eta\log(1/\eta)),\delta)$-analytic SSE
  concluding the proof.
\end{proof}


\section{\texorpdfstring{$\textup{GapUG} \in \QMA^+_{\log}(2)$ and $\NP \subseteq \QMA^+_{\log}(2)$}{GapUC in QMA+log(2) and NP subset of QMA+log(2)} }\label{sec:ug-protocol}

\begin{definition}[Unique Games]
A unique game instance $\fI$ consists of a $d$-regular graph
$G=(V,E)$. Each edge $e=(a,b)\in E$ is associated with a bijective
constraint $f_{e}:\Sigma\to\Sigma$, where $\Sigma=\{1,2,\ldots,q\}$
for some constant $q$. 

For any labeling $\ell: [n]\to \Sigma$, the value associated
with the labeling is the fraction of edge constraints satisfied by the
labeling, i.e.,
\[
\frac{1}{nd}|\{ (a,b)\in E: f_{(a,b)}(\ell(a)) = \ell(b) \}|.\footnotemark
\]
The value of $\fI$, denoted $\val(\fI)$, is the max value over all
possible labelings.
\end{definition}
\footnotetext{Though we can think of the graph in the definition being undirected, when we describe an edge constraint for $e=(a,b)$ using a bijection, we need labels of one vertex as the domain and labels of the other as the range of $f$. So when we say $f_e$, we always have an implicit orientation of the edge. So the set here counts each edge twice, that is $\val$ can take the value up to 1.}

\begin{definition}[($1-\delta, \eta$)-GapUG problem]
Given any unique games instance $\fI$. Determine which of the following two cases is true:
\begin{description}
\item [(Yes)] $\val(\fI)\ge 1-\delta$.
\item [(No)] $\val(\fI) \le \eta$.
\end{description}
\end{definition}

The purpose of this section is to establish the following theorem.
\begin{theorem}\label{thm:ug-in-qma2}
For any $\delta, \eta \in(0,1)$ such that $ (1-\delta)^2 > \eta$, then 
\begin{align*}
(1-\delta, \eta)\textup{-{GapUG}} \in \QMA^+_{\log}(2).
\end{align*}
\end{theorem}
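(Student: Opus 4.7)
The plan is to formalize the sketch from \cref{sec:overview:ug}. The verifier requests a large constant number $k$ of copies of a state $\ket{\psi}\in\C^n\otimes\C^q$ from constantly many unentangled provers; in the yes case these should equal $\ket{\psi_\ell}=\frac{1}{\sqrt n}\sum_{i\in V}\ket i\ket{\ell(i)}$ for an assignment $\ell$ of value at least $1-\delta$. Writing $A=\sum_{r=1}^d P_r$ for the permutation decomposition of the adjacency matrix, I define the unitary $\Pi_r:\ket i\ket a\mapsto \ket{P_r i}\ket{f_{(i,P_r i)}(a)}$; this is well-defined because $P_r$ permutes $V$ and each $f_e$ is a bijection on $\Sigma$, so $\Pi_r$ is a permutation of the computational basis. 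The verifier picks uniformly one of four sub-tests: (a) the symmetry test on the $k$ copies, (b) the sparsity test (Version II) with target sparsity $1/q$ applied to the full $nq$-dimensional state, (c) the validity test with precision $\Theta(\delta)$, or (d) the \emph{UG test}: sample $r\in[d]$ uniformly, apply $\Pi_r$ to one copy, and swap-test against a second copy.

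For completeness, honest provers send $k$ copies of the encoding $\ket{\psi_\ell}$ of the best assignment $\ell$. The symmetry test accepts with probability $1$; the sparsity and validity tests accept with probability $1-\exp(-\Omega(k))$ by the completeness clauses of \cref{thm:sparsity-test-target} and \cref{thm:validity-test}. A direct calculation gives $\langle\psi_\ell|\Pi_r|\psi_\ell\rangle=\frac{1}{n}\#\{i\in V:f_{(i,P_r i)}(\ell(i))=\ell(P_r i)\}$, so averaging over $r$ yields $\val(\ell)\ge 1-\delta$, and by Jensen's inequality the UG swap-test accepts with probability at least $\tfrac{1}{2}+\tfrac{(1-\delta)^2}{2}$.

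For soundness, I condition on the first three sub-tests accepting with non-negligible probability. By \cref{thm:symmetry-test} the bundle $\Psi$ must be $\epsilon$-tilted; by \cref{thm:sparsity-test-target} any representative $\ket\psi$ is trace-close to some $\ket\phi\in\SSS_{1/q}$---this is the \emph{only} step that exploits non-negative amplitudes; by \cref{thm:validity-test} this $\ket\phi$ is in turn close to a valid encoding $\ket{\psi_\ell}\in\cV$ of some labeling $\ell:V\to\Sigma$. Hence $\TD(\ket\psi,\ket{\psi_\ell})\le\mu$ for a parameter $\mu$ controlled by the chosen precisions. By \cref{fact:trace_norm_acc} the UG-test acceptance on $\ket\psi$ is within $O(\mu)$ of that on $\ket{\psi_\ell}$, and on the honest encoding the non-negativity of all amplitudes together with the no promise give
\[
\Ex{r\in[d]}{|\langle\psi_\ell|\Pi_r|\psi_\ell\rangle|^2}\le \Ex{r\in[d]}{\langle\psi_\ell|\Pi_r|\psi_\ell\rangle}=\val(\ell)\le \eta,
\]
so the UG swap-test accepts $\ket\psi$ with probability at most $\tfrac{1}{2}+\tfrac{\eta}{2}+O(\mu)$. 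Averaging over the four sub-tests yields an overall acceptance bounded by a constant strictly below the completeness value whenever $(1-\delta)^2>\eta$ and $\mu$ is small enough.

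The main obstacle is tuning $\epsilon$ and the sparsity/validity precisions so that $\mu$ is a sufficiently small function of the slack $(1-\delta)^2-\eta$; the hypothesis of the theorem provides exactly the headroom needed for a constant gap. The reduction from constantly many provers to two is handled by the non-negative-amplitude Harrow--Montanaro product test already recorded in \cref{sec:prelim}, yielding the claimed $\QMA^+_{\log}(2)$ protocol. The corollary $\NP\subseteq\QMA^+_{\log}(2)$ with constant gap then follows by pairing this protocol with the NP-hardness of $(1/2,\delta)$-GapUG obtained from the proof of the $2$-to-$2$ Games Conjecture.
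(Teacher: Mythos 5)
Your proposal is correct and follows the paper's proof almost step for step: same state encoding $\ket{\psi_\ell}$, same permutation decomposition and unitaries $\Pi_r$, same four-way average over symmetry, sparsity, validity, and labeling sub-tests, and the same key observation that $\Exp_{r\in[d]}[\langle\tilde\psi|\Pi_r|\tilde\psi\rangle^2]\le\val(\ell)$ for any valid encoding $\ket{\tilde\psi}\in\cV$, paired with Jensen's inequality for completeness and \cref{fact:trace_norm_acc} to transport the bound to the actual proof state. Two details differ from the paper but do not break the argument: (i) the paper's labeling test runs $k$ independent swap tests over fresh $r$'s and accepts only above a tuned threshold $\theta=\frac12\bigl(\frac{1+(1-\delta)^2}{2}+\frac{1+\eta}{2}\bigr)$, which Chernoff-amplifies the labeling test's soundness error to $\exp(-\Theta(k))$ and yields a $0.99$ vs.\ $7/8$ gap, whereas your single swap test leaves a smaller (but still constant) gap on the order of $\frac{(1-\delta)^2-\eta}{8}$; and (ii) the Version-II sparsity test you invoke (\cref{thm:sparsity-test-target}) takes \emph{two} bundles as input, so the protocol must explicitly request the ``complemented'' states $\ket{\gamma_j}=\frac{1}{\sqrt n}\sum_{i}\ket i\frac{1}{\sqrt{q-1}}\sum_{v\ne\ell(i)}\ket v$ from the provers alongside $\Psi$, as the paper does.
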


It suffices to present a $\QMA^+_{\log}(k)$ protocol (see Algorithm~\ref{algo:ug-protocol}) for some constant $k$ for the
$(1-\delta, \eta)$-GapUG problem.
For the given graph $G=(V, E)$, say $V=\{1,2,\ldots,n\}.$ Since $G$ is a regular graph,
we can partition $E$ into $d$ permutations $\pi_{1},\pi_{2},\ldots,\pi_{d}:\{n\}\to\{n\}.$ The permutation can also
be thought of as a perfect matching between two vertex sets $L$ and $R$ with $L=R=V$.
We find the matching view more convenient, so we often call $\pi$ a matching.
For any labeling $\ell:[n]\to\Sigma$, we represent it by the following quantum state
\[
    |\psi\rangle = \frac{1}{\sqrt n} \sum_{i\in [n]} |i\rangle|\ell(i)\rangle.
\]
Recall that $\cV\subseteq \SSS_{1/q}$ denote the set of all valid labelings, i.e., 
\[
\cV:=\left\{ \frac{1}{\sqrt{n}}\sum_{i=1}^{n}|i\rangle|v_{i}\rangle:v_{i}\in \Sigma \right\} .
\]
Let $\Pi_{r}$ be the unitary map associated with the matching $\pi_r$, such that for any $r\in[d],i\in[n],$
and $v\in\Sigma:$
\[
\mathop{\Pi_{r}}|i\rangle|v\rangle\mapsto|\pi_{r}(i)\rangle|f_{(i,\pi_{r}(i))}(v)\rangle.
\]
In words, when we pick a matching $\pi_r$ and a labeling $\ket{\psi}$ on $L$, then $\Pi_r \ket{\psi}$ represents the unique labeling on $R$ that satisfies all the edge constraints for the edges in $\pi_r$. In reality, $L$ and $R$ are the same vertex set, they have the same labeling.
Let 
\begin{align*}
    &\theta = \frac{1}{2}\left(
        \frac{1+(1-\delta)^2}{2} + \frac{1+\eta}{2}
    \right),\\
    &\lambda = 
        \frac{(1-\delta)^2}{2} - \frac{\eta}{2}.
\end{align*}
We prove \cref{thm:ug-in-qma2} by establishing the following two lemmas in the next subsection.
\begin{lemma}[Completeness of UG protocol]\label{lem:ug-complete}
For any unique games instance $\fI$, if $\val(\fI)\ge 1-\delta$. Then there is a proof with $k=O_{\delta,\eta}(1)$ unentangled states, each of size $O_{\delta,\eta}(\log n)$, such that Algorithm~\ref{algo:ug-protocol} accepts with probability at least $0.99$.
\end{lemma}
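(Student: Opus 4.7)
The plan is to use the natural honest strategy: have the provers send $k$ identical copies of the assignment-encoding state
\[
\ket{\psi} = \frac{1}{\sqrt{n}} \sum_{i=1}^{n} \ket{i}\ket{\ell(i)},
\]
where $\ell \colon V \to \Sigma$ is an optimal labeling achieving value at least $1-\delta$. Note that $\ket{\psi} \in \cV \subseteq \SSS_{1/q}$, it has non-negative amplitudes, and uses $\lceil \log n\rceil + \lceil\log q\rceil = O_{\delta,\eta}(\log n)$ qubits. I would then argue separately that each subtest of the protocol accepts with probability $1-o_k(1)$ on this honest proof, and finish by union bound.

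For the auxiliary tests, the symmetry test accepts with probability $1$ because $\{\ket\psi,\ldots,\ket\psi\}$ is a $0$-tilted state. The sparsity test (with target sparsity $1/q$) accepts with probability $1 - \exp(-\Theta(\epsilon k))$ by the completeness part of~\cref{thm:sparsity-test-target}, since $\ket{\psi} \in \SSS_{1/q}$ (the prover also supplies a complementary state in $\SSS_{1-1/q}$, as required). The validity test accepts with probability $1 - \exp(-\Theta(q d^2 k))$ by the completeness part of~\cref{thm:validity-test}, since $\ket{\psi} \in \cV$.

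The main step is the analysis of the constraint (matching) test, which samples $r \in [d]$ uniformly at random, computes $\Pi_r\ket{\psi}$, and swap-tests it against a fresh copy of $\ket{\psi}$. A direct calculation using $\Pi_r\ket{i}\ket{\ell(i)} = \ket{\pi_r(i)}\ket{f_{(i,\pi_r(i))}(\ell(i))}$ gives
\[
\langle \psi | \Pi_r | \psi \rangle = \frac{1}{n}\, \bigl|\{i \in V : f_{(i,\pi_r(i))}(\ell(i)) = \ell(\pi_r(i))\}\bigr|,
\]
namely the fraction of constraints on matching $\pi_r$ satisfied by $\ell$. Averaging over $r$, $\mathbb{E}_r \langle \psi | \Pi_r | \psi \rangle = \val(\ell) \ge 1-\delta$. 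Crucially, because $\ket{\psi}$ has non-negative amplitudes and $\Pi_r$ is a permutation acting on non-negative basis vectors, $\langle \psi | \Pi_r | \psi \rangle \in [0,1]$ is a non-negative real, so by Jensen's inequality
\[
\mathbb{E}_r\,|\langle \psi | \Pi_r | \psi \rangle|^{2} \;\ge\; \bigl(\mathbb{E}_r \langle \psi | \Pi_r | \psi \rangle\bigr)^{2} \;\ge\; (1-\delta)^{2}.
\]
Consequently, the swap test in the constraint test accepts with expected probability at least $\tfrac{1 + (1-\delta)^{2}}{2}$, which exceeds the threshold $\theta$ by exactly $\lambda/2 = \tfrac{(1-\delta)^{2}-\eta}{4}$.

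To conclude, I would choose $k = \Theta(\lambda^{-2})$ large enough that a Hoeffding/Chernoff bound implies the empirical acceptance fraction of the constraint test is within $\lambda/4$ of its expectation (hence above $\theta$) except with probability $o(1)$, and similarly for the other tests. A union bound over the four subtests yields an overall completeness of at least $0.99$. I do not anticipate any real obstacle; the only place the analysis uses structure specific to the protocol is the Jensen step, which works cleanly precisely because all amplitudes of $\ket{\psi}$ and entries of $\Pi_r$ are non-negative reals, so no cancellation can reduce $\langle \psi | \Pi_r | \psi \rangle$ below the satisfied-fraction count.
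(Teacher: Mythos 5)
Your proposal is correct and follows essentially the same route as the paper's own proof: honest provers send identical copies of the assignment state $\ket{\psi}$ (and the complementary states $\ket{\gamma_j}$), the auxiliary tests accept with high probability by the completeness parts of \cref{thm:sparsity-test-target,thm:validity-test}, and the labeling test is bounded via $\Ex{r}{\braket{\psi}{\Pi_r\psi}^2} \ge (\Ex{r}{\braket{\psi}{\Pi_r\psi}})^2 \ge (1-\delta)^2$ followed by a Chernoff bound against the threshold $\theta$. One minor attribution quibble: the Jensen step $\Ex{}{X^2}\ge(\Ex{}{X})^2$ does not actually need non-negativity of the amplitudes (it is plain convexity, and for general states one would use $\Ex{}{|X|^2}\ge|\Ex{}{X}|^2$); non-negativity is what makes the \emph{soundness} analysis work, not this completeness step, but this has no bearing on the correctness of your argument.
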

\begin{lemma}[Soundness of UG protocol]\label{lem:ug-sound}
For any unique games instance $\fI$, if $\val(\fI)\le \eta$. Then for any proof with $k=O_{\delta,\eta}(1)$ unentangled states, each of size $O_{\delta,\eta}(\log n)$, such that Algorithm~\ref{algo:ug-protocol} accepts with probability at most $7/8$.
\end{lemma}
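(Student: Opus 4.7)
The plan is to reduce the behaviour of arbitrary adversarial proofs to the behaviour of a concrete classical labelling, and then invoke the soundness hypothesis $\val(\fI)\le \eta$. I will case on which of the subtests in Algorithm~\ref{algo:ug-protocol} the verifier executes. If the symmetry test is chosen with probability $\Omega(1)$, then either (i) the supplied proofs $\Psi$ fail to form an $\epsilon$-tilted state, in which case \cref{thm:symmetry-test} forces rejection with probability $1-\exp(-\Theta(\epsilon^{2}k))$, or (ii) we may assume that $\Psi$ is an $\epsilon$-tilted state and work with a representative $\ket{\psi}$. The parameters $\epsilon=\epsilon(\delta,\eta)$ and $k=k(\delta,\eta)$ will be chosen small enough / large enough so that all additive losses below are absorbed into the slack between $\theta$ and the no-case value $(1+\eta)/2$.

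Next, I condition on the sparsity test with target $1/q$ passing. By \cref{thm:sparsity-test-target} (applied to the two $\epsilon$-tilted states $\Psi,\Phi$ the provers send), unless $\TD(\Psi,\SSS_{1/q})\le O(\epsilon^{1/24}q^{1/3})$, the verifier rejects with probability $1-\exp(-\epsilon k)$. Then I condition on the validity test passing: by \cref{thm:validity-test}, unless $\TD(\ket{\psi},\cV)\le O(\sqrt{qd})$ for some precision $d$ of our choice, the verifier again rejects with overwhelming probability (here the non-negative amplitudes assumption is used, exactly as in the validity test's hypothesis). Combining these, at the cost of a small trace-distance error $\tau=\tau(\delta,\eta,q)$, we may replace $\ket{\psi}$ with a state $\ket{\tilde\psi}=\tfrac{1}{\sqrt{n}}\sum_{i\in[n]}\ket{i}\ket{\ell(i)}\in\cV$ that encodes an actual labelling $\ell\colon[n]\to\Sigma$; by \cref{fact:trace_norm_acc} the acceptance probability of the remainder of the protocol changes by at most $\tau$.

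Now I analyse the edge test on $\ket{\tilde\psi}$. For a uniformly random matching index $r\in[d]$ and the associated unitary $\Pi_r$ defined in the excerpt,
\[
\langle \tilde\psi\mid\Pi_{r}\mid\tilde\psi\rangle
=\frac{1}{n}\sum_{i\in[n]}\1{f_{(i,\pi_{r}(i))}(\ell(i))=\ell(\pi_{r}(i))},
\]
which is non-negative and at most $1$, and whose average over $r$ equals exactly the fraction of edges of $G$ that $\ell$ satisfies. Because $\val(\fI)\le \eta$, this average is at most $\eta$, and therefore by $x^{2}\le x$ for $x\in[0,1]$ together with Jensen/Markov,
\[
\Ex{r\in[d]}{\abs{\langle \tilde\psi\mid\Pi_{r}\mid\tilde\psi\rangle}^{2}}\;\le\;\Ex{r\in[d]}{\langle \tilde\psi\mid\Pi_{r}\mid\tilde\psi\rangle}\;\le\;\eta.
\]
Consequently the swap test between $\Pi_{r}\ket{\tilde\psi_{1}}$ and $\ket{\tilde\psi_{2}}$ accepts with probability at most $\tfrac{1}{2}+\tfrac{\eta}{2}$, plus the trace-distance slack from the $\epsilon$-tilted and validity approximations. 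Choosing $\epsilon$ small enough and repeating the edge subtest enough times (then accepting only when a $\theta$ fraction succeed), a Chernoff bound separates this no-case value from the completeness value $\tfrac{1+(1-\delta)^{2}}{2}$, since by hypothesis $(1-\delta)^{2}>\eta$ so the gap $\lambda$ is a positive constant.

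Combining all four branches by a union bound over the verifier's random choice of subtest, and balancing the probability mass assigned to each subtest so that failure of any single branch pulls the overall acceptance below $7/8$, yields the claim. The main obstacle to watch is the composition of error terms: the trace-distance losses from the symmetry, sparsity and validity tests must each be polynomially small in the constants $\delta,\eta,q$ so that the final additive drift is strictly less than $\lambda/2$; this is possible because every relevant parameter only needs to be set as a function of $\delta,\eta$ (and $q$ is fixed with the instance), and $k$ can be taken as a sufficiently large constant in $1/\epsilon$.
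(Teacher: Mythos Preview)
Your proposal is correct and follows essentially the same approach as the paper: the paper likewise defines the nested events $\cE_1,\cE_2,\cE_3$ (symmetry, sparsity-to-$\SSS_{1/q}$, validity-to-$\cV$), argues that failure of any one of them makes the corresponding subtest reject with overwhelming probability so the overall acceptance is at most $3/4+o(1)<7/8$, and then, conditioning on all three, replaces $\ket\psi$ by a nearby $\ket{\tilde\psi}\in\cV$ and bounds $\Exp_r\langle\tilde\psi\mid\Pi_r\mid\tilde\psi\rangle^2\le\Exp_r\langle\tilde\psi\mid\Pi_r\mid\tilde\psi\rangle\le\eta$ exactly as you do, finishing with a Chernoff bound against the threshold $\theta$. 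The only organizational difference is that the paper packages the last step as a separate ``Labeling test'' lemma. One small attribution slip: the non-negative amplitude assumption enters through the sparsity test (\cref{thm:sparsity-test-target}), not the validity test.
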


\noindent\fbox{\noindent
\begin{myalg}[$(1-\delta, \eta)$-GapUG Protocol]%
\label{algo:ug-protocol}\ignorespacesafterend
Let $\epsilon = \lambda^{48}  / (Cq^{32})$, 
and $k = C / \epsilon^2$ for some large enough constant $C$.

\textbf{Provers:} send 
\begin{enumerate}
\item $2k$ copies of labelings that realize $\val(\fI)$, i.e.,
\begin{align*}
 & |\psi_{1}\rangle,|\psi_{2}\rangle,\ldots,|\psi_{2k}\rangle=\frac{1}{\sqrt{n}}\sum_{i\in[n]}|i\rangle|\ell(i)\rangle.
\end{align*}
\item $2k$ copies of the labelings but complemented, i.e.,
\begin{align*}
 & |\gamma_{1}\rangle,|\gamma_{2}\rangle,\ldots,|\gamma_{2k}\rangle=\frac{1}{\sqrt{n}}\sum_{i\in[n]}|i\rangle\frac{1}{\sqrt{q-1}}\sum_{v\not=\ell(i)}|v\rangle.
\end{align*}
\end{enumerate}

\textbf{Verifier:} Let $\Psi=\{|\psi_{1}\rangle,\ldots,|\psi_{2k}\rangle\}$,
and similarly for $\Gamma$. Run a uniformly
random test of the following
\begin{enumerate}
    \item Two symmetry tests on $\Psi$ and $\Gamma$.
    \item Sparsity test on $(\Psi,\Gamma)$ with target sparsity $1/q$ and precision $\epsilon$.
    \item Validity test on $\Psi$ with precision $\nu=\epsilon^{1/24}q^{1/3}$.
    \item Labeling test on $\Psi_0$, $\Psi_1$, where $\Psi_0$ and $\Psi_1$ are partition of $\Psi$ into two subsets with equal size.
\end{enumerate}
\end{myalg}
}

The labeling test is described below.

\noindent\fbox{\begin{minipage}[t]{1\columnwidth - 2\fboxsep - 2\fboxrule}%
\uline{Labeling Test}

\textbf{Input: }$\Psi=\{|\psi_{1}\rangle,|\psi_{2}\rangle,\ldots,|\psi_{k}\rangle\},\Phi =\{|\phi_{1}\rangle,|\phi_{2}\rangle,\ldots,|\phi_{k}\rangle\}.$
\begin{enumerate}

\item For $i$ from $1$ to $k$, SwapTest on $(\Pi_{r}|\psi_{i}\rangle,|\phi_{i}\rangle)$
for uniformly random $r\in[d]$ (each iteration with a fresh random
choice).

\end{enumerate}
\emph{Accept} if more than a $\theta$ fraction the SwapTests accept. 
\end{minipage}}

\subsection{Analysis}

We first prove Lemma~\ref{lem:ug-complete}, the completeness. In particular, we show that for whichever test the protocol chooses, it accepts with probability at least $0.99$ when $\val(\fI)\ge 1-\delta$.

For faithful proofs, the symmetry test passes with probability 1, and the sparsity test accepts with probability at least, by Theorem~\ref{thm:sparsity-test-target}, $1-\exp(-\Theta(\epsilon k))$. The validity test accepts with probability at least $1-\exp(-\Theta(q \nu^2 k))$ by Theorem~\ref{thm:validity-test}. The way we choose our parameters guarantees that the accept probability is at least $0.99$.

Finally, when the UG instance has a value of at least $1-\delta,$ then there is
valid labeling $|\psi\rangle \in \cV,$ such that
\begin{align*}
\Exp_{r\in[d]}\braket{\psi }{ \Pi_{r}\psi }  & \ge1-\delta.
\end{align*}
Analogous to our analysis in \cref{sec:SSE-protocol}, we have
\begin{align*}
\Exp_{r\in[d]}[\braket{\psi}{\Pi_{r}\psi}^{2}] & \ge\left(\Exp_{r\in[d]}[\braket{\psi}{\Pi_{r}\psi}]\right)^{2}
  \ge(1-\delta)^{2}.
\end{align*}
Therefore, each swap test in the labeling test accepts with probability at least $1/2+(1-\delta)^2/2\ge 1-\delta$. By Chernoff bound, with probability at least $1-\exp ( -\Theta(\lambda^2 k) ) \ge 0.99$ for our choice of parameters, the labeling test accepts.

Now, we have proved the completeness. Next, we prove Lemma~\ref{lem:ug-sound}, the soundness, for which the following analysis on the labeling test will complete the last missing piece.

\begin{lemma}[Labeling test]\label{lem:labeling-test}
Suppose $\val(\fI)\le\eta.$  Given $\epsilon$-tilted states $\Psi$ such that any
representative state $\ket{\psi}$ satisfies $\TD(\ket{\psi},\cV)$ and $\epsilon$ being sufficiently small (for example, $\TD(\ket{\psi},\cV) \le \lambda / 8$ and $\epsilon \le \lambda^2 / 256$). 
Then the labeling test accepts $\Psi$ with probability at most $\exp(-\Theta(\lambda^2k))$.
\end{lemma}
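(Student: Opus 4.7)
The plan is to show that the expected per-test acceptance probability is strictly below the threshold $\theta$ by $\Omega(\lambda)$, and then invoke Hoeffding's inequality to get the $\exp(-\Theta(\lambda^2 k))$ rejection bound. Fix a representative state $\ket{\tilde\psi}$ of $\Psi$ and let $\ket{\psi^*}\in\cV$ realize the given closeness $\TD(\ket{\tilde\psi},\cV)\le\lambda/8$. Since $\ket{\psi^*}\in\cV$, it corresponds to a genuine labeling $\ell^*\colon[n]\to\Sigma$ via $\ket{\psi^*}=\frac{1}{\sqrt n}\sum_i\ket{i}\ket{\ell^*(i)}$. By the very definition of $\Pi_r$,
\[
\langle \psi^*|\Pi_r|\psi^*\rangle \;=\; \tfrac{1}{n}\,\bigl|\{i: f_{(i,\pi_r(i))}(\ell^*(i))=\ell^*(\pi_r(i))\}\bigr|,
\]
namely the fraction of constraints of the matching $\pi_r$ satisfied by $\ell^*$. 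Averaging over $r\in[d]$ yields $\E_{r}\langle\psi^*|\Pi_r\psi^*\rangle=\val(\ell^*)\le\val(\fI)\le\eta$, and because this quantity lies in $[0,1]$ we also get $\E_r|\langle\psi^*|\Pi_r\psi^*\rangle|^2\le\eta$.

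The next step transports this bound to the actual proofs $\ket{\psi_i}\in\Psi_0,\ket{\phi_i}\in\Psi_1$. When both $\ket{\psi_i}$ and $\ket{\phi_i}$ lie in the representative set of $\Psi$, the triangle inequality for $\TD$ together with $\|\ket{u}-\ket{v}\|\le\sqrt 2\,\TD(\ket u,\ket v)$ gives
\[
\|\ket{\psi_i}-\ket{\psi^*}\|,\ \|\ket{\phi_i}-\ket{\psi^*}\| \;\le\; \sqrt{2}\,(\sqrt\epsilon+\lambda/8).
\]
Unitarity of $\Pi_r$ and Cauchy--Schwarz then give $\bigl|\langle\Pi_r\psi_i|\phi_i\rangle-\langle\Pi_r\psi^*|\psi^*\rangle\bigr|\le\Delta$, where $\Delta=\|\ket{\psi_i}-\ket{\psi^*}\|+\|\ket{\phi_i}-\ket{\psi^*}\|$ is a controllable multiple of $\lambda$. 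Squaring and averaging over $r$ produces $\E_r|\langle\Pi_r\psi_i|\phi_i\rangle|^2\le\E_r\langle\Pi_r\psi^*|\psi^*\rangle+2\Delta+\Delta^2\le\eta+c\lambda$ for some constant $c<1$ (achieved by taking the absolute constants in the hypothesis on $\TD(\ket\psi,\cV)$ and $\epsilon$ small enough; this is where the ``sufficiently small'' qualifier in the statement is used). Consequently each swap test accepts with probability at most $\tfrac12(1+\eta+c\lambda)=\theta-\tfrac{(1-c)\lambda}{4}$ whenever its two input states are both representative.

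For the concentration step, note that conditioned on the prover's proofs the $k$ swap-test indicators $X_1,\dots,X_k$ are mutually independent, since each uses a fresh random matching $r$ and fresh swap-test ancillas. By the $\epsilon$-tilted assumption at most a $2\epsilon$-fraction of the pairs $(\ket{\psi_i},\ket{\phi_i})$ contains a non-representative state, and these contribute at most $2\epsilon$ extra to the mean. For $\epsilon\ll\lambda$ we therefore obtain $\E\tfrac1k\sum_j X_j\le\theta-\Omega(\lambda)$, and Hoeffding's inequality gives
\[
\Pr\Bigl[\tfrac1k\textstyle\sum_j X_j>\theta\Bigr]\;\le\;\exp(-\Omega(\lambda^2 k)),
\]
as required. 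The main obstacle is the constant bookkeeping in the previous paragraph: the combined displacement coming from the $\sqrt\epsilon$-tilt and from the $\TD\le\lambda/8$ closeness to $\cV$, once squared, must fit strictly inside the acceptance gap $2\theta-1-\eta=\lambda$, so the constants in the hypothesis have to be pushed a bit past the illustrative values $\lambda/8,\lambda^2/256$ quoted in the statement.
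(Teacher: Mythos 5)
Your argument is essentially the paper's: reduce to the nearest valid labeling $\ket{\psi^*}\in\cV$, use $\E_r\langle\psi^*|\Pi_r\psi^*\rangle=\val(\ell^*)\le\eta$ together with $\langle\psi^*|\Pi_r\psi^*\rangle\in[0,1]$ to control the squared overlap, transport to the actual (representative) proofs, and then apply a Chernoff/Hoeffding bound over the $k$ independent swap tests while absorbing the $O(\epsilon)$ fraction of non-representative pairs. The concentration step is correct and, if anything, slightly cleaner than the paper's two-stage ``bound the good pairs, then account for the bad ones'' bookkeeping.

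The one place you genuinely diverge is the perturbation step. The paper treats ``apply a random $\Pi_r$ and swap-test'' as a black-box protocol, bounds $\TD\!\left(\ket{\psi_i}\otimes\ket{\phi_i},\ \ket{\tilde\psi}\otimes\ket{\tilde\psi}\right)\le\sqrt{D^2+(D+\sqrt\epsilon)^2}\le 2(D+\sqrt\epsilon)$ via the tensorization fact, and then invokes the fact that acceptance probability changes by at most the trace distance. You instead perturb the overlap $\langle\Pi_r\psi_i|\phi_i\rangle$ directly via Cauchy--Schwarz and $\ell_2$-norms, converting trace distance to vector norm with an extra $\sqrt{2}$ and then squaring, which introduces the $2\Delta+\Delta^2$ loss. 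Both are correct, but the paper's route is tighter: with $D=\lambda/8$ and $\sqrt\epsilon\le\lambda/16$ it lands at $\tfrac{1+\eta}{2}+\tfrac{3\lambda}{8}=\theta-\tfrac{\lambda}{8}<\theta$, whereas your $2\Delta+\Delta^2$ overshoots the gap $2\theta-1-\eta=\lambda$ by a small amount with the illustrative constants as stated. You flag this yourself, correctly, as ``constant bookkeeping''; the fix is either to tighten the hypothesis constants (as you suggest) or to switch to the trace-distance/Fact~\ref{fact:trace_norm_acc} route, which avoids the $\sqrt{2}$ and the extra square and makes the stated constants work as is.
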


\begin{proof}
For any valid labelings $\ket{\tilde{\psi}}\in\cV$,
\begin{align*}
\val(\fI) & \ge\Exp_{r\in[d]}\langle\tilde{\psi},\Pi_{r}\tilde{\psi}\rangle
 \ge\Exp_{r\in[d]} [\langle\tilde{\psi},\Pi_{r}\tilde{\psi}\rangle^{2}].
\end{align*}
Therefore the probability that  SwapTest accepts $\ket{\tilde\psi}$ is at most $1/2+\eta/2$.  Let $\ket\psi, \ket\phi$ be two representative states from $\Psi$. Suppose that for some $\ket{\tilde\psi}\in\cV$,
$\TD(\ket\psi, \ket {\tilde\psi})\le D.$
By Fact~\ref{fact:trace-tensor},
\[
\TD(\ket\psi\otimes\ket\phi, \ket {\tilde\psi}\otimes \ket {\tilde\psi}) \le \sqrt{D^2 + (D+\sqrt{\epsilon})^2}\le 2(D+\sqrt\epsilon).
\]
It then follows by Fact~\ref{fact:trace_norm_acc} that
the labeling test accepts $\ket{\psi_1}\otimes \ket{\psi_2} $ for two representative states in $\Psi$ with probability at most $1/2 + \eta/2 + 2(D+\sqrt\epsilon)$. When we partition $\Psi$ into two subsets $\Psi_1$ and $\Psi_2$, then with probability at least $1-2\epsilon$ the states we pick from $\Psi_1$ and $\Psi_2$ are both representative states of $\Psi$. 
By Chernoff bound, with probability at most $\exp(-\Theta((\lambda-2D-2\sqrt\epsilon-3\epsilon)^2k))=\exp(-\Theta(\lambda^2 k))$, the SwapTests accept more than $\theta-3\epsilon$ fraction within the $1-2\epsilon$ good pairs. Since $2\epsilon \le 3\epsilon(1-2\epsilon)$ for sufficiently small $\epsilon$, in total, the swap tests accept more than $\theta$ fraction of the pairs with probability at most $\exp(-\Theta(\lambda^2 k))$.
\end{proof}

With all the above preparations, we are now ready prove the soundness lemma.
\begin{proof}[Proof of \cref{lem:ug-sound}]

Consider the following events. 
\begin{description}
    \item [{$\cE_{1}:$}] $\Psi$ and $\Gamma$ are $\epsilon$-tilted
    states;
    \item [{$\cE_{2}:$}] $\TD(\Psi,\SSS_{1/q}) \le O(\epsilon^{1/24}q^{1/3})$;
    \item [$\cE_3:$]  $\TD(\Psi,\cV) \le O(\epsilon^{1/48}q^{2/3})$.
\end{description}

If $\cE_1$ is not true, then the symmetry test accepts with probability at most $\exp(-\Theta(\epsilon^2 k )) < 0.01$ by~\cref{thm:symmetry-test} for $k=\Omega(1/\epsilon^2)$. Thus the probability that the protocol accepts is at most $3/4+0.01 < 7/8$.

Conditioning on $\cE_1$, if $\cE_2$ does not hold, then the sparsity test accepts with probability at most $\exp(-\Theta(\epsilon k)) < 0.01$ for $k=\Omega(1/\epsilon)$ by \cref{thm:sparsity-test-target}. In total, the protocol accepts with a probability less than 7/8.

Conditioning on $\cE_1$ and $\cE_2$, by \cref{thm:validity-test}, if $\cE_3$ does not hold, then the validity test accepts with probability at most $\exp(-\Theta(q^{5/3}\epsilon^{1/12}k))<0.01$. Therefore, the protocol accepts with probability less than $7/8$ again.

Finally, conditioning on $\cE_1$ and $\cE_3$, by Lemma~\ref{lem:labeling-test}, the labeling test accepts with probability at most $\exp(-\Theta(\lambda^2 k))$ if $\epsilon^{1/48}q^{2/3} = O( \lambda) $ and $\epsilon = O(\lambda^2)$. By our choice of parameters, the protocol accepts with probability at most $7/8$.
\end{proof}

\subsection{Regularization---\texorpdfstring{$\NP\subseteq \QMA^+_{\log}(2)$}{NP subseteq QMA+log(2)}}\label{sec:regularization}
Due to the works~\cite{KMS17,KMS18,DKKMS18towards2to1,DKKMS18}, it is known that the $(1/2, \eta)$-GapUG problem is $\NP$-hard. An optimistic reader would happily conclude that $\NP\subseteq \QMA_{\log}^+(2)$. This is indeed the case, with a small caveat though: In our previous discussion, we assumed the graph instance to be regular. However, when we convert a general graph into a regular one, the value of the game will change. We address this issue here. 

\begin{theorem}[Regularization~\cite{Dinur07pcp}]\label{thm:regularization}
For any general unique games instance $\fI$, there is a new unique games instance $\fI'$ that is polynomial time constructible such that
\begin{align}
     \val(\fI) \ge \frac{1}{2} &\Longrightarrow \val(\fI') \ge  1-\frac{1}{2(d+1)},  \label{eq:regular-completeness}\\
     \val(\fI)\le \eta &\Longrightarrow \val(\fI')\le 1 - \frac{1-\eta}{d+1}. \label{eq:regular-soundness}
\end{align}

\end{theorem}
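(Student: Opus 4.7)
The plan is to follow Dinur's standard cloud-replacement gadget. For every vertex $v$ of $\fI$ with degree $d_v$, introduce a cluster $C_v=\{v_1,\dots,v_{d_v}\}$ of $d_v$ copies, place a $d$-regular expander (with sufficiently large edge-expansion, e.g.\ a constant-degree Ramanujan graph) on $C_v$, and label every edge inside $C_v$ with the equality constraint. Each original edge $e=(u,v)\in E$ is kept with its original bijection $f_e$, attached to one previously-unused copy in $C_u$ and one in $C_v$, so that each copy inherits exactly one ``external'' edge. The resulting instance $\fI'$ is $(d+1)$-regular with $|E'|=(d+1)|E|$, the alphabet stays $\Sigma$, and each gadget can be constructed in polynomial time from an explicit family of expanders.

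For completeness, suppose $\val(\fI)\ge 1/2$, and fix a labeling $\ell$ of $\fI$ satisfying at least $|E|/2$ edges. Lift it to the constant labeling $\ell'(v_i)=\ell(v)$ on every copy. Every equality constraint inside every cloud is satisfied, and an external edge is satisfied in $\fI'$ iff the corresponding edge is satisfied by $\ell$ in $\fI$. Hence the number of unsatisfied constraints in $\fI'$ is at most $|E|/2$, giving
\[
\val(\fI')\ \ge\ 1-\frac{|E|/2}{(d+1)|E|}\ =\ 1-\frac{1}{2(d+1)},
\]
which is precisely~\eqref{eq:regular-completeness}.

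For soundness, take any labeling $\ell'$ of $\fI'$ and, for each $v$, let $\ell(v)$ be a plurality label of $\ell'|_{C_v}$; let $S_v\subseteq C_v$ be the copies disagreeing with $\ell(v)$, and set $\beta_v=|S_v|/d_v\le 1-1/|\Sigma|$. I split the satisfied edges of $\fI'$ into three groups: (i) external edges whose two attachments are both labeled by their respective pluralities — these are exactly the edges of $\fI$ satisfied by $\ell$, contributing at most $\val(\fI)|E|\le \eta|E|$; (ii) external edges with at least one non-plurality endpoint — each such edge can be charged to a distinct vertex of $\bigcup_v S_v$, so there are at most $\sum_v\beta_v d_v$ of them; (iii) internal equality edges that are satisfied. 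By the edge-expansion of the cloud gadget, $|E(S_v,\overline{S_v})|\ge h\,\beta_v d_v$ and every such edge is violated (its endpoints carry different labels), so the number of internal violations is at least $\sum_v h\beta_v d_v$. Choosing the gadget so that $h\ge 1$, the total number of satisfied edges in $\fI'$ is at most
\[
\eta|E|+\sum_v\beta_v d_v+d|E|-\sum_v h\beta_v d_v\ \le\ \eta|E|+d|E|,
\]
which yields $\val(\fI')\le 1-(1-\eta)/(d+1)$, i.e.~\eqref{eq:regular-soundness}.

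The main obstacle is the soundness accounting: naïvely, non-plurality copies can both satisfy extra external edges (hurting us) and violate equality edges (helping us), and we need the second effect to dominate the first with the tight constant appearing in the theorem. This is why the cloud gadget must have edge expansion at least $1$ per non-plurality vertex — any explicit constant-degree expander gadget with edge-expansion bounded away from $0$ suffices once $d$ is chosen large enough (the parameter $d$ in the theorem is this absolute constant), and standard explicit expander families make the construction polynomial time.
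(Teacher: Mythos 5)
Your construction and completeness argument match the paper's, and your soundness skeleton (plurality decoding, charge each unsatisfied original edge either to a violated external edge or to a non-plurality copy, then lower-bound internal violations by the number of non-plurality copies) is also the paper's route. The gap is in the single expansion bound you invoke: you assert $|E(S_v,\overline{S_v})|\ge h\,\beta_v d_v = h|S_v|$ unconditionally and then take $h\ge 1$. But a Cheeger-type bound $|E(S,\bar{S})|\ge h|S|$ holds only for $|S|\le n/2$, and you yourself observe that $\beta_v$ can be as large as $1-1/|\Sigma|$, which exceeds $1/2$ for any $|\Sigma|\ge 3$. In that regime the cut between $S_v$ and the small plurality set $\overline{S_v}$ is at most $d\,|\overline{S_v}|$, which can be far below $|S_v|$, so your claimed lower bound on internal violations fails. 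Taking $d$ larger does not help: for any $d$-regular graph $|E(S,\bar{S})|/|S|\to 0$ as $|S|\to n$.

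The fix is exactly the paper's separate lemma with a case split, and it is where the requirement ``Cheeger constant at least $2$'' (not $1$) enters. When $|\overline{S_v}|\ge d_v/2$ you can apply Cheeger to $S_v$ as you did, and $h\ge 1$ suffices. When $|\overline{S_v}|<d_v/2$, no label class in the cloud has half the mass, so you apply Cheeger to \emph{every} label class $V_i$: $|E(V_i,\bar{V_i})|\ge h|V_i|$, and since each bichromatic edge appears in exactly two of these sums, $\mathrm{uneq}(G_v)\ge h\,d_v/2$. This dominates $|S_v|$ only once $h\ge 2$. Replacing your one-line cut bound with this two-case argument and requiring Cheeger constant $\ge 2$ in the gadget closes the gap and recovers the paper's proof.
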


The regularization process follows closely that of Dinur's treatment~\cite{Dinur07pcp}. Define a new graph $G'=(V', E')$, such that  
\begin{align*}
    & V' = \{(v,e)\in V\times E: v\text{ is incident to }e\} \\
    & E' = E'' \cup \bigcup_{v\in V} E_v, 
\end{align*}
where $E'' =\{((v,e),(u,e)):(v,u)=e\in E\}$ and $E_v$ is the set of edges in the $d$-regular expander graph $G_v=( V_v = \{(v,e)\in V'\}, E_v)$, for some constant $d$, whose Cheeger constant is at least 2.\footnote{A random graph $G_v$ would be good, and various explicit constructions are known. We refer interested readers to the wonderful survey on this topic~\cite{HooryLW06}.}  In words, we replace every vertex $v$ with a cluster of vertices of size equal to the number of edges that $v$ is incident to in $G$. Within each cluster, the vertices are connected based on expander graphs. For every edge, $e=(u,v)$ in the original graph, connect the vertex $(u,e)$ with vertex $(v,e)$ in the new graph.  
The constraints $f'$ on $E''$ will be like that of $f_e$ on $E$. In particular, $f'_{((u,e),(v,e))} = f_{(u,v)}$. Further, the constraints on edges $E_v$ will be the equality constraints, which can be represented as a bijective map.
This new UG instance $\fI'$ satisfies that described in \cref{thm:regularization}.
Therefore, for the regular graph, $(1-\frac{1}{2(d+1)}, 1-\frac{1-\eta}{d+1})$-GapUG problem is NP-hard.

We verify the above claim. First note that in the new graph $G'$, the number of edges blows up by a factor of $d+1$. This is because
\[
    |E'|= |V'|(d+1)/2 = |E|(d+1).
\]
Now for (\ref{eq:regular-completeness}), a faithful prover will assign the label of a vertex $v$ in $G$ to the vertices of the form $(v,e)$. Then the number of unsatisfied constraints is unchanged, but the fraction decreases by a factor of $d+1$.

For (\ref{eq:regular-soundness}), let $\ell'$  be the labeling that the adversarial prover chooses. Let $\ell$ be the labeling on $V$ induced by $\ell'$ such that for any $v\in G$, $\ell(v)$ is chosen to be the majority labeling of $\{(v,e): e\sim v\}$ (break ties arbitrarily). For any $e = (u,v)\in E$ that is not satisfied by $\ell$, either both $\ell'((u,e))=\ell(u)$ and $\ell'((v,e))=\ell(v)$, then the edge $((u,e), (v,e))$ is not satisfied. Or, one of the vertices $(u,e), (v,e)$ is not labeled by the majority label. The following lemma proves that within any cluster, the number of unsatisfied constraints is at least the number of vertices with minority labels. Therefore, the total number of unsatisfied constraints in $G'$ with $\ell'$ is at least that of $G$ with labeling $\ell$.

\newcommand{\unsat}{\mathrm{uneq}}
\begin{lemma}
Suppose the $d$-regular graph $G=(V,E)$ has Cheeger constant at least 2 and $\ell$ be some labeling $\ell:V\to\Sigma$. Let $q$ denote the majority label on $V$, and let $\mathrm{uneq}(G)$ denote the number of edges $(u,v)$ in $G$ such that $\ell(v)\not=\ell(u)$. Then
\[
    \unsat(G) \ge | \{v\in V: \ell(v)\not=q\} |.
\]
\end{lemma}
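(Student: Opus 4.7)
The plan is to partition the vertex set by label and apply the Cheeger inequality to each ``minority'' color class. Concretely, for every label $a\in\Sigma$, let $V_a=\{v\in V:\ell(v)=a\}$. Since $q$ is a majority label, $|V_q|\ge |V_a|$ for every $a$, so for any $a\ne q$ we have $|V\setminus V_a|\ge|V_q|\ge|V_a|$, i.e., $|V_a|\le|V|/2$. Thus each $V_a$ with $a\ne q$ is an admissible set for the Cheeger inequality.

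Applying the assumption that the Cheeger constant of $G$ is at least $2$ to each such $V_a$ yields
\[
|E(V_a,V\setminus V_a)|\ \ge\ 2|V_a|\qquad\text{for every }a\ne q.
\]
Summing over $a\ne q$ gives $\sum_{a\ne q}|E(V_a,V\setminus V_a)|\ \ge\ 2\sum_{a\ne q}|V_a|\ =\ 2\,|\{v\in V:\ell(v)\ne q\}|$.

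The key remaining observation is that the left-hand side double-counts unequal edges at most twice. Indeed, consider any edge $(u,v)$ with $\ell(u)\ne\ell(v)$: if exactly one of $\ell(u),\ell(v)$ equals $q$, then the edge contributes to $E(V_a,V\setminus V_a)$ for exactly one $a\ne q$; if neither endpoint is labelled $q$, say $\ell(u)=a\ne b=\ell(v)$ with $a,b\ne q$, then it contributes once to $E(V_a,V\setminus V_a)$ and once to $E(V_b,V\setminus V_b)$, for a total of two. Equal edges contribute $0$. Hence
\[
\sum_{a\ne q}|E(V_a,V\setminus V_a)|\ \le\ 2\,\unsat(G).
\]
Combining the two displayed inequalities gives $\unsat(G)\ge|\{v\in V:\ell(v)\ne q\}|$, which is exactly the claim.

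There is no serious obstacle here; the only subtlety is verifying that each minority class $V_a$ is of size at most $|V|/2$ so that Cheeger applies, and keeping careful track of the double-counting constant when passing from $\sum_a|E(V_a,V\setminus V_a)|$ to $\unsat(G)$. Both are handled by the argument above.
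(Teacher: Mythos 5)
Your proof is correct, and it takes a genuinely different route from the paper's. The paper splits into two cases according to whether the majority class has at least $n/2$ vertices. In the first case it applies the Cheeger inequality to the single set $V\setminus V_q$; in the second (``balanced'') case it instead uses Cheeger on every class to upper-bound the number of monochromatic edges, which yields the much stronger statement $\unsat(G) \ge n$. You avoid the case split entirely: you observe that every minority class $V_a$ automatically has $|V_a|\le|V|/2$ (because it is dominated by $V_q\subseteq V\setminus V_a$), apply Cheeger to each $V_a$ with $a\ne q$, sum, and finish with the observation that each unequal edge is counted at most twice on the left-hand side. That double-counting step is exactly where you absorb the factor of $2$ from the Cheeger bound, which is why the argument closes without further work. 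The tradeoff is that your argument gives precisely the stated inequality $\unsat(G)\ge|\{v:\ell(v)\ne q\}|$, whereas the paper's second case quietly proves the stronger $\unsat(G)\ge n$ when no class is dominant; on the other hand, your argument is shorter, uniform, and easier to verify. Both are sound.
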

\begin{proof}
The vertex set is partitioned by the labeling $\ell$ into, say, $m$ subsets $V_1, V_2, \ldots, V_m$. Let $n_1 \ge n_2 \ge \ldots \ge n_m$ be the number of vertices in each subset. 

If $n_1 \ge n/2$, then statement holds by the expansion property of $G$: 
\[
    \unsat(G) \ge E(V_1, \bar V_1) \ge 2 |\bar V_1|.
\]
If $n_1 < n/2$, we bound the number of edges within each subset:
\begin{align*}
   \frac{1}{2} \sum_{i\in \{1,\ldots, m\} } E(V_i, V_i) 
        &\le \sum_{i\in \{1,\ldots, m\} } (d|V_i| - 2|V_i|)/2  \\
        &= \frac{d n}{2} -n,
\end{align*}
where the first inequality uses the expansion property of $G$ as $|V_i| < n/2$. Therefore $\unsat(G) \ge n \ge |\bar V_1|$.
\end{proof}

We verify that for any $\eta < 1/4(d+1)$,
\[
    \left( 1 - \frac{1}{2(d+1)} \right)^2 > 1 - \frac{1-\eta}{d+1}.
\]
Therefore, by~\cref{thm:ug-in-qma2}, we have
\begin{theorem} With constant completeness and soundness gap,
 $\textup{NP} \subseteq \QMA^+_{\log}(2).$
\end{theorem}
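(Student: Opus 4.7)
The plan is to chain together three ingredients that the excerpt has already assembled: (i) the NP-hardness of $(1/2,\eta)$-GapUG from the proof of the $2$-to-$2$ conjecture \cite{KMS17,KMS18,DKKMS18towards2to1,DKKMS18}, where $\eta$ can be made an arbitrarily small positive constant of our choosing; (ii) the regularization construction from \cref{thm:regularization}, which takes an arbitrary UG instance $\fI$ to a $d$-regular instance $\fI'$ with completeness $1-\frac{1}{2(d+1)}$ and soundness $1-\frac{1-\eta}{d+1}$ for a universal constant $d$; and (iii) the $\QMA^+_{\log}(2)$ protocol of \cref{thm:ug-in-qma2}, which handles any $(1-\delta,\eta')$-GapUG provided $(1-\delta)^2 > \eta'$ and the graph is regular.

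First I would fix $\eta$ to be a small enough constant so that the arithmetic condition required by \cref{thm:ug-in-qma2} is comfortably satisfied after regularization. Concretely, set $1-\delta := 1-\frac{1}{2(d+1)}$ and $\eta' := 1-\frac{1-\eta}{d+1}$. The excerpt already verifies that for $\eta < \frac{1}{4(d+1)}$ we have $(1-\delta)^2 > \eta'$, so there is a constant slack between completeness and soundness that is independent of the input size. Choosing any such $\eta$ and invoking the NP-hardness of $(1/2,\eta)$-GapUG gives a polynomial-time Karp reduction from an arbitrary $\NP$ language to $(1/2,\eta)$-GapUG.

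Next I would compose this reduction with the polynomial-time regularization map of \cref{thm:regularization}. By the implications \eqref{eq:regular-completeness} and \eqref{eq:regular-soundness}, yes-instances of the $\NP$ language go to UG instances of value at least $1-\delta$ on the regular graph, and no-instances go to instances of value at most $\eta'$. Finally, feeding the resulting regular UG instance into the verifier of \cref{algo:ug-protocol} and applying \cref{thm:ug-in-qma2} produces a $\QMA^+_{\log}(2)$ protocol with constant completeness $c$ and constant soundness $s$ satisfying $c - s = \Omega(1)$, exactly as needed.

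There is no real obstacle here beyond bookkeeping: the only delicate point is the quantitative interplay between the completeness/soundness degradation introduced by regularization and the gap requirement $(1-\delta)^2 > \eta'$ of the UG protocol, but this has already been reduced to the elementary inequality $\eta < \frac{1}{4(d+1)}$, which can be met by choosing $\eta$ small before invoking $2$-to-$2$. Thus the entire argument is a direct composition, and no new quantum or analytic machinery is required beyond what has been developed in \cref{sec:ug-protocol}.
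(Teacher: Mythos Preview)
Your proposal is correct and follows essentially the same approach as the paper: compose the NP-hardness of $(1/2,\eta)$-GapUG from the $2$-to-$2$ work, the regularization of \cref{thm:regularization}, and the protocol of \cref{thm:ug-in-qma2}, using the already-verified inequality $(1-\tfrac{1}{2(d+1)})^2 > 1-\tfrac{1-\eta}{d+1}$ for $\eta < \tfrac{1}{4(d+1)}$ to guarantee the constant gap. The paper's proof is exactly this chain of implications, with no additional ideas.
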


One can work with various other approaches to prove the above theorem. For example, one can work with the 3COLOR problem, or work with the $\PCP$ characterization of $\NP$. Looking ahead, to take advantage of the $\PCP$ characterization will be the approach we take to show $\NEXP=\QMA^+(2)$.

\section{\texorpdfstring{$\NEXP=\QMA^{+}(2)$}{NEXP=QMA+(2)}}\label{sec:nexp}

In this section, we scale up our previous result to $\NEXP = \QMA^{+}(2).$
The direction that $\QMA^+(2)\subseteq \NEXP$ follows the same trivial argument 
that $\QMA(2)\subseteq \NEXP$---guess the quantum proofs. Our focus will be on the other direction. The starting point would be a $\PCP$ for $\NEXP$.
For the moment, we abstract things out and focus on the constraints satisfaction problem (CSP) with the understanding that the CSP system will come from the corresponding $\PCP$.

\begin{definition}
An $(N,R,q,\Sigma)$-CSP system $\cons$ on $N$ variables with values
in $\Sigma$ consists of a set (possibly a multi-set) of $R$ constraints
$\{\pred_{1},\pred_{2},\ldots,\pred_{R}\}$, and the arity of each
constraint is exactly $q$. The value of $\cons$, denoted $\val(\cons)$, is the maximum
fraction of the satisfiable constraints over all possible assignment
$\sigma:[N]\to\Sigma$. The $(1,\delta)$-GapCSP problem is to distinguish
whether a given system $\cons$ is such that 
(\textbf{Yes}) $\val(\cons)=1$ or 
(\textbf{No}) $\val(\cons)\le\delta.$
\end{definition}

For any CSP system $\cons$, we think of a bipartite graph $G_{\cons}$
where the left vertices are the constraints and the right vertices
are the variables. Whenever a constraint queries a variable there
is an edge in the graph between the corresponding vertices. For any
$j\in[R],$ let $\AdjC(j)$ denote the list of variables that $\pred_{j}$
queries; and for any $i\in[N]$, let $\AdjV(i)$ denote the list of
constraints that query variable $i$. An efficient CSP system
$\cons$ should satisfy that for any $j\in[R],$ there is an algorithm
that compute
$\pred_{j}$ in time $\poly\log (NR)$. That includes deciding which variables
are queried by $\pred_{j}$, and based on the values of the relevant variables
compute $\pred_{j}$. For our purpose, we
require stronger properties, which we refer to as \emph{double explicitness}. Informally,
we require that given any variable $i$, we can also ``list'' the constraints that query $i$ efficiently.
\begin{definition}[Doubly explicit CSP]
For any (family of) $(N,R,q,\Sigma)$-CSP system $\cons,$ we say that  $\cons$ is doubly explicit if the following are
computable in time $\poly\log(NR)$:
\begin{enumerate}
\item The cardinality of $\AdjC(j)$ for any $j\in[R]$ and the cardinality
of $\AdjV(i)$ for any $i\in[N]$.\vspace{2mm}
\item $\AdjGloC:[R]\times[N]\to[q]$, such that $ \AdjGloC(j,i) = \iota$ if $i$ is $\iota$th variable that $\pred_{j}$ queries.\footnote{If $\pred_j$ does not query $i$, we don't care about the value of $\AdjGloC$. Similarly for $\AdjGloV$.}\vspace{2mm}
\item $\AdjLocC:[R]\times[q]\to[N]$, such that $\AdjLocC(j,\iota)$ is the $\iota$th variable that $\pred_{j}$ queries.\vspace{2mm}
\item $\AdjGloV:[N]\times[R]\to[R],$ such that $\AdjGloV(i,j)=\iota$ if $\iota$ is the index of constraint $j$ in
$\AdjV(i)$.\vspace{2mm}
\item $\AdjLocV:[N]\times[R]\to[R]$ such that for any $i\in[N]$ and $\iota\in[|\AdjV(i)|],$
let $j=\AdjLocV(i,\iota)$, then $\iota$th constraints in $\AdjV(i)$
is $\pred_{j}$. 
\end{enumerate}
\end{definition}

In words, in the bipartite graph $G_\cons$. For each vertex, say $i\in [N]$, there is a local view of its neighborhood $\AdjV(i)$. We should be able to efficiently switch from this local representation to a global representation, by $\AdjLocV(i, \cdot)$, and vice versa. 

Another property we require is the \emph{uniformity}, defined below.
\begin{definition}[$T$-Strongly uniform CSP]
For any $(N,R,q,\Sigma)$-CSP system $\cons$ and $T\in\ZZ$, we
say that $\cons$ is $T$-strongly uniform if the variable set $[N]$
can be partitioned into at most $T$ subsets $V_{1}\cup V_{2}\cup\cdots\cup V_{T}$,
such that the cardinality of $\AdjV(i)$ for any variable $i$ only
depends on which subset it belongs to. Furthermore, let $\tau:[N]\to[T]$,
such that $\tau(i)=j$ if $i\in V_{j}.$ Then $\tau(i)$ can be computed
in time $\poly\log (NR).$ 
\end{definition}

Given some $(N,R,q,\{0,1\})$-CSP system $\cons$ that is $T$-strongly uniform for some constant $T$ and is strongly explicit. Then  it is  $\NEXP$-hard to decide whether $\val(\cons)=1$ or $\val(\cons)<\delta$ for some absolute constant $\delta$. This CSP $\cons$ comes from the efficient
$\PCP$ for $\NEXP$. Although not all $\PCP$ satisfies doubly explicitness or uniformity, there is some $\PCP$ construction that enjoys these properties. We discuss such $\PCP$ in more detail and prove the related properties in \cref{sec:pcp-nexp}. 
\begin{theorem}[$\PCP$ for $\NEXP$]
\label{thm:explicit-NEXP-PCP} There is a $\PCP$ system for a $\NEXP$-complete problem, in which the verifier tosses $\poly(n)$ random bits and makes a constant number of queries to the proof $\Pi$ such that if the input is a ``Yes'' instance, then the verifier always accept; if the input is a ``no'' instance, then the verifier accepts with probability at most $\delta$ for some constant $\delta$.
Furthermore, this $\PCP$ is doubly explicit and $T$-strongly uniform for some constant $T$.
\end{theorem}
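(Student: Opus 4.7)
The plan is to obtain the PCP by carefully combining and tracking the structural properties of known efficient PCP constructions for $\NEXP$, rather than proving a new PCP theorem. I would start from a standard efficient $\PCP$ for $\NEXP$ in the style of Babai--Fortnow--Levin (or a scaled-up Dinur-style amplification), which already gives perfect completeness, constant soundness $\delta$, constant query complexity, and $\poly(n)$ randomness with a verifier that runs in time $\poly\log$ of the proof length. The theorem's content then reduces to establishing two additional structural properties: double explicitness and $T$-strong uniformity.

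For the "forward" half of double explicitness, I would observe that it is essentially free. Given a constraint index $j$ (which encodes a random string of the verifier), the verifier's query-selection procedure runs in $\poly\log(NR)$ time and writes down the queried variables; reading off this procedure yields $|\AdjC(j)|$, $\AdjLocC(j,\cdot)$, and $\AdjGloC(j,\cdot)$ in $\poly\log(NR)$ time. The delicate half is the "reverse" direction: given a variable $i\in[N]$, efficiently enumerating the $\AdjV(i)$ of constraints that query $i$. A generic PCP gives no handle on this and a naive scan costs $\Omega(R)$. I would therefore work with a PCP whose query pattern has explicit algebraic structure: queries are points on low-degree curves/lines in a space $\FF_q^m$, together with auxiliary consistency and low-degree-test tables. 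For such constructions, given a point $i$ one can enumerate the lines through $i$ (and hence the random strings that cause that point to be queried) by solving $\poly\log$-sized linear systems over $\FF_q$, giving $\AdjLocV$ and $\AdjGloV$ in $\poly\log(NR)$ time.

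For $T$-strong uniformity, I would exploit the fact that in such PCP constructions the variables naturally partition into a constant number of "types" corresponding to which sub-table of the proof they live in (e.g., the assignment-oracle table, the low-degree-extension table, the sum-check auxiliary tables, the composed inner-verifier tables). Variables of the same type have identical degree in the bipartite constraint--variable graph because the verifier's random query distribution is invariant within each type. I would define $V_1,\ldots,V_T$ as these type classes and compute $\tau(i)$ by reading off which table the index $i$ addresses, which is immediate from the layout of the proof.

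The main obstacle I expect is preserving both properties through PCP composition and amplification. Composition tends to create "glue" variables (consistency checks between outer and inner proofs) whose degrees depend on how many outer tests touch a given inner block; amplification via expander walks (Dinur-style) replaces constraints by walks on an expander and can destroy explicit inverse lookup unless the expander itself is doubly explicit. I would address this by (i) choosing explicit Cayley or LPS expanders, for which neighborhoods can be enumerated in $\poly\log$ time, so the amplified constraint graph remains doubly explicit; (ii) verifying that the inner verifier (Hadamard- or Reed--Muller-based) has a structured enough query pattern that the inverse enumeration survives composition; and (iii) checking that the composition produces only a constant number of new variable types so that the final instance remains $T$-strongly uniform for a constant $T$. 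The detailed construction and the routine but careful bookkeeping of these properties through the composition are carried out in \cref{sec:pcp-nexp}.
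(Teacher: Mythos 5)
Your high-level strategy---do not prove a new PCP theorem, instead compose known algebraically structured constructions and track the two new properties through composition---is exactly what the paper does. But two points are worth flagging. First, you blur the role of composition: the algebraic outer PCP you would start from (BFL/Harsha-style low-degree test, zero-subcube test, consistency test on lines in $\FF^m$) queries entire lines, so it has $\poly(n)$ queries, not ``constant query complexity'' as you state. Constant query complexity comes only after a single composition with a Hadamard-based inner verifier; the real work is verifying that the line-inversion trick (the paper's \cref{claim:reverse-point-lines}: enumerate and rank all lines through a given point in $\poly\log$ time) survives that composition, and that the inner Hadamard PCP's equation test is itself invertible, which requires arranging the Cook--Levin quadratic system so that its matrix $A$ has \emph{linearly independent rows} (\cref{thm:quad-eq}). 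This last subtlety is not anticipated in your sketch and is genuinely needed: without it, $\AdjLocV$/$\AdjGloV$ for the equation-test variables cannot be computed. Second, the Dinur-style amplification via expander walks that you treat as an obstacle is not part of this theorem at all; the robust outer PCP plus one Hadamard composition already gives constant soundness, so no amplification step is needed. (Explicit expanders \emph{do} appear in the paper, but only in the separate regularization of the resulting CSP in \cref{sec:explicit-regularization}, not inside the PCP construction itself.) So while your route is recognizably the paper's, the plan as written would stall at the reverse-enumeration step without the two concrete lemmas above.
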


This $\PCP$ gives rise to a $(1,\delta)$-GapCSP instances for some
$(N=2^{\poly(n)},R=2^{\poly(n)},q=O(1),\{0,1\})$-CSP system that are $T$-strongly
uniform for some constant $T$ and doubly explicit. 
In the remainder of the section, our goal is to prove the following theorem:
\begin{theorem}\label{thm:nexp-in-qma2}
For any constant strongly uniform and doubly explicit $(N,R,q,\Sigma)$-CSP
system $\cons$, there is a $\QMA^+(2)$ protocol that solves the $(1,\delta)$-GapCSP
problem for $\cons$ with constant completeness and soundness gap.
\end{theorem}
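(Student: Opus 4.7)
The plan is to design an explicit $\QMA^{+}(2)$ protocol whose honest proofs encode a satisfying assignment $\sigma\colon[N]\to\Sigma$ in two complementary ways: a variable-side state $\ket{\psi_R}=\frac{1}{\sqrt{N}}\sum_{i\in[N]}\ket{i}\ket{\sigma(i)}$ and a constraint-side state
\[
\ket{\psi_L}=\frac{1}{\sqrt{R}}\sum_{j\in[R]}\ket{j}\ket{\sigma(\AdjLocC(j,1))}\cdots\ket{\sigma(\AdjLocC(j,q))},
\]
together with the complementary/auxiliary proofs needed by the Sparsity and Validity tests of Section~\ref{sec:prop_test}. Because $N,R=2^{\poly(n)}$, each state has $\poly(n)$ qubits, so this fits in $\QMA^{+}(2)$. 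As in Sections~\ref{sec:SSE-protocol} and \ref{sec:ug-protocol}, I will work with constantly many unentangled proofs and then invoke the product-test reduction recorded in Section~\ref{sec:prelim} to obtain a two-prover protocol.

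The verifier first runs, with constant probability, Symmetry, Sparsity (with target sparsities $1/|\Sigma|$ for $\Psi_R$ and $1/|\Sigma|^{q}$ for $\Psi_L$) and Validity on both families of proofs, using the obvious generalization of $\cV$ allowing $q$-tuples on the constraint side. Passing these tests forces every representative of $\Psi_R$ (resp.\ $\Psi_L$) to be $O_\epsilon(1)$-close to some valid encoding of a classical assignment $\sigma_R\colon[N]\to\Sigma$ (resp.\ $\sigma_L\colon[R]\to\Sigma^{q}$). With constant probability the verifier then runs a local \emph{satisfaction} test: measure the first register of a representative $\ket{\psi_L}$, obtain $j$, and accept iff $\pred_{j}$ is satisfied on the $q$ values in the second register. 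This test can be carried out in time $\poly\log(NR)=\poly(n)$ by the definition of an efficient CSP, and passing it forces $\sigma_L$ to satisfy essentially all constraints.

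The crux is a \emph{global consistency} test linking $\sigma_L$ and $\sigma_R$, built in the spirit of the UG labeling test. Using $T$-strong uniformity, partition $[N]=V_1\sqcup\cdots\sqcup V_T$ so that every variable in $V_t$ has the same degree $d_t$; then the bipartite multigraph $G_{\cons}$ restricted to $V_t$ is biregular between $V_t$ and the $q$-tuples of constraint slots pointing into $V_t$. Using double explicitness, the maps $(j,\iota)\mapsto\AdjLocC(j,\iota)$ and $(i,r)\mapsto\AdjLocV(i,r)$ are both $\poly(n)$-time computable and mutually inverse (through $\AdjGloC,\AdjGloV$), so they can be implemented reversibly in place. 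This lets the verifier sample $t\in[T]$ and a random perfect matching $U_{r,t}$ between the two sides of $G_{\cons}|_{V_t}$ and implement it as a unitary in $\poly(n)$ time; applying the appropriate unfolding of $\ket{\psi_L}$ and $U_{r,t}$ to the $V_t$-restriction of $\ket{\psi_R}$ maps both into a common Hilbert space indexed by $(j,\iota)$-pairs. A swap test between the two images accepts with probability close to $1$ iff $\sigma_L(j)_\iota=\sigma_R(\AdjLocC(j,\iota))$ on almost all $(j,\iota)$ incident to $V_t$, i.e.\ iff $\sigma_L$ and $\sigma_R$ are globally consistent. Combined with the satisfaction test, this yields a classical assignment satisfying $1-o_\delta(1)$ fraction of constraints, contradicting $\val(\cons)\le\delta$ and giving a constant completeness–soundness gap.

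The main obstacle will be efficiency and correctness of $U_{r,t}$: implementing random-matching unitaries on an exponential-size graph in $\poly(n)$ time is only possible because of double explicitness together with uniformity (uniformity is what turns the local degree-dependent weights into a usable biregular structure, without which the unfolding of $\ket{\psi_R}$ and $\ket{\psi_L}$ would not match). A secondary difficulty is the soundness bookkeeping: after Symmetry/Sparsity/Validity pass, one must show that the induced classical $\sigma_L,\sigma_R$ are robustly well defined (up to $\TD$) so that passing the quantum consistency and satisfaction tests with constant probability reduces to the classical $\val(\cons)\le\delta$ bound with only constant loss in the gap. Once these are in place, converting the resulting $\QMA^{+}(k)$ protocol to $\QMA^{+}(2)$ follows from the product-test reduction of Section~\ref{sec:prelim}, completing the proof of $\NEXP\subseteq\QMA^{+}(2)$; the reverse inclusion is the standard guess-and-simulate argument.
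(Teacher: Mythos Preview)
Your two-proof approach (a variable-side $\ket{\psi_R}$ plus a constraint-side $\ket{\psi_L}$, compared via matchings) differs from the paper, which asks only for the constraint-side state and checks consistency \emph{internally} via explicit expander regularization. The consistency test you sketch has a genuine gap. After unfolding, $\ket{\psi_L}$ is a uniform superposition over all $Rq$ constraint slots, whereas $U_{r,t}$ applied to the $V_t$-restriction of $\ket{\psi_R}$ is supported on only $|V_t|$ slots (one per variable in $V_t$, since the ``biregular'' graph you describe has degree~$1$ on the slot side), so even with perfect agreement their swap-test overlap is at most $|V_t|/(Rq)$, not close to~$1$. If instead you also restrict the $\psi_L$-side to the same $|V_t|$ slots by measurement, the induced distribution on slots depends on how $t\in[T]$ is sampled; to make it coincide with the uniform slot measure implicit in the satisfaction test you must sample $t$ with probability $|V_t|d_t/(Rq)$, and the strongly-uniform hypothesis gives efficient access to $d_t$ and $\tau$ but not to $|V_t|$. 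More generally, any unitary unfolding of the uniform $\ket{\psi_R}$ into slot-space produces amplitude $1/\sqrt{N d_{\tau(i)}}$ on each slot, which matches the $1/\sqrt{Rq}$ amplitude coming from $\ket{\psi_L}$ only when all $d_t$ coincide, i.e., when $T=1$.

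The paper circumvents this by dropping $\ket{\psi_R}$ altogether. It first performs explicit regularization (\cref{sec:explicit-regularization}): each variable $i$ is replaced by a cluster of $|\AdjV(i)|$ copies connected by an explicit $d$-regular expander with a $\poly(n)$-computable matching decomposition, the cluster size being determined from primes $p_1,\ldots,p_T$ the prover supplies (one per type, which is where strong uniformity is actually used). Two copies of the single constraint-side proof are unfolded by a unitary $\cA$ into uniform superpositions over all $Rq$ slots; an expander-matching unitary $\cM_k$ is applied to one copy; after a measurement that disentangles the original value register, the two states automatically have identical supports and amplitudes, so the swap test checks precisely whether slots in the same cluster carry the same value. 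Soundness then reduces to the regularized CSP (\cref{claim:explicit-regularization}) with no weighting issue.
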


\cref{thm:explicit-NEXP-PCP} together with~\cref{thm:nexp-in-qma2} imply that
\begin{theorem}
$\NEXP \subseteq \QMA^+(2)$ with constant completeness and soundness gap.
\end{theorem}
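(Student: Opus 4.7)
The plan is to lift the UG protocol of \cref{sec:ug-protocol} to the bipartite constraint-variable graph $G_\cons$ of the CSP $\cons$, viewing verification of $\cons$ as a label-cover--style problem between constraints (left) and variables (right). Since $N$ and $R$ are exponential in the input, every operation performed by the verifier must be implementable via the polylog-time oracles $\AdjLocC, \AdjGloC, \AdjLocV, \AdjGloV$; doubly explicitness is exactly what turns the global bipartite-graph manipulations used previously for SSE and UG into efficient unitaries acting coherently on superpositions of vertices. The $T$-strongly uniform assumption with constant $T$ plays the role regularity played in \cref{sec:SSE-protocol,sec:ug-protocol}: the partition map $\tau:[N]\to[T]$ lets us equalize right-side amplitudes blockwise at only a constant blowup in the gap.

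The $\QMA^+(2)$ protocol expects constantly many unentangled copies (on $O(\log NR)$ qubits, with non-negative amplitudes) of the two states
\begin{align*}
  \ket{\psi_L} = \frac{1}{\sqrt R}\sum_{j\in[R]}\ket{j}\ket{\sigma_j}, \qquad \ket{\psi_R} = \frac{1}{\sqrt N}\sum_{i\in[N]}\ket{i}\ket{\sigma(i)},
\end{align*}
where $\sigma$ is a satisfying assignment of $\cons$ and $\sigma_j\in\Sigma^q$ is its restriction to the variables queried by $\pred_j$. The verifier picks uniformly at random among symmetry, sparsity, and validity tests from \cref{sec:prop_test} applied separately to the two supplies of proofs---where the left-side validity test additionally projects the $\Sigma^q$ register onto locally satisfying assignments of $\pred_j$, computable in polylog time since $\pred_j$ is---and a new \emph{edge consistency test}. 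For the latter, we build efficient unitaries $\Gamma_L, \Gamma_R$ so that, with their junk registers canonicalized to $\ket{j}$ on both sides,
\begin{align*}
  \Gamma_L\ket{\psi_L} = \frac{1}{\sqrt{Rq}}\!\!\sum_{(j,i)\in E(G_\cons)}\!\!\ket{i}\ket{\sigma_j(i)}\ket{j}, \quad \Gamma_R\ket{\psi_R} = \frac{1}{\sqrt{Rq}}\!\!\sum_{(j,i)\in E(G_\cons)}\!\!\ket{i}\ket{\sigma(i)}\ket{j},
\end{align*}
and then apply a swap test between them. Here $\AdjLocC$ drives $\Gamma_L$'s coherent expansion over the $q$ queries of each constraint, while $\AdjLocV$ together with $\tau$ drives $\Gamma_R$'s expansion over the (degree-$T$-many-valued) constraints incident to each variable; $\AdjGloV$ and $\AdjGloC$ are used to write the $\ket{j}$ label in the same canonical form on both sides so that the junk registers align exactly.

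For completeness, the property tests accept with probability $1-o(1)$ by \cref{thm:symmetry-test,thm:sparsity-test-target,thm:validity-test}, and $\Gamma_L\ket{\psi_L}=\Gamma_R\ket{\psi_R}$ in the honest case, so the swap test accepts with probability $1$. For soundness, passing the first three families of tests with non-negligible probability forces the proofs to lie within a small constant trace distance of honest encodings of some $\ell_L$ (per-constraint satisfying) and $\ell_R\colon[N]\to\Sigma$; the hypothesis $\val(\cons)\le\delta$ then gives that $\ell_R$ fails on at least $(1-\delta)R$ constraints, and each failing constraint contributes at least one disagreeing edge between the two global sums, so $\Gamma_L\ket{\psi_L}$ and $\Gamma_R\ket{\psi_R}$ differ on at least a $(1-\delta)/q$ fraction of the $Rq$ basis terms. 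This bounds the swap-test acceptance by a constant strictly below $1$; together with \cref{fact:trace_norm_acc} to pass from the honest to the near-honest states, the overall acceptance stays a constant below $1$.

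I expect the main obstacle to be the construction of $\Gamma_L$ and $\Gamma_R$ as genuine unitaries whose junk registers align perfectly: doubly explicitness is essential here, as computing the canonical edge label from either endpoint coherently requires inverting $\AdjGloC(j,\cdot)$ and $\AdjGloV(i,\cdot)$ in superposition, and the $T>1$ case forces a controlled expansion over $T$ possible degrees using $\tau$. A second, more delicate point is making the left-side validity test work over the nonuniform alphabet of locally satisfying assignments per constraint; this will be handled by projecting onto the per-$j$ satisfying subspace before running the test of \cref{thm:validity-test}. All remaining pieces---the sparsity/validity analysis and the eventual reduction to two provers via \cref{thm:HM}---carry over from \cref{sec:ug-protocol} essentially verbatim.
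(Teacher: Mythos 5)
Your architecture is genuinely different from the paper's, but it has a gap that I believe is fatal as stated. You propose two proof states, $\ket{\psi_L}$ over constraints and $\ket{\psi_R}$ over variables, and then two unitaries $\Gamma_L,\Gamma_R$ mapping each into a common superposition over edges of $G_\cons$. The problem is $\Gamma_R$. You want $\Gamma_R$ to send $\frac{1}{\sqrt N}\sum_i\ket i\ket{\sigma(i)}$ to $\frac{1}{\sqrt{Rq}}\sum_{(j,i)\in E}\ket i\ket{\sigma(i)}\ket j$. Since $\Gamma_R$ does not mix distinct $\ket i$'s, on each $\ket i\ket{\sigma(i)}$ it must map to $\sqrt{N/(Rq)}\sum_{j\in\AdjV(i)}\ket i\ket{\sigma(i)}\ket j$, which has squared norm $N\,|\AdjV(i)|/(Rq)$. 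This is an isometry only if $|\AdjV(i)|=Rq/N$ for all $i$, i.e.\ if the constraint--variable bipartite graph is right-regular. For the PCP-derived CSP it is not, and the degrees vary across an exponential range. Your stated fix---use $\tau$ and $T$-strong uniformity to ``equalize right-side amplitudes blockwise''---does not resolve this: $T$-strong uniformity only says there are $T$ \emph{classes} of degrees, not that the degrees themselves are close (or constant), so amplitudes on different classes still differ by possibly exponential factors, and no unitary applies different scalings to different computational basis components. (If you instead ask the prover to send the degree-weighted state $\sum_i\sqrt{d_i/(Rq)}\ket i\ket{\sigma(i)}$, then $\Gamma_R$ can be an isometry, but that state is not a subset state and your sparsity/validity analyses no longer apply to it; you would need a separate, degree-aware testing machinery.)

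This is precisely the point the paper spends a whole subsection on: the \emph{explicit regularization} step (\cref{sec:explicit-regularization}), which blows up each variable $i$ into a cluster of $|\AdjV(i)|$ ``local'' copies, attaches explicit bounded-degree expander graphs (plus self-loops to handle non-prime cluster sizes via \cref{thm:primes}), and associates equality constraints to the expander edges so that the regularized CSP is constant-degree and essentially preserves the gap (\cref{claim:explicit-regularization}). In fact, after this regularization the paper's protocol does \emph{not} use a separate right-side proof at all: it only asks for the constraint-side state $\sum_j\ket j\ket{v_j}$, uses $\cA$ to coherently expand to a superposition over (constraint, queried variable, value) triples, and then uses the expanders' matching decomposition $\cM_k$ to compare two such expansions -- exactly as in the UG protocol, where the regularizing expanders supply the permutations. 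Constraint satisfaction itself is checked by a separate inner test ($\cB$) measuring $\pred_j(v_j)$, rather than by projecting inside the validity test as you suggest (your modification of the validity test is at best unanalyzed: the test as proved in \cref{thm:validity-test} uses the Fourier transform to detect non-uniqueness of the value register, and is not obviously compatible with an additional projection onto per-$j$ satisfying subspaces).

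So while your two-sided $\Gamma_L/\Gamma_R$ picture captures the right intuition (and matches the overview sketch in \cref{sec:overview:nexp}), you are missing the regularization ingredient that makes the edge-superposition a legal unitary image, and you have not shown how to adapt the property tests to either the degree-weighted right state or the modified validity check. Those are not cosmetic omissions; they are the central technical content of \cref{sec:nexp}.
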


In the next three subsections, we prove \cref{thm:nexp-in-qma2}.

\subsection{Explicit Regularization}
\label{sec:explicit-regularization}
The first step towards proving~\cref{thm:nexp-in-qma2} is regularization for the CSP $\cons$, very much like that in~\cref{thm:regularization}.
The main technical issue is that everything happening in the previous case needs to be efficient for the exponentially large expander graphs. Fortunately, explicit constructions of expander graphs are very well-studied.

\begin{theorem}[Explicit regular expander graphs~\cite{Lubotzky11,Alon21}]
\label{thm:explicit-expander}There is some constant
$d$, for which we have the following explicit constructions on expander graphs with
Cheeger constant at least $2$:
\begin{enumerate}
\item \label{enu:alon-expander}For any $n$, there is a $d$-regular expander
graph on $n$ vertices. 
\item \label{enu:lubotzky-expander}For any prime $p>17$, there exists
a $d$-regular expander graph on $n=p(p^{2}-1)$ vertices.
Furthermore, the graph $G$ can be decomposed into $d$ matchings
$\pi_{1},\pi_{2},\ldots,\pi_{d}$, such that given $i\in[n]$ and
$j\in [d]$, there is a $\poly\log(n)$-time algorithm
$\Pi_{G}:[n]\times[d]\to[n]$, such that 
\[
\Pi_{G}(i,j)=\pi_{j}(i).
\]
\end{enumerate}
For both constructions, given $i\in[n]$, the neighbors of $i$ can be listed in time $\poly\log(n)$.
\end{theorem}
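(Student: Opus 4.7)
The plan is to assemble both items from known explicit Ramanujan/near-Ramanujan constructions, and to verify that each construction can be manipulated algorithmically in $\poly\log(n)$ time with the claimed matching decomposition.

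\textbf{Item 2 (prime sizes $p(p^2-1)$).} I will invoke the Lubotzky--Phillips--Sarnak construction of Ramanujan graphs as Cayley graphs of $PGL_2(\mathbb{F}_p)$, whose order is exactly $p(p^2-1)$. For a fixed auxiliary prime $q$ (distinct from $p$ and with suitable quadratic residue condition), one uses an explicit symmetric generating set $S = \{g_1,g_1^{-1},\dots,g_{(q+1)/2},g_{(q+1)/2}^{-1}\}$ of size $q+1$ arising from integer quaternions of norm $q$. Taking $d = q+1$ gives a $d$-regular graph whose second eigenvalue is at most $2\sqrt{d-1}$, which in particular implies Cheeger constant at least $2$ once $d$ is chosen sufficiently large (by Cheeger's inequality $h(G) \ge (d-\lambda_2)/2$). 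Each generator $g_\ell$ together with its inverse $g_\ell^{-1}$ naturally induces a perfect matching $\pi_\ell$ of the graph, giving the decomposition into $d$ matchings. The key algorithmic observation is that vertices of $PGL_2(\mathbb{F}_p)$ are represented by $2\times 2$ matrices over $\mathbb{F}_p$ modulo scalars, so $\Pi_G(i,j)$ is just a matrix--matrix product over $\mathbb{F}_p$, which runs in $\poly\log(p) = \poly\log(n)$ time; listing neighbors of $i$ is likewise $d$ such products.

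\textbf{Item 1 (arbitrary $n$).} Here I will cite Alon's construction of near-Ramanujan $d$-regular graphs on $n$ vertices for every $n$, which is built by an explicit derandomization of the random lift/covering construction and runs in time $\poly(n)$ to produce the graph together with a neighbor oracle. Since the input size in our application is $\log n$, we actually need a bit more care: it suffices that a neighbor oracle can be evaluated in $\poly\log(n)$ time given only an index $i$. For this one can take the LPS graph from item~2 on the nearest admissible size $p(p^2-1) \ge n$ (by Bertrand's postulate such a prime lies within a constant factor) and handle the ``leftover'' vertices by contracting/identifying a constant-factor fraction, using a standard surgery that preserves regularity and the spectral gap up to constants; alternatively, the zig-zag-based expanders of Reingold--Vadhan--Wigderson give an explicit $d$-regular expander on any $n$ with locally computable neighbor function. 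Either way, one verifies that the Cheeger constant remains at least $2$ (possibly after inflating the degree by a constant) and that neighbor listing remains $\poly\log(n)$.

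\textbf{Where the work concentrates.} The genuinely nontrivial part is the local-computability / matching-decomposition claim in item~2: writing the generators of $S$ as explicit integer quaternions, verifying that multiplication is well-defined in $PGL_2(\mathbb{F}_p)$, and arguing that the pairing $g \leftrightarrow g^{-1}$ yields a legal edge-coloring with $d$ colors (which reduces to checking that no generator is its own inverse, i.e.\ that $-I \notin S$). The spectral bound itself is quoted as a black box from~\cite{Lubotzky11}; converting second-eigenvalue bounds into the Cheeger bound of $2$ is a direct application of the discrete Cheeger inequality and only forces $d$ to exceed a universal constant. For item~1 the main subtlety is only that one must choose between Alon's construction and an LPS-based padding argument while respecting the $\poly\log(n)$ neighbor-oracle budget; no new ideas are needed beyond what appears in the cited references.
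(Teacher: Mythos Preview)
The paper does not prove this theorem; it is stated as a citation of known constructions from \cite{Lubotzky11} and \cite{Alon21} and used as a black box. Your proposal is therefore not competing with any argument in the paper but rather sketching what the cited references provide.

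Your sketch for item~2 is on the right track (LPS Cayley graphs on $PGL_2(\mathbb{F}_p)$, matrix arithmetic over $\mathbb{F}_p$ for the neighbor oracle), but the matching-decomposition paragraph is confused. You write that ``each generator $g_\ell$ together with its inverse $g_\ell^{-1}$ naturally induces a perfect matching,'' and then worry about whether $-I \in S$. This is not the right picture: a single generator $g$ of a Cayley graph defines a permutation $\pi_g \colon x \mapsto gx$, not a perfect matching, and pairing $g$ with $g^{-1}$ yields a $2$-regular subgraph (a union of cycles), still not a matching. What the theorem actually asks for---and what the paper uses downstream, see the definition of $\cM_k$ in Section~\ref{sec:nexp} and the analogous decomposition in the UG protocol---is simply $d$ permutations $\pi_1,\dots,\pi_d$ whose associated directed edge sets cover the graph. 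That is immediate from the Cayley structure by taking one permutation per element of the symmetric generating set $S$ of size $d=q+1$; no involution check or edge-coloring argument is needed.

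For item~1 your hedging about whether Alon's construction really gives a $\poly\log(n)$ neighbor oracle is reasonable in isolation, but it slightly misses how the paper deploys the result: in Section~\ref{sec:explicit-regularization} the Alon construction is invoked only for cluster sizes $n_i$ below a fixed absolute constant $n_0$, so the $\poly\log(n)$ requirement is vacuous there (constant-size lookup suffices). The zig-zag or LPS-plus-padding alternatives you propose are fine in principle but unnecessary for the paper's application.
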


Since the second construction of expander graphs from the above theorem does not work for any number of vertices, we also need the following theorem about primes in short intervals.
\begin{theorem}[Primes in short intervals~\cite{Cheng10primes}]
\label{thm:primes} There is some absolute constant $n_{0}$, such
that for any integer $n>n_{0},$ there is a prime between the interval
$[n-4n^{2/3},n]$.
\end{theorem}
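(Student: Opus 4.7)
The plan is to reduce the existence of a prime in $[n-4n^{2/3}, n]$ to a lower bound on the Chebyshev function difference $\psi(n) - \psi(n - 4n^{2/3})$, where $\psi(x) = \sum_{p^k \le x} \log p$. If one can show this difference is strictly positive (in fact comparable to $4n^{2/3}\log n$), then since prime-power contributions with $k \ge 2$ amount to only $O(\sqrt{n}\log n)$, the interval must contain an honest prime. This reduces the theorem to a short-interval estimate for $\psi$ with an explicit constant.

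The heart of the argument is the Riemann–von Mangoldt explicit formula:
\[
\psi(x) = x - \sum_{\rho} \frac{x^\rho}{\rho} + O(\log^2 x),
\]
where $\rho = \beta + i\gamma$ runs over the non-trivial zeros of $\zeta(s)$. First I would write
\[
\psi(n) - \psi(n-h) = h - \sum_{\rho} \frac{n^\rho - (n-h)^\rho}{\rho} + O(\log^2 n)
\]
for $h = 4n^{2/3}$, and then bound the zero-sum using (i) an explicit zero-free region of Korobov–Vinogradov (or Ford) type of the form $\beta \le 1 - c/(\log|\gamma|)^{2/3}(\log\log|\gamma|)^{1/3}$, and (ii) an explicit zero-density estimate $N(\sigma, T) \ll T^{A(1-\sigma)}(\log T)^B$. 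The standard density bound with $A = 12/5$ (Huxley) comfortably yields prime intervals of length $x^{7/12+\epsilon}$, so $x^{2/3}$ is within range — the point of the proof is to make the constant $4$ rigorous, not just the exponent.

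Concretely, the key steps in order are: (1) truncate the sum over zeros to $|\gamma| \le T$ with $T$ a suitable power of $n$, controlling the tail via partial summation; (2) split the truncated sum into dyadic ranges in $\sigma = \beta$ and apply the density estimate on each range, using $|n^\rho - (n-h)^\rho| \le h n^{\beta - 1}$ for the relevant $\rho$; (3) optimize the choice of $T$ against $h = 4n^{2/3}$ so that the zero-sum is bounded by, say, $h/2$ for all sufficiently large $n$; and (4) collect constants to verify that the coefficient $4$ works once $n \ge n_0$ for some effectively computable $n_0$.

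The main obstacle is step (4), the bookkeeping of explicit constants. The qualitative existence of such an interval follows quickly from classical density theorems, but pinning down the constant $4$ requires delicate numerical control over the Vinogradov zero-free region, explicit forms of the density theorem (following Ramar\'e–Saouter or Kadiri), and careful estimation of the tail and the error term in the explicit formula — precisely the technical work carried out in \cite{Cheng10primes}, which we therefore cite rather than reproduce.
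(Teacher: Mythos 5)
The paper gives no proof for this theorem; it is stated as a direct citation to \cite{Cheng10primes}. Your sketch correctly identifies the standard machinery — the explicit formula for $\psi$, a Korobov--Vinogradov zero-free region, and explicit zero-density estimates, with the real work being the numerical bookkeeping to pin down the constant and threshold $n_0$ — which is exactly the approach of the cited reference (whose headline result is a prime between consecutive cubes for all sufficiently large $n$, equivalent to the interval form quoted here since $(m+1)^3 - m^3 \sim 3\,(m^3)^{2/3}$). Deferring the explicit constants to the citation is the appropriate move; both you and the paper treat this as a black box.
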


With the above tools at our disposal, we discuss the explicit regularization for this exponentially large CSP $\cons$. Replace the variable $i$ with a cluster of variables labeled $(i,\iota)$
for $\iota\in[n_{i}]$, where $n_i = |\AdjV(i)|$. If $n_i < n_0 $ for some absolute constant $n_0$ (this can be a larger constant than that in~\cref{thm:primes}), then we can simply use the expander graph provided by~\cref{thm:explicit-expander}~\ref{enu:alon-expander}. For $n_i \ge n_0$, we use the expander graph provided by~\cref{thm:explicit-expander}~\ref{enu:lubotzky-expander}. In particular, let $p_i$ be some prime such that
\begin{align*}p_{i}\in[\lfloor n_{i}^{1/3}\rfloor-4\lfloor n_{i}^{1/3}\rfloor^{2/3},\lfloor n_{i}^{1/3}\rfloor].
\end{align*}
The existence of $p_i$ is guaranteed by~\cref{thm:primes}. Let $n'_i:= p_i (p_i^2-1) \in [n-O(n^{8/9}), n],$ and let 
\begin{align*}
 & V'_{i}=\{(i,j):j\le n_{i}'\},\\
 & V_{i}''=\{(i,j):n_{i}'<j\le n_{i}\}.
\end{align*}
Depending on $n_0$, $|V_i''| \le \eta n_i$ for $\eta=\eta(n_0)$. As we set $n_0$ to be a large enough constant, $\eta$ is arbitrarily small.
Connect the vertices in $V'_{i}$ by a $d$-regular expander graph
$G_{i}$, whose existence is guaranteed by~\cref{thm:explicit-expander}~\ref{enu:lubotzky-expander}. For all vertices in $V_{i}''$, add $d$ self-loops. Associate
these edges with equality constraints. Let $\cons'$ denote the new
CSP instance. Recall that $q$ is the number of variables queried by each constraint in $\cons$

\begin{claim}
\label{claim:explicit-regularization}
If $\val(\cons)=1$, then $\val(\cons')=1$. If $\val(\cons)=\delta<1$, then the total number of unsatisfied constraints in $\cons'$ is at least $(1-\delta-q\eta)R$.
\end{claim}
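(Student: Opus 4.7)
The plan is to handle completeness and soundness separately. For completeness, suppose $\val(\cons)=1$ and let $\sigma:[N]\to\Sigma$ be a satisfying assignment. I would lift it to $\cons'$ by setting $\sigma'(i,\iota):=\sigma(i)$ uniformly across each cluster; then every equality edge of $G_i$ and every self-loop on $V_i''$ is trivially satisfied, and every copy of an original constraint $\pred_j$ is satisfied because its values are precisely those of $\sigma$ on the variables queried by $\pred_j$.

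For soundness, fix an arbitrary assignment $\sigma'$ to $\cons'$ and induce an assignment $\sigma$ of $\cons$ by letting $\sigma(i)$ be any plurality value of $\sigma'$ on $V_i'$, breaking ties arbitrarily. Since $\val(\cons)\le\delta$, the set $U\subseteq[R]$ of original constraints unsatisfied by $\sigma$ has $|U|\ge(1-\delta)R$. I would then partition $U=U_A\sqcup U_B$, where $U_A$ contains those $j$ whose unique copy in $\cons'$ is already unsatisfied by $\sigma'$, and $U_B$ contains the rest. For each $j\in U_B$, $\sigma'$ satisfies the copy while $\sigma$ does not, so the two assignments must disagree on at least one of the (at most) $q$ copies $(i,\iota)$ queried by $j$'s copy; call any such $(i,\iota)$ a \emph{witness} for $j$. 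The crucial structural observation is that by construction each copy $(i,\iota)$ is queried by exactly one copy-of-a-constraint in $\cons'$ (namely the $\iota$-th element of $\AdjV(i)$), so selecting one witness per $j\in U_B$ produces an injection from $U_B$ into $\bigsqcup_i \{(i,\iota):\iota\in[n_i]\}$.

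To conclude, split $U_B=U_B^{\textup{good}}\sqcup U_B^{\textup{bad}}$ according to whether the chosen witness lies in $V_i'$ or $V_i''$. Each $V_i'$-witness is, by choice of $\sigma$, a non-plurality vertex of $\sigma'$ in $V_i'$, so invoking the expander lemma from~\cref{sec:regularization} (each $G_i$ has Cheeger constant at least $2$) on every $G_i$ and summing shows that the total number of disagreeing equality edges is at least the total non-plurality count, and hence at least $|U_B^{\textup{good}}|$ by injectivity of the witness map. Meanwhile, the same injectivity together with $|V_i''|\le\eta n_i$ and the handshake identity $\sum_i n_i=qR$ yields $|U_B^{\textup{bad}}|\le\bigl|\bigcup_i V_i''\bigr|\le\eta q R$. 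Adding up the unsatisfied original constraints and the unsatisfied equality edges (two disjoint classes of constraints in $\cons'$) gives
\begin{align*}
  \#\{\textup{unsat.\ constraints in }\cons'\}\;\ge\;|U_A|+|U_B^{\textup{good}}|\;=\;|U|-|U_B^{\textup{bad}}|\;\ge\;(1-\delta-q\eta)R,
\end{align*}
which is exactly the claimed bound. The main subtlety I expect is getting the copy/constraint bijection right so that the witness map is injective; once that is in place, the rest reduces to a cluster-wise application of the expander lemma and the simple counting above.
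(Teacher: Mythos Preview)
Your proof is correct and follows essentially the same approach as the paper's: both take the plurality/majority value on each cluster to induce an assignment for $\cons$, split the unsatisfied original constraints into those still unsatisfied in $\cons'$ versus those forcing a non-plurality witness, invoke the Cheeger-constant lemma from \cref{sec:regularization} on each $G_i$ to convert $V_i'$-witnesses into disagreeing expander edges, and absorb the $V_i''$-witnesses into the $q\eta R$ loss term. Your write-up is more explicit about the injectivity of the witness map (which the paper leaves implicit via the bijection between $(i,\iota)$ and the $\iota$th entry of $\AdjV(i)$), but the underlying argument is identical.
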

\begin{proof}
The analysis is similar to that of \cref{thm:regularization}. If $\val(\cons)=1$, then just assign the same label to all variables in $V_i'$ and $V_i''$ based on the correct label for $\cons$.
If $\val(\cons)<1$, whenever some constraints $\pred_{i}\in\cons$ is not satisfied by
the majority labeling for the queried variables, then either (1) $\pred_{i}$
is still not satisfied in $\cons'$ for the corresponding constraint or (2) at least one of the queried variables is not
colored by the majority label. The difference in the current case from that of \cref{thm:regularization} is that all the variables from  $V''_i$ can have arbitrary values without hurting any equality constraints. Since  $|V''_i|\le \eta n_i$ for any $i\in[N]$, in total
\[
    \left |\bigcup _{i\in [N]} V''_i \right| \le  \eta \sum_{i\in[N]} n_i = \eta q R.
\]
Therefore the total number of unsatisfied constraints is at least 
$(1-\delta - q\eta) R$.
\end{proof}

\subsection{The Protocol}
In the protocol, the provers are supposed to provide the following state:
\begin{equation}
    \ket\psi = \sum _ {j\in [R]} \ket j \ket {v_j}, \label{eq:csp-encoding}
\end{equation}
where $v_j \in \C ^{|\Sigma^q|}$, which should indicate that the $q$ variables with order listed in
$\AdjC(j)$ queried by $\pred_{j}$ have value $v_{j,1},v_{j,2},\ldots,v_{j,q}$,
respectively. This way of encoding is very convenient for evaluating whether
each constraint is satisfied or not. But requires some work to make sure that
the values $v_j$ are consistent: Different constraints will share variables and the 
value of any variable across different constraints should be the same. Recall that,
in the previous section when we discuss the regularization step for our CSP $\cons$ with 
variable set $V=[N]$ and constraints $\pred_1, \ldots, \pred_R$, from which we obtain a 
new CSP $\cons'$ such that each variable appears in exactly $d$ number of the new 
constraints. Furthermore, a new variable in $\cons'$ will be a tuple composed of a 
variable $i\in V$ and a constraint 
$\pred_j$ that queries $i$. 
Therefore, our way of encoding in (\ref{eq:csp-encoding}), in a sense, is to write the 
superpositions of the new variables along with their values in the regularized CSP. 

Let $n_{1},n_{2},\ldots,n_{T}$ be the cardinalities of $\AdjV(i_{1}),\AdjV(i_{2}),\ldots\AdjV(i_{T})$
where $i_{1},i_{2},\ldots,i_{T}$ are arbitrary variables from $V_{1},V_{2},\ldots,V_{T}$,
respectively. 
Next, we describe our protocol for the CSP instance that we have.

\noindent\fbox
{\begin{myalg}
[Protocol for strongly uniform and doubly explicit CSP]
\ignorespacesafterend
Let $\epsilon$ be some small enough constant, and $k$ some large enough constant.

\textbf{Prover} provides:
\begin{enumerate}
\item $T$ primes $p_{1},p_{2},\ldots,p_{T}$, such that $p_{i}\in[\lfloor n_{i}^{1/3}\rfloor-4\lfloor n_{i}^{1/3}\rfloor^{2/3},\lfloor n_{i}^{1/3}\rfloor].$ 
\item $\Psi:=2k$ copies of the state
\[
\sum_{j\in [R]}|j\rangle|v_j \rangle, \qquad \forall j\in [R],\, v_j\in \Sigma^q.
\]
\item $\Phi:=2k$ copies of the state
\[
\sum_{j\in[R]}|j\rangle\sum_{v\in\Sigma^{q}:v\not=v_j}\frac{|v\rangle}{\sqrt{|\Sigma|^{q}-1}}.
\]
\end{enumerate}
\textbf{Verifier}:
\begin{enumerate}
\item Test if $p_{1},p_{2},\ldots,p_{T}$ are primes satisfying the size
constraints, \emph{reject} if not.
\item Symmetry test on $\Psi$ and $\Phi.$
\item Sparsity test II on $(\Psi,\Phi)$ with target sparsity $|\Sigma|^{-q}$ and precision $\epsilon$ 
\item Validity test on $\Psi$.
\item Constraints test $\Psi$.
\end{enumerate}
\end{myalg}}

To remove any ambiguity, when taking the validity test, the valid set will be
\[
    \cV := \left\{
        \sum_{j\in[R]} \ket j \ket {v_j}: v_j \in \Sigma^q
    \right\}.
\]
Since $\Sigma$ is of constant size, and $q$ is a constant, $\Sigma^q$ is still of constant size.

The constraints test will be used to check the new constraints of our CSP after the regularization. But before we formally describe the constraints test, we make some preparations. Let $ H = \C^R \otimes \C^{|\Sigma|^q} \otimes \C^N \otimes \C^{|\Sigma|}$. 
The first register is the constraint register. The second register is used to encode the 
values of the $q$ variables queried by the constraint stored in the first register. The 
third register is the variable register to store the variable name. The last register is 
used to store the value of the variable in the third register.
Now we define three quantum channels that will be used to manipulate our state in the constraints test.
\begin{itemize}[label={$\bullet$}]
    \item $\cA$, the operator that converts a given state from (\ref{eq:csp-encoding}) to an actual superposition of the new variables from $\cons'$ together with their values.
    \item $\cM_k$ for $k\in [d]$, the operator that ``implements'' the $k$th one after we decompose the d-regular expander graphs into matchings.
    \item $\cB$, the operator that given $\ket j\ket {v_j}$, evaluates if $\pred_j$ outputs $1$ if the values of the variables it queries are given by the string $v_j$.
\end{itemize}

Precisely, let $\cB$ acting on $\C^R\otimes \C^{q|\Sigma|} \otimes \C ^2$ be such that
\[
    \cB: \ket j \ket v \ket 0 \mapsto \ket j \ket v \ket {\pred_j(v)}.
\]
Recall that the constraints of $\cons'$ consist of that from $\cons$ and the consistency constraints induced by the expander graphs and self-loops we add. As $\cB$ checks if the value $v$ satisfies the constraints $\pred_j$, it takes care of the first kind of constraints of $\cons'$.

Define the operator $\cA$ acting on $ H = \C^R \otimes \C^{|\Sigma|^q} \otimes \C^N \otimes \C^{|\Sigma|}$ such that
\[
    \cA: |j\rangle|v\rangle|0\rangle\ket{0} \mapsto \frac{1}{\sqrt q} \sum_{\iota=1}^{q} |j\rangle\ket v  \ket{i_\iota} \ket{v_\iota},
\]
where $x_{i_1}, x_{i_2},\ldots, x_{i_q}$ are the variables listed in $\AdjC(j)$. In words, given the constraints $j$, and the values $v$ to the variables that $j$ queries, we put the third and fourth register (the variable register) into the superposition of the variables in $\AdjC(j)$ together with their value based on $v$. 

Next, we define $\cM$ formally. Recall that for any variable $i\in[N]$, after regularization, the set of variables constructed from $i$ includes
\begin{align*}
 & V'_{i}=\{(i,j):j\le n_{i}'\},\\
 & V_{i}''=\{(i,j):n_{i}'<j\le n_{i}\}.
\end{align*}
The new constraints include an expander $G_i$ on $V'$ and self-loops on $V_i ''$. We can decompose $G_i$ into $d$ matchings, and for variables in $V''_i$, they are matched with themselves. 
For any $k\in [d]$, let $\cM_k$ be the operator such that:
\begin{align*}
    \cM_k: \ket j \ket v \ket i \ket {v'} \mapsto 
        \ket {j'} \ket v \ket i \ket {v'},
\end{align*}
where 
\begin{align} \label{eq:j-p}
 j'=\begin{cases}
    \AdjLocV(i,\Pi_{G_{i}}(\iota,k)), & \iota\le n_{i}',\\
    j, & \text{otherwise},
\end{cases}
\end{align}
\[
\iota=\AdjGloV(i,j).
\]
That is, suppose we take the $k$th matching to permute the variables in $\cons'$, then $j'$ in (\ref{eq:j-p}) determines that $(i,j)\in\cons'$ should be switched to $(i,j')\in \cons'$. But the expander graphs are labeled by $\{1,2,\ldots,n_i'\}$, corresponding to indices of $\AdjV(i)$, to obtain the actual constraint $\pred_{j'}$, we need to convert from local index to global index, and later convert it back. 

$\cA$ together with $\cM_k$ takes care of the consistency constraints just like how we do it for
UG games. Take a pair of equal states $\ket \psi$ and $\ket\phi$ supposed to be valid. Apply $\cA$ to both states. But apply $\cM_k$ only to $\ket\phi$. Now the two states are equal if the original states encode a consistent value for all constraints, except we should ignore the second register. To get rid of the second register, we make a measurement. In particular, let
\[
    \ket \mu = \frac{1}{|\Sigma|^{-q}}\sum_{v\in \Sigma^q} \ket v. 
\]
Consider the measurement $M=\{\Pi_{\ket\mu \langle \mu|}, 1- \Pi_{\ket\mu \langle \mu|} \}$.
It's easy to see that after the measurement, with probability $p=|\Sigma|^{-q}$, the 
second register is set to $\ket\mu$ and thus disentangled from the other registers.
Since we have a larger number of provers, with $p$ fraction of proofs left is enough.

The next claim certifies that $\cA, \cM, \cB$ are all valid quantum operations.
\begin{claim}
$\cA, \cB, \cM_k$ can be implemented by $\BQP$ circuits.
\end{claim}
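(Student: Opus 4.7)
The plan is to verify, one operator at a time, that each map is (i) a well-defined unitary (or an isometry that extends to a unitary on a slightly larger register) and (ii) implementable by a classical reversible circuit of size $\poly\log(NR)$ lifted to a quantum circuit in the standard way. Throughout, we rely on the double explicitness and strong uniformity of $\cons$, together with the explicit expander construction of \cref{thm:explicit-expander}\ref{enu:lubotzky-expander} (whose matching maps $\Pi_{G}$ are computable in $\poly\log(n)$ time).

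First I would dispatch $\cB$. Given $\ket{j}\ket{v}\ket{0}$, we compute $\pred_j(v)$ into the last qubit by evaluating $\pred_j$ on the classical strings $j,v$; because $\cons$ is doubly explicit, $\pred_j(v)$ is computable in $\poly\log(NR)$ time, hence by the Bennett--style simulation of classical computation reversibly, $\cB$ is a unitary implementable by a $\poly\log(NR)$-size quantum circuit. Next I would handle $\cA$: on input $\ket{j}\ket{v}\ket{0}\ket{0}$, apply Hadamard-like gates on a $\lceil\log q\rceil$-qubit scratch register to create the uniform superposition $\frac{1}{\sqrt{q}}\sum_{\iota=1}^{q}\ket{\iota}$; then, controlled on $(j,\iota)$, compute $i_\iota = \AdjLocC(j,\iota)$ into the third register (efficient by double explicitness); finally, controlled on $(v,\iota)$, extract the $\iota$-th symbol $v_\iota$ of $v$ into the fourth register (a trivial indexing operation). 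A standard uncomputation of the $\iota$ scratch register (possible because $\iota$ is recoverable from $(j,i)$ via $\AdjGloC(j,i)$) removes it, giving the desired isometry.

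The main obstacle is $\cM_k$, which must act as a permutation on the computational basis even though the underlying matching is defined only on the ``good'' set $V_i'$ and acts as identity on $V_i''$. I would proceed as follows: controlled on $(i,j)$, compute the local index $\iota = \AdjGloV(i,j)$ into a scratch register (efficient by double explicitness); compare $\iota$ against $n_i'$, where $n_i'=p_i(p_i^2-1)$ is determined by $i$ alone via $\tau(i)$ and strong uniformity; in the branch $\iota\le n_i'$, evaluate $\iota' = \Pi_{G_i}(\iota,k)$ using the explicit expander algorithm, then $j' = \AdjLocV(i,\iota')$; in the other branch, set $\iota'=\iota$ and $j'=j$. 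We then swap $j$ and $j'$ in the first register; uncomputing the scratch data is done by running the same procedure backwards using $\AdjGloV(i,j')$ together with $\Pi_{G_i}(\cdot,k)^{-1}=\Pi_{G_i}(\cdot,k)$ (matchings are involutions). The fact that $\pi_k$ is a bijection on $\{1,\dots,n_i'\}$ ensures that this produces a genuine permutation on basis states, hence a unitary.

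The only nontrivial subtlety is verifying reversibility of $\cM_k$: the inverse of the mapping $j\mapsto j'$ on basis elements with fixed $i$ must itself be computable, and this is exactly why we invoked $\AdjGloV$ and $\AdjLocV$ from double explicitness on both ``sides'' of the map. Once this is in place, lifting each classical $\poly\log(NR)$-time algorithm to a reversible circuit via Bennett's simulation and then to a quantum circuit yields $\BQP$ implementations of $\cA$, $\cB$, and $\cM_k$, as required.
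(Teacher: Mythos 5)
Your proposal matches the paper's proof in all essentials: both implement each operator by Bennett-style reversible classical computation on an extra scratch register, use the local/global index pairs from double explicitness ($\AdjLocC/\AdjGloC$ for $\cA$, $\AdjLocV/\AdjGloV$ plus the explicit matching map $\Pi_{G_i}$ for $\cM_k$) to compute and then uncompute the ancilla, and note that $\cB$ is just a reversible lift of the classical polynomial-time constraint evaluation. The only cosmetic difference is that you spell out the involution property of the matching $\pi_k$ when uncomputing $\cM_k$'s scratch, whereas the paper phrases the same step as ``the inverse of the first operation acting on a different order of the registers.''
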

\begin{proof}
First, consider the implementation of $\cA$. Let $H' = H \otimes \C ^q$, where the new register will be some working space. Take the following sequence of manipulations:
\begin{enumerate}
    \item Get a superposition on the last register:
    \[
        \ket j \ket v \ket 0 \ket 0 \ket 0
        \mapsto \frac{1}{\sqrt q}\sum_{\iota=1}^q \ket j \ket v \ket 0 \ket 0 \ket \iota.
    \]
    \item From the second and the last register, compute $v_\iota$ and set the fourth register accordingly:
    \[
        \ket j \ket v \ket 0 \ket 0 \ket \iota
        \mapsto \ket j \ket v \ket 0 \ket {v_\iota} \ket \iota.
    \]
    \item Compute $\AdjLocC(j,\iota)$, and put it in the third register:
    \[
        \ket j \ket v \ket 0 \ket {v_\iota} \ket \iota
        \mapsto
        \ket j \ket v \ket {\AdjLocC(j,\iota)} \ket {v_\iota} \ket \iota.
    \]
    \item Set the last register to $0$:
    \[
        \ket j \ket v \ket i \ket {v_\iota} \ket \iota
        \mapsto
        \ket j \ket v \ket i \ket {v_\iota} \ket 0.
    \]
\end{enumerate}
The final step is valid because it is the inverse of the following operation
\[
    \ket j \ket v \ket i \ket {v_\iota} \ket 0
    \mapsto
    \ket j \ket v \ket i \ket {v_\iota} \ket {\AdjGloC(j,i)}.
\]
Since $\AdjGloC$ and $\AdjLocC$ can be computed efficiently classically due to the explicitness of $\cons$, the above steps are efficient.

The situation for $\cM_k$ is similar. Consider $H'= H \otimes \C^R$. Do the following:
\begin{enumerate}
    \item Based on constraint $j$ and variable $i$, and $k$, compute $j'$ as in (\ref{eq:j-p}), put $j'$ in the working space.
    \[
        \ket j \ket v \ket i \ket {v_i} \ket 0
        \mapsto
        \ket j \ket v \ket i \ket {v_i} \ket {j'}.
    \]
    \item Set the first register to be 0.
    \[
        \ket j \ket v \ket i \ket {v_i} \ket {j'}
        \mapsto
        \ket 0 \ket v \ket i \ket {v_i} \ket {j'}.
    \]
    \item Swap the contents of the first and the last registers.
\end{enumerate}
The second step is a valid step because it is the inverse of the first operation (acting on a different order of the registers). Since $\AdjLocV$, $\AdjGloV$ and $\Pi_{G_i}$ are efficient classically due to the explicitness of our CSP system and expander graphs provided in~\cref{thm:explicit-expander}, $\cM_k$ is efficient.

$\cB$ can be implemented efficiently because each constraint can be verified in polynomial time classically.
\end{proof}

With the above preparation, we now describe the constraints test.

\noindent\fbox{\begin{minipage}[t]{1\columnwidth - 2\fboxsep - 2\fboxrule}%
\uline{Constraints test}

\textbf{Input: $\Psi_0, \Psi_1$}, each is a set of $k$ states for some large constant $k$. 

Pair the states in $\Psi$ and $\Phi$. 

For each pair $\qpsi{}$ and $\qphi$, with probability $2d/(2d+1)$ take the consistency check, with the
remaining probability take the inner constraints test
\begin{enumerate}
\item Consistency check
\begin{itemize}
    \item Apply $\cA$ to $\qphi$ and $\qpsi{}$.
    \item Apply $\cM_k$ to $\qphi$ for a uniformly random $k\in [d]$.
    \item Measure the second register of $\ket\psi, \ket\phi$ based on $M$, if either measurement does not output $\ket\mu$, ignore this pair.
    \item SwapTest on $\qpsi{}$ and $\qphi.$
\end{itemize}

\item Inner constraints test
\begin{itemize}
    \item With probability $1-|\Sigma|^{-2q}$, ignore this pair.
    \item Apply $\cB$ to $\qpsi{}$
    \item Measure the third register, \emph{Accept }if $1$ is observed.
\end{itemize}
\end{enumerate}

\emph{Accepts} if more than $\theta$ fraction of the pairs (that are not ignored) get accepted, 
where
    \[
        \theta = 1- \frac{1-\delta}{4(2d+1)}.
    \]
\end{minipage}}

\subsection{Analysis}
\begin{lemma}[constraints test]\label{lem:constraints-test}
Suppose $\val(\cons)=1$, then a faithful prover passes the constraints
test with probability $1$. On the other hand, if $\val(\cons)\le\delta,$
then on any two valid pair of states $\qpsi{}$ and $\qphi$, the constraints test  rejects
with probability at least $(1-\delta)/(2(2d+1)).$
\end{lemma}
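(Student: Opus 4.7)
The plan is to split the argument into completeness and soundness, reducing each direction to the structure of the regularized CSP $\cons'$ from~\cref{sec:explicit-regularization}. For completeness with $\val(\cons)=1$, a faithful prover supplies $\ket{\psi}=\ket{\phi}=\tfrac{1}{\sqrt R}\sum_j\ket j\ket{v_j}$ encoding a globally consistent, fully satisfying assignment. Then $\cB\ket{\psi}=\tfrac{1}{\sqrt R}\sum_j\ket j\ket{v_j}\ket 1$ makes the inner constraints test measure $1$ deterministically. For the consistency check I will expand
\[
\cA\ket{\psi}=\tfrac{1}{\sqrt{qR}}\sum_{(i,j):\,j\in\AdjV(i)}\ket j\ket{v_j}\ket i\ket{v_{j,i}},\qquad \cM_k\cA\ket{\phi}=\tfrac{1}{\sqrt{qR}}\sum_{(i,j)}\ket{j_k(i,j)}\ket{v_j}\ket i\ket{v_{j,i}},
\]
reindex the second sum via the involution $j\leftrightarrow j_k(i,j)$ on $\AdjV(i)$, and use the consistency $v_{j,i}=v_{j_k(i,j),i}$ together with the $\ket{\mu}$-projection on the second register to conclude that the two post-measurement states coincide on all remaining registers. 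The swap test then passes with probability $1$, so every non-ignored pair is accepted and the accepted fraction is $1>\theta$.

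For soundness with $\val(\cons)\le\delta$, let $v,u$ be the assignments encoded by $\ket{\psi},\ket{\phi}\in\cV$. I will apply~\cref{claim:explicit-regularization} to $v$ to get at least $(1-\delta-q\eta)R$ constraints of $\cons'$ violated by $v$, split into $U^v_i$ (inner violations, i.e.\ $\pred_j(v_j)=0$) and $U^v_c$ (consistency violations), where $\eta$ can be made arbitrarily small by choosing the regularization constant $n_0$ large. Conditioned on not being ignored, the inner test (probability $1/(2d+1)$) rejects with probability exactly $U^v_i/R$, since $\cB$ attaches $\pred_j(v_j)$ to the uniform superposition over $j$. For the consistency check (probability $2d/(2d+1)$), I compute
\[
\langle\cA\psi_{\textup{post}}\mid\cM_k\cA\phi_{\textup{post}}\rangle=\frac{1}{qR}\sum_{(i,j):\,j\in\AdjV(i)}\1{v_{j,i}=u_{j_k(i,j),i}}=:1-\mu_k,
\]
so the swap test rejects with probability $\mu_k-\mu_k^2/2\ge\mu_k/2$. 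In the symmetric case $v=u$, averaging over uniform $k\in[d]$ gives $\Ex{k}{\mu_k}=2U^v_c/(qRd)$, so the consistency-check rejection is at least $U^v_c/(qRd)$. Combining via the weights $2d/(2d+1)$ and $1/(2d+1)$ and using $U^v_c+U^v_i\ge(1-\delta-q\eta)R$, the per-pair rejection is at least $(1-\delta)/O(q(2d+1))$, from which the claimed bound $(1-\delta)/(2(2d+1))$ follows after absorbing the constant $q$ (and tightening the weighting in the protocol if necessary) and choosing $\eta$ small.

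The main obstacle is the asymmetric case $\ket{\psi}\neq\ket{\phi}$, where the consistency condition $v_{j,i}=u_{j_k(i,j),i}$ mixes two distinct assignments rather than testing consistency of either one. My plan is to split into two subcases: if $v$ and $u$ agree on a $(1-\varepsilon)$-fraction of $(i,j)$ pairs for small $\varepsilon$, then $\mu_k$ differs from its $v=u$ value by at most $O(\varepsilon)$, so the previous bound survives with negligible loss; otherwise $v$ and $u$ disagree on a constant fraction of pairs, and for each such disagreement $(i,j)$ the probability over random $k$ that $u_{j_k(i,j),i}$ happens to equal $v_{j,i}$ is bounded away from $1$, contributing an independent constant lower bound on $\Ex{k}{\mu_k}$. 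A short union argument combining the two subcases then recovers a per-pair rejection of the claimed order.
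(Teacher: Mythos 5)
Your proof follows essentially the same route as the paper's: reduce to the regularized CSP $\cons'$ via the explicit regularization claim, split the unsatisfied constraints of $\cons'$ into inner violations (failing predicates) and consistency violations (inconsistent expander edges), bound the inner-test rejection by the fraction of failed predicates, bound the consistency-check rejection through the swap-test overlap $1-\lambda_k$, and combine with the weights $\tfrac{1}{2d+1}$ and $\tfrac{2d}{2d+1}$. Two points of comparison. On the factor of $q$: you correctly track that after the $\ket{\mu}$ projection the state is uniform over $qR$ pairs $(j,i)$, giving $\Ex{k}{\mu_k}=2U^v_c/(qRd)$, whereas the paper writes $\Ex{i}{\lambda_i}=s_2/(dR)$ without an explicit $q$; since $q$ is a constant this does not affect the theorem, and your accounting is arguably the more careful one, though your final bound then differs from the stated lemma constant by a factor of $q$ which you would need to propagate into the choice of $\theta$. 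On the asymmetric case $\ket{\psi}\neq\ket{\phi}$: the paper's proof implicitly takes a single assignment (equivalently $\ket{\psi}=\ket{\phi}$), which suffices because in the application the symmetry/sparsity/validity tests force both members of a pair to be close to one representative valid state, so the lemma is only ever invoked in that regime. Your attempt to prove the lemma for genuinely distinct assignments $v\neq u$ goes beyond what the paper needs; moreover the sketch you give (a dichotomy into a $(1-\varepsilon)$-agreement case and a constant-disagreement case plus an unspecified "union argument") is not complete — in particular, a constant lower bound on $\Ex{k}{\mu_k}$ coming purely from disagreements between $u$ and $v$ would need to invoke expansion of the $G_i$'s, which neither you nor the paper carry out and which is not required here. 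Net: same decomposition, same key steps; you are more precise about the arity $q$ and over-reach slightly by trying to handle two distinct assignments.
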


\begin{proof}
Let $s_1$ be the fraction of $\ket j \ket v$ such that $\pred_j(v)=0$, and let $s_2$ be the fraction of unsatisfied edges coming from the expander graphs implicitly used in the consistency test, 
then by~\cref{claim:explicit-regularization}, we have:
\[
s_1 + s_2 \ge (1-\delta-q\eta) R.
\]

The consistency test partitions all pairs of the same variable in
different constraints into $d$ matching. Let $\lambda_{1},\lambda_{2},\ldots,\lambda_{d}$
denotes the fraction of inconsistency pairs in matching $1,2,\ldots,d$,
respectively. Analogous to the previous analysis, the probability the
SwapTest accepts is 
\[
\Exp_{i\in[d]}\left[\frac{1+(1-\lambda_{i})^{2}}{2}\right]\le1-\frac{1}{2}\Exp_{i\in[d]}\lambda_{i}=1-\frac{s_{2}}{2dR}.
\]
In the inner constraints test, $1$ is observed with probability $1-s_1/R$.
Therefore, in total the reject probability is at least
\[
    \frac{1}{2d+1}\cdot\frac{s_1}{R}+\frac{2d}{2d+1}\cdot\frac{s_2}{2dR}\ge\frac{1-\delta-q\eta}{2d+1}.
\]
    By picking the suitable $n_0$, we make sure $q\eta < (1-\delta) /2$, thus the reject probability is at least $(1-\delta)/(4d+2)$.
\end{proof}

\begin{proof}[Proof of \cref{thm:nexp-in-qma2}]
The completeness in \cref{thm:nexp-in-qma2} is completely analoguous to the analysis in \cref{thm:ug-in-qma2}.
The soundness in \cref{thm:nexp-in-qma2} is also similar to the previous analysis. If $\Psi$ supplied by the prover is not an $\epsilon$-tilted state or is far from $\cV$, then the symmetry test, sparsity test, and validity test will catch it. Therefore, we can assume that essentially all states in $\Psi$ are close to some state $\qpsi{}\in\cV$. By choosing $\epsilon$ small enough and the size of $\Psi$ sufficiently large, by \cref{lem:constraints-test} and Chernoff bound, the fraction of accepted states in the constraints test will be less than $\theta$ with high probability in the constraints test. 
\end{proof}

\section{$\QMA(2)$ v.s. $\QMA^+(2)$}
\label{sec:qma2-vs-qma2+}

Our discussion so far has focused on the somewhat ``artificial'' complexity class $\QMA^+(2)$. In section, we aim to demystify this complexity class. In particular, we show the relationship between $\QMA(2)$ and $\QMA^+(2)$, and provide a concrete reason why the $\QMA^+(2)=\NEXP$ is an interesting result and what it reveals about the open problem $\QMA(2)$ v.s. $\NEXP$.

\subsection{\texorpdfstring{Simulations between $\QMA(2)$ and $\QMA^+(2)$}{Simulations between QMA(2) and QMA+(2)}}
\label{sec:simulation}

We start by showing that $\QMA(2)$ and $\QMA^+(2)$ can actually simulate each other to some extent. In particular, we will prove the following two lemmas.

\begin{lemma}\label{lem:qmaplus-in-qma}
For any constant $c,s\in [0,1] $,
\[
  \QMA^+(2, c,s)\subseteq \QMA(2, c, 4s).
\]
\end{lemma}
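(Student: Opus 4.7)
The plan is to take $V$, the $\QMA^+(2,c,s)$ verifier, and use it unchanged as a $\QMA(2)$ verifier $V'$. Completeness is immediate since every non-negative amplitude product proof accepted by $V$ with probability at least $c$ remains a valid $\QMA(2)$ proof pair with the same acceptance probability. The entire content is therefore to bound $V$'s acceptance probability on an arbitrary complex product proof by $4s$.

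Let $M\succeq 0$ be $V$'s yes POVM element, so by the $\QMA^+$ soundness hypothesis, $\bra{\phi_1\phi_2}M\ket{\phi_1\phi_2}\le s$ for every pair of non-negative amplitude product states. I first reduce to the real-amplitude case by a standard real-ification of $V$'s circuit, which makes $M$ real-symmetric PSD; this preserves $\QMA^+$ completeness and soundness since a non-negative complex amplitude embeds to a non-negative real amplitude with zero imaginary component. Then for any complex proofs $\ket{\psi_i}=\ket{R_i}+i\ket{I_i}$ with real $R_i,I_i$, real-symmetry of $M$ yields
\[
\bra{\psi_1\psi_2}M\ket{\psi_1\psi_2}\;=\;\bra{v_R}M\ket{v_R}+\bra{v_I}M\ket{v_I}
\]
for real vectors $v_R=R_1\otimes R_2-I_1\otimes I_2$ and $v_I=R_1\otimes I_2+I_1\otimes R_2$.

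The key technical step is a bound on real product proofs: for any real unit vectors $\ket{X_1},\ket{X_2}$, $\bra{X_1 X_2}M\ket{X_1 X_2}\le 4s$. This follows by decomposing $\ket{X_i}=\sqrt{p_i}\ket{\tilde X_i^+}-\sqrt{1-p_i}\ket{\tilde X_i^-}$ into non-negative unit vectors supported on disjoint coordinates, expanding the quadratic form into a $4\times 4$ sum of cross-terms between normalized non-negative product states, and applying the PSD Cauchy--Schwarz inequality $|\bra{u}M\ket{v}|\le\sqrt{\bra{u}M\ket{u}}\sqrt{\bra{v}M\ket{v}}\le s$ term-by-term to obtain
\[
\bra{X_1 X_2}M\ket{X_1 X_2}\;\le\; s\bigl(\sqrt{p_1}+\sqrt{1-p_1}\bigr)^{\!2}\bigl(\sqrt{p_2}+\sqrt{1-p_2}\bigr)^{\!2}\;\le\;4s.
\]

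The final step combines the real-product bound applied to the two-term real decompositions of $v_R$ and $v_I$, yielding
\[
\bra{v_R}M\ket{v_R}\le 4s\,(\|R_1\|\|R_2\|+\|I_1\|\|I_2\|)^2,\quad \bra{v_I}M\ket{v_I}\le 4s\,(\|R_1\|\|I_2\|+\|I_1\|\|R_2\|)^2,
\]
which are to be summed and controlled via the norm identity $\|R_i\|^2+\|I_i\|^2=1$ to give a final bound of $4s$. The main obstacle will be the combinatorial bookkeeping at this last step: a direct Cauchy--Schwarz bound on each summand gives only $8s$, so obtaining the tight $4s$ bound will require exploiting the identity $(\|R_1\|\|R_2\|+\|I_1\|\|I_2\|)^2+(\|R_1\|\|I_2\|+\|I_1\|\|R_2\|)^2=1+4\|R_1\|\|I_1\|\|R_2\|\|I_2\|$ together with a more careful partition of the acceptance probability, or alternatively a tighter form of Cauchy--Schwarz that leverages that $v_R$ and $v_I$ cannot both simultaneously align with the top eigendirection of $M$.
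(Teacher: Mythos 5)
Your route differs from the paper's: you split each prover's state into real and imaginary parts, real-ify the verifier so its POVM element is a real-symmetric PSD $M$, and then use that the joint acceptance probability equals $v_R^T M v_R + v_I^T M v_I$ with $v_R = R_1\otimes R_2 - I_1\otimes I_2$ and $v_I = R_1\otimes I_2 + I_1\otimes R_2$. The paper instead decomposes the \emph{joint} state $\ket\psi = \ket{\phi_1}\otimes\ket{\phi_2}$ into four pieces $\ket{\psi_i}$ by the phase of each amplitude and invokes the $\QMA^+$ soundness on each piece. The gap you honestly flag at the end of your write-up is genuine: your argument yields $8s$, and the refinements you sketch will not close it, because $8s$ is \emph{tight} for a real-symmetric $M$. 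Concretely, take $\ket\zeta = \tfrac{1}{2}(\ket 0 + i\ket 1 - \ket 2 - i\ket 3)$. One checks that $\max_{\phi\ge 0,\|\phi\|=1}|\braket{\zeta}{\phi}|^2 = 1/4$, so for $M = (\ket\zeta\bra\zeta)^{\otimes 2}$ the non-negative product soundness is $s = 1/16$; but the real-symmetrized $\tfrac{1}{2}(M+\bar M)$ still has non-negative product soundness $1/16$, while the complex product proof $\ket\zeta\otimes\ket\zeta$ is accepted with probability $\tfrac{1}{2}(1+|\braket{\bar\zeta}{\zeta}|^4) = 1/2 = 8s$. So within your framework the factor is exactly $8$, not $4$.

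The deeper issue, which your write-up implicitly exposes, is that the paper's own proof is also flawed for the same structural reason. Its step ``$\bra{\psi_i}M\ket{\psi_i}\le s$ by the soundness of $\calP$'' silently assumes each phase-piece $\ket{\psi_i}$ of the joint product state is itself a \emph{product} state, which is false in general: the coordinates where $\alpha_a\beta_b$ has a given phase do not form a rectangle. In the example above, the ``positive-real'' piece of $\ket\zeta\otimes\ket\zeta$ is $\ket{\psi_1} = \tfrac{1}{2}(\ket{00}+\ket{13}+\ket{22}+\ket{31})$, which has Schmidt rank $4$, and one computes $\bra{\psi_1}M\ket{\psi_1} = 1/4 = 4s$, violating the claimed $\le s$. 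Moreover, reusing the verifier directly on the complex proof $\ket\zeta\otimes\ket\zeta$ accepts with probability $1 = 16s$, so the paper's method of ``reuse the $\QMA^+$ verifier unchanged'' cannot give soundness $4s$ in general. In short, your gap is not a bookkeeping deficiency that a cleverer inequality would fix: the constant $4$ in the lemma cannot be obtained by either your argument or the paper's, and establishing it (if it is even correct) requires a genuinely different verifier construction, not a tighter analysis of the same one.
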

\begin{lemma}\label{lem:qma-in-qmaplus}
For any constant $c,s\in [0,1] $ with $c-s > 1/\mathrm{poly}(n)$,
\[
  \QMA(2, c,s)\subseteq \QMA^+(2, 1-e^{-O(\mathrm{poly}(n))}, e^{-O(\mathrm{poly}(n))}).
\]
\end{lemma}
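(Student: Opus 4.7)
The plan is to combine the trivial inclusion $\QMA(2, c, s) \subseteq \NEXP$ with the paper's main characterization $\NEXP \subseteq \QMA^+(2)$ (\cref{thm:nexp-in-qma2}), and to extract exponentially small error by scaling the internal repetition parameter of that protocol rather than by invoking any generic amplification of $\QMA^+(2)$ (which the paper stresses is open).

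First I will argue that $\QMA(2, c, s) \subseteq \NEXP$ with essentially perfect gap: an $\NEXP$ machine guesses classical descriptions of both unentangled quantum witnesses to $1/\poly(n)$ precision (using exponentially many bits), classically simulates the $\QMA(2)$ verifier on these descriptions, and accepts iff the simulated acceptance probability exceeds $(c+s)/2$. Since $c-s > 1/\poly(n)$, a sufficiently fine discretization makes the precision error negligible and yields a $(1,0)$-gap $\NEXP$ procedure for the underlying promise problem.

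Next I will feed the resulting $\NEXP$ language into the protocol of \cref{sec:nexp}, but with the copy-count parameter $k$ taken to be $\poly(n)$ rather than a fixed constant. This is permissible because $\QMA^+(2)$ tolerates $\poly(n)$-qubit proofs and the total proof size scales as $k \cdot \poly(n) = \poly(n)$. The four subtests---symmetry (\cref{thm:symmetry-test}), sparsity (\cref{thm:sparsity-test-target}), validity (\cref{thm:validity-test}), and constraints (\cref{lem:constraints-test} boosted by a Chernoff bound over the $k$ independent pairs)---each contribute error $\exp(-\Omega(k))$, so $k = \poly(n)$ yields $\exp(-\poly(n))$ error per test. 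A union bound over the $O(1)$ subtests then produces the claimed $(1 - \exp(-\poly(n)), \exp(-\poly(n)))$ gap, with the constraints test being the key case: whenever $\val(\cons)\le \delta$, each pair rejects with constant probability at least $(1-\delta)/(2(2d+1))$, so Chernoff concentration drives the reject fraction above $1-\theta$ with probability $1-\exp(-\Omega(k))$.

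The main obstacle I expect is bookkeeping rather than any conceptual novelty: I need to verify that the auxiliary constants $\epsilon, q, d, \nu$ in the four subtests can be fixed independently of $k$ without coupling back into the soundness/completeness analysis, and that the explicit regularization together with the doubly-explicit PCP machinery from \cref{sec:explicit-regularization} still apply to the $\NEXP$ language produced in the first step (which follows from \cref{thm:explicit-NEXP-PCP} since any $\NEXP$ language admits a suitable PCP). Crucially, no step in this plan invokes any hypothetical $\QMA^+(2)$ gap amplification as a black box: the exponentially small error emerges from many independent Chernoff concentrations \emph{internal} to a single $\QMA^+(2)$ invocation, which is exactly why \cref{lem:qma-in-qmaplus} is consistent with the open status of $\QMA^+(2)$ amplification highlighted in the introduction.
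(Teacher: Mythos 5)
Your plan takes a genuinely different route from the paper's proof, and unfortunately it does not work. There are two fatal problems.

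First, the scaling-$k$ argument does not give exponential soundness. The verifier in the $\NEXP$ protocol (exactly as in the SSE and UG protocols it is modeled on) chooses \emph{one} of its subtests uniformly at random rather than running all of them. When the prover cheats in a way caught by, say, the symmetry test, the verifier still selects the sparsity, validity, and constraints tests with constant probability, and those tests accept. Increasing $k$ drives each test's \emph{conditional} error to $e^{-\Theta(k)}$, but it cannot push the overall soundness below the constant $1 - 1/T$ where $T$ is the number of subtests. To get exponential soundness you would have to redesign the protocol to run all tests on shared copies and re-do the soundness analysis from scratch, neither of which you do.

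Second, and more decisively, the claim you would be proving is too strong. Your plan, if correct, would establish $\NEXP \subseteq \QMA^+(2, 1-e^{-\poly(n)}, e^{-\poly(n)})$, i.e., $\NEXP$ in $\QMA^+(2)$ with a gap arbitrarily close to $1$. By the paper's own \cref{cor:qmaplus-in-qma} (which is a consequence of \cref{lem:qmaplus-in-qma}), any such gap exceeding $3/4 + 1/\poly(n)$ would immediately yield $\QMA(2) = \NEXP$, which the paper treats as the central open problem. So your chain $\QMA(2,c,s) \subseteq \NEXP \subseteq \QMA^+(2, 1-\exp, \exp)$ cannot be closed without solving that problem. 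The statement of \cref{lem:qma-in-qmaplus} is about $\QMA(2)$, not $\NEXP$, and the distinction matters: the paper's proof exploits a structure specific to $\QMA(2)$ protocols that is unavailable for a generic $\NEXP$ language. Concretely, the paper takes the given $\QMA(2,c,s)$ measurement $\{M_x\}$, re-encodes an arbitrary witness $\ket{\psi_x} = \sum_r (\alpha_r + i\beta_r)\ket{r}$ into a non-negative-amplitude state $\ket{\phi_x}$ by appending sign and real/imaginary registers, has Arthur apply a fixed local unitary to those registers and measure them, and then—with probability $1/4$, upon observing $00$—apply the original $M_x$ to the recovered $\ket{\psi_x}$ (otherwise outputting a coin flip). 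This gives constant completeness $>9/16$ and soundness $\le 1/2$. The crucial observation is that the resulting measurement is sound against \emph{all} proofs, not just non-negative ones, so this $\QMA^+(2)$ protocol is literally a $\QMA(2)$ protocol, and the sequential gap amplification for $\QMA(2)$ (\cref{thm:qma2-strong-gap-amp}) applies, producing the exponential gap. No detour through $\NEXP$ is needed or permitted.
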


For the purpose of settling the problem ``$\QMA(2)$ v.s. $\NEXP$'', \cref{lem:qmaplus-in-qma} is probably more interesting. In fact, an immediate consequence of~\cref{lem:qmaplus-in-qma} is the following.
\begin{corollary}\label{cor:qmaplus-in-qma}
$\QMA^+(2,c,s) \subseteq \QMA(2)$ for any $c-s \ge \gapLarge$.
In particular, suppose $\NEXP\subseteq \QMA^+(2)$ with a gap at least $3/4+1/\poly(n)$, then 
\[
    \QMA(2) = \NEXP.
\]
\end{corollary}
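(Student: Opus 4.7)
I plan to derive this corollary as an essentially immediate consequence of \cref{lem:qmaplus-in-qma}, which provides the simulation $\QMA^+(2,c,s) \subseteq \QMA(2,c,4s)$. Applying that lemma to the hypothesis takes us from a $\QMA^+(2,c,s)$ protocol with $c-s \ge 3/4 + 1/\poly(n)$ to a $\QMA(2,c,4s)$ protocol with the same completeness but a soundness blown up by a factor of $4$. The only remaining task is to verify that this simulated $\QMA(2)$ protocol still meets the definitional requirement of an inverse-polynomial completeness-soundness gap, which is exactly the role played by the threshold constant $3/4$.

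\textbf{The key inequality.} I will use only the trivial bound $c \le 1$ together with the hypothesis $c-s \ge 3/4 + 1/\poly(n)$. These two inequalities force $s \le 1/4 - 1/\poly(n)$. Writing
\[
  c - 4s \;=\; (c-s) - 3s \;\ge\; \tfrac{3}{4} + \tfrac{1}{\poly(n)} - 3\bigl(\tfrac{1}{4} - \tfrac{1}{\poly(n)}\bigr) \;=\; \tfrac{4}{\poly(n)},
\]
I will conclude that the simulated protocol has completeness-soundness gap at least $1/\poly(n)$, so the underlying language lies in $\QMA(2)$. This establishes the first part of the corollary.

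\textbf{The $\NEXP$ statement.} Granting $\NEXP \subseteq \QMA^+(2)$ with a gap of at least $3/4 + 1/\poly(n)$, the containment just proved will immediately give $\NEXP \subseteq \QMA(2)$. Together with the trivial bound $\QMA(2) \subseteq \NEXP$ (guess classical descriptions of both polynomial-size quantum proofs and simulate the $\BQP$ verifier in exponential time), this yields the desired equality $\QMA(2) = \NEXP$.

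\textbf{Anticipated obstacle.} The corollary itself is bookkeeping, so there is no hard new step to carry out; all of the mathematical content has been packaged into \cref{lem:qmaplus-in-qma}. The one subtlety worth flagging is the role of the constant $3/4$: it is exactly what is needed to ensure $4s < 1$, and more precisely $4s < c$, once the soundness is inflated by the simulation. A weaker assumed gap would still produce a $\QMA(2)$ protocol via \cref{lem:qmaplus-in-qma}, but one whose ``soundness'' could exceed its completeness and hence convey no information. Thus the presentation will emphasize that the threshold $3/4 + 1/\poly(n)$ is dictated by the loss in \cref{lem:qmaplus-in-qma}, and any strong enough gap amplification for $\QMA^+(2)$ past this threshold would suffice to settle $\QMA(2) = \NEXP$.
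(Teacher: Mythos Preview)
Your proposal is correct and matches the paper's own proof essentially line for line: apply \cref{lem:qmaplus-in-qma}, use $c\le 1$ to get $s\le 1/4 - 1/\poly(n)$, compute $c-4s=(c-s)-3s\ge 1/\poly(n)$, and conclude via the known inverse-polynomial gap tolerance (equivalently, gap amplification) of $\QMA(2)$; the $\NEXP$ equality then follows from the trivial containment $\QMA(2)\subseteq\NEXP$.
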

\begin{proof}
    Suppose a given $\QMA^+(2,c,s)$ protocol has completeness $c$ and soundness $s$. By our assumption,
    \begin{align}
        c &\ge s + 3/4 + 1/\poly(n),    \label{eq:cs-gap-3}\\
        s &\le 1/4 - 1/ \poly(n).        \label{eq:s-bound-1}
    \end{align}
    By \cref{lem:qmaplus-in-qma}, there is a $\QMA(2)$ protocol that solves the same problem with completeness $c'=c$ and soundness $s'=4s$. Then
    \begin{align}
        c'-s' = c - 4s \ge 3/4 + 1/\poly(n) - 3s \ge 1/\poly(n),
    \end{align}
    where the first and second inequalities use (\ref{eq:cs-gap-3}) and (\ref{eq:s-bound-1}), respectively. Since $\QMA(2)$ admits strong gap amplification~\cite{HM13}, it shows that $\QMA^+(2,c,s)\subseteq\QMA(2)$. This finishes the proof.
\end{proof}

Next, we show a straightforward proof of \cref{lem:qmaplus-in-qma}. The fact that such proof works may be somewhat surprising.
\begin{proof}[Proof of \cref{lem:qmaplus-in-qma}]
Given any $\QMA^+(2)$ protocol $\calP$ with completeness $c$ and soundness $s$. We simply reuse the verfication procedure of $\calP$. In the completeness case, the benign provers will supply proofs with nonnegative amplitudes. Thus the completeness remains the same. 

In the soundness case. On input $x$, let $M$ be the operator representing the verification measurement of $\calP$ for accepting (i.e. $I-M$ corresponds to rejecting). Let $\ket\psi$ be an arbitrary quantum state, we compute the probability that $\calP$ accept, $\bra\psi M \ket\psi$. First, consider a decomposition of $\ket\psi$ as follows:
\[
    \ket\psi = \sqrt{\alpha_1}\ket {\psi_1} + \sqrt{\alpha_2}    \ket{\psi_2} + \sqrt{\alpha_3}\ket{\psi_3} + \sqrt{\alpha_4}\ket{\psi_4},
\]
where $\ket{\psi_1}$ is the normalized vector  consisting exactly the components of positive real amplitudes of $\ket\psi$; $\ket{\psi_2}$ is the normalized vector consisting exactly the components of positive imaginary amplitudes of $\ket\psi$; and so on. For example, if $\ket\psi = 1/\sqrt{3}(\ket{1} + \ket{2} + i \ket{3})$, then $\ket{\psi_1} = 1/\sqrt{2} (\ket{1} + \ket{2})$ and $\ket{\psi_2} = i \ket{3}$. Calculating
\begin{align}
    \bra\psi M \ket\psi 
    &
    = \sum_{i,j} \sqrt{\alpha_i\alpha_j} \bra{\psi_i} M \ket{\psi_j} 
    \nonumber \\
    &
    \le \sum_{i,j} \sqrt{\alpha_i\alpha_j}s
    = \left(\sum_i \sqrt{\alpha_i}\right)^2 s
    \nonumber \\
    &
    \le 4s,\nonumber
\end{align}
where the final step is due to Cauchy-Schwarz and that $\sum \alpha_i = 1.$ The reason why the second inequality holds is as follows:
In the case $i=j$, $\ket{\psi_i}$ is a quantum state without relative phase, equivalent to a quantum state with nonnegative amplitudes. 
Therefore $\bra{\psi_i}M\ket{\psi_i} \le s$ by the soundness of $\calP$. In the case when $i\not=j$, then because $M$ is PSD, 
\begin{align*}
     & (\bra{\psi_i}-\bra{\psi_j}) M 
     (\ket{\psi_i}-\ket{\psi_j}) \ge 0 \\
     & \quad \implies 
     \bra{\psi_i} M \ket{\psi_j} + \bra{\psi_j} M \ket{\psi_i}
     \le \bra{\psi_i} M \ket{\psi_i} + 
     \bra{\psi_j} M \ket{\psi_j} \le 2s.\qedhere
\end{align*}
\end{proof}

Finally, we settle \cref{lem:qma-in-qmaplus}, which is much more intuitive as a general quantum state can always be re-encoded into a quantum state with nonnegative amplitudes.
\begin{proof}[Proof of \cref{lem:qma-in-qmaplus}]
Given some $\QMA(2,c,s)$ protocol, which decides the measurement
$\{M_{x}\}.$ Since strong gap amplification for $\QMA(2)$ is known~\cite{HM13}, assume, without loss of generality, that
$c>3/4$ and $s<1/2$.
The $\QMA^{+}(2)$ protocol will reuse the same
measurement $\{M_{x}\}$. On input $x$, suppose that a faithful prover for $\QMA(2)$ sends
\[
|\psi_{x}\rangle=\sum_{r}(\alpha_{r}+\beta_{r}\cdot i)\ket r.
\]
Now the correponding $\QMA^{+}(2)$ prover will be asked to send the following:
\[
|\phi_{x}\rangle=\sum|\alpha_{r}|\ket r\ket{\sgn\alpha_{i}}\ket\R+|\beta_{r}|\ket r\ket{\sgn\beta_{i}}\ket\C.
\]
Arthur, on receiving a proof $|\phi\rangle,$ applies the following
transforms to the second and third registers, respectively:
\begin{align*}
 & \ket+\mapsto\frac{|0\rangle+|1\rangle}{\sqrt{2}},\qquad\quad\ket-\mapsto\frac{-\ket0+\ket1}{\sqrt{2}}.\\
 & \ket\R\mapsto\frac{|0\rangle+i\cdot|1\rangle}{\sqrt{2}},\qquad\ket\C\mapsto\frac{i\cdot\ket0-\ket1}{\sqrt{2}}.
\end{align*}
Arthur then measures the second and third registers. If after the measurement,
00 is observed, apply $M_{x}$ to $|\phi\rangle.$ Otherwise, Arthur can
output a random value. 

In the completeness case, the $00$ is observed with probability
$1/4$ and in this case the correct answer is outputted with probability
$c$ by the completeness of the $\QMA(2)$ protocol. In total,
Arthur outputs a correct value with a probability of at least
\[
\frac{c}{4}+\frac{3}{4}\frac{1}{2}>\frac{9}{16}.
\]
In the soundness case, if $00$ is observed, then Arthur is
correct with probability at most $s$ by the soundness of the $\QMA(2)$
protocol. Otherwise, Arthur is correct with probability $1/2$. In any
case, Arthur is correct with probability at most $1/2.$ In conclusion, 
\[
    \QMA(2,c,s)\subseteq \QMA^+\left(2, \frac{1}{2}+\frac{1}{16}, \frac{1}{2}\right). 
\]

Note that in the above argument, the measurement used in $\QMA^+(2)$ is adapted 
from $\QMA(2)$, whose soundness makes no assumption of the proofs that Arthur receives. In particular, the gap amplification for general $\QMA(2)$ can be applied. 
\end{proof}

Note that similar arguments for \cref{lem:qmaplus-in-qma,lem:qma-in-qmaplus} can also be adapted to $\QMA$ or $\QMA(k)$ v.s. $\QMA^+$ or $\QMA^+(k)$.

\subsection{Product Test and Gap Amplification for $\QMA(2)$} \label{sec:prod-test}
The previous discussion begs for a better understanding of gap amplification for $\QMA^+(2)$. Since many randomized/quantum computation models admit strong gap amplification, e.g. $\QMA$ and $\QMA(2)$. This gives hope that $\QMA^+(2)$ may also admit strong gap amplification. 
The story, however, will be much more complicated in view of the recent work of Bassirian, Fefferman and Marwaha~\cite{bassirian2023quantum}, building on the STOC version of this paper~\cite{JW23},  which curiously showed that $\QMA^+(1)=\NEXP$ also
with a constant gap.\footnote{It is not clear that their gap can be made as large as the one for $\QMA^+(2)=\NEXP$.}
Since in the large constant gap regime of $\QMA^+(1)$, we have $\QMA^+(1)=\QMA \subseteq \textup{PP}$ for the same reason as discussed in \cref{sec:simulation}, their result
rules out the strong gap amplification for $\QMA^+(1)$ unless $\NEXP\subseteq\textup{PP}$. Moreover, it also suggests that
techniques to amplify the gap for $\QMA^+(2)$ should crucially use the unentanglement assumption.

This subsection is focused on product test and its applications to gap amplification for $\QMA(2)$, which can actually be applied to $\QMA^+(2)$ as to be studied in the next part. In this subsection, though, our discussion will be based on Harrow and Montanaro's work~\cite{HM13} with a bit more careful treatment of the completeness and soundness gap. We include this discussion because it's central to sections to come.
Experts should feel free to skip.

Given some Hilbert space $H = H_1\otimes H_2 \otimes \ldots \otimes H_k$, each $H_i$ is of dimension $d_i$. A fundamental task is to test whether a given state $\ket\psi\in H$ is unentangled between the $k$ subsystems. In another word, the task is to test whether $\ket\psi$ is a product state, i.e., for some $\ket{\psi_i}\in H_i$,
$    \ket\psi = \ket{\psi_1} \otimes \ket {\psi_2} \otimes \cdots \otimes \ket{\psi_k}.$ The following test has been proposed for this purpose when there are many copies of $\ket\psi$.

\noindent\fbox{
\begin{myalg}[Prodcut Test]\label{algo:prod-test}\ignorespacesafterend
\textbf{Input}: $\ket\psi, \ket\phi \in H_1 \otimes H_2 \otimes \cdots \otimes H_k$.\footnotemark
\\
For $i=1$ to $k$:
Apply Swap Test to the $i$th subsystem of $\ket\psi$ and $\ket\phi$.
 \\
\textit{Accept} if all the swap test accepts.
\end{myalg}
}
\footnotetext{Ideally, the verifier should receive two copies of $\ket\psi$. However, in a $\QMA(2)$ protocol, the verifier can receive anything.}

Let $\pt(\ket\psi, \ket\phi)$ be the probability that $\ket\psi$ and $\ket\phi$ pass the product test.  Harrow and Montanaro first gave a formal analysis of the product test, and recently Soleimanifar and Wright improved the analysis.

\begin{lemma}[Harrow, Montanaro~\cite{HM13}]\label{cor:product-test}
    \begin{equation*}
        \pt(\ket\psi, \ket\phi) \le \frac{1}{2} (\pt(\ket\psi,\ket\psi)+\pt(\ket\phi,\ket\phi)).
    \end{equation*}
\end{lemma}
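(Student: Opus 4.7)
The plan is to reduce the desired inequality to a pointwise statement about inner products of reduced density matrices, where it will follow from a simple application of matrix Cauchy--Schwarz and AM--GM.

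The first and main step is to establish the identity
\begin{equation*}
    \pt(\ket\psi,\ket\phi) \;=\; \frac{1}{2^k}\sum_{S \subseteq [k]} \Tr(\psi_S \phi_S),
\end{equation*}
where $\psi_S := \Tr_{\bar S}\ket\psi\bra\psi$ and $\phi_S := \Tr_{\bar S}\ket\phi\bra\phi$ are the reduced density matrices of $\ket\psi$ and $\ket\phi$ on the subsystems indexed by $S$. To derive this, I would write each single-subsystem swap test as the measurement associated to the symmetric-subspace projector $(I + W_i)/2$, where $W_i$ is the swap exchanging the $i$th subsystem of the two copies $H_i \otimes H_i$. Since the $W_i$ for different $i$ act on disjoint registers, they commute, so the overall acceptance is
\begin{equation*}
    \pt(\ket\psi,\ket\phi) \;=\; \bra\psi\bra\phi \prod_{i=1}^{k} \frac{I+W_i}{2}\, \ket\psi\ket\phi \;=\; \frac{1}{2^k}\sum_{S\subseteq[k]} \bra\psi\bra\phi \Bigl(\prod_{i\in S} W_i\Bigr) \ket\psi\ket\phi.
\end{equation*}
Invoking the standard swap identity $\Tr((\rho\otimes\sigma) W) = \Tr(\rho\sigma)$, applied on the $S$-registers after tracing out $\bar S$, identifies each term with $\Tr(\psi_S \phi_S)$, yielding the formula.

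Once the formula is in hand, the inequality is immediate: since $\psi_S$ and $\phi_S$ are PSD, the matrix Cauchy--Schwarz inequality~(\ref{eq:matrix-cauchy-schwarz}) gives $\Tr(\psi_S\phi_S) \le \|\psi_S\|_F \cdot \|\phi_S\|_F$, and AM--GM then yields $\|\psi_S\|_F \cdot \|\phi_S\|_F \le \tfrac12(\|\psi_S\|_F^2 + \|\phi_S\|_F^2) = \tfrac12(\Tr\psi_S^2 + \Tr\phi_S^2)$. Summing over $S \subseteq [k]$, dividing by $2^k$, and specializing the formula to $\ket\phi = \ket\psi$ to recognize $\tfrac{1}{2^k}\sum_S \Tr\psi_S^2 = \pt(\ket\psi,\ket\psi)$ (and analogously for $\phi$), we obtain the claimed bound $\pt(\ket\psi,\ket\phi) \le \tfrac12\bigl(\pt(\ket\psi,\ket\psi) + \pt(\ket\phi,\ket\phi)\bigr)$.

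The main obstacle, if any, lies in the bookkeeping that establishes the first identity: keeping track of which register each $W_i$ acts on, and observing that the partial trace over $\bar S$ converts the action of $\prod_{i \in S} W_i$ on $\ket\psi \otimes \ket\phi$ precisely into an inner product $\Tr(\psi_S \phi_S)$ of reduced states. Once this identity is settled, the rest of the argument is a routine inequality manipulation, so I do not expect any substantive difficulty.
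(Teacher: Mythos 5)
Your proof is correct and is precisely the argument from Harrow and Montanaro that the paper cites (and never spells out): the paper itself has no proof of this lemma, only a pointer to~\cite{HM13}, and indeed contains a commented-out statement of the very identity $\pt(\ket\psi,\ket\phi) = 2^{-k}\sum_{S\subseteq[k]}\Tr(\psi_S\phi_S)$ that you establish as your first step. Your derivation of that identity via the projector expansion $\prod_i (I+W_i)/2 = 2^{-k}\sum_S \prod_{i\in S}W_i$ and the swap trick, followed by matrix Cauchy--Schwarz and AM--GM on each term $\Tr(\psi_S\phi_S)$, is a faithful reconstruction of the source proof.
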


\begin{theorem}[Soleimanifar, Wright~\cite{SW22matrix-product}]\label{thm:prod-test}
For any state $\ket\psi\in H_1 \otimes H_2\otimes \cdots \otimes H_k$, let
\begin{equation*}
  \max_{\phi_1\in H_1, \phi_2\in H_2, \ldots, \phi_k\in H_k}
    |\langle \psi \mid \phi_1\otimes\phi_2\otimes\cdots\otimes\phi_k \rangle|^2 = \omega.
\end{equation*}
Then, the following are two upper bounds on the probability that the product test passes. The first only works for  $\omega$ at least $1/2$ (and is optimal in that regime); the second is general:
\begin{align*}
    &\pt(\ket\psi, \ket\psi) \le 
        1 - \omega + \omega^2, & \omega \ge \frac{1}{2}; \\
    &\pt(\ket\psi, \ket\psi) \le \frac{1}{3} \omega^2 + \frac{2}{3}, & 0\le \omega \le 1.
\end{align*}
\end{theorem}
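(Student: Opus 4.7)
The plan is to reduce the product test to a sum of reduced-density overlaps and then exploit first-order optimality of the maximizing product state.

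First I would derive the standard identity
\begin{equation*}
  \pt(\ket\psi,\ket\psi) \;=\; 2^{-k}\sum_{S\subseteq[k]}\Tr(\psi_S^{2}),
\end{equation*}
where $\psi_S$ denotes the reduced density on the subsystems indexed by $S$. This comes from the observation that a single swap test on copy $A$ and copy $B$ of subsystem $i$ is the symmetric-subspace projector $\tfrac12(I+\mathsf{SWAP}_i)$; running one on each subsystem is then $\bigotimes_{i}\tfrac12(I+\mathsf{SWAP}_i)=2^{-k}\sum_{S}\mathsf{SWAP}_S$, and the standard identity $\langle\psi\otimes\psi|\mathsf{SWAP}_S|\psi\otimes\psi\rangle=\Tr(\psi_S^{2})$ gives the formula. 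With this identity, bounding $\pt$ reduces to controlling the purities $\Tr(\psi_S^2)$ simultaneously for all subsets $S$.

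Next, for the sharp bound $1-\omega+\omega^{2}$ in the regime $\omega\ge 1/2$, I would choose local bases so that the optimal product state is $\ket{0^k}$, hence $|\langle 0^{k}|\psi\rangle|^{2}=\omega$. Write $\ket\psi=\sqrt{\omega}\,\ket{0^{k}}+\sqrt{1-\omega}\,\ket\eta$ with $\ket\eta\perp\ket{0^{k}}$. The first-order optimality of $\ket{0^{k}}$ forces: for every $i$ and every $\ket{v_i}\perp\ket{0}_i$,
\begin{equation*}
  \langle 0\cdots 0\,v_i\,0\cdots 0|\psi\rangle \;=\; 0.
\end{equation*}
This ``stationarity'' condition lets me decompose $\ket\eta$ into pieces orthogonal on each individual subsystem to $\ket{0}$, and consequently rules out the terms in $\Tr(\psi_S^{2})$ that would come from cross-correlations between $\ket{0^k}$ and $\ket\eta$ localized on a single subsystem. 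Carrying out that decomposition and summing over $S$ term by term should reveal that the only surviving contributions are $\omega^{2}$ (from $\ket{0^k}$ squared), $1-\omega$ from the perpendicular norm (appearing only in the $S=[k]$ and $S=\emptyset$ terms), and lower-order cross terms that are controlled again by stationarity. The bookkeeping should then telescope to $\Tr(\psi_S^2)\le \omega^{2}+(1-\omega)\,\mathbb{1}[\text{no isolated-coordinate cancellation}]$, and summing with the weights $2^{-k}$ gives exactly $1-\omega+\omega^{2}$. Tightness is then verified on the explicit ``EPR-plus-product'' state where the inequality is saturated.

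For the general bound $\tfrac13\omega^{2}+\tfrac23$, I would argue more crudely but uniformly in $\omega$. Using the identity, split the sum over $S$ into $|S|\in\{0,k\}$ (contributing $2\cdot 2^{-k}$) and the remaining bipartite-type subsets. For each nontrivial $S$, write $\Tr(\psi_S^{2})=\sum_j\lambda_j(S)^2$ in terms of the Schmidt coefficients of $\ket\psi$ across the $(S,\bar S)$ cut; since the largest Schmidt coefficient across every bipartition is at least $\omega$ (the fully-product optimum is a feasible bipartite product), one gets $\Tr(\psi_S^2)\le \omega^{2}+(1-\omega)\cdot \text{(remaining mass)}\le \omega^{2}+(1-\omega)$ after using $\sum_j\lambda_j(S)=1$ and a convexity step $\sum_j\lambda_j^{2}\le \lambda_{\max}^{2}+(1-\lambda_{\max})$. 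Averaging this inequality over $S$, together with the two extreme terms, yields a clean affine combination of $1$ and $\omega^{2}$, and optimizing the splitting ratio produces the $\tfrac13\omega^{2}+\tfrac23$ bound.

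The main obstacle is the first step of the sharp analysis: turning the stationarity of $\ket{0^k}$ into a quantitative structural constraint on $\ket\eta$ that is strong enough to control every $\Tr(\psi_S^{2})$ uniformly, not just the one-body marginals. In particular, one must show that no subset $S$ can have a purity much larger than $\omega^{2}+(1-\omega)$ while still being compatible with stationarity across all $k$ local optimizations simultaneously; this is where the factor $(1-\omega+\omega^{2})$ versus the weaker $(1-\omega/2)$-style bounds of prior work will be won or lost, and is the delicate combinatorial heart of the Soleimanifar--Wright proof.
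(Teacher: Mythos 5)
This theorem is imported as a black box from Soleimanifar and Wright~\cite{SW22matrix-product}; the paper states it but does not prove it, so there is no in-paper argument to compare against. On the merits of your sketch: the purity identity $\pt(\ket\psi,\ket\psi)=2^{-k}\sum_{S\subseteq[k]}\Tr(\psi_S^2)$ and the first-order stationarity fact that $\langle 0^{i-1}\,v_i\,0^{k-i}\mid\psi\rangle=0$ for every $i$ and every $\ket{v_i}\perp\ket0$ are both correct, and they are indeed the right objects to start from.

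There are, however, two concrete gaps. For the sharp bound, the plan of bounding each $\Tr(\psi_S^2)$ and then averaging cannot work in the form you describe: $\Tr(\psi_\emptyset^2)=\Tr(\psi_{[k]}^2)=1$ always, which already exceeds $\omega^2+(1-\omega)$ for $0<\omega<1$, and in the saturating example $\ket\psi=\sqrt\omega\,\ket{00}\otimes\ket{0^{k-2}}+\sqrt{1-\omega}\,\ket{11}\otimes\ket{0^{k-2}}$ \emph{half} of the subsets $S$ (those containing both or neither of positions $1,2$) give $\Tr(\psi_S^2)=1$ while the other half give $\omega^2+(1-\omega)^2$, and only the average equals $1-\omega+\omega^2$. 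So the ``extra $1-\omega$'' is not confined to $S\in\{\emptyset,[k]\}$, and the sentence ``the bookkeeping should then telescope'' is exactly where the actual proof lives and is currently missing: what has to be shown is a global cancellation among the stationarity-constrained cross terms across all $S$, not a per-$S$ inequality. For the general bound $\tfrac13\omega^2+\tfrac23$, the key step is a logical error. You have $\lambda_{\max}(S)\ge\omega$ (a \emph{lower} bound) and $\Tr(\psi_S^2)\le\lambda_{\max}^2+(1-\lambda_{\max})$, but the map $\lambda\mapsto\lambda^2+1-\lambda$ is increasing on $[1/2,1]$, so a lower bound on $\lambda_{\max}$ does not give an upper bound on $\lambda_{\max}^2+(1-\lambda_{\max})$, and in particular does not give $\Tr(\psi_S^2)\le\omega^2+(1-\omega)$. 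Concretely, a state that is nearly maximally entangled across one bipartition but has tiny overlap with every fully-product state has some $\Tr(\psi_S^2)$ near $1$ even though $\omega$ is small; hence there is no uniform per-$S$ bound in terms of $\omega$ alone, and the proposed affine combination does not assemble.
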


With the product test, one can do some gap amplification for $\QMA(2)$. \begin{lemma}[Gap amplification for $\QMA(2)$, c.f.~\cite{HM13}]
\label{lem:qma-gap-amp-naive}
For any $c-s>1/\poly(n)$,
    \[\QMA(2, c,s) \subseteq \QMA\left(2,~ 1-e^{-\poly(n)},~ \gapNaive \right).\]
\end{lemma}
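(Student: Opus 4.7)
The plan is to follow the Harrow--Montanaro style argument based on the product test (\cref{thm:prod-test}). Given any $\QMA(2, c, s)$ protocol with verifier $V$ and gap $c - s \geq 1/\poly(n)$, we first use \cref{thm:HM} combined with elementary padding to reduce to the case where $c$ is bounded below by an absolute constant strictly greater than $1/3$, so that the single-round gap $2/3 + c/3 - 7/9 = (c - 1/3)/3$ is a positive constant.

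We then construct an amplified verifier $V'$ which asks each of the two provers for a $T$-copy state $\ket{\Psi_i}$ living on $T$ copies of the original Hilbert space, with $T = \poly(n)$ chosen sufficiently large. The verifier $V'$ performs $T$ sub-tests; in each sub-test, with probability $2/3$ it runs the product test of \cref{algo:prod-test} on a fresh pair of halves of each of the two supplied states, accepting iff both product tests accept; with probability $1/3$ it runs the original verifier $V$ on a fresh pair of copies. Finally, $V'$ accepts iff the fraction of accepting sub-tests exceeds a threshold $\tau$ chosen in the open interval $(7/9,\, 2/3 + c/3)$.

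For completeness, honest provers send $\ket{\psi_i}^{\otimes T}$ where $\ket{\psi_i}$ realize the completeness bound $c$. Each product test accepts with probability $1$ and each verify sub-test accepts with probability at least $c$, so each sub-test accepts with probability at least $2/3 + c/3 > \tau$. A Chernoff bound over the $T$ sub-tests then gives overall completeness at least $1 - e^{-\Omega(T)} = 1 - e^{-\poly(n)}$.

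For soundness, let $\omega_i = \max_{\ket{\phi}}|\braket{\Psi_i}{\phi^{\otimes T}}|^2$ be the maximum squared overlap of prover $i$'s supplied state with any $T$-copy product state. \cref{thm:prod-test} bounds the per-sub-test product-test acceptance on prover $i$'s state by $\tfrac{2}{3} + \omega_i^2/3$; meanwhile, combining \cref{fact:trace-innerproduct}, \cref{fact:trace-tensor} and \cref{fact:trace_norm_acc} shows that a verify sub-test accepts with probability at most $s + O(\sqrt{1-\omega_1} + \sqrt{1-\omega_2})$, since the reduced state on a random pair of copies is close in trace distance to $\ket{\phi_1}\otimes\ket{\phi_2}$ when the $\omega_i$ are close to $1$. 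Maximizing the weighted combination of these bounds over $\omega_1, \omega_2 \in [0,1]$ shows that the per-sub-test acceptance probability in the no-case is at most $7/9 + O(s)$; the worst case is attained as $\omega_1, \omega_2 \to 0$, where the product-test bound of $2/3$ combines with the trivial verify bound of $1$ to yield $\tfrac{2}{3}\cdot \tfrac{2}{3} + \tfrac{1}{3}\cdot 1 = 7/9$. A Chernoff bound over the $T$ sub-tests then gives the claimed overall soundness $\leq 7/9 + e^{-\Omega(T)} = 7/9 + e^{-\poly(n)}$.

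The main obstacle is that the $T$ sub-tests share the prover's $T$-copy state and are thus not literally independent: the provers may entangle their states across all $T$ blocks. This is exactly what the product test mitigates---it enforces that each prover's state is close to a genuine tensor product, so that the sub-tests behave as approximately independent samples, and the Chernoff argument goes through up to the additive $e^{-\poly(n)}$ error absorbed into the stated soundness bound.
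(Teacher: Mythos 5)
Your proposal departs from the paper's argument in a way that introduces a genuine gap. The paper's proof first amplifies by asking for \emph{more provers}: since the $k$ proofs in a $\QMA(k)$ protocol are unentangled by the model, the $\Theta(k)$ parallel runs of the base verifier have genuinely independent outcomes, and a Chernoff bound legitimately yields $\QMA(2,c,s) \subseteq \QMA(k, 1-e^{-\poly(n)}, e^{-\poly(n)})$. Only \emph{then} do the two provers simulate all $k$, and the verifier runs a \emph{single} round of product-test-with-probability-$p$ or $\QMA(k)$-verify-with-probability-$1-p$; the $7/9$ falls out of optimizing $f(2/3,\omega) = \tfrac{2}{3}\cdot\tfrac{\omega^2+2}{3}+\tfrac{1}{3}\sqrt{1-\omega}$ over $\omega$, with the exponentially small $s'$ contributing only the $e^{-\poly(n)}$ slack. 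You instead keep both provers at two from the start, hand each a $T$-block state, run $T$ \emph{sub-tests on disjoint blocks of that shared state}, and appeal to a Chernoff bound over the $T$ outcomes. But those outcomes are not independent: an adversarial prover is free to entangle across the $T$ blocks, and nothing in your argument prevents this. Your closing remark that the product test ``enforces that each prover's state is close to a genuine tensor product, so that the sub-tests behave as approximately independent samples'' is precisely the missing step, not an observation — a product test run on a couple of blocks certifies nothing about the joint state across all $T$ blocks, and even a state with large overlap with \emph{some} product state $\phi_1\otimes\cdots\otimes\phi_T$ can have per-block measurement outcomes that are far from i.i.d. Making a threshold-over-$T$-rounds argument sound requires either the paper's route (independent provers first) or an explicit conditional/martingale argument exploiting the separability of the measurement, as the paper does later in Lemma~\ref{lem:qma-sep-gap-amp} — and note that if your Chernoff step were valid as stated, you would be proving the much stronger Theorem~\ref{thm:qma2-strong-gap-amp} directly, bypassing the $\QMA^{\SEP}$ machinery the paper built precisely to handle this issue.

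A second, smaller problem: your preliminary reduction brings $c$ above $1/3$ but leaves $s$ untouched, whereas the paper's first step drives $s'$ down to $e^{-\poly(n)}$. Your per-sub-test soundness bound is $7/9 + O(s)$, and with $s$ potentially a constant near $1$ the threshold window $(7/9,\,2/3+c/3)$ may be empty or sit below the actual per-round acceptance rate, so the construction breaks before the concentration question even arises. You need the first step to shrink $s$ as well as lift $c$; the ``elementary padding'' you invoke only helps completeness, and \cref{thm:HM} is a prover-reduction result, not a gap-amplification one.
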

\begin{proof}
Repeat the protocol by asking for more provers. Based on the standard Chernoff argument, for any $c-s>1/\poly(n)$, there is $k\in\poly(n)$ such that
\begin{align}
    \QMA(2, c, s) \subseteq \QMA(k, c', s'),
\end{align}
where $c'=1-e^{-\poly(n)}$ and $s'=e^{-\poly(n)}$.

Let the two provers simulate $k$ provers by requiring each prover to provide the $k$ proofs that are supposed to be sent in the $\QMA(k)$ protocol. Now with probability $p$, apply the product test; otherwise, apply the $\QMA(k)$ verification to the proofs provided by a random prover. 

In the completeness case, the product test passes with probability 1. Therefore, the verification passes with probability still $1-\exp(-\poly(n)).$ 

In the soundness case, let $\ket{\psi_1}, \ket{\psi_2}$ be the proofs provided by the first prover and second prover, respectively. Suppose that $\ket{\phi_1}, \ket{\phi_2}$ are the product states with the max overlap with $\ket{\psi_1}$ and $\ket{\psi_2}$, respectively,
\begin{equation}
    |\langle \psi_1, \phi_1\rangle|^2=\omega_1, \quad |\langle \psi_2, \phi_2\rangle|^2=\omega_2. \label{eq:psi-overlap-omega}
\end{equation}
By the soundness of the $\QMA(k)$ protocol, \cref{fact:trace_norm_acc}, 
and (\ref{eq:psi-overlap-omega}), in the case that the verification is 
applied to a random proof $\ket{\psi_i}$ for $i=1,2$, it passes with 
probability at most $s'+\sqrt{1-\omega_i}$.
Therefore, the total probability of acceptance, i.e., the soundness of the new $\QMA(2)$ protocol, can be bounded as below
\begin{align}
    p \pt(\ket{\psi_1}, &\ket{\psi_2}) +
        (1-p) \left( s' + \frac{1}{2}(\sqrt{1-\omega_1} + \sqrt{1-\omega_2})\right) 
        \nonumber \\
    &\le \Exp_{i \in \{1,2\}} [p \pt(\ket{\psi_i}, \ket{\psi_i}) + (1-p)\sqrt{1-\omega_i}] + (1-p)s'
        \nonumber \\
    &\le \Exp_{i\in \{1,2\}} [f(p, \omega_i)] + (1-p)s',\label{eq:naive-gap}
\end{align}
where the first step uses \cref{cor:product-test}; in the second step,
by \cref{thm:prod-test}, $f$ can be chosen as below
\begin{align*}
    f(p, \omega_i) 
        = p\cdot \frac{\omega_i^2 + 2}{3} 
     + (1-p)  \sqrt{1-\omega_i} .
\end{align*}
Set $p = 2/3$. It can be computed that
\[
    \frac{\d}{\d x} f(p, x) = \frac{4x}{9} - \frac{1}{6\sqrt{ 1-x }},
\]
and the critical points are 
\[
    x_0 = \frac{3}{4},\quad x_1 = \frac{1+\sqrt{13}}{8}.
\]
Turns out that the global maximum of $f$ is obtained in the boundary, \[\max_{x\in[0,1]} f(2/3, x) = f(2/3, 0) = 7/9.\]
That finishes the proof.
\end{proof}

The above amplification can be further boosted by a ``sequential'' repetition. The crucial observation is that the product test is a \emph{separable} measurement.
\begin{definition}[Separable measurement]
A measurement $M = (M_0, M_1)$ is separable if in the yes case, the corresponding Hermitian matrix $M_1$ can be represented as a conical combination of two operators acting on the first and second parts, i.e.,
for some distribution $\mu$ over the tensor product of PSD matrices $\alpha$ and $\beta$ on the corresponding space,
\[
M_1 = \int  \alpha \otimes \beta ~\mathrm{d}\mu.
\]
\end{definition}
\begin{fact}[Folklore] \label{fact:product-test-sep}
The product test is separable.
\end{fact}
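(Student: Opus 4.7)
The plan is to reduce separability of the product test to separability of the swap test (which is recorded as Fact 2.9) and then use the multilinearity of the tensor product. Concretely, the product test on inputs in $H^{\otimes 2}$, where $H = H_1 \otimes H_2 \otimes \cdots \otimes H_k$, applies the swap test independently to each pair of subsystems $H_i^{(1)} \otimes H_i^{(2)}$, where the superscripts indicate which of the two input copies a subsystem belongs to. Since the $k$ swap tests act on disjoint subsystems and the product test accepts iff all of them accept, the accepting POVM element of the product test factors as
\[
M_1 \;=\; \bigotimes_{i=1}^{k} \frac{I_i + F_i}{2},
\]
where $F_i$ is the swap operator on $H_i^{(1)} \otimes H_i^{(2)}$.

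By Fact 2.9, each factor $(I_i + F_i)/2$ is separable across the bipartition $H_i^{(1)} : H_i^{(2)}$, so there exist distributions $\mu_i$ over PSD operators with
\[
\frac{I_i + F_i}{2} \;=\; \int \alpha_i \otimes \beta_i \, \mathrm{d}\mu_i(\alpha_i, \beta_i),
\]
with $\alpha_i$ acting on $H_i^{(1)}$ and $\beta_i$ acting on $H_i^{(2)}$. Plugging these representations into the display for $M_1$ and expanding the tensor product by multilinearity gives
\[
M_1 \;=\; \int \Bigl( \bigotimes_{i=1}^{k} \alpha_i \Bigr) \otimes \Bigl( \bigotimes_{i=1}^{k} \beta_i \Bigr) \, \mathrm{d}\mu_1 \cdots \mathrm{d}\mu_k,
\]
after reordering tensor factors so that all of the first-copy operators sit on the left and all of the second-copy operators sit on the right. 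This is manifestly a conical combination (in fact, an integral over a product measure) of product operators across the bipartition $H^{(1)} : H^{(2)}$, which is exactly the definition of a separable measurement for the two provers in a $\QMA(2)$-style protocol.

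No real obstacle is anticipated; the only thing to be careful about is that the bipartition relevant for $\QMA(2)$-separability is prover-vs-prover (i.e., $H^{(1)} : H^{(2)}$), not subsystem-vs-subsystem, so the reordering step above is where the proof actually does something rather than just quoting Fact 2.9. Everything else is bookkeeping.
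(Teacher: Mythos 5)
Your proof is correct and takes essentially the same route as the paper: reduce product-test separability to swap-test separability. The paper's own proof spends its text re-deriving swap-test separability (writing the symmetric-subspace projector $\frac{I+F}{2}$ as $C\int \ket{\theta,\theta}\bra{\theta,\theta}\,\mathrm{d}\theta$) and merely asserts that this "suffices," whereas you invoke Fact~\ref{fact:swap-test-sep} as a black box and instead spell out the reduction — the factorization $M_1=\bigotimes_i\frac{I_i+F_i}{2}$ and the reordering of tensor factors into the prover-vs-prover bipartition — which the paper leaves implicit.
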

\begin{proof}
It suffices to show that the swap test is separable. The swap test on $\C^d\otimes \C^d$ is the projection onto the symmetric subspace, i.e., the space consisting of states that are invariant under permutations. This projection can be expressed as (for some constant $C$)
\begin{equation}
\label{eq:projection-symm-subspace}
    \Exp_{\theta\in \C^d } [\ket{\theta,\theta}\bra{\theta,\theta}] = C\int_{\theta\in \C^d} ~\ket{\theta,\theta}\bra{\theta,\theta} ~\d \theta.
\end{equation}
\end{proof}
A detailed discussion about the above fact can be found in~\cite{harrow2013church}.

\begin{definition}
$\QMA^\sep(2, c, s)$ is like the standard $\QMA(2,c,s)$ with the restriction that the measurement $M = (M_0, M_1)$ used by Arthur needs to be separable between the provers. 
\end{definition}

So \cref{lem:qma-gap-amp-naive} can be rephrased.
\begin{lemma}[Gap amplification for $\QMA(2)$ rephrased, c.f.~\cite{HM13}]
\label{lem:qma-gap-amp-naive-rephrase}
For any $c-s>1/\poly(n)$,
    \[\QMA(2, c,s) \subseteq \QMA^\sep\left(2,~ 1-e^{-\poly(n)},~ \gapNaive\right).\]
\end{lemma}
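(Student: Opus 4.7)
The plan is to follow the proof of Lemma~\ref{lem:qma-gap-amp-naive} essentially verbatim, and to additionally verify that the resulting measurement used by Arthur respects the bipartition between the two provers, i.e., is separable. Since the statement to be proved is already essentially the same as Lemma~\ref{lem:qma-gap-amp-naive} modulo the separability upgrade, I expect the bulk of the work to be the already-executed completeness and soundness analysis, with the new content being a bookkeeping check that every branch of Arthur's measurement is of the required form.

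First I would reproduce the setup. Apply standard Chernoff-style repetition to go from $\QMA(2,c,s)$ to $\QMA(k,c',s')$ for some $k \in \poly(n)$ with $c' = 1 - e^{-\poly(n)}$ and $s' = e^{-\poly(n)}$. Then simulate $k$ provers with $2$ provers by asking each of the two provers to supply the entire $k$-tuple of proofs expected by the $\QMA(k)$ verifier. Arthur, on input the two registers $\ket{\psi_1}, \ket{\psi_2}$ (each living in a $k$-fold tensor space), does the following: with probability $p = 2/3$ he runs the product test pairing the $i$-th subsystem of $\ket{\psi_1}$ with the $i$-th subsystem of $\ket{\psi_2}$ for each $i \in [k]$, and with probability $1-p = 1/3$ he picks one of the two provers uniformly at random and runs the $\QMA(k)$ verification on that prover's proofs. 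The completeness and the soundness bound $7/9 + e^{-\poly(n)}$ then follow exactly as in the proof of Lemma~\ref{lem:qma-gap-amp-naive} using \cref{cor:product-test} and \cref{thm:prod-test}.

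The new piece is separability. Arthur's overall acceptance operator $M_1$ is a convex combination of three operators:
\[
   M_1 \;=\; p \cdot \Pi_{\text{prod}} \;+\; \tfrac{1-p}{2}\bigl(V \otimes I\bigr) \;+\; \tfrac{1-p}{2}\bigl(I \otimes V\bigr),
\]
where $\Pi_{\text{prod}}$ is the product-test acceptance projector and $V$ is the $\QMA(k)$ acceptance operator on a single prover's $k$-tuple. By \cref{fact:product-test-sep}, $\Pi_{\text{prod}}$ admits a representation as $\int \alpha \otimes \beta \,\mathrm{d}\mu$ across the bipartition between the two provers. The operators $V \otimes I$ and $I \otimes V$ are, trivially, tensor products of PSD operators across the same bipartition, hence are separable as point masses on $(V, I)$ and $(I, V)$ respectively. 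Since the class of separable measurements is closed under taking convex combinations (mix the underlying measures with weights $p, (1-p)/2, (1-p)/2$), the operator $M_1$ is separable, so the resulting protocol lies in $\QMA^{\sep}(2, 1 - e^{-\poly(n)}, \gapNaive)$.

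The only nontrivial obstacle is the separability of the product test itself, which is handled by \cref{fact:product-test-sep} (and follows from writing each swap test as a convex combination of rank-one projectors $\ket{\theta,\theta}\bra{\theta,\theta}$). Everything else is routine repackaging of the already-established Lemma~\ref{lem:qma-gap-amp-naive}.
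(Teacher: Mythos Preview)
Your proposal is correct and follows essentially the same approach as the paper: observe that the accepting measurement from Lemma~\ref{lem:qma-gap-amp-naive} is a convex combination of the product test (separable by \cref{fact:product-test-sep}) and single-side verifications $V\otimes I$, $I\otimes V$ (trivially separable), hence separable overall. The paper's proof says exactly this in two sentences; your version just spells out the decomposition more explicitly.
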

\begin{proof}
In the proof of \cref{lem:qma-gap-amp-naive}, the measurement is a linear combination of the product test and the verification procedure applied to the proof provided by either the first prover or the second prover. The product test is a separable measurement and since the verification is applied to either one part, it's also separable.
\end{proof}

The nice property of $\QMA^\sep(2)$ is that it admits sequential repetition without going through the path that increases the number of provers and then reduces the number of provers again by the product test.
\begin{lemma}[Gap amplification for $\QMA^\sep(2)$~\cite{HM13}]
\label{lem:qma-sep-gap-amp}
For any $c,s\in[0,1]$, and any $\ell = \poly(n),$
\[\QMA^\sep(2, c, s)\subseteq\QMA^\sep(2, c^\ell, s^\ell).\]
\end{lemma}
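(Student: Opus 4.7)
The plan is to define a protocol in which each prover sends $\ell$ copies of their witness, have Arthur run the original separable verification independently on the $\ell$ pairs (using separate measurement ancillas each time), and accept iff all $\ell$ sub-verifications accept. Crucially, I will verify that the composed protocol is itself in $\QMA^{\sep}(2)$, that completeness degrades exactly as $c^\ell$, and that soundness is at most $s^\ell$.

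Completeness is immediate: in the yes case, the honest provers send $\ell$ copies of the good witness for the base protocol, and since the $\ell$ sub-verifications are independent sub-circuits acting on disjoint registers, they each accept independently with probability at least $c$, giving total acceptance at least $c^\ell$. The fact that the composed measurement is again separable follows directly from~\cref{fact:product-test-sep}-style reasoning: if the base measurement has yes-operator $M_1 = \int \alpha \otimes \beta\, d\mu(\alpha,\beta)$, then the $\ell$-fold parallel verification has yes-operator
\[
M_1^{(\ell)} \;=\; \int \cdots \int \bigl(\alpha_1 \otimes \cdots \otimes \alpha_\ell\bigr)\,\otimes\,\bigl(\beta_1 \otimes \cdots \otimes \beta_\ell\bigr) \,d\mu(\alpha_1,\beta_1)\cdots d\mu(\alpha_\ell,\beta_\ell)
\]
after permuting registers so Alice's $\ell$ copies and Bob's $\ell$ copies are grouped; this is manifestly a conical combination of tensor products across the Alice/Bob cut.

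The main content is soundness, which I would prove by induction on $\ell$. The base case $\ell=1$ is the assumed soundness of the original protocol. For the inductive step, take any unentangled proofs, which (by linearity and purification) may be assumed to be pure $|\Phi_A\rangle\in H_A^{\otimes \ell}$ and $|\Phi_B\rangle\in H_B^{\otimes \ell}$. For a fixed sample $(\alpha_1,\beta_1)$, define the PSD ``conditional'' operators
\[
\tau_A(\alpha_1) \;:=\; \Tr_{A,1}\bigl((\alpha_1 \otimes I_{A,2\cdots\ell})\,|\Phi_A\rangle\langle\Phi_A|\bigr), \qquad
\tau_B(\beta_1) \;:=\; \Tr_{B,1}\bigl((\beta_1 \otimes I_{B,2\cdots\ell})\,|\Phi_B\rangle\langle\Phi_B|\bigr),
\]
each acting on the remaining $\ell-1$ registers. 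A direct calculation shows
\[
\langle \Phi_A\Phi_B|\, M_1^{(\ell)}\,|\Phi_A\Phi_B\rangle \;=\; \int d\mu(\alpha_1,\beta_1)\;\Tr\!\bigl(M_1^{(\ell-1)}\,\tau_A(\alpha_1)\otimes\tau_B(\beta_1)\bigr).
\]
Applying the inductive hypothesis to the normalized product state $\hat\tau_A(\alpha_1)\otimes\hat\tau_B(\beta_1)$ (extending by homogeneity to unnormalized PSD operators) bounds the integrand by $s^{\ell-1}\cdot\Tr(\tau_A(\alpha_1))\,\Tr(\tau_B(\beta_1))$. Then
\[
\int d\mu(\alpha_1,\beta_1)\;\Tr(\tau_A(\alpha_1))\,\Tr(\tau_B(\beta_1)) \;=\; \Tr\!\bigl(M_1\,(\rho_A^{(1)}\otimes \rho_B^{(1)})\bigr) \;\le\; s,
\]
where $\rho_A^{(1)},\rho_B^{(1)}$ are the reduced states of $|\Phi_A\rangle,|\Phi_B\rangle$ on the first copy; the last inequality is the base-case soundness applied to these (mixed) product states. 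Multiplying yields $s^\ell$, completing the induction.

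The only subtle point, and the step I would be most careful about, is ensuring that the inductive hypothesis applies to the unnormalized PSD operators $\tau_A(\alpha_1),\tau_B(\beta_1)$; this is handled by noting that $\Tr(M_1^{(\ell-1)}\cdot)$ is linear and homogeneous in its argument, so the bound on normalized states immediately extends. Everything else (reordering registers, pulling integrals through traces, reducing to pure proofs by convexity) is routine, and the argument never leaves the separable-measurement class, so the resulting protocol is indeed in $\QMA^{\sep}(2,c^\ell,s^\ell)$.
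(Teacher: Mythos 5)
Your proposal is correct and follows essentially the same approach as the paper: sequential repetition with one verification per copy, using the separability of $M_1$ to argue that conditioning on acceptance of one copy leaves a separable residual state so soundness can be applied copy-by-copy. Your inductive formulation with the $\tau_A(\alpha_1)\otimes\tau_B(\beta_1)$ decomposition is a more explicit rendering of the paper's terser argument (``the reduced state after applying $M$ \ldots is separable \ldots therefore \ldots $s^\ell$''), and the key identity and the convexity/homogeneity step you highlight are exactly the details the paper leaves implicit.
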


\begin{proof}
Do sequential repetition without increasing the number of provers, i.e., each prover will provide many copies of the original proof. Apply the measurement $M$ from the $\QMA^\sep(2,c,s)$ to each copy. Accept only if all the measurements accept.

In the completeness case, it's clear that each copy is accepted independently with a probability at least $c$. Therefore, with probability $c^\ell$ all measurements accept.

In the soundness case, use the fact that the measurement $M$ is separable. The reduced state after applying $M$ by tracing out the registers being measured, pure or mixed, is separable between the two provers given that the $M$ accepts. Therefore, by the soundness, each measurement is accepted with probability at most $s$, and the probability that all measurements accept is at most $s^\ell$.
\end{proof}

Combining all the above pieces, Harrow and Montanaro obtained a stronger gap amplification for $\QMA(2)$.
\begin{theorem}[Sequential repetition for $\QMA(2)$~\cite{HM13}]
\label{thm:qma2-strong-gap-amp}
For any $c-s>\poly(n)$,
\[
    \QMA(2,c,s)  \subseteq \QMA(2,~ 1-e^{-\poly(n)}, ~e^{-\poly(n)}).
\]
\end{theorem}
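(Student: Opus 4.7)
The plan is to combine Lemmas~\ref{lem:qma-gap-amp-naive-rephrase} and~\ref{lem:qma-sep-gap-amp} and observe that $\QMA^\sep(2,\cdot,\cdot) \subseteq \QMA(2,\cdot,\cdot)$ trivially. Concretely, starting from the assumption $c-s \ge 1/\poly(n)$, I would first apply Lemma~\ref{lem:qma-gap-amp-naive-rephrase} to embed the given protocol into a \emph{separable} one: this yields
\[
    \QMA(2,c,s) \subseteq \QMA^\sep\!\left(2,\; 1-e^{-\poly(n)},\; \tfrac{7}{9}+e^{-\poly(n)}\right).
\]
Crucially, the resulting completeness is exponentially close to $1$ while the soundness is bounded away from $1$ by an absolute constant (roughly $2/9$). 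This is the content of the ``product test plus random-prover verification'' analysis already carried out in the proof of Lemma~\ref{lem:qma-gap-amp-naive}.

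Next, I would apply Lemma~\ref{lem:qma-sep-gap-amp} (sequential repetition within $\QMA^\sep(2)$) with some $\ell = \poly(n)$ chosen large enough. Because the measurement used in the first step is already separable (this is the whole point of passing through $\QMA^\sep(2)$), sequential repetition is permitted: the provers send $\ell$ copies of their proofs, the verifier runs the separable measurement on each copy independently, and accepts only if all accept. This blows up completeness as $(1-e^{-\poly(n)})^\ell$ which remains $1 - e^{-\poly(n)}$, and shrinks soundness to $(7/9 + e^{-\poly(n)})^\ell \le e^{-\Omega(\ell)} = e^{-\poly(n)}$ for the appropriate $\ell$.

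Finally I would note that $\QMA^\sep(2,c',s') \subseteq \QMA(2,c',s')$ by definition (a separable verifier is a special case of a general verifier), which gives
\[
    \QMA(2,c,s) \subseteq \QMA\!\left(2,\; 1-e^{-\poly(n)},\; e^{-\poly(n)}\right),
\]
as desired. There is no real obstacle here; the only subtlety worth flagging in the write-up is why sequential repetition works for $\QMA^\sep(2)$ but not for general $\QMA(2)$, namely that separability of each per-round measurement guarantees that the conditional post-measurement state of the remaining copies is still separable between the two provers, so the soundness bound $s$ applies independently to each round. This is precisely what is used inside Lemma~\ref{lem:qma-sep-gap-amp} and I would simply invoke it rather than re-prove it.
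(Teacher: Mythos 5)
Your proposal matches the paper's proof exactly: the paper also obtains the result by applying Lemma~\ref{lem:qma-gap-amp-naive-rephrase} to land in $\QMA^\sep(2,1-e^{-\poly(n)},\gapNaive)$, then Lemma~\ref{lem:qma-sep-gap-amp} with polynomially many sequential repetitions, and the trivial containment $\QMA^\sep(2,\cdot,\cdot)\subseteq\QMA(2,\cdot,\cdot)$. Your added remark about why separability is needed for the per-round soundness bound to compose is correct and is precisely the content of Lemma~\ref{lem:qma-sep-gap-amp}.
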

\begin{proof}
Apply \cref{lem:qma-gap-amp-naive-rephrase} and then \cref{lem:qma-sep-gap-amp} with suitable parameters.
\end{proof}

\subsection{A Mild Gap Amplification for $\QMA^+(2)$}
In the previous sections, we discussed the connection between $\QMA^+(2)$ and $\QMA(2)$. The point is that, suppose we can do a strong gap amplification for $\QMA^+(2)$, then it would imply that
\[
    \NEXP \subseteq \QMA^+(2, c, s) \stackrel{?}{\subseteq} \QMA^+(2, 0.01, 0.99) \subseteq \QMA(2)!
\]
This motivates us to review the gap amplification for $\QMA(2)$.

In this section, we discuss a little bit about the gap amplification for
$\QMA^+(2)$. Although we don't know how to do gap amplification to 
achieve arbitrary completeness soundness gap (c-s gap). But as already somewhat alluded, gap amplification for $\QMA^+(2)$ in several regimes do exist.
In particular, whenever the c-s gap is smaller than \gapSmall, it can be amplified to \gapSmall; and whenever the c-s gap is larger than $\gapLarge$, then it can be further amplified to $1-\exp(-\poly(n))$.

\begin{lemma}[Gap amplification for $\QMA^+(2)$ in the small gap regime]
For any $c-s \ge 1/\poly(n)$, 
\[
    \QMA^+(2,c,s) \subseteq \QMA^+\left(2, ~1-e^{-\poly(n)}, ~\gapNaive\right).
\]
\end{lemma}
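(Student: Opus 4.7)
The plan is to mirror the proof of \cref{lem:qma-gap-amp-naive} for $\QMA(2)$, and verify at each step that the argument goes through unchanged when the provers are restricted to non-negative amplitude states. Concretely, I would proceed as follows.

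First I would apply standard Chernoff amplification inside the $\QMA^+(k)$ model. Because the honest provers in a $\QMA^+(k)$ protocol send product states with non-negative amplitudes, repetition simply means asking for more such copies, so the soundness/completeness can be amplified to $(1-e^{-\poly(n)}, e^{-\poly(n)})$ at the cost of taking $k = \poly(n)$ provers each supplying states with non-negative amplitudes. Then, following the same ``simulated $k$-prover'' reduction, I would have two $\QMA^+(2)$ provers each send what is supposed to be the tensor product of the $k$ states, and design a verifier that with probability $p=2/3$ runs the product test across the two registers of length $k$, and with probability $1/3$ picks one prover uniformly at random and runs the amplified $\QMA^+(k)$ verification on the $k$ pieces of that prover's proof.

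The completeness case is immediate: honest non-negative amplitude proofs of product form pass the product test with probability $1$ and pass the $\QMA^+(k)$ verification with probability $1-e^{-\poly(n)}$. For soundness, let $\ket{\psi_1}, \ket{\psi_2}$ be the (non-negative amplitude) proofs and let $\omega_i$ be the squared overlap of $\ket{\psi_i}$ with the nearest product state. The key observation, used implicitly in the paper's earlier treatment of \cref{thm:HM} in the $\QMA^+$ setting, is that for a state with non-negative amplitudes the best product state approximation can be taken with non-negative amplitudes as well: given any optimizer $\bigotimes_i \ket{\phi_i}$, replacing each $\ket{\phi_i}$ by its entry-wise absolute value only increases the inner product with $\ket{\psi_i}$. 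Therefore the ``closest product state'' lies inside the $\QMA^+(k)$ soundness regime, so that branch accepts $\ket{\psi_i}$ with probability at most $e^{-\poly(n)} + \sqrt{1-\omega_i}$ via \cref{fact:trace_norm_acc}.

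The remaining calculation is then identical to the one in \cref{lem:qma-gap-amp-naive}: using \cref{cor:product-test} to bound $\pt(\ket{\psi_1},\ket{\psi_2})$ by the average of the diagonal product test probabilities, and using the general bound $\pt(\ket{\psi},\ket{\psi}) \le (\omega^2+2)/3$ from \cref{thm:prod-test}, the total acceptance probability is at most
\[
    \Exp_{i \in \{1,2\}} \left[ p \cdot \frac{\omega_i^2 + 2}{3} + (1-p)\sqrt{1-\omega_i} \right] + (1-p)\, e^{-\poly(n)},
\]
which at $p = 2/3$ is maximized at $\omega_i = 0$, giving $7/9 + e^{-\poly(n)}$. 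The main (and only) thing to verify carefully is the claim about non-negative-amplitude product state approximations, since everything else, including the separability of the product test and the bound from Soleimanifar–Wright, is amplitude-agnostic. Assuming that verification, the bound $\gapNaive$ follows.
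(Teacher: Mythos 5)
Your proof is correct and follows exactly the route the paper takes: mirror the $\QMA(2)$ amplification of \cref{lem:qma-gap-amp-naive}, and observe that the nearest product state to a non-negative-amplitude proof can itself be taken non-negative (by replacing each factor with its entry-wise absolute value), so the $\QMA^+(k)$ soundness applies to it. The paper's own proof is a two-sentence version of precisely this argument.
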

\begin{proof}
This is very much analogous to \cref{lem:qma-gap-amp-naive}. Just note that for any state $\ket\psi$, the closest product state to $\ket\psi$ also has  nonnegative amplitudes. Because, for any state $\ket\phi = \sum \alpha_i \ket i$, consider $\ket{\phi'} = \sum |\alpha_i| \ket i$. Then $|\langle \phi' \mid \psi \rangle |^2 \ge |\langle \phi \mid \psi \rangle | ^2.$
\end{proof}

Gap amplification also works when $c-s>\gapLarge.$
\begin{lemma}[Gap amplification for $\QMA^+(2)$ in the large gap regime]
For any $c-s \ge 7/8$, 
\begin{align*}
    \QMA^+(2,c,s) \subseteq \QMA^+(2, ~1-e^{-\poly(n)}, ~e^{-\poly(n)}).
\end{align*}
\end{lemma}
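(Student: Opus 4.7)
The plan is to route the proof through the standard $\QMA(2)$ class, where strong gap amplification is already established, and then re-enter $\QMA^+(2)$ by observing that the amplified verifier has the right structural form. The key arithmetic that makes this route work is that $c-s \ge 7/8$ together with $c \le 1$ forces $s \le 1/8$, so there is enough slack to absorb the multiplicative factor-of-4 loss in soundness incurred by simulating $\QMA^+(2)$ inside $\QMA(2)$.

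First I would invoke \cref{lem:qmaplus-in-qma} to embed $\QMA^+(2,c,s) \subseteq \QMA(2, c, 4s)$. Using $s \le 1/8$, the new gap satisfies
\[
c - 4s \;=\; (c-s) - 3s \;\ge\; \tfrac{7}{8} - \tfrac{3}{8} \;=\; \tfrac{1}{2},
\]
which is certainly at least $1/\poly(n)$.

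Second, I would apply the strong gap amplification for $\QMA(2)$ (\cref{thm:qma2-strong-gap-amp}) to obtain
\[
\QMA^+(2,c,s) \;\subseteq\; \QMA(2, c, 4s) \;\subseteq\; \QMA\!\left(2,\; 1-e^{-\poly(n)},\; e^{-\poly(n)}\right).
\]

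The remaining step — and the only place where care is needed — is to verify that this amplified $\QMA(2)$ protocol is in fact a $\QMA^+(2)$ protocol with the same parameters, and not merely a general $\QMA(2)$ protocol. Unrolling the amplification of \cref{thm:qma2-strong-gap-amp}, it consists of parallel repetition combined with the product test (\cref{lem:qma-gap-amp-naive-rephrase}), followed by sequential repetition of the resulting separable measurement (\cref{lem:qma-sep-gap-amp}). In the completeness case, the amplified witness is simply many tensor copies of the original completeness witness, one per repetition; since the original $\QMA^+(2,c,s)$ protocol admits a non-negative amplitude completeness witness, so does the amplified protocol. In the soundness case, the $\QMA(2)$ analysis of the amplified verifier bounds the acceptance probability by $e^{-\poly(n)}$ against \emph{arbitrary} proofs, and in particular against non-negative amplitude proofs. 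Both conditions of \cref{def:qma_plus_k} are thus met, yielding $\QMA^+(2, 1-e^{-\poly(n)}, e^{-\poly(n)})$.

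The main (minor) obstacle is pedantic: one must confirm that each intermediate amplification step preserves non-negativity of the completeness witness. This is immediate because the transformations at the prover side in both \cref{lem:qma-gap-amp-naive-rephrase} and \cref{lem:qma-sep-gap-amp} consist solely of taking tensor copies of the original proofs; tensor powers of a non-negative amplitude state are again non-negative amplitude. No genuinely new verification technology is required beyond the three lemmas already proved in \cref{sec:simulation,sec:prod-test}.
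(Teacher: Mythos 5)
Your proof is correct, but it takes a genuinely different final step from the paper's. Both you and the paper begin identically: embed $\QMA^+(2,c,s)$ into $\QMA(2,c,4s)$ via \cref{lem:qmaplus-in-qma}, observe that $c-4s\ge 1/2$, and apply \cref{thm:qma2-strong-gap-amp} to reach $\QMA(2,1-e^{-\poly(n)},e^{-\poly(n)})$ — this is precisely what the paper's invocation of \cref{cor:qmaplus-in-qma} does internally. The divergence is in how you return to $\QMA^+(2)$. The paper applies \cref{lem:qma-in-qmaplus}, an explicit re-encoding that splits each complex amplitude into sign/real/imaginary components stored in auxiliary registers and then undoes the encoding with a measurement. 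You instead observe that the amplified $\QMA(2)$ protocol is \emph{already} a $\QMA^+(2)$ protocol with the same parameters, because (i) a $\QMA(2)$ protocol whose honest completeness witness has non-negative amplitudes automatically meets both clauses of \cref{def:qma_plus_k} — the soundness over all unentangled states subsumes soundness over non-negative ones — and (ii) the amplification in \cref{lem:qma-gap-amp-naive-rephrase,lem:qma-sep-gap-amp} has honest provers act only by taking tensor powers of the base witness, which preserves non-negativity. Your route is conceptually cleaner and avoids the lossy auxiliary-register construction of \cref{lem:qma-in-qmaplus}; the price is that you must open up the amplification theorem as a constructive transformation of protocols (rather than a black-box class containment) to verify the honest-witness structure, whereas the paper's modular two-lemma composition requires no such unrolling. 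Both are valid, and your observation in (i)–(ii) is sound.
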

\begin{proof}
Apply \cref{cor:qmaplus-in-qma} and \cref{lem:qma-in-qmaplus},
\begin{align*}
    \QMA^+(2,c,s) &\subseteq \QMA\left(2, c, s+\gapLarge\right) \\
        &\quad\subseteq \QMA^+(2, ~1-e^{-\poly(n)}, ~e^{-\poly(n)}). \qedhere
\end{align*}
\end{proof}

\section*{Acknowledgement}
We thank Vijay Bhattiprolu for the discussions during the initial stages
of this project. We thank STOC reviewers, Scott Aaronson, Penghui Yao, 
and Avi Wigderson for their valuable feedback on our early draft.

\bibliographystyle{alpha}
\bibliography{macros,references}

\appendix

\appendix
\renewcommand{\thesection}{A} 

\section{Doubly Explicit \texorpdfstring{$\PCP$}{PCP} for \texorpdfstring{$\NEXP$}{NEXP}}
\label{sec:pcp-nexp}

In this section, we describe a $\PCP$ for $\NEXP$, which is doubly 
explicit and satisfies the strong uniformity property. This will imply the
following theorem immediately.
\begin{theorem}[Doubly explicit $\PCP$ for $\NEXP$]
 There is some absolute constant $\kappa<1$ and  natural number $q$, such that it is $\NEXP$-hard
    to decide $(1,\kappa)$-GapCSP for $(N=2^{\poly(n)},R=2^{\poly(n)},q,\{0,1\})$-CSP
systems that are doubly explicit.
\end{theorem}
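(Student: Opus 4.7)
The plan is to construct a suitable PCP for NEXP by scaling up a known PCP pipeline to the succinct setting and verifying that each step both preserves double explicitness and yields a strongly uniform constraint graph. A clean path is to start from a succinct CSP encoding of a NEXP-complete problem and then apply a scaled-up version of Dinur's combinatorial PCP pipeline together with an algebraic inner PCP for the final composition step.

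First, I would take a standard NEXP-complete problem such as succinct circuit satisfiability: given a $\poly(n)$-size succinct description of a circuit of size $2^{\poly(n)}$, the natural CSP $\cons_0$ has one variable per wire and one constraint per gate. Double explicitness is immediate here because the succinct description lets one compute, in $\poly(n)$ time, both (i) the wires feeding any given gate and (ii) the gates using any given wire. Only the soundness is inadequate (it can be as weak as $1-2^{-\poly(n)}$), so we still need gap amplification.

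Second, I would apply a succinct version of the preprocessing, powering, and composition steps of Dinur's PCP. Preprocessing and regularization produce a constraint graph whose vertex set has the form $V \times [d]$ for a constant $d$, with the new edges coming from an expander gadget placed at each cluster plus the original constraint edges; gap amplification by $t$-powering then produces a CSP whose constraints are indexed by length-$t$ walks in the amplified graph. Both operations preserve double explicitness provided the underlying expander is itself doubly explicit---this is exactly what \cref{thm:explicit-expander} supplies, since the Lubotzky--Phillips--Sarnak and Alon constructions allow neighbor enumeration (and in the LPS case explicit perfect-matching decomposition) in $\poly\log$ time. Strong uniformity with constant $T$ is then essentially free after regularization: the constraint graph becomes nearly regular, so the variables partition into $O(1)$ degree classes.

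The main obstacle, as I see it, is the alphabet-reduction composition step needed to drive the arity down to a constant over the binary alphabet. Composition requires an inner PCP verifier whose own query map is invertible, because after composition a variable of the combined CSP has the form ``(outer constraint, inner-proof position)'' and one must be able to recover, given such a variable, all random strings of both the outer and the inner verifiers that touch it. I would instantiate composition with an algebraic inner PCP (for instance one based on Reed--Muller codes or a long-code test) whose queries lie on low-dimensional affine subspaces; enumerating the subspaces through a fixed point is efficient, so the inner query map admits a $\poly\log$-time inverse. Chaining this inner explicitness with the inductive double explicitness of the outer CSP preserves the property through the composition, yielding the desired doubly explicit, $T$-strongly uniform, constant-query, constant-soundness PCP for NEXP and hence the stated hardness of $(1,\kappa)$-GapCSP.
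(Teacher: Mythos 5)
Your proposal takes a genuinely different route from the paper, and that route has a significant gap that you do not close.

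The paper does \emph{not} go through Dinur-style gap amplification. Instead it starts from succinct SAT, arithmetizes it, and runs a \emph{one-shot} algebraic robust outer PCP (a low-degree test, a zero-on-subcube test via the auxiliary polynomials $P_i, Q_i$, and a consistency test, following Harsha's thesis), which already has constant robust soundness without any iterated amplification. It then performs a \emph{single} composition with the Hadamard inner PCP, and the bulk of the appendix is a careful, hands-on verification that each of these tests (lines through a point, the $A^T u$ equation test, the tensor test, etc.) admits efficient forward \emph{and} inverse query maps, via counting and inclusion–exclusion arguments such as the reverse-point-lines claim and the linear-independence of the rows of $A$ in the quadratic-equations reduction. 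Uniformity is then read off directly from the list of test types.

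The gap in your plan is the amplification step. After arithmetization, the succinct CSP you start from has soundness gap roughly $2^{-\poly(n)}$, so Dinur's gap-doubling loop (preprocess $\to$ power $\to$ compose) must be run $\Theta(\poly(n))$ times, not $O(\log n)$ times. This produces a CSP whose constraints are defined by a $\poly(n)$-deep \emph{nested} recursion of expander-replacement, walk-powering, and assignment-tester composition. To make the resulting verifier run in $\poly(n)$ time and be doubly explicit, you must show that this entire tower unwinds in polynomial time and that at every level the map from a variable to the constraints touching it is efficiently invertible. You assert that preprocessing and powering ``preserve double explicitness provided the expander is doubly explicit,'' but this is exactly the claim that requires proof: after alphabet reduction the variables acquire recursively structured addresses (outer-round index, walk, inner-proof position, ...), and inverting $\AdjV$ through $\poly(n)$ levels of this encoding is nontrivial and, as far as I know, not established anywhere in the literature. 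You flag the composition step as ``the main obstacle'' but then dispatch it with the remark that an algebraic inner PCP has an invertible query map — that handles one level of composition, not $\poly(n)$ nested levels. The paper sidesteps this entirely by getting constant gap in one shot from the algebraic protocol and composing exactly once, which is why its explicitness analysis (while intricate) is tractable. Your proposal, as written, is a plausible research program but not a proof.
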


The above $\PCP$ is obtained by $\PCP$ composition. 
The outer $\PCP$ follows closely that of~\cite[Chapter 5]{harsha2004robust},
the inner $\PCP$ (of proximity) is the Hadamard code based $\PCP$~\cite{ALMSS92}.
Our focus is the double explicitness, therefore the analysis on correctness
will be omitted. The interested readers are referred to Harsha's thesis~\cite{harsha2004robust}.

\subsection{A \texorpdfstring{$\NEXP$}{NEXP}-Complete Problem---Succinct SAT}

The starting point is a $\NEXP$-complete problem---the succinct
SAT problem~\cite{PY86}. A succinct SAT instance is an encoding of some circuit $M:\{0,1\}^{3n}\times\{0,1\}^{3}\to\{0,1\},$
$\succSAT(M)=1$ if and only if 
\begin{align*}
 & \exists x\in\{0,1\}^{2^{n}}, \, \forall(i_{1},i_{2},i_{3},\sigma)\in\{0,1\}^{3n}\times\{0,1\}^{3},\text{ s.t.}\\
 & \hspace{8em}\neg M(i_{1},i_{2},i_{3},\sigma)\lor\left(\bigvee_{j=1}^{3}(\sigma_{j}\oplus x_{i_{j}})\right).
\end{align*}
$M(i_1,i_2,i_3,\sigma)$ determines whether there is a clause consists of variables $x_{i_1}, x_{i_2}, x_{i_3}$, and $\sigma$ indicates in the clause whether the variable is negated. For example, $\sigma_1=1$ would indicate the corresponding literal being $\neg x_{i_1}$, while $\sigma_1=0$ would indicate the literal being $x_{i_1}$. The size of the circuit $M$ should be at most $\poly(n).$ 

Integrating Cook-Levin's reduction, one can conclude that there is a polynomial-size 3-CNF formula
$\Phi:\{0,1\}^{3n}\times\{0,1\}^{3}\times\{0,1\}^{t}$ for $t=n^{O(1)}$ constructing from $M$ in polynomial time,
such that
\begin{align*}
 & \succSAT(M)=1\\
 & \iff\exists x\in\{0,1\}^{2^{n}},\,\forall(i_{1},i_{2},i_{3},\sigma,w)\in\{0,1\}^{3n}\times\{0,1\}^{3}\times\{0,1\}^{t},\text{ s.t.}\\
 & \hspace{10em}\neg\Phi(i_{1},i_{2},i_{3},\sigma,w)\lor\left(\bigvee_{j=1}^{3}(\sigma_{j}\oplus x_{i_{j}})\right).
\end{align*}
Abbreviate $(i_{1},i_{2},i_{3},\sigma,w)$ by $y\in\{0,1\}^{3n+3+t},$
and let $A:\{0,1\}^{n}\to\{0,1\}$ be the polynomial of degree at most
$n$ that encodes input $x$, i.e., $A(i)=x_{i}$ for $i\in \{0,1\}^n$. Using standard arithmetization,
there is a polynomial $P:\{0,1\}^{3n+3+t+3}\to\{0,1\}$ with $\deg P=O(\size|\Phi|)=\poly(n),$
such that
\[
\neg\Phi(i_{1},i_{2},i_{3},\sigma,w)\lor\left(\bigvee_{j=1}^{3}(\sigma_{j}\oplus A(i_{j}))\right)
\iff P(y,A(i_{1}),A(i_{2}),A(i_{3}))=0.
\]
Given $M$, this polynomial $P$ can be evaluated on any input in polynomial time.

\subsection{A Robust Outer PCP for \texorpdfstring{$\NEXP$}{NEXP} with \texorpdfstring{$\poly(n)$}{poly(n)} Queries}

Based on the above discussion, a prover needs to provide $A:\{0,1\}^{n}\to\{0,1\}$
which is supposedly a polynomial of degree at most $n$, representing a satisfying assignment. To assist
the verifier, the prover will in reality provide the extended version
of $A:\FF^{n}\to\FF$ for some large finite field $\FF$ with $|\FF|=\poly(n)$.
The verifier will carry a low-degree test on $A$ to make sure that
$A$ is close to some polynomial of degree at most $n$. The low-degree
test is described below.

\noindent\fbox{\begin{minipage}[t]{1\columnwidth - 2\fboxsep - 2\fboxrule}%
\uline{Low-degree test}

\textbf{Input}: Oracle $A:\FF^{n}\to\FF$
\begin{enumerate}
\item Sample a random line by sampling random $a,b\in\FF^{n}$. 
\item Query $A(at+b)$ for all $t\in\FF.$
 \end{enumerate}
\emph{Accept} if $A(at+b)$ is a polynomial of $t$ with degree at most $n$.%
\end{minipage}}

Conditioning on $A$ being close to a low-degree polynomial, $P(y,A(i_{1}),A(i_{2}),\allowbreak A(i_{3}))$
is close to a polynomial $P_{0}:\FF^{3n+3+t}\to\FF$ of degree at
most $d=O(\deg A\cdot\deg P)=\poly(n).$ Let $m=3n+3+t$. The goal
is to test if $P_{0}$ vanishes on $\{0,1\}^{m}.$ To accomplish this
goal, the prover should provide the following auxiliary polynomials
\[
Q_{1},Q_{2},\ldots,Q_{m},P_{1},P_{2},\ldots,P_{m}:\FF^{m}\to\FF
\]
satisfying that for $i\in[m]$ 
\begin{align*}
 & P_{i-1}\equiv Z_{i}Q_{i}+P{}_{i},\\
 & P_{m}\equiv0,
\end{align*}
where $Z_{i}$ is a polynomial such that $Z_{i}(x)=0$
if and only if $x_{i}\in\{0,1\}$, for example,
\[
Z_{i}(x)=(x_{i}-1)x_{i}.
\]
These auxiliary polynomials will be bundled together in the oracle
$\Pi:\FF^{m}\to\FF^{2m},$ such that for any $x\in\FF^{m},$ $\Pi(x)$
is supposed to equal $(P_{1}(x),P_{2}(x),\ldots,P_{m}(x),\allowbreak Q_{1}(x),\ldots,Q_{m}(x)).$
Once the prover provide the auxiliary proof $P_{0}$ and $\Pi,$ the
verifier will take the following test that check whether $P_{0}$
vanishes on $\{0,1\}^{n}.$

\noindent\fbox{\begin{minipage}[t]{1\columnwidth - 2\fboxsep - 2\fboxrule}%
\uline{Zero subcube test}

\textbf{Input}: Oracle $P_{0}:\FF^{m}\to\FF,\Pi:\FF^{m}\to\FF^{2m}$
\begin{enumerate}
\item Sample a random line by sampling $a,b\in\FF^m$
\item Query all points in the line $L_{a,b}=\{t\in\FF:at+b\}$ on $\Pi$
and $P_{0}$.

\end{enumerate}
 \emph{Reject} if $P_{i-1}\not=Z_{i}Q_{i}+P_{i}$ for any $i\in[m]$ or $P_{m}\not=0$ on any point in $L_{a,b}$.\\
 \emph{Reject} if $P_{i}(at+b)$ is not a polynomial on $t$ with degree
at most $d$, $Q_{i}(at+b)$ is not a polynomial of degree at most
$d-2$.\\
\emph{Accept, }otherwise.%
\end{minipage}}

The combined $\PCP$ will be the following

\noindent\fbox{\begin{myalg}\label{alg:robust-pcp}
[Robust \mbox{$\PCP$} for \mbox{$\succSAT$}\label{algo:robust-PCP}]\ignorespacesafterend
\textbf{Input}: \textbf{$A:\FF^{n}\to\FF,\Pi:\FF^{m}\to\FF^{2m},P_{0}:\FF^{m}\to\FF$}

Take one of the following tests uniformly at random.
\begin{enumerate}
\item Low-degree test on $A$.
\item Zero subcube test on $P_{0}$ and $\Pi$.
\item Consistency test: Sample a random line $L$ by sampling random $a,b\in\FF^{m}$. \emph{Reject} if $P_{0}(y)\not=P(y,A(i_{1}),A(i_{2}),A(i_{3}))$
for any point $y\in L$.
\end{enumerate}
\emph{Accept }if all tests accept.%
\end{myalg}}
\begin{theorem}[{Robust $\PCP$~\cite[Lemma 5.4.4]{harsha2004robust}}]
 For some large enough field $\FF$ with size $\poly(n)$. If the
succinct SAT instance $M$ is satisfiable, then the test accepts with
probability $1$. Otherwise, the test satisfies the robust soundness:
If $\succSAT(M)=0$, then for some constant $\delta\in(0,1]$, with
probability at least $\delta$, the test rejects; Furthermore, the
variables queried have values $\delta/C$ far away from any satisfying
assignment for some absolute constant $C$.
\end{theorem}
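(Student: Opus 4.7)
The plan is to verify completeness directly and then reduce soundness (with robustness) to the analysis of the three constituent tests, mostly following the argument in Harsha's thesis but isolating where robustness enters.

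For completeness, assume $\succSAT(M)=1$ with witness $x\in\{0,1\}^{2^n}$, and let $A\colon\FF^n\to\FF$ be the (unique) multilinear extension of $x$, which has total degree at most $n$. Define $P_0(y):=P(y,A(i_1),A(i_2),A(i_3))$, a polynomial of degree $d=O(\deg P\cdot\deg A)=\poly(n)$, which by construction vanishes on $\{0,1\}^m$. Build the auxiliary polynomials $P_1,Q_1,\dots,P_m,Q_m$ by dividing out the vanishing polynomials $Z_i$ one coordinate at a time: $P_i$ and $Q_i$ are the quotient/remainder polynomials, with $P_m\equiv 0$ because $P_0$ vanishes on $\{0,1\}^m$. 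Bundling these into $\Pi$ makes all three sub-tests accept with probability $1$.

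For soundness, I would follow the standard outer PCP recipe: use a low-degree tester (e.g., the line-point or plane-point tester) with soundness analyzed via Raz--Safra/Arora--Sudan style arguments to conclude that if the low-degree test passes with probability $1-\delta_1$, then $A$ is $O(\delta_1)$-close to a degree-$n$ polynomial $\tilde A$, and analogously each coordinate of $\Pi$ and $P_0$ is close to a low-degree polynomial $\tilde P_i,\tilde Q_i,\tilde P_0$. Conditioned on this, the consistency test forces $\tilde P_0\equiv P(y,\tilde A(i_1),\tilde A(i_2),\tilde A(i_3))$ by the Schwartz--Zippel lemma (two low-degree polynomials agreeing on a random line with high enough probability must be identical for $|\FF|$ sufficiently large). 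Similarly, the zero subcube test forces the identities $\tilde P_{i-1}\equiv Z_i\tilde Q_i+\tilde P_i$ and $\tilde P_m\equiv 0$, which together imply $\tilde P_0$ vanishes on $\{0,1\}^m$, i.e., $\tilde A$ is an encoding of a satisfying assignment, contradicting $\succSAT(M)=0$. Hence at least one of the three sub-tests must reject with probability $\Omega(1)$, and picking a sub-test uniformly gives a constant rejection probability $\delta$.

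The main obstacle and the nontrivial part of the statement is the \emph{robustness} guarantee: not only that the verifier rejects, but that the queried $\poly(n)$ values, viewed as a single string over $\FF$, are $\Omega(\delta)$-far in Hamming distance from any string that would have made this specific randomness-fixed local view accept. The plan is to exploit the fact that each sub-test is evaluated on a \emph{line}, so the verifier's view is a collection of $|\FF|$ field elements forming (alleged) low-degree univariate polynomials. The key lemma (implicit in Ben-Sasson--Goldreich--Harsha--Sudan--Vadhan) is that if a line-restricted view fails to be a valid tuple of low-degree polynomials satisfying the local identities, then the \emph{Hamming distance} of the view to the nearest accepting view is a constant fraction of $|\FF|$, because (i) univariate polynomials of bounded degree form a Reed--Solomon code of constant relative distance, and (ii) the local algebraic identities (divisibility by $Z_i$, composition with $P$) are preserved under small perturbations only on a small fraction of points. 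Formalizing (ii) is the delicate step: one must show that a view that is $\eta$-close to satisfying each identity can be decoded to a view that exactly satisfies them, losing only a constant factor in $\eta$. Averaging this robustness-per-line over the uniform random line (and over the three test choices) yields the final robust soundness claim with the advertised constant $\delta/C$.
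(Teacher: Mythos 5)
The paper does not prove this theorem; it explicitly cites it as Lemma 5.4.4 of Harsha's thesis and states just above the construction that ``the analysis on correctness will be omitted,'' with all of the appendix's own work going into double explicitness and strong uniformity. So there is no in-paper proof to compare your sketch against, and your reduction to the low-degree test, sum-check/zero-subcube decomposition, and consistency check is aligned with the cited external source rather than with anything the authors derive themselves.

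As a standalone sketch, the completeness and plain-soundness parts are the standard argument and look fine, but the robustness paragraph is doing a lot of hand-waving at exactly the step that is the theorem's actual content. Saying that the line-restricted view lives near a Reed--Solomon code gets you that deviations from \emph{low-degreeness} are far in Hamming distance, but it does not by itself give you that deviations from the \emph{joint} acceptance predicate (low degree \emph{and} $P_{i-1}=Z_iQ_i+P_i$ pointwise \emph{and} $P_m\equiv 0$ \emph{and} the composition constraint with $P$) are far. The nontrivial lemma you flag as ``formalizing (ii)'' is precisely where the BGHSV/Harsha machinery earns its keep---one must show that any line view which is $\eta$-close to some accepting view can be \emph{uniquely decoded} to a bundle of genuine low-degree polynomials that exactly satisfy the identities, with only a constant-factor loss in $\eta$, and this requires a careful ``robustness composition'' argument (or the equivalent notion of oracle decodability) rather than just the distance of the underlying code. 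Your sketch correctly names the obstacle, but as written it asserts the lemma rather than proving it, so the robustness claim is not yet established.
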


We establish the uniformity and the double explicitness property for
the outer $\PCP$. The uniformity is very straightforward from the specifications
of the $\PCP$ protocol.
\begin{claim}[Uniformity of the outer $\PCP$]
 For any variable $v$ in the proof $A\circ\Pi\circ P_{0}$, the
size of the $\AdjV(v)$ depends only on which of the following parts
$v$ lies in: $A$, $P_{0},$ or $\Pi$. 
\end{claim}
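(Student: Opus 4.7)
The plan is to establish the claim by a counting argument based on the symmetry of uniform random line sampling in the relevant affine spaces. Each constraint of the outer PCP in Algorithm~\ref{alg:robust-pcp} corresponds to one random choice of the verifier (a test type together with its randomness), and each variable is indexed by a triple consisting of (i) the part of the proof ($A$, $P_0$, or $\Pi$), (ii) a point in $\FF^n$ or $\FF^m$, and (iii) for $\Pi$ a coordinate in $[2m]$. I want to show that $|\AdjV(v)|$ is invariant under the choice of (ii) and (iii) once (i) is fixed.

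First I would partition the contributions to $\AdjV(v)$ by the three tests in Algorithm~\ref{algo:robust-PCP}: the low-degree test on $A$, the zero-subcube test on $P_0$ and $\Pi$, and the consistency test. For each test I then count the number of verifier random strings (equivalently, constraints) that touch a fixed variable $v$. The crucial observation is that the verifier's randomness is always a uniformly chosen pair $(a,b) \in \FF^n \times \FF^n$ or $\FF^m \times \FF^m$, which makes the distribution of lines invariant under translation of the ambient affine space.

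Concretely: for the low-degree test, the number of pairs $(a,b) \in (\FF^n)^2$ such that $x \in L_{a,b}$ equals the number of $(a,b,t)$ with $at+b = x$, modulo the $a=0$ degeneracy, which is a function of $|\FF|$ and $n$ only; in particular, by translation symmetry in $\FF^n$, this count is the same for every $x$. For the zero-subcube test, the exact same symmetry in $\FF^m$ shows that the number of lines through any given $y \in \FF^m$ is a constant depending only on $|\FF|$ and $m$, and the test queries all $2m$ coordinates of $\Pi(y)$ simultaneously whenever it touches $y$, so each coordinate at each point is hit the same number of times. For the consistency test, a variable $P_0(y)$ is queried exactly when the random line in $\FF^m$ passes through $y$, giving again a translation-invariant count, while $A(x)$ is queried whenever the line passes through some point whose $j$-th $n$-block equals $x$ for some $j \in \{1,2,3\}$; splitting the $m$ coordinates of $\FF^m$ into three $n$-blocks plus the remaining $3+t$ coordinates and applying the symmetry of $\FF^n$ within each block shows this count depends only on $|\FF|$, $n$, $m$, and is independent of $x$.

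Summing the three contributions for each of the three part types yields three numbers $N_A, N_{P_0}, N_\Pi$ that depend only on $|\FF|, n, m$, and hence $|\AdjV(v)|$ is determined by which part contains $v$. The main obstacle, though minor, is handling the parameterization redundancy: distinct $(a,b)$ pairs can describe the same point set (most notably the degenerate case $a=0$, and duplicate parameterizations of the same line), so one must be careful to count each verifier random string once and verify that the over/undercount terms are themselves translation-invariant. This is handled cleanly by fixing the PCP convention that each $(a,b)$ is a distinct constraint, after which the symmetry argument goes through without further subtlety.
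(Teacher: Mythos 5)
Your argument is correct, and it fills in exactly what the paper dismisses with ``By inspection'': each of the three tests samples a uniformly random affine line, and the number of parameter pairs $(a,b)$ whose line meets a fixed point is translation-invariant, so the contribution of each test to $|\AdjV(v)|$ depends only on which oracle $v$ belongs to. One small imprecision in your consistency-test case: since a single random string queries $A(x)$ whenever \emph{some} $j\in\{1,2,3\}$ has $(at+b)|_{I_j}=x$, the translation establishing $|\AdjV(A(x))|=|\AdjV(A(x'))|$ must shift $b$ by $x'-x$ \emph{simultaneously} in all three $n$-blocks $I_1,I_2,I_3$ (and by zero elsewhere), not block-by-block, so that the union condition $\exists j$ is preserved as a whole; alternatively, inclusion--exclusion over the three blocks gives the count explicitly as $\sum_{\emptyset\neq J\subseteq[3]}(-1)^{|J|+1}|\cL_{n,x}|^{|J|}\,|\FF|^{2(m-|J|n)}$, which is visibly independent of $x$.
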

To clarify, variables in the above claim have large and different alphabets. For example, a variable in $A$ has alphabet $\FF$, a variable in $\Pi$ would have alphabet $\FF^{2m}$. Toward the end, we will switch to the binary representation. But this is not an issue since the size of each variable is known and at most polynomially large (since the alphabet is at most exponentially large). The index of variables using a large alphabet and the index of the bit variables can be computed efficiently. 

Given the randomness used in Algorithm~\ref{alg:robust-pcp},
\[
r=(r_{0},a,b)\in(\{0\}\times\FF^{n}\times\FF^{n})\cup(\{1,2\}\times\FF^{m}\times\FF^{m}),
\] 
it is very efficient to compute the variables to query since only some elementary operations are required to compute the points on the line determined by $a,b$.
Moreover, given any variable, we can also compute the randomness with which the test queries the corresponding variable. To see this, we first record
a related simple fact.
\begin{claim}
\label{claim:reverse-point-lines}Given some $n\in\ZZ$ and finite
field $\FF$ with size polynomial in $n$. For any $p\in\FF^{n},$
let 
\[
\cL_{n,p}=\{(a,b)\in\FF^n\times\FF^n:at+b=p\text{ for some }t\in\FF\}
\]
be the set of lines that pass point $p$. There is a natural order
on $\cL_{n,p}$ (i.e., the alphabetical order), such that the following can be computed in time $\poly(n)$:
\begin{enumerate}
\item Given any $(a,b)\in\cL_{n,p},$ output the index of $(a,b)$
in $\cL_{n,p}$; 
\item Given any index $\iota\in[|\cL_{n,p}|]$, output the line
$(a,b)$ with index $\iota$ in $\cL_{n,p}$.
\end{enumerate}
\end{claim}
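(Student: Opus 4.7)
The plan is to parameterize $\cL_{n,p}$ cleanly, show it has a simple predictable structure, and then exploit that structure to realize both the forward and the inverse index computations as routine arithmetic in the ordered field $\FF$. First, I would observe that a pair $(a,b)\in\FF^n\times\FF^n$ lies in $\cL_{n,p}$ iff $b=p-at$ for some $t\in\FF$. Splitting on whether $a=0$: if $a=0$, the only valid pair is $(0,p)$; if $a\neq 0$, the map $t\mapsto p-at$ is injective, so exactly $|\FF|$ distinct values of $b$ give a valid pair. Hence
\[
|\cL_{n,p}| = 1 + (|\FF|^n-1)\cdot |\FF|,
\]
and for each $a$ we have an explicit set $B_a\subseteq\FF^n$ of valid right-coordinates, computable entrywise.

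Next, fix the natural order on $\FF$ (identify $\FF$ with $\{0,1,\ldots,|\FF|-1\}$ via its succinct representation) and take the alphabetical order on $\cL_{n,p}$ inherited from lex order on $\FF^{2n}$. To compute the index of a given $(a,b)\in\cL_{n,p}$, I would split the count into two pieces: (i) the number of valid pairs $(a',b')$ with $a'$ lex-preceding $a$, and (ii) the number of valid $b'\in B_a$ with $b'$ lex-preceding $b$. Piece (i) equals $[a\succ 0]+|\FF|\cdot(\text{rank}(a)-[a\succ 0])$, where $\text{rank}(a)$ is the lex position of $a$ in $\FF^n$; both $\text{rank}(a)$ and the bracket are $\poly(n)$-time since $|\FF|$ is polynomial. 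Piece (ii) requires counting, for fixed $a\neq 0$, how many $t'\in\FF$ give $p-at'$ lex-preceding $b$. This is a one-dimensional enumeration over $|\FF|$ field elements, which is polynomial time; for $a=0$ it is vacuous.

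For the inverse direction, given an index $\iota\in[|\cL_{n,p}|]$, I would first recover $a$ by inverting the cumulative count from piece (i): the map $a\mapsto (\text{count of valid pairs with }a'\prec a)$ is monotone and has closed-form value, so $a$ can be found either by direct division by $|\FF|$ (after absorbing the $a=0$ case) and converting the quotient back to a lex-rank in $\FF^n$, or by binary search; both run in $\poly(n)$. Once $a$ is fixed, the residual index selects an element of $B_a$; for $a=0$ we output $b=p$, and for $a\neq 0$ we enumerate the $|\FF|$ values $p-at$ in sorted order and pick the one at the residual position, or directly compute the corresponding $t$ by sorting $B_a$ once (polynomial-size list) and indexing.

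The only mild subtlety — and what I expect to be the sole ``obstacle'' — is the mismatch between the algebraic parameterization by $t$ and the lex order on $b=p-at$: the map $t\mapsto p-at$ is not in general order-preserving, so the enumeration of $B_a$ in lex order requires sorting the $|\FF|$ candidate values (or, equivalently, implementing piece (ii) above by explicit iteration). Since $|\FF|=\poly(n)$, this sort fits into $\poly(n)$ time and presents no real difficulty; all the remaining steps are straightforward counting and arithmetic in $\FF$.
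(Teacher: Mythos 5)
Your proposal is correct and follows essentially the same approach as the paper's own proof: split on $a=0$ versus $a\neq 0$, count the pairs with lexicographically smaller $a'$ (a closed-form multiple of $|\FF|$ plus the single $a'=0$ pair), then within the fixed $a$ block enumerate the $|\FF|$ values $\{p-at : t\in\FF\}$ explicitly to locate $b$, and invert by integer division to recover $a$ followed by the same one-dimensional enumeration. Your remark that $t\mapsto p-at$ is not order-preserving and so the $|\FF|$-element block must be sorted (or iterated) is a small point of care that the paper glosses over, but it does not constitute a different argument.
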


\begin{proof}
(i) For $(a,b)=(0,p)$, this is the line with the first index in $\cL_{n,p}$.
For any $a\not=0\in\FF^{n}$ and $t\in\FF$, there is a unique $b\in\FF^{n}$
such that $at+b=p.$ Therefore, given $(a,b)$, it is easy to compute
the number of lines $(a',b)$ containing $p$ with $a'<a$.
Now for all $t\in\FF,$ we can list all the $b'$ such that $at+b'=p$.
This gives us the exact index of the given pair $(a,b)$.

(ii) Given an index $\iota\in[|\cL_{n,p}|]$. If $\iota=1$, we can determine that $(a,b)=(0,p)$. Otherwise, determine $a$ by setting $a$ to be
$\lfloor(\iota-2)/|\FF|\rfloor+2$ in $\FF^n$. Then run over $t\in\FF,$ we find
the correct $b$.
\end{proof}
\begin{claim}[Double explicitness of the outer $\PCP$]
For any variable $v$ in $A,$ or in $P_{0}$ or in $\Pi$. There
is a list $\AdjV(v)$ of randomness $r$ with which the outer $\PCP$
queries the variable $v$. The following are computable in time polynomial
in $n$. 
\begin{enumerate}
\item Given any $r\in\AdjV(v)$, output the index $\iota$ of $r$ in $\AdjV(v)$. 
\item Given any $\iota\in[|\AdjV(v)|]$, output the $\iota$th randomness
in $\AdjV(v)$. 
\end{enumerate}
\end{claim}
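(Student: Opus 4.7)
The plan is to handle each of the three variable types ($A$, $P_0$, $\Pi$) by expressing $\AdjV(v)$ as a disjoint union of ``subspecies'' coming from the different sub-tests (low-degree, zero subcube, consistency), putting a canonical lexicographic order on the union, and then using~\cref{claim:reverse-point-lines} as a black box to pass back and forth between ``$r \in \AdjV(v)$'' and ``its index in $\AdjV(v)$'' in polynomial time.

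First I would enumerate, for each variable $v$, exactly which randomness strings $r=(r_0,a,b)$ produce a query to $v$. For a point $v \in \FF^m$ appearing in either $P_0$ or $\Pi$, this is immediate: $v$ is queried iff $r_0 \in \{1,2\}$ (for $P_0$) or $r_0=1$ (for $\Pi$) \emph{and} $(a,b)\in\cL_{m,v}$. For $v\in\FF^n$ appearing in $A$, there are two contributing sub-tests: $r_0=0$ contributes the lines $(a,b)\in\cL_{n,v}$, and $r_0=2$ contributes triples $((a,b),t,j)\in\FF^m\times\FF^m\times\FF\times\{1,2,3\}$ such that $at+b=y$ with $(y_{(j-1)n+1},\ldots,y_{jn})=v$; here one randomness string $r=(2,a,b)$ can contribute several entries to $\AdjV(v)$, one for each $(t,j)$ with the indicated property. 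I would therefore redefine $\AdjV(v)$ as a list of \emph{query events} rather than randomness strings, and note that this change is harmless because the map from events back to randomness is just projection onto the $r$-coordinate and is computable in $\poly(n)$ time.

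Next I would fix the canonical ordering: lexicographically by $r_0$; within a fixed $r_0$, by the index of the underlying line in the relevant set $\cL_{n,v}$ or $\cL_{m,v}$ guaranteed by~\cref{claim:reverse-point-lines}; within a fixed line in the consistency case, by $(j,t)$ under the natural order on $\{1,2,3\}\times\FF$. The cardinalities $|\cL_{n,v}|$ and $|\cL_{m,v}|$ are computable in $\poly(n)$ time (either by the formula $1+|\FF|^{n-1}(|\FF|-1)$ or by a direct count using the enumeration of~\cref{claim:reverse-point-lines}), which yields $|\AdjV(v)|$ by summing over the contributing sub-tests and multiplying by the at-most-polynomial per-line multiplicity in the consistency case. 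This gives the double-explicit formula for $|\AdjV(v)|$ demanded in the definition.

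Finally, both directions of the bijection follow mechanically. Given an index $\iota\in[|\AdjV(v)|]$, one reads off which sub-test it falls in from the computed cumulative sums, then reduces to either (i) ``recover the $\iota'$-th line through $v$'', which is~\cref{claim:reverse-point-lines}(ii), or (ii) ``recover the $\iota'$-th $(j,t)$-pair on the consistency-test line'', which is trivial. Given a query event $r$, one symmetrically reads off the sub-test, invokes~\cref{claim:reverse-point-lines}(i) on the corresponding line to obtain its index, and adds the appropriate offset and (in the consistency case) the position of $(j,t)$ in its order. Both directions run in $\poly(n)$ time. The only step with any subtlety is the consistency bookkeeping for $A$-variables, since one must be careful that identical queries $A(v)$ arising from different $(j,t)$ on the same line are treated as distinct events so that the bijection is well defined; this is a purely notational matter, and once the convention is fixed the rest is routine.
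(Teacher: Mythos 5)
There is a genuine gap, and it is precisely where you flagged the step as ``the only step with any subtlety'' and then dismissed it as ``a purely notational matter.''  The consistency test for $A$-variables is the technical heart of the paper's proof, and your disjoint-union-of-subspecies decomposition does not work there for two reasons.

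First, you redefine $\AdjV(v)$ as a list of query events $(r,t,j)$ rather than a list of randomness strings, on the grounds that the projection to $r$ is efficient.  But this changes what is being proved.  The claim (and the definition of doubly explicit CSP it is instantiating) requires a \emph{bijection} between $[\,|\AdjV(v)|\,]$ and the set of constraints (i.e.\ random strings $r$) that query $v$: $\AdjGloV(i,j)$ must output the unique index of constraint $j$, and $\AdjLocV(i,\iota)$ must be injective.  If a single $r=(2,a,b)$ queries $A(p)$ through two or three of the blocks $I_1,I_2,I_3$ --- which happens, e.g., whenever $(a|_{I_j},b|_{I_j})\in\cL_{n,p}$ for more than one $j$ --- then the event-to-$r$ projection is not injective, and your indexing is many-to-one.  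That is not a notational issue; it is exactly the overlap that the paper has to count correctly, via the inclusion-exclusion over $B_k$, $B_{jk}$, $B_{123}$.

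Second, even granting the event-based reinterpretation, your proposed ordering for the consistency case is underspecified.  You order ``by the index of the underlying line in the relevant set $\cL_{n,v}$\ldots then by $(j,t)$,'' but the consistency randomness $(a,b)$ lives in $\FF^m\times\FF^m$ while $v\in\FF^n$, so the condition $(a|_{I_j},b|_{I_j})\in\cL_{n,p}$ only pins down $2n$ of the $2m$ field coordinates.  The remaining $2(m-n)$ coordinates of $(a,b)$ are free and are not enumerated by $(j,t)$ nor by the index in $\cL_{n,p}$.  Without accounting for those you cannot compute either direction of the bijection.  The paper handles this by ordering lexicographically over $(a,b)$ itself, and the crucial observation that makes the count tractable is that for $a'<a$ the number of admissible $b'$ depends \emph{only on how many of $a'|_{I_1},a'|_{I_2},a'|_{I_3}$ are zero}, which is what makes the auxiliary quantities $C_k(a)$ poly-time computable.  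This structural fact, together with the inclusion-exclusion for the $b'<b$ count, is the actual content of Case~3 and is absent from your proposal.  (Your Cases for $P_0$ and $\Pi$, and the low-degree contribution to $A$, are correct and match the paper; the formula you quote for $|\cL_{n,v}|$ is off, but that is incidental.)
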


\begin{proof}
We check all the variables and tests.

\textbf{Case 1}: For any point $p\in\FF^{m},$ consider the variable
$P_{0}(p)$. $P_0(p)$ can be queried in the zero subcube tests and the consistency
test. In either case, $P_{0}(p)$ is queried when the sampled line
passes $p$. Therefore, double explicitness holds by Claim~\ref{claim:reverse-point-lines}. 

\textbf{Case 2}: For any point $p\in\FF^{m}$, consider the variable
$\Pi(p)$, which is queried only in the zero subcube test. Again, $\Pi(p)$
is queried only when the sampled line passes $p$, so double explicitness
follows from Claim~\ref{claim:reverse-point-lines}. 

\textbf{Case 3}: For any point $p\in\FF^{n}$ corresponding to the
variable $A(p)$. In this case, $A(p)$ can be queried in the low-degree test and the consistency test.  In the low-degree test, the situation is completely covered by Claim~\ref{claim:reverse-point-lines}.
Furthermore, we know that there are exactly $m_{0}=|\FF|^{2n}$
possible $(a,b)$ that queries $p.$ So we ignore the low-degree test,
this offsets the index $\iota$ in $\AdjV(v)$ for $r=(r_0,a,b)$ with  $r_{0}\not=0$
by $m_{0}$.  So in the remainder
of the proof, we handle the consistency test.

(i) Given $r=(2, a,b)$ that queries $A(p)$, we want to determine the index of $r$. For string $y\in\FF^{m}$, focus on the coordinates $I_{1},I_{2},I_{3}$
that correspond to variables $i_{1},i_{2}$ and $i_{3},$ respectively. Fix some arbitrary $a\in\FF^{m}$, count the $b$ that satisfies
any of the following 
\begin{align*}
 &  (a|_{I_{1}},b|_{I_{1}}) \in \cL_{n,p}, & (1)\\
 &  (a|_{I_{2}},b|_{I_{2}}) \in \cL_{n,p}, & (2)\\
 &  (a|_{I_{3}},b|_{I_{3}}) \in \cL_{n,p}. & (3)
\end{align*}
Recall that $\cL_{n,p}$ is the set of the lines that pass the point $p$ in $\FF^n$. For $I\in\{I_1,I_2,I_3\}$, if $a|_{I}\not=0$, the number of $b|_I$ in $\cL_{n,p}$ is $|\FF|$; if $a|_I = 0$, then there is only one $b|_I$. In any case, there are at most polynomially many different assignments to $b|_{I_1}, b|_{I_2}, b|_{I_3}$ to satisfy $(1)$, $(2)$ or $(3)$ depending only on how many $0$s in $a|_{I_1}, a|_{I_2}, a|_{I_3}$. The other coordinates can be set arbitrarily. Therefore, even if we don't know $a$ exactly, but only the number of $0$s in $I_1, I_2, I_3$, we can still compute the number of $b$s such that $(a,b)$ queries $A(p)$.
Let 
\begin{align*}
    C_k (a) = \left\{{a'<a : \sum_{i\in[3]} \1{ a'|_{I_i} = 0}}=k\right\}.
\end{align*}
For any fixed $a$, note that $C_k(a)$ can be computed efficiently. Since $k$
decides  
\[|\{b'\in \FF^m : (a',b') \text{ queries } A(p)\}|,
\]
for any $a'\in C_k(a)$, we can compute the total number of $(a',b')$ in consistency check that queries $A(p)$ for $a'<a$.

Now fix some $b$, such that $p$ lies in line $(a,b)$ restricted to $I_{1},I_{2}$
or $I_{3}$. Count $b'<b$ such that $(a,b')$ queries $A(p)$. We
can do this because we can count the following efficiently
\begin{align*}
 & B_{k}=\{b'<b:(a|_{I_{k}},b'|_{I_{k}})\in\cL_{n,p}\},\qquad\qquad\qquad\qquad\qquad\qquad k=1,2,3.\\
 & B_{jk}=\{b'<b:(a|_{I_{j}},b'|_{I_{j}}),(a|_{I_{k}},b'|_{I_{k}})\in\cL_{n,p}\},\qquad\qquad\quad1\le j<k\le3.\\
 & B_{123}=\{b'<b:(a|_{I_{1}},b'|_{I_{1}}),(a|_{I_{2}},b'|_{I_{2}}),(a|_{I_{3}},b'|_{I_{3}})\in\cL_{n,p}\}.
\end{align*}
The reason is that for a fixed $a$, the arbitrary combination
of $(1),(2)$ and $(3)$ restricts $b'$ on the corresponding locations
(e.g. $b'|_{I_{1}}$, or $b'|_{I_{1}\cup I_{2}},$...) with at most polynomially
many assignments (in particular at most $|\FF|^3$). For each of the assignments, it is easy to count
the number of assignments on the unrestricted coordinates such that
$b'<b$. Finally, using the inclusion-exclusion principle, we know exactly
the number of $(a,b')$ that queries $A(p)$ for $b'<b$. This tells
us the index $\iota$ of $(a,b)$ for variable $A(p)$. 

(ii) Now given the index $\iota$, we want to determine $(a,b)$ in the randomness. 
We first fix the value of $a$. To do so, we start by fixing the coordinates before $I_1, I_2, I_3$. 
Then we decide if $a_{I_1}$ is 0. If not we can decide the value of $a_{I_1}$, and so 
on. After we fix the value of $a$, we decide the
value of $b|_{I_{1}}$. For any assignment $\sigma$ to $b|_{I_{1}}$,
we can count the total number of assignments of $b$ such that $(a,b)$ queries $A(p)$. 
This number only depends on whether
$(a|_{I_{1}},b|_{I_{1}})\in\cL_{n,p}$ under the given assignment
$\sigma$. Since there are only polynomially many assignments $\sigma$
making $(a|_{I_{1}},b|_{I_{1}})\in\cL_{n,p}$, we can decide the value
of $b|_{I_{1}}.$ Analogously, we can decide the value of $b|_{I_{2}}$
and $b|_{I_{3}},$ and finally all the other coordinates.

\end{proof}

\subsection{The Hadamard Inner \texorpdfstring{$\PCP$}{PCP}}

The standard approach to reduce the number of queries in a $\PCP$
system is to compose the outer $\PCP$ with a query-efficient inner
$\PCP$. In the case of $\NEXP$, the task is much easier. Simply
note that once the randomness $r$ is fixed, then there is a polynomial-time
Turing machine $M_{r}$ that verifies if the variables to query, again
depending on $r$, satisfies the corresponding test. This verification
can also be ``verified'' using the following well-known Hadamard code based
$\PCP$.
\begin{theorem}[cf. \cite{ALMSS92}]
For any constant $\delta>0,$ there is a $\PCP$ of proximity for
any $\NP$ problem with $\poly(n)$ number of random
bits, query complexity $O(1),$ perfect completeness, and robust soundness
$\delta$: for any input $\delta$-far from satisfying the circuit,
the test rejects with probability at least $O(\delta).$
\end{theorem}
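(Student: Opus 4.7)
The plan is to combine the classical Hadamard-based PCP verifier of \cite{ALMSS92} with a proximity test exploiting the self-correctability of the Hadamard code. First I would reduce the given $\NP$ problem to (explicit) \emph{Quadratic Equations}: we want to decide whether a system $\{p_i(x)=0\}_{i\in[m]}$ of constant-degree (in fact degree-$2$) polynomials over $\FF_2$ in $n+k$ variables $x=(z,w)$ is satisfiable, where $z\in\FF_2^n$ is the explicit input and $w\in\FF_2^k$ is an $\NP$-witness. This reduction is standard (via Cook-Levin followed by arithmetization). The proof oracle is then expected to consist of (i)~the Hadamard encoding $A\colon \FF_2^{n+k}\to\FF_2$ of a full assignment $x$, i.e., $A(y)=\langle x,y\rangle$, and (ii)~the Hadamard encoding $B\colon \FF_2^{(n+k)^2}\to\FF_2$ of the outer product $x\otimes x$, i.e., $B(Y)=\langle x\otimes x, Y\rangle$.

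The verifier then runs four standard subtests, each with $O(1)$ queries, taking one uniformly at random: (1)~BLR linearity tests on $A$ and on $B$, which force $A$ and $B$ to be $O(\delta)$-close to some linear functions, from which we can henceforth self-correct queries; (2)~a \emph{consistency test} that, using self-correction, picks random $y,y'\in\FF_2^{n+k}$ and checks $A(y)A(y')=B(y\otimes y')$, which (up to a constant rejection factor) enforces that the linear function underlying $B$ comes from the outer square of the vector underlying $A$; (3)~a \emph{satisfaction test} that takes a uniformly random linear combination $q=\sum_i r_i p_i$ and verifies, via self-corrected queries to $A$ and $B$, that $q(x)=0$, which catches any nonzero residual with probability $1/2$; (4)~a \emph{proximity test} that reads a uniformly random index $j\in[n]$ of the input $z$, self-corrects to recover the bit $x_j$ encoded by $A$ (by querying $A(y\oplus e_j)-A(y)$ at random $y$), and rejects if $x_j\ne z_j$. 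Test~(4) is the PCP-of-proximity ingredient absent from the usual ALMSS protocol, and it is the one I would be most careful about: after the linearity test, self-correction gives an accurate reading of the linear $A$, so if the actual input $z$ is $\delta$-far from the $z$-coordinates of the decoded $x$, test~(4) rejects with probability $\Omega(\delta)$.

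To conclude robust soundness, I would argue as follows. If any of the linearity/consistency/satisfaction tests fails noticeably, the oracle is not encoding any consistent assignment, so no satisfying $x$ exists close to what is decoded and test~(3) fails with probability $\Omega(1)$; by a robustness argument (taking many independent random $r$'s in test~(3), or working out the local rejection pattern) a \emph{constant} fraction of the queried symbols in fact have to change to make the verifier accept, which is the robust rejection we need. If instead tests~(1)--(3) pass with probability $\ge 1-\delta'$ for small $\delta'$, then $A$ decodes to a satisfying assignment $x^*$ of the polynomial system, and test~(4) directly measures the Hamming distance between $z$ and $x^*|_{[n]}$; hence if $z$ is $\delta$-far from \emph{every} valid input, rejection happens with probability $\Omega(\delta)$ and moreover the $O(1)$ queried bits realize this rejection robustly. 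Taking the uniform mixture over the four subtests yields overall robust soundness $\Omega(\delta)$ with $O(1)$ queries and $\poly(n)$ random bits, as claimed.

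The main obstacle, and the place that deserves the most care, is glueing the proximity test cleanly to the rest: one must show that self-correction on $A$ really recovers the \emph{same} assignment that is later tested for satisfaction in~(3)--(2), and that test~(4) rejects robustly (not just in probability), i.e., that the constant number of queries made by the verifier in a rejecting execution would each have to be altered to flip the decision. This robustness is inherited from the Hadamard self-correction (each symbol of $A$ touched during self-correction is an independent random one, and altering any one of them flips the decoded bit with probability $1/2$), exactly as in the standard analysis of robustness for Hadamard-based inner PCPPs.
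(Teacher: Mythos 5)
Your proposal follows essentially the same construction the paper uses: reduce to quadratic equations over $\FF_2$, provide Hadamard encodings of the assignment $x$ and of $x\otimes x$, run BLR linearity tests, a consistency (tensor) test, an equation test with a random linear combination, and a proximity test that self-corrects a random input coordinate and compares it against the explicit input. The paper itself only presents the protocol and cites Harsha's thesis for the correctness and robust-soundness analysis, so your sketch of how self-correction and the proximity test yield robust soundness $\Omega(\delta)$ is a reasonable filling-in of the same standard argument rather than a different route.
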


For the purpose of showing the double explicitness property, we briefly
go over the construction of this $\PCP.$ For any Turing machine $M$
runs in time $\poly(|x|)$ on input $x$, whether $M(x)=1$ can be
reduced to the problem of deciding the existence of a solution to a system
of polynomially many quadratic equations in $\FF_{2}$. 
\begin{theorem}[$\NP$-completeness of quadratic equations]
\label{thm:quad-eq}Given any Turing machine $M$ that runs in time
$t=\poly(m)$ on input $x$ of length $m$. There is a polynomial
time reduction $\cA$ that runs in time $\poly(m)$ on $x$, and outputs
$A\in\{0,1\}^{\ell\times n^{2}},b\in\{0,1\}^{\ell}$ for $n,\ell=\poly(m)$,
such that
\begin{enumerate}
\item If $M(x)=1$, then for some $x'\in\{0,1\}^{n}$ that $x'\succ x$
and $A(x'\otimes x')=b,$ 
\item If $M(x)=0$, for any $x'\in\{0,1\}^{n}$ that $x'\succ x,$ $A(x'\otimes x')\not=b.$
\end{enumerate}
Furthermore, the rows of $A$ are linearly independent. 
\end{theorem}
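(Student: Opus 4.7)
The plan is a standard Cook--Levin style reduction followed by arithmetization over $\FF_2$, and then a Gaussian-elimination cleanup to enforce linear independence. Starting from $M$, build a Boolean circuit $C$ of size $\poly(m)$ with $m$ designated ``input'' wires $y_1, \ldots, y_m$ and $\poly(m)$ further ``witness'' wires such that $C(y, w) = 1$ iff $M$ accepts $y$ using $w$ as a nondeterministic witness within time $t$. Introduce one variable per wire of $C$, together with one dummy variable $w_0$, and arrange them as $x' = (y_1, \ldots, y_m, w_0, \text{remaining wires})$, so that the prefix condition $x' \succ x$ automatically pins the first $m$ coordinates of $x'$ to the bits of $x$ and the equations produced below depend only on $M$ and $m$.

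Arithmetize the circuit over $\FF_2$ gate by gate: each AND gate $w_c = w_a \wedge w_b$ becomes the honest quadratic $w_a w_b + w_c = 0$, while NOT and XOR gates are $\FF_2$-affine. The single quadratic $w_0 \cdot w_0 = 1$ forces $w_0 = 1$, and once $w_0 = 1$ is in force any affine equation $\sum_i c_i w_i + d = 0$ can be expressed as a linear functional of the entries of $x' \otimes x'$ via the substitution $w_i \mapsto w_0 \cdot w_i$ and $d \mapsto d \cdot (w_0 \cdot w_0)$; the same trick enforces that the output wire of $C$ carries the value $1$. The resulting raw system $A_0(x' \otimes x') = b_0$ has $n, \ell_0 = \poly(m)$ and satisfies the desired correctness: $M(x) = 1$ iff some $x'$ with $x' \succ x$ solves $A_0(x' \otimes x') = b_0$. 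Each of the three reduction layers (TM $\to$ circuit, circuit $\to$ gate equations, affine $\to$ quadratic) is polynomial time.

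For the final clause, I would run Gaussian elimination over $\FF_2$ on the augmented matrix $[A_0 \mid b_0]$. Elementary row operations preserve the solution set of $A_0(x' \otimes x') = b_0$, so discarding each row that becomes zero on both sides yields an equivalent system $(A, b)$ with linearly independent rows. If a zero row of $A_0$ ever appears with a nonzero entry of $b_0$, the original system is unsatisfiable and we replace $(A, b)$ by a fixed short hard-coded unsatisfiable system with linearly independent rows. The main mildly delicate point is bookkeeping: choosing the variable ordering so that the prefix condition does the work of hardwiring $x$ for free, and applying the $w_0 = 1$ trick uniformly to every affine constraint. Neither is mathematically hard, and both are folklore in Hadamard-PCP constructions; correctness of completeness and soundness is immediate once the encoding is fixed, and the linear-independence cleanup is a $\poly(m)$-time linear-algebra step.
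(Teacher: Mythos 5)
Your proof is correct and uses the same Cook--Levin-plus-$\FF_2$-arithmetization backbone as the paper, but you handle the ``linearly independent rows'' clause by a genuinely different mechanism. The paper constructs the equation system so that independence holds by design: it arranges the circuit so each internal gate appears in exactly two equations, then argues that in any putative $\FF_2$-linear dependence some linear monomial $z$ or quadratic monomial $zy$ would survive uncancelled. You instead run Gaussian elimination over $\FF_2$ on the raw augmented system $[A_0 \mid b_0]$ and discard redundant rows. Since the whole matrix has $\poly(m)$ size this is polynomial time, and elementary row operations preserve the affine solution set $\{v : A_0 v = b_0\}$, hence also its restriction to $v = x' \otimes x'$ with $x' \succ x$, so items (1) and (2) are unaffected and independence is achieved by fiat. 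Your route is more robust---it does not depend on the precise gate encoding---while the paper's yields a matrix $A$ canonically tied to the circuit, which it then reuses unchanged across all outer-randomness strings in the composed-PCP explicitness and uniformity arguments; but your post-Gaussian $A$ is likewise input-independent (only $b_0$ depends on $x$, since the input is hardwired through the prefix condition and the pivot choices are determined by $A_0$ alone), so this is not a real obstruction. Two smaller remarks. First, the paper represents linear monomials via the Boolean identity $z = z \cdot z$, i.e.\ as diagonal entries of $x' \otimes x'$, whereas you pin a fresh $w_0$ to $1$ via $w_0 \cdot w_0 = 1$ and write $z$ as $w_0 z$; both are standard and sound thanks to the Hadamard PCP's $Y$--$Z$ consistency test. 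Second, your ``hard-coded unsatisfiable system'' fallback never actually fires: the unrestricted linear system $A_0 v = b_0$ is always consistent, because the cross-product coordinates $v_{a,b}$ with $a \neq b$ are free to absorb the gate and output constraints when no tensor structure is imposed. It would be worth stating this explicitly, since a varying output shape would threaten the downstream uniformity claims even though it is harmless for this theorem in isolation.
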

Here $x'\succ x$ means that $x$ is a prefix of $x'$.
\begin{proof}
The correctness is standard.
The focus is to show that $A$ has linearly independent rows. Start
from the Cook-Levin's reduction. Consider the computational tableau
$T\in\{0,1,\dot{0},\dot{1},\bot \}^{t\times t}$, where $\dot{0}$ and $\dot{1}$
denote that the header is pointing to the current cell and $\bot$ denotes the empty cell. We can encode
$0,1,\dot{0},\dot{1},\bot$ using the binary alphabet by for example, $000,001,010,011,111$, respectively.
We interpret $\{0,1\}$ as elements in $\FF_{2}$. Therefore, each symbol
is encoded using three variables. By Cook-Levin's reduction, there is a 3SAT formula $\Psi$ on variables
associated with $T$. The way we encode the symbols guarantees that the input $x$ to $M$ is a substring of the
input $x'$ to $\Psi$. By rearranging, we can make sure $x$ is a prefix of $x'$. We make $\Psi$ to have fan-in 2 by adding
intermediate gates. In particular, for every internal gate, associate
a new variable. Then for every gate $z$ that takes two variables
$x$ and $y$ as its input (when the input, say $x$, is negated,
simply replace $x$ with $1-x$), add the equation based on the operation
of $z$, as below:
\begin{enumerate}
\item If $z=x\land y,$ add the equation: $xy+z=0,$
\item If $z=x\lor y,$ add the equation: $z+x+y+xy=0.$
\end{enumerate}
For the top gate $z$, add equation $z=1.$ For variables associated with the first row in the tableau $T$, add the corresponding equation to ensure things like the header is pointing to the first cell; the cells after the input $x$ is empty; etc. These equations are only enforced on the ``inputs'' to the formula $\Psi$, and for each such variable, there is only one such equation. 
Note that for any internal
gate $z$ in the formula, they only show up in two equations. One that verifies the
inputs variables are consistent with $z$. The other verifies that when
$z$ is fed into an upper gate, the values are consistent. In the first
case, there is always the term $z$. In the second case, there is
always the term $zy$ for some other variable $y$. 

We show that the equations introduced above result in a matrix $A$ with linearly independent rows. Take an arbitrary
equation that corresponds to some gate $z$ in the formula, where we introduced a term
$z$. If $z$ is not the top gate, then to eliminate the term $z$
we must include the equation corresponding to the gate $z'$ that
takes $z$ as an input, which will introduce the term $zy$ for some
$y$. This term $zy$ is not removable. If $z$ is the top gate, then
the equation itself already introduces a term $xy$ for some gate $x$
and $y$, which is not removable. 
One remaining case is when the equation is $z=1$ or $z=0$ for some $z$, an input to the formula. 
In this case, to remove the variable $z$, the only way is to look for any internal gate that takes $z$ as an input. 
However, once we take variables associated with internal gates, we are back to the first case.
\end{proof}
\noindent\fbox{\begin{myalg}
[Hadamard \mbox{$\PCP$} for some polynomial-time Turing
machine \mbox{$M$}]\ignorespacesafterend
Convert $M$ into a system of quadratic equation $A:\{0,1\}^{\ell\times n^{2}},b\in\{0,1\}^{\ell}.$
Let $x\in\{0,1\}^{m}$ be the input to $M$.

\textbf{Prover} provides the proof consists of $Y\in\FF_{2}^{n},Z\in\FF_{2}^{n^{2}}$
such that for some solution $x'\in\{0,1\}^{n}$ to $A(x'\otimes x')=b$
that extends $x$, i.e., $x'\succ x$, and satisfy
\begin{align*}
Y(y) & =\langle y,x'\rangle,\\
Z(z) & =\langle z,x'\otimes x'\rangle.
\end{align*}
\textbf{Verifier} checks the following
\begin{enumerate}
\item (Linearity test for $Y$) Sample random $y,y'\in\{0,1\}^{n},$ test
if $\langle y,x'\rangle+\langle y',x'\rangle=\langle y+y',x'\rangle$.
\item (Linearity test for $Z$) Sample random $z\in\{0,1\}^{n\times n},z'\in\{0,1\}^{n\times n}$,
test if $\langle z,x'\otimes x'\rangle+\langle z',x'\otimes x'\rangle=\langle z+z',x'\otimes x'\rangle.$
\item (Consistency test on $Y$ and $Z$) Sample $w,w'\in\{0,1\}^{n},$
test if $\langle w,x'\rangle\langle w',x'\rangle=\langle w\otimes w',x'\otimes x'\rangle.$ 
\item (Equation test) Sample $u\in\{0,1\}^{\ell}$, test if $\langle A^{T}u,x'\otimes x'\rangle=\langle A^{T}u,b\rangle.$ 
\item (Proximity test) Sample $i\in[m]$ and $v\in\{0,1\}^{n},$ test if
$\langle v+e_{i},x'\rangle+\langle v,x'\rangle=\langle e_{i},x'\rangle.$
\end{enumerate}
\emph{Accept }only if all tests pass.%
\end{myalg}}

\vspace{3mm}We make a few remarks here. First, for our purpose, we
don't really worry about the optimal query complexity as long as the
total number of queries is a constant number. Second, note that by
repeating the test multiple of times, we can detect any proximity
parameter. If the above $\PCP$ is doubly explicit, it remains doubly
explicit repeating constant number of times. Finally, actually the
prover only provides $x'_{m+1}x'_{m+2}\cdots x'_{n},$ since the first
$m$ bits are part of the input. 

Next, we establish the uniformity and double explicitness of the inner
$\PCP$. For uniformity, we can classify the variables in the proof
of the Hadamard $\PCP$ described above into constantly different
types based on:
\begin{enumerate}
\item The input $x$ to $M$, in another word, $Y(e_{i})$ for $i\in[m]$ form
one type of variables.
\item For $Y(a)$ for $a\not\in\{e_{i}:i\in[m]\}$, form another type of
variables.
\item For $Z(a)$, depending on whether $a\in\{0,1\}^{n}\otimes\{0,1\}^{n}$,
and whether $\exists u\in\{0,1\}^{\ell}$ such that $A^{T}u=a$, $\{Z(a):a\in\{0,1\}^{n^{2}}\}$
are decomposed into four different types of variables.
\end{enumerate}
\begin{claim}[Uniformity of inner $\PCP$]
There are six types of variables for the inner $\PCP$ as listed
above. For any two variable $v_{1},v_{2}$ that belong to the same
type, $|\AdjV(v_{1})|=|\AdjV(v_{2})|$. Furthermore, $|\AdjV(v_{1})|$
can be computed efficiently.
\end{claim}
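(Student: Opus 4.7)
The plan is to go through each of the five sub-tests of the Hadamard $\PCP$ and, for each type of variable, count the number of random strings under which that variable is queried. The claim will follow because each per-test count will depend only on the type label of the variable, and each count will be efficiently computable.

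First I would fix notation. A random string for the verifier is an element $r=(t,s_t)$ where $t\in\{1,\ldots,5\}$ names the test and $s_t$ is the sub-randomness used inside test $t$ (a pair $(y,y')$, a pair $(z,z')$, a pair $(w,w')$, an element $u$, or a pair $(i,v)$ respectively). Then for any proof variable $v$, $|\AdjV(v)|=\sum_{t=1}^{5} N_t(v)$ where $N_t(v)$ is the number of $s_t$'s that cause test $t$ to query $v$. So it suffices to show that each $N_t(v)$ depends only on the type of $v$ and is polynomial-time computable.

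For a $Y$-variable $Y(a)$ with $a\in\{0,1\}^n$: In test 1 (linearity for $Y$), $a$ appears as $y$, as $y'$, or as $y+y'$, and in each case the number of $(y,y')$-pairs realizing this is a fixed function of $n$ (e.g.\ $2^n$ each for the first two roles and $2^n$ for the third), so $N_1(Y(a))$ is the same for all $Y$-variables. Tests 2, 3, and 4 never touch $Y(a)$, except test 3 which asks $Y(w)$ and $Y(w')$ for uniformly random $w,w'\in\{0,1\}^n$; here again $N_3(Y(a))$ depends only on $n$. Test 5 (proximity) queries $Y(v+e_i)$, $Y(v)$, and $Y(e_i)$. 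The first two roles contribute a count that depends only on $n$ and $m$ (each choice of $i$ together with the appropriate linear shift of $v$). The contribution from the role ``$e_i$'' is exactly $m$ if $a\in\{e_1,\ldots,e_m\}$ and $0$ otherwise. This explains precisely why $Y$-variables split into the two types 1 and 2 claimed in the statement, and makes $|\AdjV(Y(a))|$ computable in $\poly(n)$ time since we only need to test $a\in\{e_i\}$.

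For a $Z$-variable $Z(a)$ with $a\in\{0,1\}^{n^2}$: Test 2 (linearity on $Z$) contributes a count depending only on $n$ by the same argument as above; tests 1 and 5 never touch $Z$. Test 3 queries $Z(w\otimes w')$ for uniform $w,w'\in\{0,1\}^n$, so $N_3(Z(a))$ is nonzero (and a specific function of $n$) exactly when $a\in\{0,1\}^n\otimes\{0,1\}^n$; membership in this rank-one tensor set is decidable in $\poly(n)$ time by standard rank-and-factorization over $\FF_2$, and the count itself equals the number of factorizations, which is the same for every decomposable $a$ up to accounting for $a=0$ (handled as a boundary case in the type partition). Test 4 queries $Z(A^T u)$ for uniform $u\in\{0,1\}^\ell$, so $N_4(Z(a))$ equals $|\{u:A^T u=a\}|$; since the rows of $A$ are linearly independent by Theorem~\ref{thm:quad-eq}, this kernel has the same cardinality $2^{\ell-\Rank A^T}$ for every $a\in\im(A^T)$ and is $0$ otherwise. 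Membership in $\im(A^T)$ is decidable in $\poly(n)$ time by Gaussian elimination. Combining these two binary attributes produces the four $Z$-types in the statement, and shows $|\AdjV(Z(a))|$ is uniform across each type and efficiently computable.

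The one step that requires care, and the only potential obstacle, is the counting in test~$3$ when $a\in\{0,1\}^n\otimes\{0,1\}^n$, because an element can admit several factorizations (for instance $a=0$). I would handle this by either grouping $a=0$ and a handful of degenerate tensors into one of the four $Z$-types (refining the partition by a constant amount, which does not affect the final constancy of $k$), or by noting that for every nonzero decomposable $a$ the number of factorizations $(w,w')$ with $w\otimes w'=a$ is exactly $|\FF_2^\times|=1$ up to swap, giving a uniform count; the analysis is elementary linear algebra over $\FF_2$. Once this is settled, adding up $N_1(v)+\cdots+N_5(v)$ gives a closed-form expression in $n,m,\ell$ for each of the six types, which is manifestly polynomial-time computable, finishing the claim.
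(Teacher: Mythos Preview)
Your approach is correct and is exactly the case analysis that the paper's one-line proof (``By inspection'') leaves implicit. Two small caveats: the paper packages the verifier's randomness as a single tuple $(y,y',z,z',w,w',u,i,v)$ with all five sub-tests run together, so $|\AdjV(v)|$ is a union size (computed via inclusion--exclusion over your per-test events) rather than your sum $\sum_t N_t(v)$, and the $a=0$ degeneracy you flag for tensors also surfaces in the linearity tests (the triple overlap $y=y'=y{+}y'=a$ is nonempty only at $a=0$); but as you note, this merely refines the partition by $O(1)$ types, and the downstream $T$-strong-uniformity only needs constantly many types.
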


\begin{proof}
By inspection. 
\end{proof}
\begin{claim}[Double explicitness of inner $\PCP$]
Fix any variable $a$ which can be either some $Y(y)$ for $y\in\FF^{n}$,
or $Z(z)$ for $z\in\FF^{n^{2}}$. Let $\AdjV(a)$ be the list of
randomness $r=(y,y',z,z',w,w',u,i,v)$ that queries $a$. The following
are computable in time $\poly(n)$:
\begin{enumerate}
\item \label{enu:inner-global-to-local}Given any $r\in\AdjV(a)$, output
the index $\iota$ of $r$ in $\AdjV(a)$. 
\item \label{enu:inner-local-to-global}Given an index $\iota,$ output
the $\iota$th random string $r$ in $\AdjV(a)$.
\end{enumerate}
\end{claim}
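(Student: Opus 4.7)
The key structural observation is that the randomness tuple $r=(y,y',z,z',w,w',u,i,v)$ decomposes into five disjoint blocks, one per sub-test of the Hadamard $\PCP$: Test~1 uses $(y,y')$, Test~2 uses $(z,z')$, Test~3 uses $(w,w')$, Test~4 uses $u$, and Test~5 uses $(i,v)$. Consequently, for each sub-test $k\in\{1,\ldots,5\}$ and each target variable $a$, the set $S_k(a)$ of sub-test-$k$ randomness that causes Test~$k$ to query $a$ lies in its own independent coordinate block, and the full set $\AdjV(a)$ is the union (over $k$) of cylinder sets whose $k$-th block lies in $S_k(a)$ and whose other blocks are free.

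The plan is to first establish, for each sub-test $k$ and each admissible variable $a$, a $\poly(n)$-time triple of routines: (i) compute $|S_k(a)|$, (ii) given $r_k\in S_k(a)$, compute its rank in some fixed canonical order on $S_k(a)$, and (iii) given a rank, recover $r_k$. For Tests~1 and~2 (linearity), $S_k$ is the union of three affine conditions ($y=y_0$, $y'=y_0$, or $y+y'=y_0$, and analogously for $Z$); inclusion--exclusion on these three conditions gives an exact count and an easy enumeration. For Test~3 applied to $a=Y(y_0)$, $S_3$ is simply $\{(w,w'):w=y_0\text{ or }w'=y_0\}$, which is immediate. For Test~3 applied to $a=Z(z_0)$, we need pairs $(w,w')$ with $w\otimes w' = z_0$, i.e.\ we must factor $z_0$, viewed as an $n\times n$ matrix over $\FF_2$, as a rank-$1$ outer product: if $z_0=0$ the set is $\{(w,w'):w=0\text{ or }w'=0\}$; otherwise $z_0$ is rank-$1$ iff all its nonzero columns are equal, in which case $w_0$ is that common column and $w'_0$ is the indicator of the nonzero columns, and the unique pair is $(w_0,w'_0)$. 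For Test~4 (equation), $S_4=\{u:A^Tu=z_0\}$; by Theorem~\ref{thm:quad-eq} the rows of $A$ are linearly independent, so $A^T$ is injective and $|S_4|\in\{0,1\}$, with the unique $u$ (if any) recovered by Gaussian elimination. For Test~5, $S_5$ is the union of $\{(i,v):v=y_0\}$ and $\{(i,v):v+e_i=y_0,\ i\in[m]\}$, both elementary.

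Once these per-test routines are in place, I convert between an index $\iota\in[|\AdjV(a)|]$ and a full string $r$ by a standard inclusion--exclusion bookkeeping over the $2^5=32$ possible patterns $K\subseteq\{1,\ldots,5\}$ indicating which sub-tests query $a$ on input $r$. Fix an arbitrary order on these patterns. For each $K$, the number of $r$ whose ``active pattern'' equals $K$ is $\prod_{k\in K}|S_k(a)|\cdot\prod_{k\notin K}(2^{r_k}-|S_k(a)|)$, computable from the $|S_k(a)|$ produced above. Summing these counts in order partitions $\AdjV(a)$ into $32$ contiguous blocks, and inside the block indexed by $K$ the randomness factorizes as a product of independent coordinates---on $k\in K$ we enumerate using the sub-test-$k$ routine restricted to $S_k(a)$, and on $k\notin K$ we enumerate the complement $\{0,1\}^{r_k}\setminus S_k(a)$ (again via the same routine, by a standard index shift). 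Direction (i) of the claim (randomness $\to$ index) follows by determining the active pattern $K$ of $r$, its block offset from the cumulative counts, and the internal rank within the factorized block; direction (ii) is the reverse.

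The main technical obstacle, and the only place where nontrivial algebra enters, is the Test~3 $Z$-variable case (rank-$1$ factorization over $\FF_2$) and the Test~4 case (deciding membership in $\mathrm{im}(A^T)$). Both reduce to Gaussian elimination on matrices of size $\poly(n)$, so they fit in the time budget. Everything else is elementary enumeration of affine conditions plus combinatorial index arithmetic, whose correctness follows from the independence of the five randomness blocks.
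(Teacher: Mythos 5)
Your proposal is correct in its essentials and takes a genuinely different route from the paper. You make the block-independence of the randomness tuple explicit from the start and partition $\AdjV(a)$ by the ``active pattern'' $K$ of sub-tests that hit $a$, converting the index to a pair (pattern block, internal offset) via a product formula. The paper instead works with the full string $r=(r_1,\ldots,r_9)$ under lexicographic order, computes $|\AdjV(a)\cap\{r'<r\}|$ directly by inclusion--exclusion over the constraint sets $E_i,E_{ij},\ldots$, and for the inverse direction determines $r$ coordinate-by-coordinate via $\AdjV(a)|_{p,r}$. Both are valid constructions; yours has the conceptual advantage that the independence of the five blocks does all the work, whereas the paper's prefix-based digit recovery is a bit more bookkeeping-heavy. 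Since the claim only requires \emph{some} efficiently computable bijection (and its use in the PCP composition treats the inner index as a black box), the change of canonical order is harmless.

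There is one concrete omission to flag. Tracking the paper's accounting of the proximity test, the set $\AdjV(Y(y_0))$ includes $E'_8(a)=\{r : e_{r_8}=y_0\}$, corresponding to the third term $\langle e_i,x'\rangle$ in the proximity check. In your notation this is a third condition $e_i=y_0$ inside $S_5(Y(y_0))$, which you dropped, listing only $v=y_0$ and $v+e_i=y_0$. This changes $|S_5|$ when $y_0$ happens to be a basis vector $e_{i_0}$ with $i_0\in[m]$, so your counts and your ranking would disagree with the paper's $\AdjV$ in that case. (One can debate whether $Y(e_i)=x_i$ is ``really'' an oracle query when $x$ is known to the verifier, but the paper counts it and the claim is about the paper's $\AdjV$.) The fix is trivial: add the third affine condition to $S_5$ and fold it into the same inclusion--exclusion. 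Also note your pattern count should range over nonempty $K$, i.e.\ $31$ blocks rather than $32$, and $2^{r_k}$ should read $|U_k|$ for the $k$-th coordinate space (in particular $m\cdot 2^n$ for the proximity block); both are cosmetic.
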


\begin{proof}
We carefully examine all the cases. Let $\cU=\{0,1\}^{2n+2n^{2}+2n+\ell}\times[m]\times\{0,1\}^{n},$
given any $r\in\cU$, decompose $r=r_{1}r_{2}\cdots r_{8}r_{9},$
such that $(r_{1},r_{2},\ldots,r_{9})$ corresponds to $(y,y',z,z',\allowbreak w,w',u,i,v)$
in the Hadamard $\PCP.$ 

\textbf{Case 1}: Suppose that $a\in\FF^{n}$ corresponds to the variable
$Y(a)$. $Y(a)$ can be queried in linearity test for $Y$, consistency
test and proximity test. Then 
\begin{align*}
 & \AdjV(a)=E_{1}(a)\cup E_{2}(a)\cup E_{12}(a)\cup E_{5}(a)\cup E_{6}(a)\cup E'_{8}(a)\cup E_{9}(a)\cup E_{89}(a),\\
 & E_{i}(a):=\{r\in\cU:r_{i}=a\},\qquad\qquad\qquad i\in\{1,2,5,6,9\},\\
 & E_{12}(a):=\{r\in\cU:r_{1}+r_{2}=a\},\\
 & E'_{8}(a):=\{r\in\cU:e_{r_{8}}=a\},\\
 & E_{89}(a):=\{r\in\cU:r_{9}+e_{r_{8}}=a\}.
\end{align*}
Now given any proper prefix $p$ for some $r\in\cU$ such that \[p\in\{\varepsilon,r_{1},r_{1}r_{2},\ldots,\allowbreak r_{1}r_{2}r_{3}r_{4}r_{5}r_{6}r_{7}r_{8}\},\]
where $\varepsilon$ stands for the empty string. let 
\begin{align*}
 & \AdjV(a)|_{p,r}:=\AdjV(a)\cap\{ps\in\cU:ps<r\}.
\end{align*}
We want to compute the cardinality of $\AdjV(a)|_{p,r}$. Suppose the
prefix $p$ already implies a query on $Y(a),$ then the suffix $s$
can be anything that makes $ps<r$. If the prefix $p$ does not imply
a query on $Y(a),$ we consider all the following sets. 
\begin{align*}
 & E_{i}(a,p,r):=\{r'<r:r'_{i}=a,p\prec r'\},\qquad\qquad i\in\{1,2,5,6,9\},\\
 & E_{12}(a,p,r):=\{r'<r:r'_{1}+r'_{2}=a,p\prec r'\},\\
 & E'_{8}(a,p,r):=\{r'<r:e_{r'_{8}}=a,p\prec r'\},\\
 & E_{89}(a,p,r):=\{r'<r:r'_{9}+e_{r'_{8}}=a,p\prec r'\}.
\end{align*}
We claim that we can compute the cardinality of the intersection for
an arbitrary combination of the above sets. If this is indeed the
case, the cardinality of $\AdjV(a)|_{p,r}$ can be computed efficiently
using the inclusion-exclusion principle. First, for $r'\in E_{i},$ $r_{i}'$
is fixed to be $a$. For $E'_{8}$, $|E'_{8}|$ is nonzero only if
$a=e_{i}$ for some $i\in[m]$. In that case, it fixes the value of
$r'_{8}.$ For $E_{89},$ there are at most $m$ possible ways of
setting $r'_{8}$ and $r'_{9}.$ When we consider the intersection
of an arbitrary combination of the above sets, we are restricting
the corresponding coordinates to at most $m$ possible assignments,
which we can list efficiently. For each assignment, it is easy to
count the number of assignments to the unrestricted coordinates that
are consistent with $p$ and smaller than $r$. Finally, we take $E_{12}$
into account. If $p$ already fixes $r_{1}'$, then it determines
$r'_{2}.$ Otherwise, for every possible $r'_{1}$, there is one corresponding
$r'_{2}$. When taking intersections with other sets, the corresponding
coordinates $I$ are restricted to at most $m$ possible assignments.
We can exhaust the assignments to $I$, and for all $r_{1}'<r_{1},$
the unrestricted coordinates can have arbitrary values. For the single
special case $r'_{1}=r_{1},$ depending on whether $r'_{2}<r_{2}$
or $r'_{2}=r_{2}$ or $r'_{2}>r_{2}$, we can also count efficiently
the number of assignments to the other coordinates such that $r'<r$. 

The above discussion helps us establish the double explicitness for
$Y(a)$. In particular, \ref{enu:inner-global-to-local} given any
$r\in\cU,$ by computing the cardinality of $\AdjV(a)|_{p,r}$ for $p=\varepsilon$,
we can compute the index $\iota$ of $r$ in $\AdjV(a)$. \ref{enu:inner-local-to-global}
Suppose we are given the index $\iota$. For any prefix $p$, we can
efficiently compute $|\AdjV(a)|_{p,r}|$, by setting $r=1^{2n+2n^2+2n+\ell}\circ m \circ 1^n$. The cardinality only depends on whether
$p$ already queries $Y(a)$, the length of $p$, and $a$. Therefore,
we can compute the $\iota$th randomness in $\AdjV(a)$ by gradually
determine $r_{1},r_{2},\ldots,r_{9}.$

\textbf{Case} \textbf{2}: Suppose $a\in\FF^{n^{2}}$ corresponds to
some $Z(a)$. $Z(a)$ can be queried in linearity test for $Z$, consistency
test and equation test. Then
\begin{align*}
 & \AdjV(a)=E_{3}(a)\cup E_{4}(a)\cup E_{34}(a)\cup E_{56}(a)\cup E_{7}(a),\\
 & E_{i}(a):=\{r\in\cU:r_{i}=a\},\qquad\qquad\qquad i\in\{3,4,7\},\\
 & E_{34}(a):=\{r\in\cU:r_{3}+r_{4}=a\},\\
 & E_{56}(a):=\{r\in\cU:r_{5}\otimes r_{6}=a\},\\
 & E_{7}(a):=\{r\in\cU:A^{T}r=a\}.
\end{align*}
Analogous to case 1, we also consider the version $\AdjV(a)|_{p,r},E_{i}(a,p,r),E_{ij}(a,p,r)$
that are consistent with some prefix $p$. The cardinality of $E_{i}(a,p,r)$
can be computed just like in case 1. The cardinality of $E_{56}$ is
nonzero only when $a$ is a tensor product of some $w,w'$, and $a$
completely determines $w$ and $w'$. For $E_{7}$, we need to solve
the following linear equation such that 
\[
A^{T}u=a.
\]
Since the rows of $A$ are independent, there is at most one solution
for the above equation. This can be found in polynomial time using,
for example, Gaussian elimination. All in all, when considering the intersection
of an arbitrary combination of the above sets, we are restricting
a few coordinates to at most $1$ possible assignment. It is easy
to count the number of assignments on the other unrestricted coordinates
that are consistent with the prefix $p$ and smaller than $r$. To
take $E_{34}$ into account, this is completely analogous to what
happens in case 1. Therefore, we can compute $\AdjV(a)|_{p,r}$ efficiently.

Now it follows the same argument as in case 1, given any $r\in\cU,$
we can compute the index $\iota$ of $r$ in $\AdjV(a)$ and given
any index $\iota$ we can compute the corresponding $\iota$th randomness
$r$.
\end{proof}

\subsection{The \texorpdfstring{$\PCP$}{PCP} Composition}

The final $\PCP$ for the succinct SAT problem will be the composition
of the outer $\PCP$ and the inner $\PCP$. In particular, for any
succinct SAT instance $M$, let $s=\size(M)$. The prover should provide
the proof $\Pi^{\outer}$ for the outer $\PCP$. The outer $\PCP$
verifier samples the randomness $r\in\{0,1\}^{\poly(s)}$. Depending
on $r$, some polynomial-time verification $M_{r}$ will be invoked
to verify a set of variables $I_{r}$ in $\Pi^{\outer}$, denoted
by $\Pi^{\outer}|_{I_{r}}$. $M_{r}$ can be converted into a quadratic
equation instance $(A_{r},b_{r})$ in time $\poly(s)$. The prover
will provide for all possible randomness $r$, a proof $\Pi_{r}^{\inner}$.
Now the inner $\PCP$ will verify $\Pi^{\outer}|_{I_{r}}\circ\Pi_{r}^{\inner}$.
Sample the randomness $r'\in\{0,1\}^{\poly(s)}$ for the inner $\PCP$.
Based on $r'$, there is a polynomial-time verification $M_{r'}^{\inner}$
that verifies $\Pi^{\outer}|_{I_{r}}\circ\Pi_{r}^{\inner}$.

The prover will arrange the proofs as a concatenation of $\Pi^{\outer}\circ\Pi_{0}^{\inner}\circ\Pi_{1}^{\inner}\circ\cdots$.
Note that there are exactly $m_{0}=|\FF|^{2n}$ random
strings for the low-degree tests in the outer $\PCP$. These tests
correspond to the same verification procedure, therefore the inner
$\PCP$s have the same structure. Following the low-degree tests are the
zero tests corresponding to the next $m_{1}=|\FF|^{2m}$
random strings. Finally, the remaining are $m_{2}=|\FF|^{2m}$
consistency tests. We know exactly the size of $|\Pi_{r}^{\inner}|$
for each $r$. Therefore for any variable $v$, it can be computed
efficiently whether $v$ lies in $\Pi^{\outer}$ or $\Pi_{r}^{\inner}$,
and in the latter case, we can computer $r$ in polynomial time. So
when we talk about a variable $v$, we suppose the information is
provided. 
\begin{theorem}
The composed $\PCP$ is doubly explicit.
\end{theorem}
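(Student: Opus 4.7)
The plan is to handle the two possible locations of a variable $v$ in the composed proof separately. Since the proof is laid out as the concatenation $\Pi^{\outer} \circ \Pi_0^{\inner} \circ \Pi_1^{\inner} \circ \cdots$ and the sizes $|\Pi_r^{\inner}|$ are computable in polynomial time from $r$ (by the uniformity of the inner PCP within each outer test type), one can decide in polynomial time whether $v$ lies in $\Pi^{\outer}$ or in some $\Pi_r^{\inner}$ and, in the latter case, recover $r$. The goal is then to certify both conversions (global-to-local and local-to-global) of $\AdjV(v)$ in each case.

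If $v \in \Pi_r^{\inner}$, the composed verifier queries $v$ precisely when it picks outer randomness $r$ and then inner randomness $r' \in \AdjV^{\inner,r}(v)$. Hence $\AdjV(v)$ may be identified with $\{r\}\times\AdjV^{\inner,r}(v)$ and its double-explicit index map reduces directly, up to an additive offset depending only on $r$, to the inner PCP's double-explicit index map established in the previous subsection. If instead $v \in \Pi^{\outer}$, then every time the outer verifier on some $r \in \AdjV^{\outer}(v)$ queries $v$, the inner PCP for $r$ treats $v$ as an input variable of some fixed local position $j(r,v)\in[O(1)]$, which is computable by simulating the outer verifier on $r$ and locating $v$ in its query list. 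By the uniformity of the inner PCP, within each of the three outer test types $t$ the cardinality $K_t := |\AdjV^{\inner,r}(\text{input variable})|$ is a constant depending only on $t$ and computable in polynomial time.

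Order $\AdjV(v)$ lexicographically on $(r, r')$, with $r$ ordered first by outer test type and then according to the outer PCP's own ordering. The global-to-local map on $(r, r')$ will then be computed in polynomial time as follows: determine the type $t$ from $r$; use outer double-explicitness to obtain the per-type index $\iota$ of $r$ in $\AdjV^{\outer}_t(v)$; use inner double-explicitness to obtain the index $\iota'$ of $r'$ in $\AdjV^{\inner,r}(j(r,v))$; and combine these by the formula $\sum_{t'<t}|\AdjV^{\outer}_{t'}(v)|\cdot K_{t'} + (\iota-1)K_t + \iota'$. The inverse map runs symmetrically by first reading off $t$ from the index via the cumulative sums, then recovering $\iota$ and $\iota'$, and finally inverting each primitive.

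The main obstacle will be the per-type counting: I must argue that $|\AdjV^{\outer}_t(v)|$ is computable in polynomial time for each $t$, i.e., the index boundaries of each test type inside $\AdjV^{\outer}(v)$ are accessible. Since the boundaries between the outer test types are explicitly demarcated by the declared sizes $m_0$, $m_1$, $m_2$ (the numbers of random strings for the three outer subtests), this reduces to invoking the outer double-explicitness on an appropriate sentinel randomness per type and taking differences. Once this bookkeeping is in place, the verification is a routine composition of polynomial-time primitives already established for the outer and inner PCPs, so the composed PCP is doubly explicit.
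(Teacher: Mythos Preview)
Your proposal is correct and follows essentially the same approach as the paper: the same case split on whether $v$ lies in some $\Pi_r^{\inner}$ or in $\Pi^{\outer}$, and in the latter case the same multiplicative combination of per-test-type outer counts with the uniform inner counts $K_t$ (the paper's $n_i$). Two small remarks: first, your ``sentinel randomness'' idea for computing $|\AdjV^{\outer}_t(v)|$ is slightly underspecified since the last global randomness of type $t$ need not query $v$; the clean fix is either to read these counts directly from the outer double-explicitness analysis (as the paper does implicitly) or to binary-search on $\iota \mapsto \AdjLocV^{\outer}(v,\iota)$ for the type boundary, which works since types are contiguous in the outer ordering. Second, you only treat the $\AdjV$ direction; the $\AdjC$ direction (listing the variables queried by a given $(r,r')$) is the easy one and the paper dispatches it in a single sentence, but strictly speaking it is part of ``doubly explicit'' and should be mentioned.
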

\begin{proof}
 Fix some variable $v$, there are two cases. First, if $v\not\in\Pi^{\outer}$.
This case is straightforward: $v$ is queried only if the random string
$r$ for the outer $\PCP$ is correct. Then the double explicitness
for $v$ follows the double explicitness of the inner $\PCP$. 

Second, if $v\in\Pi^{\outer}$. Now given $r,r'$ the random strings
for the outer and inner $\PCP$s, respectively. From the double explicitness
of the outer $\PCP,$ we know the index $\iota$ of the $r\in\AdjV^{\outer}(v)$.
From which, we can compute the cardinality of $\cR=\{(s,s')\in\AdjV(v):s<r\}$.
This is because by $\iota$ we know exactly the cardinality of $\cR_{i}=\{s\in\AdjV^{\outer}(v): s<r\}\cap T_{i}$
for $i\in\{1,2,3\},$ where $T_{1}, T_{2}, T_{3}$ are the sets of random
strings for the outer $\PCP$ that invoke the low-degree tests, zero
tests, and consistency tests, respectively. Due to the uniformity
of the inner $\PCP,$ for any $s\in\cR_{i}$, the size of the adjacency
list of $v$ for the inner $\PCP$ is the same. Denote $n_{i}$ be
the size of adjacency list of $v$ for any $s\in\cR_{i}$, we have
\[
|\cR|=\sum_{i=1}^{3}n_{i}\cdot|\cR_{i}|.
\]
Now by the double explicitness of the inner $\PCP$, we get the index
$\iota'$ of $r'$. The index of $(r,r')$ for the composed $\PCP$
is therefore $|\cR|+\iota'$. On the other hand, let some index $\iota$
be given. Since we can compute $n_{i}$, it is easy to fix $r$. Then
the double explicitness of the inner $\PCP$ will determine $r'$.

The above argument establishes the double explicitness on adjacency list $\AdjV$. The explicitness of $\AdjC$ is straightforward. Given the random string $r,r'$, fully determined by $r$ the outer $\PCP$ queries a line in one of three tests, the points on which are efficient to list. Look inside the corresponding inner $\PCP$, by $r'$ we can efficiently output the corresponding locations to query. Since we can efficiently output the list of variables that $(r, r')$ queries, it shows the explicitness on the adjacency list $\AdjC$.
\end{proof}
The uniformity of the inner $\PCP$ and outer $\PCP$ together implies
the uniformity of the composed $\PCP.$
\begin{theorem}
In the composed $\PCP$, there are only a constant number of different
types $[N=2^{\poly(s)}]=V_{1}\cup V_{2}\cup\cdots\cup V_{k}$ of variables
in the sense that the size of $\AdjV(v)$ only depends
on which type the variable $v$ is.
\end{theorem}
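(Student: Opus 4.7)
The plan is to lift the uniformity of the two component PCPs (already established) to the composed PCP by carefully classifying variables according to which part of the composed proof they live in, and which role they play in the inner PCPs that touch them.

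First, I would split the variables into two regimes. Variables $v\in\Pi^{\inner}_r$ for some outer randomness $r$ are only queried by the inner PCP invoked on outer random string $r$; hence $|\AdjV(v)|$ is exactly the size of $v$'s inner adjacency list. By the uniformity of the inner PCP (6 inner types) and the fact that the three outer test families (low-degree, zero subcube, consistency) each invoke an inner PCP of the same structure across all random strings of that family, the quantity $|\AdjV(v)|$ depends only on the pair (outer test family of $r$, inner type of $v$). This gives at most $3\times 5=15$ types for variables living inside some $\Pi^{\inner}_r$ (the "input to $M$" inner type is reserved for variables of $\Pi^{\outer}$).

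Next, for $v\in\Pi^{\outer}$, note that whenever $v$ is queried by an outer random string $r$, the inner PCP invoked at $r$ treats $v$ as part of the input $x$ to its Turing machine $M_r$, i.e.\ as a variable of inner type 1 (the "input to $M$" type, $Y(e_i)$). Therefore
\[
|\AdjV(v)| \;=\; \sum_{t\in\{1,2,3\}} n_{\tau(v),t}\cdot N^{\inner}_t,
\]
where $n_{\tau(v),t}$ is the number of outer random strings of test family $t$ that query $v$, and $N^{\inner}_t$ is the (uniform) size of the inner adjacency list of a type-1 variable in the inner PCP for test family $t$. By the uniformity of the outer PCP (already shown: three outer types $A$, $P_0$, $\Pi$), $n_{\tau(v),t}$ depends only on the outer type of $v$, and by inner uniformity $N^{\inner}_t$ is a fixed constant depending only on $t$. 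Hence $|\AdjV(v)|$ depends only on the outer type of $v$, yielding at most $3$ additional types.

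Combining the two regimes gives at most $18$ types in total, independent of the succinct SAT input size. The partition $\tau$ is efficiently computable because, given $v$, one can first decide in $\poly\log(NR)$ time whether $v$ lies in $\Pi^{\outer}$ or in some $\Pi^{\inner}_r$ (the block sizes $|\Pi^{\inner}_r|$ are uniform within each outer test family and all are polynomially large), and then apply the uniformity routines of the outer or inner PCP. The only mildly subtle point — and the step I would double-check most carefully — is that every variable of $\Pi^{\outer}$ really enters each inner PCP it participates in \emph{as an input bit of type 1}, rather than in some other role; this is exactly the way the composition is set up ($M_r$ verifies the outer test on $\Pi^{\outer}|_{I_r}$ together with auxiliary prover bits $\Pi^{\inner}_r$, with the outer-proof bits as input), so the claim goes through cleanly.
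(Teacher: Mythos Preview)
Your proposal is correct and follows essentially the same approach as the paper: the same two-regime split (variables in some $\Pi^{\inner}_r$ versus variables in $\Pi^{\outer}$), the same $3\times 5=15$ count for the first regime, and the same appeal to outer uniformity (3 outer types $A,P_0,\Pi$) combined with inner uniformity for the second regime. Your write-up is in fact slightly more explicit than the paper's, giving the formula $|\AdjV(v)|=\sum_t n_{\tau(v),t}\,N^{\inner}_t$ and also verifying the polylog-time computability of $\tau$ required by the definition of strongly uniform.
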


\begin{proof}
First, consider any variable $v\not\in\Pi^{\outer}$. There are only
3 different kinds of inner $\PCP$ depending on whether the outer
$\PCP$ invokes the low-degree test, zero subcube test, or consistency test.
For each test, by the uniformity of the inner $\PCP$, there are 5
different types of variables. In total, there are 15 different types
of variables. Consider any variable $v\in\Pi^{\outer}$. (Note that
when we discuss the outer $\PCP$, we are using a large alphabet. Here,
a variable has a binary value. So we split one variable from the outer
$\PCP$ into polynomially many.) By the uniformity of the outer $\PCP$,
which and how many of the low-degree tests, zero tests, and consistency
tests are only depending on whether $v$ belongs to $A, P_{0}$ or
$\Pi$. For any $v$ that belongs to the same type, the uniformity of the
inner $\PCP$ tells us that the total number of constraints that queries
$v$ is fixed.
\end{proof}

\end{document}